\pdfoutput=1
\documentclass{article}
\newcommand{\releaseurl}{\url{https://github.com/cohere-ai/rcp-ndcg}}

\let\ORIGaddcontentsline\addcontentsline
\usepackage{iclr2027_conference}
\let\addcontentsline\ORIGaddcontentsline
\usepackage[utf8]{inputenc}
\usepackage[T1]{fontenc}
\usepackage{times}

\usepackage{amsmath,amssymb,amsfonts,mathtools}
\usepackage{amsthm}
\usepackage{aliascnt}
\usepackage{graphicx}
\usepackage{subcaption}
\usepackage{wrapfig}
\usepackage{float}
\usepackage{booktabs}
\usepackage{multirow}
\usepackage{multicol}
\usepackage{tabularx}
\usepackage{colortbl}
\usepackage{arydshln}

\usepackage{microtype}
\usepackage{nicefrac}
\usepackage{url}
\usepackage[usenames,dvipsnames]{xcolor}
\usepackage[inline]{enumitem}
\usepackage{hyperref}
\usepackage[capitalize,noabbrev]{cleveref}

\usepackage{algorithm}
\usepackage{algorithmic}

\usepackage{tocloft}
\usepackage{etoc}
\usepackage[toc,page,header]{appendix}
\usepackage{etoolbox}
\usepackage{scalerel}
\usepackage{placeins}
\usepackage{dsfont}
\usepackage{leftindex}
\usepackage{overpic}
\usepackage{cancel}
\usepackage[most]{tcolorbox}
\usepackage{mdframed}

\usepackage{tikz}
\usetikzlibrary{arrows.meta,bending,positioning,3d}
\usepackage{tikz-3dplot}

\newif\ificlr
\iclrfalse

\theoremstyle{plain}

\newaliascnt{proposition}{theorem}
\newtheorem{proposition}[proposition]{Proposition}
\aliascntresetthe{proposition}
\newaliascnt{lemma}{theorem}
\newtheorem{lemma}[lemma]{Lemma}
\aliascntresetthe{lemma}
\newaliascnt{corollary}{theorem}

\aliascntresetthe{corollary}

\theoremstyle{definition}
\newaliascnt{definition}{theorem}

\aliascntresetthe{definition}
\newaliascnt{assumption}{theorem}

\aliascntresetthe{assumption}

\theoremstyle{remark}
\newaliascnt{remark}{theorem}

\aliascntresetthe{remark}
\newaliascnt{fact}{theorem}

\aliascntresetthe{fact}

\crefname{appendix}{Appendix}{Appendices}
\Crefname{appendix}{Appendix}{Appendices}
\crefname{theorem}{Theorem}{Theorems}
\Crefname{theorem}{Theorem}{Theorems}
\crefname{proposition}{Proposition}{Propositions}
\Crefname{proposition}{Proposition}{Propositions}
\crefname{corollary}{Corollary}{Corollaries}
\Crefname{corollary}{Corollary}{Corollaries}
\crefname{lemma}{Lemma}{Lemmas}
\Crefname{lemma}{Lemma}{Lemmas}
\crefname{definition}{Definition}{Definitions}
\Crefname{definition}{Definition}{Definitions}
\crefname{assumption}{Assumption}{Assumptions}
\Crefname{assumption}{Assumption}{Assumptions}
\crefname{remark}{Remark}{Remarks}
\Crefname{remark}{Remark}{Remarks}

\crefname{fact}{Fact}{Facts}
\Crefname{fact}{Fact}{Facts}

\definecolor{DarkSlateGrey}{HTML}{335c67}
\definecolor{SteelBlue}{HTML}{457b9d}
\definecolor{DeepTeal}{HTML}{66827a}
\definecolor{BrownRed}{HTML}{9e2a2b}
\definecolor{CinnamonWood}{HTML}{bf6535}
\definecolor{ForestGreen}{HTML}{2d6a4f}
\definecolor{DrySage}{HTML}{b3bb95}
\definecolor{VanillaCustard}{HTML}{fff3b0}
\definecolor{NightBordeaux}{HTML}{540b0e}
\definecolor{Amethyst}{HTML}{7b2d8e}
\definecolor{HoneyBronze}{HTML}{e8b45b}
\definecolor{Jasmine}{HTML}{f0c977}
\definecolor{HoneyBronzeAlt}{HTML}{e09f3e}

\colorlet{doccolor}{SteelBlue}
\colorlet{querycolor}{Amethyst}
\colorlet{critcolor}{CinnamonWood}
\newcommand{\dpar}[1]{{\color{doccolor}#1}}
\newcommand{\qpar}[1]{{\color{querycolor}#1}}
\newcommand{\cpar}[1]{{\color{critcolor}#1}}

\definecolor{iccvblue}{rgb}{0.21,0.49,0.74}
\definecolor{cvprblue}{rgb}{0.21,0.49,0.74}
\colorlet{mygreen}{ForestGreen}
\colorlet{myblue}{DarkSlateGrey}
\colorlet{myyellow}{HoneyBronzeAlt}
\colorlet{myred}{BrownRed}
\colorlet{mydarkred}{NightBordeaux}
\colorlet{mywhite}{VanillaCustard}

\hypersetup{
    colorlinks,
    linkcolor={red!50!black},
    citecolor={blue!50!black},
    urlcolor={blue!80!black}
}
\hypersetup{
    pdftitle={Rubric-Calibrated Preferences: Cross-Query Calibration of LLM Judgments via Item Response Theory},
    pdfkeywords={retrieval evaluation, reranking, nDCG, LLM judges, item response theory, calibration}
}

\hypersetup{pdfauthor={Fabian David Schmidt, Donato Crisostomi, Carlos Lassance, Nils Reimers}, pdfsubject={arXiv preprint}}

\newcommand{\reranker}[1]{\baseline{#1}}

\newcommand{\rcp}{\textup{RCP}}
\newcommand{\rcpndcg}{\mbox{\textup{RCP-nDCG}}}

\makeatletter
\newcommand{\setval}[2]{\expandafter\gdef\csname val@#1\endcsname{#2}}
\newcommand{\val}[1]{\ifcsname val@#1\endcsname\csname val@#1\endcsname\else\textcolor{red}{\textbf{??#1??}}\PackageWarning{vals}{Undefined value #1}\fi}
\makeatother

\newcommand{\dataset}[1]{\textsc{#1}}
\newcommand{\model}[1]{\mbox{\textsc{#1}}}

\makeatletter
\newcommand{\rr@name}[2]{\expandafter\def\csname rrname@#1\endcsname{#2}}
\rr@name{CTXL-RR-1B}{Contextual v2 1B}\rr@name{CTXL-RR-2B}{Contextual v2 2B}\rr@name{CTXL-RR-6B}{Contextual v2 6B}
\rr@name{Jina-RR-v3}{Jina Reranker v3}
\rr@name{Qwen3-RR-0.6B}{Qwen3-Reranker-0.6B}\rr@name{Qwen3-RR-4B}{Qwen3-Reranker-4B}\rr@name{Qwen3-RR-8B}{Qwen3-Reranker-8B}
\rr@name{RR4-Pro}{Rerank 4 Pro}\rr@name{RR4-Fast}{Rerank 4 Fast}
\rr@name{Voyage2.5}{Voyage 2.5}\rr@name{Voyage2.5-lt}{Voyage 2.5 Lite}
\rr@name{zerank-1-sm}{zerank-1-small}\rr@name{zerank-1}{zerank-1}\rr@name{zerank-2}{zerank-2}
\newcommand{\baseline}[1]{\ifcsname rrname@#1\endcsname\csname rrname@#1\endcsname\else#1\fi}
\makeatother

\makeatletter
\def\adl@drawiv#1#2#3{%
\hskip.5\tabcolsep
\xleaders#3{#2.5\@tempdimb #1{1}#2.5\@tempdimb}%
#2\z@ plus1fil minus1fil\relax
\hskip.5\tabcolsep}
\newcommand{\cdashlinelr}[1]{%
\noalign{\vskip\aboverulesep
\global\let\@dashdrawstore\adl@draw
\global\let\adl@draw\adl@drawiv}
\cdashline{#1}
\noalign{\global\let\adl@draw\@dashdrawstore
\vskip\belowrulesep}}
\makeatother

\usetikzlibrary{positioning, arrows.meta, fit, backgrounds, calc, shadows}

\colorlet{blue1}{SteelBlue}
\definecolor{blue1bg}{HTML}{e8eff4}
\colorlet{red1}{CinnamonWood}
\definecolor{red1bg}{HTML}{f8ece0}
\colorlet{red1pill}{BrownRed}
\colorlet{dark1}{DarkSlateGrey}
\definecolor{gray1}{HTML}{6c757d}
\definecolor{grayL}{HTML}{e9ecef}

\usepackage{tcolorbox}
\usepackage{fvextra}
\tcbuselibrary{breakable, skins}

\newtcolorbox{promptboxclean}[1][]{
  breakable,
  enhanced,
  colback=white,
  colframe=black!20,
  boxrule=0.5pt,
  arc=1mm,
  left=1.5mm,
  right=1.5mm,
  top=1mm,
  bottom=1mm,
  title={#1},
  fonttitle=\bfseries,
}

\setval{a1a.auc.ndcg}{0.521}
\setval{a1a.auc.ndcg.bright}{0.501}
\setval{a1a.auc.ndcg.flips}{0.346}
\setval{a1a.auc.ndcg.hi}{0.576}
\setval{a1a.auc.ndcg.lo}{0.467}
\setval{a1a.auc.ndcg.n}{854}
\setval{a1a.auc.ndcg.nanomteb}{0.565}
\setval{a1a.auc.ndcg.nn}{0.525}
\setval{a1a.auc.ndcg.vidore}{0.503}
\setval{a1a.auc.theta}{0.810}
\setval{a1a.auc.theta.bright}{0.825}
\setval{a1a.auc.theta.bright.n}{289}
\setval{a1a.auc.theta.flips}{0.763}
\setval{a1a.auc.theta.hi}{0.845}
\setval{a1a.auc.theta.lo}{0.776}
\setval{a1a.auc.theta.n}{854}
\setval{a1a.auc.theta.nanomteb}{0.793}
\setval{a1a.auc.theta.nanomteb.n}{284}
\setval{a1a.auc.theta.nn}{0.828}
\setval{a1a.auc.theta.nn.n}{402}
\setval{a1a.auc.theta.vidore}{0.821}
\setval{a1a.auc.theta.vidore.n}{281}
\setval{a1a.band.ndcg.5}{53.1}
\setval{a1a.band.rcp.1}{47.2}
\setval{a1a.band.rcp.1.hi}{63.8}
\setval{a1a.band.rcp.1.lo}{30.7}
\setval{a1a.band.rcp.1.n}{36}
\setval{a1a.band.rcp.5}{96.8}
\setval{a1a.ctrl.ceiling.qrel-perfect}{98.3}
\setval{a1a.ctrl.ceiling.rcp-perfect}{87.5}
\setval{a1a.ctrl.gainmap.binary-ge2.d14}{75.2}
\setval{a1a.ctrl.gainmap.binary-ge4.d14}{64.4}
\setval{a1a.ctrl.query-blind-loqo}{65.7}
\setval{a1a.ctrl.withinpair.split.constant}{70.8}
\setval{a1a.ctrl.withinpair.split.ndcg}{50.4}
\setval{a1a.ctrl.withinpair.split.rcp}{77.0}
\setval{a1a.d14.bright}{74.6}
\setval{a1a.d14.bright.hi}{85.2}
\setval{a1a.d14.bright.k}{50}
\setval{a1a.d14.bright.lo}{64.1}
\setval{a1a.d14.bright.n}{67}
\setval{a1a.d14.hotpotqa.k}{3}
\setval{a1a.d14.hotpotqa.n}{6}
\setval{a1a.d14.nanomteb}{68.3}
\setval{a1a.d14.nanomteb.hi}{80.3}
\setval{a1a.d14.nanomteb.k}{41}
\setval{a1a.d14.nanomteb.lo}{56.4}
\setval{a1a.d14.nanomteb.n}{60}
\setval{a1a.d14.ndcg-seen}{70.5}
\setval{a1a.d14.ndcg-seen.hi}{77.4}
\setval{a1a.d14.ndcg-seen.k}{122}
\setval{a1a.d14.ndcg-seen.lo}{63.7}
\setval{a1a.d14.ndcg-seen.n}{173}
\setval{a1a.d14.nn}{68.1}
\setval{a1a.d14.nn.hi}{77.5}
\setval{a1a.d14.nn.k}{64}
\setval{a1a.d14.nn.lo}{58.7}
\setval{a1a.d14.nn.n}{94}
\setval{a1a.d14.vidore}{74.1}
\setval{a1a.d14.vidore.hi}{85.5}
\setval{a1a.d14.vidore.k}{43}
\setval{a1a.d14.vidore.lo}{62.8}
\setval{a1a.d14.vidore.n}{58}
\setval{a1a.d14.zerank}{76.9}
\setval{a1a.d17.low.ndcg-agrees.k}{8}
\setval{a1a.d17.low.ndcg-agrees.n}{13}
\setval{a1a.d17.low.ndcg-disagrees.k}{8}
\setval{a1a.d17.low.ndcg-disagrees.n}{22}
\setval{a1a.d17.marginal.median-abs-dndcg}{0.226}
\setval{a1a.decisive}{74.8}
\setval{a1a.decisive.hi}{82.3}
\setval{a1a.decisive.k}{98}
\setval{a1a.decisive.lo}{67.3}
\setval{a1a.decisive.n}{131}
\setval{a1a.desc.annotators}{46}
\setval{a1a.desc.annotators.bright}{39}
\setval{a1a.desc.annotators.nanomteb}{33}
\setval{a1a.desc.annotators.vidore}{35}
\setval{a1a.desc.collection-end-date}{2026-08-25}
\setval{a1a.desc.collection-start-date}{2026-07-20}
\setval{a1a.desc.contests}{311}
\setval{a1a.desc.contests.bright}{103}
\setval{a1a.desc.contests.nanomteb}{108}
\setval{a1a.desc.contests.vidore}{100}
\setval{a1a.desc.datasets}{24}
\setval{a1a.desc.datasets.bright}{7}
\setval{a1a.desc.datasets.nanomteb}{12}
\setval{a1a.desc.datasets.vidore}{5}
\setval{a1a.desc.decided-flips}{156}
\setval{a1a.desc.decided-ndcg-tie}{27}
\setval{a1a.desc.decided-theta-tie}{8}
\setval{a1a.desc.distinct-docs}{2{,}268}
\setval{a1a.desc.grades}{7{,}080}
\setval{a1a.desc.ndcg-blind}{14}
\setval{a1a.desc.nn-contests}{149}
\setval{a1a.desc.pairs}{42}
\setval{a1a.desc.queries}{292}
\setval{a1a.desc.queries.bright}{95}
\setval{a1a.desc.queries.nanomteb}{97}
\setval{a1a.desc.queries.vidore}{100}
\setval{a1a.desc.rerankers}{10}
\setval{a1a.desc.reviews}{933}
\setval{a1a.desc.reviews.bright}{309}
\setval{a1a.desc.reviews.nanomteb}{324}
\setval{a1a.desc.reviews.vidore}{300}
\setval{a1a.desc.verdict-a}{139}
\setval{a1a.desc.verdict-b}{150}
\setval{a1a.desc.verdict-tie}{22}
\setval{a1a.desc.zerank-share}{52.1}
\setval{a1a.h2h}{69.9}
\setval{a1a.h2h.hi}{77.1}
\setval{a1a.h2h.k}{109}
\setval{a1a.h2h.lo}{62.7}
\setval{a1a.h2h.n}{156}
\setval{a1a.h2h.nn}{65.4}
\setval{a1a.h2h.nn.hi}{75.9}
\setval{a1a.h2h.nn.k}{51}
\setval{a1a.h2h.nn.lo}{54.9}
\setval{a1a.h2h.nn.n}{78}
\setval{a1a.hit-ndcg}{51.9}
\setval{a1a.hit-theta}{77.9}
\setval{a1a.hit-theta.k}{219}
\setval{a1a.hit-theta.n}{281}
\setval{a1a.len.human-prefers-longer}{56.7}
\setval{a1a.len.logit.b-loglen}{-0.02}
\setval{a1a.len.logit.b-loglen.hi}{0.23}
\setval{a1a.len.logit.b-loglen.lo}{-0.28}
\setval{a1a.len.logit.b-ndcg}{0.27}
\setval{a1a.len.logit.b-theta}{1.87}
\setval{a1a.loao.d14.max}{74.4}
\setval{a1a.loao.d14.min}{70.6}
\setval{a1a.loao.d14.min.n}{46}
\setval{a1a.loao.h2h.max}{71.8}
\setval{a1a.loao.h2h.min}{68.4}
\setval{a1a.loao.h2h.min.n}{46}
\setval{a1a.marginal}{44.0}
\setval{a1a.marginal.hi}{63.9}
\setval{a1a.marginal.k}{11}
\setval{a1a.marginal.lo}{24.1}
\setval{a1a.marginal.n}{25}
\setval{a1a.primary}{72.4}
\setval{a1a.primary.hi}{78.9}
\setval{a1a.primary.k}{134}
\setval{a1a.primary.lo}{66.0}
\setval{a1a.primary.n}{185}
\setval{a1a.qstd.n}{254}
\setval{a1a.qstd.ndcg.d51}{7.7}
\setval{a1a.qstd.ndcg.d51.hi}{24.9}
\setval{a1a.qstd.ndcg.d51.lo}{-9.6}
\setval{a1a.qstd.rcp.d51}{45.2}
\setval{a1a.qstd.rcp.d51.hi}{60.4}
\setval{a1a.qstd.rcp.d51.lo}{29.9}
\setval{a1a.qstd.share-disagree}{54.0}
\setval{a1a.sweep.p02.above}{82.4}
\setval{a1a.sweep.p02.below}{47.2}
\setval{a1b.abl.union.all.n-decided}{289}
\setval{a1b.agree.human-human.all}{74.2}
\setval{a1b.agree.human-human.bright}{73.8}
\setval{a1b.agree.human-human.nano}{77.2}
\setval{a1b.agree.human-human.vidore}{71.5}
\setval{a1b.agree.judge-human.all}{78.4}
\setval{a1b.agree.judge-human.bright}{77.9}
\setval{a1b.agree.judge-human.nano}{82.3}
\setval{a1b.agree.judge-human.vidore}{74.9}
\setval{a1b.agree.qrel-human.all}{65.0}
\setval{a1b.agree.qrel-human.bright}{65.0}
\setval{a1b.agree.qrel-human.nano}{64.5}
\setval{a1b.agree.qrel-human.vidore}{65.5}
\setval{a1b.auc.ge2.all.gain}{0.910}
\setval{a1b.auc.ge2.all.qrel}{0.651}
\setval{a1b.auc.ge2.bright.gain}{0.889}
\setval{a1b.auc.ge2.bright.qrel}{0.596}
\setval{a1b.auc.ge2.nano.gain}{0.939}
\setval{a1b.auc.ge2.nano.qrel}{0.670}
\setval{a1b.auc.ge2.vidore.gain}{0.881}
\setval{a1b.auc.ge2.vidore.qrel}{0.664}
\setval{a1b.auc.within-q0.ge2.all.gain}{0.914}
\setval{a1b.auc.within-q0.ge2.all.gain.ci}{[0.894, 0.933]}
\setval{a1b.auc.within-q0.ge2.bright.gain}{0.881}
\setval{a1b.auc.within-q0.ge2.nano.gain}{0.946}
\setval{a1b.auc.within-q0.ge2.vidore.gain}{0.907}
\setval{a1b.cal.auc.ge2.mean.all.cal}{0.906}
\setval{a1b.cal.auc.ge2.mean.bright.cal}{0.889}
\setval{a1b.cal.auc.ge2.mean.bright.raw}{0.856}
\setval{a1b.cal.auc.ge2.mean.bright.z}{0.833}
\setval{a1b.cal.auc.ge2.mean.nano.cal}{0.939}
\setval{a1b.cal.auc.ge2.mean.nano.raw}{0.829}
\setval{a1b.cal.auc.ge2.mean.nano.z}{0.756}
\setval{a1b.cal.auc.ge2.mean.vidore.cal}{0.881}
\setval{a1b.cal.auc.ge2.mean.vidore.raw}{0.831}
\setval{a1b.cal.auc.ge2.mean.vidore.z}{0.810}
\setval{a1b.cal.qlevel.all.cal}{0.795}
\setval{a1b.cal.qlevel.all.raw}{0.538}
\setval{a1b.cal.rho.mean.bright.cal}{0.753}
\setval{a1b.cal.rho.mean.bright.d-cal-raw}{0.072}
\setval{a1b.cal.rho.mean.bright.d-cal-raw.ci}{[0.026, 0.120]}
\setval{a1b.cal.rho.mean.bright.raw}{0.681}
\setval{a1b.cal.rho.mean.bright.z}{0.646}
\setval{a1b.cal.rho.mean.nano.cal}{0.830}
\setval{a1b.cal.rho.mean.nano.d-cal-raw}{0.190}
\setval{a1b.cal.rho.mean.nano.d-cal-raw.ci}{[0.122, 0.259]}
\setval{a1b.cal.rho.mean.nano.raw}{0.641}
\setval{a1b.cal.rho.mean.nano.z}{0.515}
\setval{a1b.cal.rho.mean.vidore.cal}{0.760}
\setval{a1b.cal.rho.mean.vidore.d-cal-raw}{0.067}
\setval{a1b.cal.rho.mean.vidore.d-cal-raw.ci}{[0.005, 0.139]}
\setval{a1b.cal.rho.mean.vidore.raw}{0.693}
\setval{a1b.cal.rho.mean.vidore.z}{0.648}
\setval{a1b.ece.ge1.all}{0.167}
\setval{a1b.ece.ge2.all}{0.054}
\setval{a1b.ece.ge2.all.ci}{[0.042, 0.071]}
\setval{a1b.ece.ge3.all}{0.187}
\setval{a1b.mis.q0-ge2.all}{27.8}
\setval{a1b.mis.q0-ge2.bright}{22.8}
\setval{a1b.mis.q0-ge2.nano}{33.1}
\setval{a1b.mis.q0-ge2.vidore}{28.9}
\setval{a1b.mis.q1-lt2.all}{37.5}
\setval{a1b.mis.q1-lt2.bright}{52.8}
\setval{a1b.mis.q1-lt2.nano}{25.7}
\setval{a1b.mis.q1-lt2.vidore}{37.5}
\setval{a1b.n.docs.all}{2{,}268}
\setval{a1b.n.docs.bright}{819}
\setval{a1b.n.docs.nano}{723}
\setval{a1b.n.docs.vidore}{726}
\setval{a1b.op.all.qrel.fpr}{17.3}
\setval{a1b.op.all.qrel.tpr}{47.5}
\setval{a1b.op.all.rcp.fpr}{7.3}
\setval{a1b.op.all.rcp.tpr}{64.1}
\setval{a1b.op.bright.qrel.fpr}{14.3}
\setval{a1b.op.bright.qrel.tpr}{33.6}
\setval{a1b.op.bright.rcp.fpr}{5.6}
\setval{a1b.op.bright.rcp.tpr}{56.6}
\setval{a1b.op.nano.qrel.fpr}{13.3}
\setval{a1b.op.nano.qrel.tpr}{47.2}
\setval{a1b.op.nano.rcp.fpr}{4.5}
\setval{a1b.op.nano.rcp.tpr}{58.0}
\setval{a1b.op.vidore.qrel.fpr}{25.2}
\setval{a1b.op.vidore.qrel.tpr}{58.0}
\setval{a1b.op.vidore.rcp.fpr}{14.5}
\setval{a1b.op.vidore.rcp.tpr}{72.8}
\setval{a1b.pair.q0rel-over-q1irr.within.all}{0.791}
\setval{a1b.par.balanced.all.sb}{0.773}
\setval{a1b.par.full.all.cal}{0.792}
\setval{a1b.qrel.rate.all}{28.7}
\setval{a1b.qrel.rate.bright}{19.7}
\setval{a1b.qrel.rate.nano}{28.5}
\setval{a1b.qrel.rate.vidore}{39.0}
\setval{a1b.samp.pool-concordant.ge2.all.gain}{0.932}
\setval{a1b.samp.pool-concordant.ge2.all.qrel}{0.722}
\setval{a1b.samp.pool-concordant.ge2.bright.gain}{0.916}
\setval{a1b.samp.pool-concordant.ge2.bright.qrel}{0.672}
\setval{a1b.samp.pool-concordant.ge2.nano.gain}{0.966}
\setval{a1b.samp.pool-concordant.ge2.nano.qrel}{0.749}
\setval{a1b.samp.pool-concordant.ge2.vidore.gain}{0.888}
\setval{a1b.samp.pool-concordant.ge2.vidore.qrel}{0.721}
\setval{a1b.samp.pool-signflip.ge2.all.gain}{0.900}
\setval{a1b.samp.pool-signflip.ge2.all.qrel}{0.622}
\setval{a1b.samp.pool-signflip.ge2.bright.gain}{0.878}
\setval{a1b.samp.pool-signflip.ge2.bright.qrel}{0.567}
\setval{a1b.samp.pool-signflip.ge2.nano.gain}{0.924}
\setval{a1b.samp.pool-signflip.ge2.nano.qrel}{0.635}
\setval{a1b.samp.pool-signflip.ge2.vidore.gain}{0.877}
\setval{a1b.samp.pool-signflip.ge2.vidore.qrel}{0.641}
\setval{a2.agree.pool.dl19.rcpa.atnistlin}{14/14}
\setval{a2.agree.pool.dl19.rcpb.atnistlin}{14/14}
\setval{a2.agree.pool.dl20.rcpa.atnistlin}{37/37}
\setval{a2.agree.pool.dl20.rcpb.atnistlin}{37/37}
\setval{a2.agree.top150.dl19.rcpa.atnistlin}{11/13}
\setval{a2.agree.top150.dl19.rcpb.atnistlin}{11/13}
\setval{a2.agree.top150.dl20.rcpa.atnistlin}{36/36}
\setval{a2.agree.top150.dl20.rcpb.atnistlin}{36/36}
\setval{a2.budget.qwen.pooled.placementsperdoc}{3.93}
\setval{a2.budget.qwen.pooled.windowsperdoc}{12.27}
\setval{a2.budget.qwendl19.placementsperdoc}{3.92}
\setval{a2.budget.qwendl19.windowsperdoc}{11.27}
\setval{a2.budget.qwendl20.placementsperdoc}{3.94}
\setval{a2.budget.qwendl20.windowsperdoc}{13.08}
\setval{a2.budget.stab.b03.docjudgmentshare}{35.0}
\setval{a2.budget.stab.b05.docjudgmentshare}{50.2}
\setval{a2.calraw.dl19.rho.diffrawa}{0.031}
\setval{a2.calraw.dl20.rho.diffrawa}{0.054}
\setval{a2.conc.dl19.oss}{0.859}
\setval{a2.conc.dl19.overall}{0.874}
\setval{a2.conc.dl20.oss}{0.822}
\setval{a2.conc.dl20.overall}{0.828}
\setval{a2.condensed.pool.pooled.rcpa.lin.rerankers.shareunjudged}{0.538}
\setval{a2.condensed.pool.pooled.rcpb.exp.rerankers.shareunjudged}{0.348}
\setval{a2.cost.imputedshareusd.2p00}{34.7}
\setval{a2.cost.reduced.b03.2p00.usdperquery}{1.33}
\setval{a2.cost.reduced.b05.2p00.usdperquery}{1.87}
\setval{a2.cost.usd.2p00}{360.91}
\setval{a2.cost.usdperquery.2p00}{3.72}
\setval{a2.dec.pool.dl19.count}{56}
\setval{a2.dec.pool.dl19.nistexp}{18}
\setval{a2.dec.pool.dl19.nistlin}{14}
\setval{a2.dec.pool.dl19.rcpa}{60}
\setval{a2.dec.pool.dl19.rcpb}{63}
\setval{a2.dec.pool.dl20.count}{64}
\setval{a2.dec.pool.dl20.nistexp}{36}
\setval{a2.dec.pool.dl20.nistlin}{37}
\setval{a2.dec.pool.dl20.rcpa}{70}
\setval{a2.dec.pool.dl20.rcpb}{70}
\setval{a2.dec.pool.pooled.nistlin}{39}
\setval{a2.dec.top150.dl19.count}{55}
\setval{a2.dec.top150.dl19.nistexp}{14}
\setval{a2.dec.top150.dl19.nistlin}{13}
\setval{a2.dec.top150.dl19.rcpa}{60}
\setval{a2.dec.top150.dl19.rcpb}{61}
\setval{a2.dec.top150.dl20.count}{64}
\setval{a2.dec.top150.dl20.nistexp}{35}
\setval{a2.dec.top150.dl20.nistlin}{36}
\setval{a2.dec.top150.dl20.rcpa}{68}
\setval{a2.dec.top150.dl20.rcpb}{69}
\setval{a2.fail.pool.dl19.nistlin.anypct}{7.0}
\setval{a2.fail.pool.dl20.nistlin.anypct}{5.6}
\setval{a2.kendall.pool.dl19.rcpa.nistlin.n14}{0.209}
\setval{a2.kendall.pool.dl19.rcpa.nistlin.n18}{0.477}
\setval{a2.kendall.pool.dl20.rcpa.nistlin.n14}{0.604}
\setval{a2.rho.dl19.ceiling}{0.768}
\setval{a2.rho.dl19.floor}{0.131}
\setval{a2.rho.dl19.pooled}{0.649}
\setval{a2.rho.dl19.share}{84.4}
\setval{a2.rho.dl20.ceiling}{0.743}
\setval{a2.rho.dl20.floor}{0.169}
\setval{a2.rho.dl20.pooled}{0.595}
\setval{a2.rho.dl20.share}{80.2}
\setval{a2.secondjudge.perdoc.ndocs}{27{,}875}
\setval{a2.secondjudge.perdoc.rho.ac}{0.934}
\setval{a2.secondjudge.pool.kendall18avsc}{0.987}
\setval{a2.setup.dl19.judged}{9{,}260}
\setval{a2.setup.dl19.queries}{43}
\setval{a2.setup.dl20.judged}{11{,}386}
\setval{a2.setup.dl20.queries}{54}
\setval{a2.unjudged.pool.nonzeroabove.max}{5.4}
\setval{a2.unjudged.pool.nonzeroabove.min}{3.7}
\setval{a2.unjudged.pool.nonzeroabove.n}{9}
\setval{a2.unjudged.pool.zerank2}{8.9}
\setval{a2.zerank.pool.dl19.zerank2.rank18.nistlin}{14}
\setval{a2.zerank.pool.dl20.zerank2.rank18.nistlin}{9}
\setval{a3.auc.qrel-ge1.diff.rcp-minus-raw-bt}{0.025}
\setval{a3.auc.qrel-ge1.diff.rcp-minus-raw-bt.hi}{0.031}
\setval{a3.auc.qrel-ge1.diff.rcp-minus-raw-bt.lo}{0.019}
\setval{a3.auc.qrel-ge1.raw-bt}{0.892}
\setval{a3.auc.qrel-ge1.rcp-gain}{0.917}
\setval{a3.conc.g1g0}{0.871}
\setval{a3.conc.g2g0}{0.941}
\setval{a3.diag.brier.2pl}{0.0225}
\setval{a3.diag.brier.logreg}{0.0426}
\setval{a3.diag.ece.2pl}{0.013}
\setval{a3.diag.ece.bt-only}{0.332}
\setval{a3.diag.ece.logreg}{0.012}
\setval{a3.diag.pqbrier.2pl-better}{94.9}
\setval{a3.diag.pqbrier.2pl-better.hi}{95.7}
\setval{a3.diag.pqbrier.2pl-better.lo}{94.0}
\setval{a3.diag.q3.meanabs}{0.112}
\setval{a3.diag.q3.pairs-gt-02}{2}
\setval{a3.fail.any}{23.1}
\setval{a3.n.questions}{2{,}419}
\setval{a3.n.rows}{14{,}514}
\setval{a3.n.stageb.placements.perdoc}{6.67}
\setval{a3.rank.lin.zerank1}{5}
\setval{a3.rank.lin.zerank2}{4}
\setval{a3.rank.rcp.zerank1}{2}
\setval{a3.rank.rcp.zerank2}{1}
\setval{a3.sens.lin}{66.3}
\setval{a3.sens.rcp}{88.9}
\setval{a3.tau.lin-rcp}{0.714}
\setval{a3.xlang.rho}{0.962}
\setval{a3.xlang.rho.hi}{0.964}
\setval{a3.xlang.rho.lo}{0.961}
\setval{a4.bright.oss.alpha.mean}{-5.58}
\setval{a4.bright.oss.beta.c1}{-2.68}
\setval{a4.bright.oss.beta.c2}{-1.63}
\setval{a4.bright.oss.beta.c3}{0.31}
\setval{a4.bright.oss.beta.c4}{1.50}
\setval{a4.bright.oss.beta.c5}{2.50}
\setval{a4.bright.oss.gamma.c1}{1.15}
\setval{a4.bright.oss.gamma.c2}{1.13}
\setval{a4.bright.oss.gamma.c3}{0.73}
\setval{a4.bright.oss.gamma.c4}{1.13}
\setval{a4.bright.oss.gamma.c5}{0.86}
\setval{a4.bright.oss.q3.meanabs}{0.111}
\setval{a4.bright.oss.q3.pairs-gt02}{2}
\setval{a4.bright.oss.tau.p5}{1.04}
\setval{a4.bright.oss.tau.p95}{3.17}
\setval{a4.bright.oss.tau.ratio-p95-p5}{3.1}
\setval{a4.bright.queries.calibrated}{1{,}383}
\setval{a4.bright.queries.evaluated}{1{,}366}
\setval{a4.bright.queries.rerankers}{1{,}384}
\setval{a4.bright.qwen.alpha.mean}{-5.22}
\setval{a4.bright.qwen.beta.c1}{-3.22}
\setval{a4.bright.qwen.beta.c2}{-1.83}
\setval{a4.bright.qwen.beta.c3}{0.70}
\setval{a4.bright.qwen.beta.c4}{1.85}
\setval{a4.bright.qwen.beta.c5}{2.50}
\setval{a4.bright.qwen.gamma.c1}{1.11}
\setval{a4.bright.qwen.gamma.c2}{1.03}
\setval{a4.bright.qwen.gamma.c3}{0.73}
\setval{a4.bright.qwen.gamma.c4}{1.27}
\setval{a4.bright.qwen.gamma.c5}{0.86}
\setval{a4.bright.qwen.q3.meanabs}{0.108}
\setval{a4.bright.qwen.q3.pairs-gt02}{2}
\setval{a4.bright.qwen.tau.p5}{0.88}
\setval{a4.bright.qwen.tau.p95}{2.66}
\setval{a4.bright.qwen.tau.ratio-p95-p5}{3.0}
\setval{a4.nano.fm14.any}{45.6}
\setval{a4.nano.gainfn.oss.rho-w-linminmax}{0.859}
\setval{a4.nano.oss.alpha.mean}{-5.94}
\setval{a4.nano.oss.beta.c1}{-2.14}
\setval{a4.nano.oss.beta.c2}{-1.46}
\setval{a4.nano.oss.beta.c3}{-0.25}
\setval{a4.nano.oss.beta.c4}{0.91}
\setval{a4.nano.oss.beta.c5}{2.95}
\setval{a4.nano.oss.gamma.c1}{1.25}
\setval{a4.nano.oss.gamma.c2}{1.15}
\setval{a4.nano.oss.gamma.c3}{0.84}
\setval{a4.nano.oss.gamma.c4}{0.97}
\setval{a4.nano.oss.gamma.c5}{0.79}
\setval{a4.nano.oss.q3.meanabs}{0.151}
\setval{a4.nano.oss.q3.pairs-gt02}{2}
\setval{a4.nano.oss.tau.p5}{0.95}
\setval{a4.nano.oss.tau.p95}{3.18}
\setval{a4.nano.oss.tau.ratio-p95-p5}{3.3}
\setval{a4.nano.qwen.alpha.mean}{-5.64}
\setval{a4.nano.qwen.band.high}{5.1}
\setval{a4.nano.qwen.band.low}{79.3}
\setval{a4.nano.qwen.band.mid}{15.6}
\setval{a4.nano.qwen.beta.c1}{-2.57}
\setval{a4.nano.qwen.beta.c2}{-2.00}
\setval{a4.nano.qwen.beta.c3}{-0.25}
\setval{a4.nano.qwen.beta.c4}{0.71}
\setval{a4.nano.qwen.beta.c5}{4.12}
\setval{a4.nano.qwen.g-beta1}{0.22}
\setval{a4.nano.qwen.g-beta4}{0.69}
\setval{a4.nano.qwen.gamma.c1}{1.17}
\setval{a4.nano.qwen.gamma.c2}{1.05}
\setval{a4.nano.qwen.gamma.c3}{0.94}
\setval{a4.nano.qwen.gamma.c4}{1.16}
\setval{a4.nano.qwen.gamma.c5}{0.69}
\setval{a4.nano.qwen.q3.meanabs}{0.123}
\setval{a4.nano.qwen.q3.pairs-gt02}{3}
\setval{a4.nano.qwen.tau.p5}{0.86}
\setval{a4.nano.qwen.tau.p95}{2.72}
\setval{a4.nano.qwen.tau.ratio-p95-p5}{3.2}
\setval{a4.nano.sens.count-qwen}{60.7}
\setval{a4.nano.sens.gain-linminmax-qwen}{68.6}
\setval{a4.nano.sens.ndcg}{34.1}
\setval{a4.nano.sens.ratio-rcp-ndcg}{1.9}
\setval{a4.nano.sens.rcp-oss}{60.4}
\setval{a4.nano.sens.rcp-qwen}{63.5}
\setval{a4.nano.xjudge.mean-rho}{0.959}
\setval{a4.nano.xjudge.min-rho}{0.916}
\setval{a4.table2.final.zerank-1.mean}{87.1}
\setval{a4.table2.final.zerank-2.mean}{87.2}
\setval{a4.table2.final.zerank-ahead}{11}
\setval{a4.table2.final.zerank-gap}{5.3}
\setval{a56.meta.s1.nano.cases}{264}
\setval{a56.old.s1.elig.differ}{264}
\setval{a56.old.s1.elig.total}{649}
\setval{a56.old.s2.human.rate}{79.4}
\setval{a56.old.s2.human.rate.hi}{85.5}
\setval{a56.old.s2.human.rate.lo}{71.5}
\setval{a56.old.s2.human.rate.n}{126}
\setval{a56.old.s2.len.human.when-theta-longer}{89.4}
\setval{a56.old.s2.len.human.when-theta-shorter}{53.3}
\setval{a56.old.s2.len.human.when-theta-shorter.hi}{69.8}
\setval{a56.old.s2.len.human.when-theta-shorter.lo}{36.1}
\setval{a56.robust.budget.b0p3-k2.dl19.kendall}{0.969}
\setval{a56.robust.budget.b0p3-k2.dl19.max-shift}{0.026}
\setval{a56.robust.budget.b0p5-k2.dl19.kendall}{0.987}
\setval{a56.robust.budget.b0p5-k2.dl19.max-shift}{0.009}
\setval{a56.robust.factorial.cells}{80}
\setval{a56.robust.factorial.decisive-pairs}{76}
\setval{a56.robust.factorial.family.construct-validity.kendall-max}{0.956}
\setval{a56.robust.factorial.family.construct-validity.kendall-min}{0.868}
\setval{a56.robust.factorial.family.construct-validity.reversals}{1}
\setval{a56.robust.factorial.family.count.kendall-max}{0.978}
\setval{a56.robust.factorial.family.count.kendall-min}{0.890}
\setval{a56.robust.factorial.family.count.reversals}{0}
\setval{a56.robust.factorial.family.difficulty-placement.kendall-max}{1.000}
\setval{a56.robust.factorial.family.difficulty-placement.kendall-min}{0.714}
\setval{a56.robust.factorial.family.difficulty-placement.reversals}{19}
\setval{a56.robust.factorial.family.ordering.kendall-max}{1.000}
\setval{a56.robust.factorial.family.ordering.kendall-min}{0.934}
\setval{a56.robust.factorial.family.ordering.reversals}{0}
\setval{a56.robust.factorial.family.redundancy.kendall-max}{1.000}
\setval{a56.robust.factorial.family.redundancy.kendall-min}{0.934}
\setval{a56.robust.factorial.family.redundancy.reversals}{0}
\setval{a56.robust.factorial.family.reference.kendall-max}{0.978}
\setval{a56.robust.factorial.family.reference.kendall-min}{0.956}
\setval{a56.robust.factorial.family.reference.reversals}{0}
\setval{a56.robust.factorial.family.wording.kendall-max}{1.000}
\setval{a56.robust.factorial.family.wording.kendall-min}{0.934}
\setval{a56.robust.factorial.family.wording.reversals}{0}
\setval{a56.robust.factorial.opportunities}{6{,}004}
\setval{a56.robust.factorial.pairs}{91}
\setval{a56.robust.factorial.relevance.arms}{59}
\setval{a56.robust.factorial.relevance.kendall-min}{0.890}
\setval{a56.robust.factorial.reversals}{20}
\setval{a56.robust.factorial.reversals.r01}{17}
\setval{a56.robust.factorial.reversals.r02}{2}
\setval{a56.robust.factorial.reversals.r11}{1}
\setval{a56.robust.glm.accuracy}{69.6}
\setval{a56.robust.glm.adjudicable}{148}
\setval{a56.robust.glm.contested}{582}
\setval{a56.robust.glm.qwen-accuracy}{60.1}
\setval{a56.robust.lattice.k12345.eff-oracle}{0.407}
\setval{a56.robust.lattice.k1345.eff-oracle}{0.446}
\setval{a56.robust.lattice.k4.eff-oracle}{0.297}
\setval{a56.robust.lattice.k5.eff-oracle}{0.073}
\setval{a56.robust.synth.beir-binary.boundary-noise}{1.00}
\setval{a56.robust.synth.oracle-graded.boundary-noise}{0.50}
\setval{a56.robust.synth.queries}{50}
\setval{a56.robust.synth.replications}{200}
\setval{a56.xs.bright397.beta-monotone}{yes}
\setval{a56.xs.bright397.eff.oracle}{0.70}
\setval{a56.xs.bright397.eff.system}{0.47}
\setval{a56.xs.bright397.gain.region.ge-b4}{0.9}
\setval{a56.xs.bright397.gain.region.lt-b1}{73.1}
\setval{a56.xs.bright397.gamma-argmax}{C4}
\setval{a56.xs.bright397.gamma-argmax-margin}{0.160}
\setval{a56.xs.bright397.tau.p5}{0.88}
\setval{a56.xs.bright397.tau.p95}{2.66}
\setval{a56.xs.bright397.tau.p95-over-p5}{3.0}
\setval{a56.xs.brightoss.beta-monotone}{yes}
\setval{a56.xs.brightoss.eff.oracle}{0.75}
\setval{a56.xs.brightoss.eff.system}{0.45}
\setval{a56.xs.brightoss.gain.region.ge-b4}{1.0}
\setval{a56.xs.brightoss.gain.region.lt-b1}{83.2}
\setval{a56.xs.brightoss.gamma-argmax}{C1}
\setval{a56.xs.brightoss.gamma-argmax-margin}{0.025}
\setval{a56.xs.brightoss.tau.p5}{1.04}
\setval{a56.xs.brightoss.tau.p95}{3.17}
\setval{a56.xs.brightoss.tau.p95-over-p5}{3.1}
\setval{a56.xs.cala.beta-monotone}{no}
\setval{a56.xs.cala.beta.c5}{7.27}
\setval{a56.xs.cala.dl.eff.oracle}{0.35}
\setval{a56.xs.cala.dl.eff.system}{0.51}
\setval{a56.xs.cala.dl.gain.region.ge-b4}{14.8}
\setval{a56.xs.cala.dl.gain.region.lt-b1}{58.0}
\setval{a56.xs.cala.dl.tau.p5}{1.25}
\setval{a56.xs.cala.dl.tau.p95}{3.35}
\setval{a56.xs.cala.dl.tau.p95-over-p5}{2.7}
\setval{a56.xs.cala.gamma-argmax}{C4}
\setval{a56.xs.cala.gamma-argmax-margin}{0.108}
\setval{a56.xs.calc.beta-monotone}{no}
\setval{a56.xs.calc.eff.oracle}{0.45}
\setval{a56.xs.calc.eff.system}{0.55}
\setval{a56.xs.calc.gain.region.ge-b4}{11.1}
\setval{a56.xs.calc.gain.region.lt-b1}{67.5}
\setval{a56.xs.calc.gamma-argmax}{C4}
\setval{a56.xs.calc.gamma-argmax-margin}{0.070}
\setval{a56.xs.calc.tau.p5}{1.45}
\setval{a56.xs.calc.tau.p95}{3.65}
\setval{a56.xs.calc.tau.p95-over-p5}{2.5}
\setval{a56.xs.nano397.beta-monotone}{yes}
\setval{a56.xs.nano397.eff.oracle}{0.66}
\setval{a56.xs.nano397.eff.system}{0.47}
\setval{a56.xs.nano397.gain.region.ge-b4}{5.1}
\setval{a56.xs.nano397.gain.region.lt-b1}{79.3}
\setval{a56.xs.nano397.gamma-argmax}{C1}
\setval{a56.xs.nano397.gamma-argmax-margin}{0.012}
\setval{a56.xs.nano397.tau.p5}{0.86}
\setval{a56.xs.nano397.tau.p95}{2.72}
\setval{a56.xs.nano397.tau.p95-over-p5}{3.2}
\setval{a56.xs.nanooss.beta-monotone}{yes}
\setval{a56.xs.nanooss.eff.oracle}{0.52}
\setval{a56.xs.nanooss.eff.system}{0.33}
\setval{a56.xs.nanooss.gain.region.ge-b4}{3.8}
\setval{a56.xs.nanooss.gain.region.lt-b1}{86.7}
\setval{a56.xs.nanooss.gamma-argmax}{C1}
\setval{a56.xs.nanooss.gamma-argmax-margin}{0.101}
\setval{a56.xs.nanooss.tau.p5}{0.95}
\setval{a56.xs.nanooss.tau.p95}{3.18}
\setval{a56.xs.nanooss.tau.p95-over-p5}{3.3}
\setval{a56.xs.vidore397.beta-monotone}{yes}
\setval{a56.xs.vidore397.gamma-argmax}{C4}
\setval{a56.xs.vidore397.gamma-argmax-margin}{0.270}
\setval{a56.xs.vidore397.native.eff.oracle}{0.74}
\setval{a56.xs.vidore397.native.eff.system}{0.60}
\setval{a56.xs.vidore397.native.gain.region.ge-b4}{2.2}
\setval{a56.xs.vidore397.native.gain.region.lt-b1}{63.9}
\setval{a56.xs.vidore397.native.tau.p5}{0.77}
\setval{a56.xs.vidore397.native.tau.p95}{2.29}
\setval{a56.xs.vidore397.native.tau.p95-over-p5}{3.0}
\setval{b10.t1.bright.cells-n}{1{,}092}
\setval{b10.t1.dl.cells-n}{182}
\setval{b10.t1.nano.cells-n}{1{,}183}
\setval{b12.tt.bright.rcp-h0.dchg}{12}
\setval{b12.tt.nanomteb.rcp-h0.dchg}{54}
\setval{b12.tt.trecdl.rcp-h0.dchg}{4}
\setval{b12.zp.human.all.p0.g-rcp}{2.9}
\setval{b12.zp.human.all.p0.gap-count}{-3.2}
\setval{b12.zp.human.all.p0.gap-count.ci}{[-5.0, -1.7]}
\setval{b12.zp.human.all.p0.gap-rcp}{-0.2}
\setval{b12.zp.human.all.p0.gap-rcp.ci}{[-1.9, +1.3]}
\setval{b12.zp.human.all.p0.ge2}{3.2}
\setval{b12.zp.trecdl.allfail.share}{45.8}
\setval{b16.nl.1a.cons3.k10.count-id.delta}{+0.051}
\setval{b16.nl.1a.cons3.k10.count-id.winrate}{72.1}
\setval{b16.nl.1a.cons3.k10.count-id.winrate.ci}{[58.1, 83.7]}
\setval{b16.nl.1a.cover.count-id}{90.2}
\setval{b16.nl.1a.cover.rcp}{90.9}
\setval{b16.nl.1a.nist.k10.count-id.delta}{+0.025}
\setval{b16.nl.1a.nist.k10.count-id.winrate}{48.8}
\setval{b16.nl.1a.nist.k10.count-id.winrate.ci}{[33.3, 63.4]}
\setval{b16.nl.1a.p1.k10.count-id.delta}{+0.076}
\setval{b16.nl.1a.p1.k10.count-id.winrate}{71.4}
\setval{b16.nl.1a.p1.k10.count-id.winrate.ci}{[57.1, 83.7]}
\setval{b16.nl.1a.p2.k10.count-id.delta}{+0.059}
\setval{b16.nl.1a.p2.k10.count-id.winrate}{61.9}
\setval{b16.nl.1a.p2.k10.count-id.winrate.ci}{[46.5, 76.2]}
\setval{b16.nl.mp.dl19.tie.cons3.sys-R}{14}
\setval{b16.nl.mp.dl19.tie.cons3.winrate-R}{72.2}
\setval{b16.nl.mp.dl19.tie.nist.sys-R}{5}
\setval{b16.nl.mp.dl19.tie.nist.winrate-R}{62.4}
\setval{b16.nl.mp.dl19.tie.p1.sys-R}{14}
\setval{b16.nl.mp.dl19.tie.p1.winrate-R}{75.1}
\setval{b16.nl.mp.dl19.tie.p2.sys-R}{8}
\setval{b16.nl.mp.dl19.tie.p2.winrate-R}{64.5}
\setval{b16.nl.sat.dl.pathC.s0.tie5C}{15.9}
\setval{b16.nl.sat.dl.pathC.s5.tie5C}{66.8}
\setval{b16.nl.sat.nanomteb.pathC.s5.tie5C}{16.5}
\setval{b16.nl.sat.vidore.pathC.s5.tie5C}{7.8}
\setval{b16.nl.sat.vidore.pathC.s5.tie5R}{1.0}
\setval{b17.excl.reviews}{14}
\setval{b17.with.primary}{72.3}
\setval{b17.without.primary}{72.4}
\setval{b19.mtie.consensus.pooled.both-decisive.n-decided}{112}
\setval{b19.mtie.consensus.pooled.both-decisive.rcp-share}{66.1}
\setval{b19.mtie.consensus.pooled.both-decisive.rcp-share.ci}{[57.1, 75.2]}
\setval{b19.mtie.consensus.pooled.count-only.n-decided}{73}
\setval{b19.mtie.consensus.pooled.count-only.rcp-share}{52.1}
\setval{b19.mtie.consensus.pooled.count-only.rcp-share.ci}{[40.0, 63.8]}
\setval{b19.mtie.consensus.pooled.rcp-only.n-decided}{55}
\setval{b19.mtie.consensus.pooled.rcp-only.rcp-share}{72.7}
\setval{b19.mtie.consensus.pooled.rcp-only.rcp-share.ci}{[60.0, 84.3]}
\setval{b19.mtie.tiezone.bright.percentile}{11.0}
\setval{b19.mtie.tiezone.nanomteb.percentile}{23.5}
\setval{b19.mtie.tiezone.rcp}{0.02}
\setval{b19.mtie.tiezone.trecdl.percentile}{38.6}
\setval{b19.mtie.tiezone.vidore.percentile}{22.2}
\setval{b19.restr.excl-arguana.s1.consensus.pooled.rcp-share}{65.2}
\setval{b19.restr.excl-arguana.s2.consensus.pooled.rcp-share}{87.4}
\setval{b19.restr.excl-arguana.s3.consensus.pooled.rcp-share}{82.2}
\setval{b19.s1.consensus.pooled.n-cases}{480}
\setval{b19.s1.consensus.pooled.n-decided}{317}
\setval{b19.s1.consensus.pooled.rcp-share}{63.1}
\setval{b19.s1.consensus.pooled.rcp-share.ci}{[57.3, 68.5]}
\setval{b19.s1.deepseek-4-1-flash.pooled.n-decided}{354}
\setval{b19.s1.deepseek-4-1-flash.pooled.rcp-share}{60.5}
\setval{b19.s1.glm-5-3-flash.pooled.n-decided}{297}
\setval{b19.s1.glm-5-3-flash.pooled.rcp-share}{61.3}
\setval{b19.s1.kimi-k3.pooled.n-decided}{323}
\setval{b19.s1.kimi-k3.pooled.rcp-share}{66.3}
\setval{b19.s1ctl.consensus.pooled.n-cases}{120}
\setval{b19.s1ctl.consensus.pooled.n-decided}{76}
\setval{b19.s1ctl.consensus.pooled.rcp-share}{75.0}
\setval{b19.s1ctl.consensus.pooled.rcp-share.ci}{[65.3, 84.3]}
\setval{b19.s1ctl.deepseek-4-1-flash.pooled.n-decided}{92}
\setval{b19.s1ctl.deepseek-4-1-flash.pooled.rcp-share}{75.0}
\setval{b19.s1ctl.glm-5-3-flash.pooled.n-decided}{79}
\setval{b19.s1ctl.glm-5-3-flash.pooled.rcp-share}{70.9}
\setval{b19.s1ctl.kimi-k3.pooled.n-decided}{79}
\setval{b19.s1ctl.kimi-k3.pooled.rcp-share}{77.2}
\setval{b19.s2.consensus.pooled.n-cases}{447}
\setval{b19.s2.consensus.pooled.n-decided}{333}
\setval{b19.s2.consensus.pooled.rcp-share}{86.8}
\setval{b19.s2.consensus.pooled.rcp-share.ci}{[83.1, 90.3]}
\setval{b19.s2.deepseek-4-1-flash.bright.n-decided}{82}
\setval{b19.s2.deepseek-4-1-flash.bright.rcp-share}{84.1}
\setval{b19.s2.deepseek-4-1-flash.bright.rcp-share.ci}{[75.9, 92.0]}
\setval{b19.s2.deepseek-4-1-flash.nanomteb.n-decided}{74}
\setval{b19.s2.deepseek-4-1-flash.nanomteb.rcp-share}{86.5}
\setval{b19.s2.deepseek-4-1-flash.nanomteb.rcp-share.ci}{[78.4, 93.4]}
\setval{b19.s2.deepseek-4-1-flash.pooled.n-decided}{352}
\setval{b19.s2.deepseek-4-1-flash.pooled.rcp-share}{86.1}
\setval{b19.s2.deepseek-4-1-flash.trecdl.n-decided}{86}
\setval{b19.s2.deepseek-4-1-flash.trecdl.rcp-share}{94.2}
\setval{b19.s2.deepseek-4-1-flash.trecdl.rcp-share.ci}{[88.6, 98.8]}
\setval{b19.s2.deepseek-4-1-flash.vidore.n-decided}{110}
\setval{b19.s2.deepseek-4-1-flash.vidore.rcp-share}{80.9}
\setval{b19.s2.deepseek-4-1-flash.vidore.rcp-share.ci}{[73.1, 87.9]}
\setval{b19.s2.glm-5-3-flash.bright.n-decided}{73}
\setval{b19.s2.glm-5-3-flash.bright.rcp-share}{80.8}
\setval{b19.s2.glm-5-3-flash.bright.rcp-share.ci}{[71.2, 89.3]}
\setval{b19.s2.glm-5-3-flash.nanomteb.n-decided}{58}
\setval{b19.s2.glm-5-3-flash.nanomteb.rcp-share}{86.2}
\setval{b19.s2.glm-5-3-flash.nanomteb.rcp-share.ci}{[77.2, 94.5]}
\setval{b19.s2.glm-5-3-flash.pooled.n-decided}{306}
\setval{b19.s2.glm-5-3-flash.pooled.rcp-share}{85.3}
\setval{b19.s2.glm-5-3-flash.trecdl.n-decided}{78}
\setval{b19.s2.glm-5-3-flash.trecdl.rcp-share}{93.6}
\setval{b19.s2.glm-5-3-flash.trecdl.rcp-share.ci}{[87.8, 98.7]}
\setval{b19.s2.glm-5-3-flash.vidore.n-decided}{97}
\setval{b19.s2.glm-5-3-flash.vidore.rcp-share}{81.4}
\setval{b19.s2.glm-5-3-flash.vidore.rcp-share.ci}{[73.2, 89.1]}
\setval{b19.s2.kimi-k3.bright.n-decided}{80}
\setval{b19.s2.kimi-k3.bright.rcp-share}{81.2}
\setval{b19.s2.kimi-k3.bright.rcp-share.ci}{[72.5, 89.3]}
\setval{b19.s2.kimi-k3.nanomteb.n-decided}{70}
\setval{b19.s2.kimi-k3.nanomteb.rcp-share}{85.7}
\setval{b19.s2.kimi-k3.nanomteb.rcp-share.ci}{[76.8, 93.2]}
\setval{b19.s2.kimi-k3.pooled.n-decided}{340}
\setval{b19.s2.kimi-k3.pooled.rcp-share}{86.2}
\setval{b19.s2.kimi-k3.trecdl.n-decided}{82}
\setval{b19.s2.kimi-k3.trecdl.rcp-share}{98.8}
\setval{b19.s2.kimi-k3.trecdl.rcp-share.ci}{[96.2, 100.0]}
\setval{b19.s2.kimi-k3.vidore.n-decided}{108}
\setval{b19.s2.kimi-k3.vidore.rcp-share}{80.6}
\setval{b19.s2.kimi-k3.vidore.rcp-share.ci}{[72.8, 87.7]}
\setval{b19.s3.consensus.pooled.n-cases}{100}
\setval{b19.s3.consensus.pooled.n-decided}{50}
\setval{b19.s3.consensus.pooled.rcp-share}{84.0}
\setval{b19.s3.consensus.pooled.rcp-share.ci}{[73.5, 93.5]}
\setval{b19.s3.deepseek-4-1-flash.pooled.n-decided}{71}
\setval{b19.s3.deepseek-4-1-flash.pooled.rcp-share}{76.1}
\setval{b19.s3.glm-5-3-flash.pooled.n-decided}{45}
\setval{b19.s3.glm-5-3-flash.pooled.rcp-share}{86.7}
\setval{b19.s3.kimi-k3.pooled.n-decided}{52}
\setval{b19.s3.kimi-k3.pooled.rcp-share}{80.8}
\setval{b2.bin.all.r4.either-missing.ac1}{0.846}
\setval{b2.bin.all.r4.either-missing.kappa}{0.411}
\setval{b2.bin.all.r4.either-missing.prev}{11.8}
\setval{b2.desc.bright.grades}{2{,}580}
\setval{b2.desc.nano.grades}{2{,}322}
\setval{b2.desc.vidore.grades}{2{,}178}
\setval{b2.grade.all.alpha-interval}{0.571}
\setval{b2.grade.all.alpha-interval.hi}{0.604}
\setval{b2.grade.all.alpha-interval.lo}{0.535}
\setval{b2.grade.all.g5.ac2-quad}{0.630}
\setval{b2.grade.all.g5.agree-within-1}{77.8}
\setval{b2.grade.all.icc1}{0.572}
\setval{b2.grade.all.rel-panel}{0.802}
\setval{b2.grade.all.rel-panel.hi}{0.823}
\setval{b2.grade.all.rel-panel.lo}{0.778}
\setval{b2.loao.alpha-interval.max}{0.581}
\setval{b2.loao.alpha-interval.min}{0.554}
\setval{b2.loao.verdict-ab-agree.max}{73.8}
\setval{b2.loao.verdict-ab-agree.min}{70.2}
\setval{b2.verdict.all.all.ab.agree}{72.1}
\setval{b2.verdict.all.all.ab.kappa}{0.421}
\setval{b2.verdict.all.all.ab.kappa.hi}{0.505}
\setval{b2.verdict.all.all.ab.kappa.lo}{0.335}
\setval{b20.affected.queries}{446}
\setval{b20.new.a4.bright.fm14.any}{30.7}
\setval{b20.new.a4.bright.xjudge.mean-rho}{0.988}
\setval{b20.new.draft.app.g.lb.bright.count.ctxlrr1b}{41.6}
\setval{b20.new.draft.app.g.lb.bright.count.ctxlrr1b.rank}{12}
\setval{b20.new.draft.app.g.lb.bright.count.ctxlrr2b}{40.1}
\setval{b20.new.draft.app.g.lb.bright.count.ctxlrr2b.rank}{13}
\setval{b20.new.draft.app.g.lb.bright.count.ctxlrr6b}{39.8}
\setval{b20.new.draft.app.g.lb.bright.count.ctxlrr6b.rank}{14}
\setval{b20.new.draft.app.g.lb.bright.count.jinarrv3}{52.5}
\setval{b20.new.draft.app.g.lb.bright.count.jinarrv3.rank}{11}
\setval{b20.new.draft.app.g.lb.bright.count.qwen3rr06b}{53.4}
\setval{b20.new.draft.app.g.lb.bright.count.qwen3rr06b.rank}{10}
\setval{b20.new.draft.app.g.lb.bright.count.qwen3rr4b}{66.2}
\setval{b20.new.draft.app.g.lb.bright.count.qwen3rr4b.rank}{6}
\setval{b20.new.draft.app.g.lb.bright.count.qwen3rr8b}{64.4}
\setval{b20.new.draft.app.g.lb.bright.count.qwen3rr8b.rank}{8}
\setval{b20.new.draft.app.g.lb.bright.count.rr4fast}{64.8}
\setval{b20.new.draft.app.g.lb.bright.count.rr4fast.rank}{7}
\setval{b20.new.draft.app.g.lb.bright.count.rr4pro}{68.0}
\setval{b20.new.draft.app.g.lb.bright.count.rr4pro.rank}{5}
\setval{b20.new.draft.app.g.lb.bright.count.voyage25}{70.1}
\setval{b20.new.draft.app.g.lb.bright.count.voyage25.rank}{4}
\setval{b20.new.draft.app.g.lb.bright.count.voyage25lt}{62.1}
\setval{b20.new.draft.app.g.lb.bright.count.voyage25lt.rank}{9}
\setval{b20.new.draft.app.g.lb.bright.count.zerank1}{77.5}
\setval{b20.new.draft.app.g.lb.bright.count.zerank1.rank}{2}
\setval{b20.new.draft.app.g.lb.bright.count.zerank1sm}{72.4}
\setval{b20.new.draft.app.g.lb.bright.count.zerank1sm.rank}{3}
\setval{b20.new.draft.app.g.lb.bright.count.zerank2}{79.5}
\setval{b20.new.draft.app.g.lb.bright.count.zerank2.rank}{1}
\setval{b20.new.draft.app.g.lb.bright.ndcg.ctxlrr1b}{17.9}
\setval{b20.new.draft.app.g.lb.bright.ndcg.ctxlrr1b.rank}{13}
\setval{b20.new.draft.app.g.lb.bright.ndcg.ctxlrr2b}{17.3}
\setval{b20.new.draft.app.g.lb.bright.ndcg.ctxlrr2b.rank}{14}
\setval{b20.new.draft.app.g.lb.bright.ndcg.ctxlrr6b}{18.1}
\setval{b20.new.draft.app.g.lb.bright.ndcg.ctxlrr6b.rank}{12}
\setval{b20.new.draft.app.g.lb.bright.ndcg.jinarrv3}{24.0}
\setval{b20.new.draft.app.g.lb.bright.ndcg.jinarrv3.rank}{10}
\setval{b20.new.draft.app.g.lb.bright.ndcg.qwen3rr06b}{18.7}
\setval{b20.new.draft.app.g.lb.bright.ndcg.qwen3rr06b.rank}{11}
\setval{b20.new.draft.app.g.lb.bright.ndcg.qwen3rr4b}{28.4}
\setval{b20.new.draft.app.g.lb.bright.ndcg.qwen3rr4b.rank}{8}
\setval{b20.new.draft.app.g.lb.bright.ndcg.qwen3rr8b}{27.9}
\setval{b20.new.draft.app.g.lb.bright.ndcg.qwen3rr8b.rank}{9}
\setval{b20.new.draft.app.g.lb.bright.ndcg.rr4fast}{30.1}
\setval{b20.new.draft.app.g.lb.bright.ndcg.rr4fast.rank}{6}
\setval{b20.new.draft.app.g.lb.bright.ndcg.rr4pro}{33.5}
\setval{b20.new.draft.app.g.lb.bright.ndcg.rr4pro.rank}{4}
\setval{b20.new.draft.app.g.lb.bright.ndcg.voyage25}{33.9}
\setval{b20.new.draft.app.g.lb.bright.ndcg.voyage25.rank}{3}
\setval{b20.new.draft.app.g.lb.bright.ndcg.voyage25lt}{28.5}
\setval{b20.new.draft.app.g.lb.bright.ndcg.voyage25lt.rank}{7}
\setval{b20.new.draft.app.g.lb.bright.ndcg.zerank1}{38.3}
\setval{b20.new.draft.app.g.lb.bright.ndcg.zerank1.rank}{1}
\setval{b20.new.draft.app.g.lb.bright.ndcg.zerank1sm}{32.9}
\setval{b20.new.draft.app.g.lb.bright.ndcg.zerank1sm.rank}{5}
\setval{b20.new.draft.app.g.lb.bright.ndcg.zerank2}{37.8}
\setval{b20.new.draft.app.g.lb.bright.ndcg.zerank2.rank}{2}
\setval{b20.new.draft.app.g.lb.bright.oss.ctxlrr1b}{37.8}
\setval{b20.new.draft.app.g.lb.bright.oss.ctxlrr1b.rank}{12}
\setval{b20.new.draft.app.g.lb.bright.oss.ctxlrr2b}{37.5}
\setval{b20.new.draft.app.g.lb.bright.oss.ctxlrr2b.rank}{13}
\setval{b20.new.draft.app.g.lb.bright.oss.ctxlrr6b}{37.3}
\setval{b20.new.draft.app.g.lb.bright.oss.ctxlrr6b.rank}{14}
\setval{b20.new.draft.app.g.lb.bright.oss.jinarrv3}{52.0}
\setval{b20.new.draft.app.g.lb.bright.oss.jinarrv3.rank}{11}
\setval{b20.new.draft.app.g.lb.bright.oss.qwen3rr06b}{52.2}
\setval{b20.new.draft.app.g.lb.bright.oss.qwen3rr06b.rank}{10}
\setval{b20.new.draft.app.g.lb.bright.oss.qwen3rr4b}{66.2}
\setval{b20.new.draft.app.g.lb.bright.oss.qwen3rr4b.rank}{6}
\setval{b20.new.draft.app.g.lb.bright.oss.qwen3rr8b}{65.1}
\setval{b20.new.draft.app.g.lb.bright.oss.qwen3rr8b.rank}{7}
\setval{b20.new.draft.app.g.lb.bright.oss.rr4fast}{65.0}
\setval{b20.new.draft.app.g.lb.bright.oss.rr4fast.rank}{8}
\setval{b20.new.draft.app.g.lb.bright.oss.rr4pro}{70.5}
\setval{b20.new.draft.app.g.lb.bright.oss.rr4pro.rank}{5}
\setval{b20.new.draft.app.g.lb.bright.oss.voyage25}{71.0}
\setval{b20.new.draft.app.g.lb.bright.oss.voyage25.rank}{4}
\setval{b20.new.draft.app.g.lb.bright.oss.voyage25lt}{62.0}
\setval{b20.new.draft.app.g.lb.bright.oss.voyage25lt.rank}{9}
\setval{b20.new.draft.app.g.lb.bright.oss.zerank1}{79.3}
\setval{b20.new.draft.app.g.lb.bright.oss.zerank1.rank}{2}
\setval{b20.new.draft.app.g.lb.bright.oss.zerank1sm}{73.4}
\setval{b20.new.draft.app.g.lb.bright.oss.zerank1sm.rank}{3}
\setval{b20.new.draft.app.g.lb.bright.oss.zerank2}{80.8}
\setval{b20.new.draft.app.g.lb.bright.oss.zerank2.rank}{1}
\setval{b20.new.draft.app.g.lb.bright.rcp.ctxlrr1b}{42.8}
\setval{b20.new.draft.app.g.lb.bright.rcp.ctxlrr1b.rank}{12}
\setval{b20.new.draft.app.g.lb.bright.rcp.ctxlrr2b}{41.7}
\setval{b20.new.draft.app.g.lb.bright.rcp.ctxlrr2b.rank}{13}
\setval{b20.new.draft.app.g.lb.bright.rcp.ctxlrr6b}{41.5}
\setval{b20.new.draft.app.g.lb.bright.rcp.ctxlrr6b.rank}{14}
\setval{b20.new.draft.app.g.lb.bright.rcp.jinarrv3}{54.4}
\setval{b20.new.draft.app.g.lb.bright.rcp.jinarrv3.rank}{11}
\setval{b20.new.draft.app.g.lb.bright.rcp.qwen3rr06b}{54.7}
\setval{b20.new.draft.app.g.lb.bright.rcp.qwen3rr06b.rank}{10}
\setval{b20.new.draft.app.g.lb.bright.rcp.qwen3rr4b}{68.1}
\setval{b20.new.draft.app.g.lb.bright.rcp.qwen3rr4b.rank}{6}
\setval{b20.new.draft.app.g.lb.bright.rcp.qwen3rr8b}{66.5}
\setval{b20.new.draft.app.g.lb.bright.rcp.qwen3rr8b.rank}{7}
\setval{b20.new.draft.app.g.lb.bright.rcp.rr4fast}{66.5}
\setval{b20.new.draft.app.g.lb.bright.rcp.rr4fast.rank}{8}
\setval{b20.new.draft.app.g.lb.bright.rcp.rr4pro}{71.0}
\setval{b20.new.draft.app.g.lb.bright.rcp.rr4pro.rank}{5}
\setval{b20.new.draft.app.g.lb.bright.rcp.voyage25}{72.1}
\setval{b20.new.draft.app.g.lb.bright.rcp.voyage25.rank}{4}
\setval{b20.new.draft.app.g.lb.bright.rcp.voyage25lt}{64.1}
\setval{b20.new.draft.app.g.lb.bright.rcp.voyage25lt.rank}{9}
\setval{b20.new.draft.app.g.lb.bright.rcp.zerank1}{79.7}
\setval{b20.new.draft.app.g.lb.bright.rcp.zerank1.rank}{2}
\setval{b20.new.draft.app.g.lb.bright.rcp.zerank1sm}{73.9}
\setval{b20.new.draft.app.g.lb.bright.rcp.zerank1sm.rank}{3}
\setval{b20.new.draft.app.g.lb.bright.rcp.zerank2}{81.6}
\setval{b20.new.draft.app.g.lb.bright.rcp.zerank2.rank}{1}
\setval{b20.new.draft.app.hi.b14.all.top1diff}{10}
\setval{b20.prev.qwen.stageb-share}{7.1}
\setval{b21.human.arguana.contests-own}{6}
\setval{b21.human.arguana.own-grade-mean}{0.28}
\setval{b21.human.arguana.own-reviews}{18}
\setval{b7.trec.count-kall.calls-total}{113}
\setval{b8.human.docs.all.tielev.acc.tournament}{0.680}
\setval{b8.human.docs.all.tielev.acc.tournament.hi}{0.752}
\setval{b8.human.docs.all.tielev.acc.tournament.lo}{0.608}
\setval{b8.human.docs.all.tielev.n}{256}
\setval{b8.nist.qwen.pooled.lev.budget.mean-pool-docs}{287.4}
\setval{b8.nist.qwen.pooled.tielev.acc.tournament}{0.671}
\setval{b8.nist.qwen.pooled.tielev.acc.tournament.hi}{0.725}
\setval{b8.nist.qwen.pooled.tielev.acc.tournament.lo}{0.620}
\setval{b8.nist.qwen.pooled.tielev.n}{26{,}730}
\setval{b8.vidore.tielev.acc.tournament}{0.660}
\setval{b8.vidore.tielev.acc.tournament.hi}{0.691}
\setval{b8.vidore.tielev.acc.tournament.lo}{0.630}
\setval{b8.vidore.tielev.n}{16{,}187}
\setval{b9.nist.dl19.tie.all.n-pairs}{25{,}194}
\setval{b9.nist.dl19.tie.all.nist-vs-parry-both}{0.700}
\setval{b9.nist.dl19.tie.all.tournament.vs-nist}{0.671}
\setval{b9.nist.dl19.tie.all.tournament.vs-parry1}{0.757}
\setval{draft.2.poolsize}{150}
\setval{draft.2.siglevel}{5}
\setval{draft.3.nano.stagea.windowsperdoc}{14.4}
\setval{draft.3.nano.stageb.placementsperdoc}{6.7}
\setval{draft.3.whycal.opposite}{7}
\setval{draft.4.dl.stagea-calls.dl19}{325}
\setval{draft.4.dl.stagea-calls.dl20}{375}
\setval{draft.4.nano.queries}{649}
\setval{draft.4.pay}{35}
\setval{draft.5.dl.pairs}{91}
\setval{draft.5.margin.decisive}{0.02}
\setval{draft.5.sameann.d-cal-raw}{0.047}
\setval{draft.5.sameann.d-cal-raw.hi}{0.061}
\setval{draft.5.sameann.d-cal-raw.lo}{0.035}
\setval{draft.5.study1.human}{77.9}
\setval{draft.5.study1.human.n}{416}
\setval{draft.5.synth.rcp.type1.max}{6.4}
\setval{draft.5.synth.rcp.type1.min}{3.8}
\setval{draft.5.xlang.count}{0.815}
\setval{draft.5.xlang.count3}{0.888}
\setval{draft.app.ab.boot.b}{2{,}000}
\setval{draft.app.ab.pairs}{91}
\setval{draft.app.ab.pigeonhole}{25}
\setval{draft.app.ckl.bright.stagea.windowsperdoc}{14.4}
\setval{draft.app.ckl.bright.stageb.placementsperdoc}{6.7}
\setval{draft.app.ckl.bt.maxiter}{250}
\setval{draft.app.ckl.cal.maxiter}{500}
\setval{draft.app.ckl.metajudge.doccap}{1{,}500}
\setval{draft.app.ckl.nvfp4.rev}{\texttt {0368c1b}}
\setval{draft.app.ckl.params.ctxl1}{1.3}
\setval{draft.app.ckl.params.ctxl2}{2.6}
\setval{draft.app.ckl.params.ctxl6}{6.8}
\setval{draft.app.ckl.params.jina}{0.6}
\setval{draft.app.ckl.params.qwen3rr06}{0.6}
\setval{draft.app.ckl.params.qwen3rr4}{4.0}
\setval{draft.app.ckl.params.qwen3rr8}{8.2}
\setval{draft.app.ckl.params.zr1}{4.0}
\setval{draft.app.ckl.params.zr1s}{1.7}
\setval{draft.app.ckl.params.zr2}{4.0}
\setval{draft.app.ckl.parry.commit}{\texttt {12969fb}}
\setval{draft.app.ckl.rrf-k}{60}
\setval{draft.app.ckl.seed}{42}
\setval{draft.app.ckl.stagea.adaptive-calls}{56}
\setval{draft.app.ckl.stagea.calls}{216}
\setval{draft.app.ckl.stagea.coverage-windows}{80}
\setval{draft.app.ckl.stagea.overlap}{0.3}
\setval{draft.app.ckl.stageb.random-windows}{50}
\setval{draft.app.ckl.stageb.windows}{100}
\setval{draft.app.ckl.temperature}{1.0}
\setval{draft.app.d.band.e4}{0.20}
\setval{draft.app.d.gain.g1}{0.25}
\setval{draft.app.d.gain.g2}{0.45}
\setval{draft.app.d.gain.g2p5}{0.575}
\setval{draft.app.d.gain.g3}{0.70}
\setval{draft.app.d.gain.g3p5}{0.85}
\setval{draft.app.d.gain.g4}{1}
\setval{draft.app.d.strata.concordant}{90}
\setval{draft.app.d.strata.signflip}{221}
\setval{draft.app.d.tie-margin}{0.005}
\setval{draft.app.ef.dl.calfit.extra}{500}
\setval{draft.app.ef.dl.calfit.queries}{597}
\setval{draft.app.ef.dl.pooled.rcpa.agree.k}{36}
\setval{draft.app.ef.dl.pooled.rcpa.contra}{1}
\setval{draft.app.ef.dl20.kendall18}{0.725}
\setval{draft.app.ef.dl20.rho.diffrawa.hi}{0.088}
\setval{draft.app.ef.dl20.rho.diffrawa.lo}{0.015}
\setval{draft.app.ef.vd.xlang.cnt3.diff}{0.074}
\setval{draft.app.ef.vd.xlang.cnt3.diff.hi}{0.078}
\setval{draft.app.ef.vd.xlang.cnt3.diff.lo}{0.070}
\setval{draft.app.g.lb.nano.count.ctxlrr1b}{81.9}
\setval{draft.app.g.lb.nano.count.ctxlrr1b.rank}{12}
\setval{draft.app.g.lb.nano.count.ctxlrr2b}{82.9}
\setval{draft.app.g.lb.nano.count.ctxlrr2b.rank}{10}
\setval{draft.app.g.lb.nano.count.ctxlrr6b}{83.2}
\setval{draft.app.g.lb.nano.count.ctxlrr6b.rank}{8}
\setval{draft.app.g.lb.nano.count.jinarrv3}{70.7}
\setval{draft.app.g.lb.nano.count.jinarrv3.rank}{14}
\setval{draft.app.g.lb.nano.count.qwen3rr06b}{79.8}
\setval{draft.app.g.lb.nano.count.qwen3rr06b.rank}{13}
\setval{draft.app.g.lb.nano.count.qwen3rr4b}{83.4}
\setval{draft.app.g.lb.nano.count.qwen3rr4b.rank}{7}
\setval{draft.app.g.lb.nano.count.qwen3rr8b}{83.0}
\setval{draft.app.g.lb.nano.count.qwen3rr8b.rank}{9}
\setval{draft.app.g.lb.nano.count.rr4fast}{82.4}
\setval{draft.app.g.lb.nano.count.rr4fast.rank}{11}
\setval{draft.app.g.lb.nano.count.rr4pro}{84.2}
\setval{draft.app.g.lb.nano.count.rr4pro.rank}{5}
\setval{draft.app.g.lb.nano.count.voyage25}{85.9}
\setval{draft.app.g.lb.nano.count.voyage25.rank}{4}
\setval{draft.app.g.lb.nano.count.voyage25lt}{83.9}
\setval{draft.app.g.lb.nano.count.voyage25lt.rank}{6}
\setval{draft.app.g.lb.nano.count.zerank1}{88.5}
\setval{draft.app.g.lb.nano.count.zerank1.rank}{2}
\setval{draft.app.g.lb.nano.count.zerank1sm}{86.2}
\setval{draft.app.g.lb.nano.count.zerank1sm.rank}{3}
\setval{draft.app.g.lb.nano.count.zerank2}{88.8}
\setval{draft.app.g.lb.nano.count.zerank2.rank}{1}
\setval{draft.app.g.lb.nano.ndcg.ctxlrr1b}{70.0}
\setval{draft.app.g.lb.nano.ndcg.ctxlrr1b.rank}{9}
\setval{draft.app.g.lb.nano.ndcg.ctxlrr2b}{72.2}
\setval{draft.app.g.lb.nano.ndcg.ctxlrr2b.rank}{6}
\setval{draft.app.g.lb.nano.ndcg.ctxlrr6b}{74.5}
\setval{draft.app.g.lb.nano.ndcg.ctxlrr6b.rank}{2}
\setval{draft.app.g.lb.nano.ndcg.jinarrv3}{69.3}
\setval{draft.app.g.lb.nano.ndcg.jinarrv3.rank}{11}
\setval{draft.app.g.lb.nano.ndcg.qwen3rr06b}{69.9}
\setval{draft.app.g.lb.nano.ndcg.qwen3rr06b.rank}{10}
\setval{draft.app.g.lb.nano.ndcg.qwen3rr4b}{74.1}
\setval{draft.app.g.lb.nano.ndcg.qwen3rr4b.rank}{3}
\setval{draft.app.g.lb.nano.ndcg.qwen3rr8b}{74.7}
\setval{draft.app.g.lb.nano.ndcg.qwen3rr8b.rank}{1}
\setval{draft.app.g.lb.nano.ndcg.rr4fast}{71.7}
\setval{draft.app.g.lb.nano.ndcg.rr4fast.rank}{8}
\setval{draft.app.g.lb.nano.ndcg.rr4pro}{73.3}
\setval{draft.app.g.lb.nano.ndcg.rr4pro.rank}{4}
\setval{draft.app.g.lb.nano.ndcg.voyage25}{72.4}
\setval{draft.app.g.lb.nano.ndcg.voyage25.rank}{5}
\setval{draft.app.g.lb.nano.ndcg.voyage25lt}{71.8}
\setval{draft.app.g.lb.nano.ndcg.voyage25lt.rank}{7}
\setval{draft.app.g.lb.nano.ndcg.zerank1}{67.0}
\setval{draft.app.g.lb.nano.ndcg.zerank1.rank}{13}
\setval{draft.app.g.lb.nano.ndcg.zerank1sm}{67.0}
\setval{draft.app.g.lb.nano.ndcg.zerank1sm.rank}{12}
\setval{draft.app.g.lb.nano.ndcg.zerank2}{66.5}
\setval{draft.app.g.lb.nano.ndcg.zerank2.rank}{14}
\setval{draft.app.g.lb.nano.oss.ctxlrr1b}{80.7}
\setval{draft.app.g.lb.nano.oss.ctxlrr1b.rank}{12}
\setval{draft.app.g.lb.nano.oss.ctxlrr2b}{82.6}
\setval{draft.app.g.lb.nano.oss.ctxlrr2b.rank}{8}
\setval{draft.app.g.lb.nano.oss.ctxlrr6b}{82.8}
\setval{draft.app.g.lb.nano.oss.ctxlrr6b.rank}{7}
\setval{draft.app.g.lb.nano.oss.jinarrv3}{71.6}
\setval{draft.app.g.lb.nano.oss.jinarrv3.rank}{14}
\setval{draft.app.g.lb.nano.oss.qwen3rr06b}{77.6}
\setval{draft.app.g.lb.nano.oss.qwen3rr06b.rank}{13}
\setval{draft.app.g.lb.nano.oss.qwen3rr4b}{82.2}
\setval{draft.app.g.lb.nano.oss.qwen3rr4b.rank}{10}
\setval{draft.app.g.lb.nano.oss.qwen3rr8b}{82.5}
\setval{draft.app.g.lb.nano.oss.qwen3rr8b.rank}{9}
\setval{draft.app.g.lb.nano.oss.rr4fast}{81.6}
\setval{draft.app.g.lb.nano.oss.rr4fast.rank}{11}
\setval{draft.app.g.lb.nano.oss.rr4pro}{83.7}
\setval{draft.app.g.lb.nano.oss.rr4pro.rank}{5}
\setval{draft.app.g.lb.nano.oss.voyage25}{85.6}
\setval{draft.app.g.lb.nano.oss.voyage25.rank}{3}
\setval{draft.app.g.lb.nano.oss.voyage25lt}{83.0}
\setval{draft.app.g.lb.nano.oss.voyage25lt.rank}{6}
\setval{draft.app.g.lb.nano.oss.zerank1}{87.1}
\setval{draft.app.g.lb.nano.oss.zerank1.rank}{1}
\setval{draft.app.g.lb.nano.oss.zerank1sm}{85.0}
\setval{draft.app.g.lb.nano.oss.zerank1sm.rank}{4}
\setval{draft.app.g.lb.nano.oss.zerank2}{87.0}
\setval{draft.app.g.lb.nano.oss.zerank2.rank}{2}
\setval{draft.app.g.lb.nano.rcp.ctxlrr1b}{80.8}
\setval{draft.app.g.lb.nano.rcp.ctxlrr1b.rank}{12}
\setval{draft.app.g.lb.nano.rcp.ctxlrr2b}{82.3}
\setval{draft.app.g.lb.nano.rcp.ctxlrr2b.rank}{10}
\setval{draft.app.g.lb.nano.rcp.ctxlrr6b}{83.1}
\setval{draft.app.g.lb.nano.rcp.ctxlrr6b.rank}{6}
\setval{draft.app.g.lb.nano.rcp.jinarrv3}{70.0}
\setval{draft.app.g.lb.nano.rcp.jinarrv3.rank}{14}
\setval{draft.app.g.lb.nano.rcp.qwen3rr06b}{78.3}
\setval{draft.app.g.lb.nano.rcp.qwen3rr06b.rank}{13}
\setval{draft.app.g.lb.nano.rcp.qwen3rr4b}{82.4}
\setval{draft.app.g.lb.nano.rcp.qwen3rr4b.rank}{8}
\setval{draft.app.g.lb.nano.rcp.qwen3rr8b}{82.4}
\setval{draft.app.g.lb.nano.rcp.qwen3rr8b.rank}{9}
\setval{draft.app.g.lb.nano.rcp.rr4fast}{81.1}
\setval{draft.app.g.lb.nano.rcp.rr4fast.rank}{11}
\setval{draft.app.g.lb.nano.rcp.rr4pro}{83.4}
\setval{draft.app.g.lb.nano.rcp.rr4pro.rank}{5}
\setval{draft.app.g.lb.nano.rcp.voyage25}{85.1}
\setval{draft.app.g.lb.nano.rcp.voyage25.rank}{3}
\setval{draft.app.g.lb.nano.rcp.voyage25lt}{82.5}
\setval{draft.app.g.lb.nano.rcp.voyage25lt.rank}{7}
\setval{draft.app.g.lb.nano.rcp.zerank1}{87.1}
\setval{draft.app.g.lb.nano.rcp.zerank1.rank}{2}
\setval{draft.app.g.lb.nano.rcp.zerank1sm}{84.7}
\setval{draft.app.g.lb.nano.rcp.zerank1sm.rank}{4}
\setval{draft.app.g.lb.nano.rcp.zerank2}{87.2}
\setval{draft.app.g.lb.nano.rcp.zerank2.rank}{1}
\setval{draft.app.g.lodo.bright.oss.onepl}{2.4}
\setval{draft.app.g.lodo.bright.oss.red}{49.6}
\setval{draft.app.g.lodo.bright.qwen.onepl}{2.0}
\setval{draft.app.g.lodo.bright.qwen.red}{46.5}
\setval{draft.app.g.lodo.nano.oss.onepl}{1.3}
\setval{draft.app.g.lodo.nano.oss.red}{64.1}
\setval{draft.app.g.lodo.nano.qwen.onepl}{1.1}
\setval{draft.app.g.lodo.nano.qwen.red}{60.6}
\setval{draft.app.g.s.nano.nq.arguana}{50}
\setval{draft.app.g.t.nano.ndcg.ctxlrr6b.mean}{74.5}
\setval{draft.app.g.t.nano.ndcg.qwen3rr8b.mean}{74.7}
\setval{draft.app.g.t.nano.ndcg.zerank1sm.mean}{67.0}
\setval{draft.app.g.t.nano.ndcg.zerank2.mean}{66.5}
\setval{draft.app.hi.b14.all.ndatasets}{35}
\setval{draft.app.jm.b3.disagree.n}{760}
\setval{draft.app.jm.b3.disagree.rcp}{82.9}
\setval{draft.app.jm.b3.disagree.rcp.hi}{86.2}
\setval{draft.app.jm.b3.disagree.rcp.lo}{79.1}
\setval{draft.app.jm.b3.zerank.M}{$-$7.7}
\setval{draft.app.jm.b3.zerank.M.hi}{$-$1.8}
\setval{draft.app.jm.b3.zerank.M.lo}{$-$13.1}
\setval{draft.app.jm.s1.hotpotqa.human.k}{0}
\setval{draft.app.jm.s1.hotpotqa.human.n}{12}
\setval{draft.app.jm.s1.human.hi}{81.6}
\setval{draft.app.jm.s1.human.lo}{73.7}
\setval{draft.app.jm.s2.human.controls}{20}
\setval{draft.app.jm.s2.human.signflips}{80}
\setval{draft.ff.chance}{53}
\setval{draft.ff.chance.exact}{52.7}
\setval{draft.tab.verdicts.decisive-margin}{0.02}
\setval{n1.counts.pooled.g2}{4{,}167}
\setval{n1.counts.pooled.suspects}{1{,}043}
\setval{n1.fs.auc.pooled}{0.746 [0.719, 0.774]}
\setval{n1.nz.rho.pooled}{0.955 [0.773, 0.982]}
\setval{n1.parry.g3.ge2_pmean.ref}{46.7}
\setval{n1.parry.g3.ge2_pmean.ref.n}{488}
\setval{n1.parry.g3.ge2_pmean.sus}{3.3}
\setval{n1.parry.g3.ge2_pmean.sus.n}{209}
\setval{n1.parry.ge2_pmean.ref}{40.1}
\setval{n1.parry.ge2_pmean.sus}{2.7}
\setval{n1.parry.mean.pmean.ref}{1.493}
\setval{n1.parry.mean.pmean.sus}{0.413}
\setval{n1.rr.auc.pooled}{0.879 [0.865, 0.894]}
\setval{n1.rr.auc01.pooled}{0.295 [0.270, 0.321]}
\setval{n1.rr.rho.pooled}{0.631 [0.433, 0.776]}
\setval{n10.s2.bright.aff.maj}{70.3}
\setval{n10.s2.bright.aff.self-both}{175}
\setval{n10.s2.bright.aff.self-only-rcp}{9}
\setval{n10.s2.bright.aff.signflips}{187}
\setval{n10.s2.bright.all.ctl.maj}{86.3}
\setval{n10.s2.bright.all.maj}{74.4}
\setval{n10.s2.bright.all.maj.hi}{77.9}
\setval{n10.s2.bright.all.maj.k}{467}
\setval{n10.s2.bright.all.maj.lo}{70.7}
\setval{n10.s2.bright.all.maj.n}{628}
\setval{n10.s2.bright.all.unan}{80.2}
\setval{n10.s2.bright.unaff.maj}{75.5}
\setval{n10.s2.nano.controls}{200}
\setval{n10.s2.nano.ctl.maj}{88.5}
\setval{n10.s2.nano.ctl.maj.k}{154}
\setval{n10.s2.nano.ctl.maj.n}{174}
\setval{n10.s2.nano.deepseek}{74.0}
\setval{n10.s2.nano.ds.count}{13}
\setval{n10.s2.nano.ds.hotpotqa.maj.k}{4}
\setval{n10.s2.nano.ds.hotpotqa.maj.n}{14}
\setval{n10.s2.nano.ds.maj-favors-rcp}{12}
\setval{n10.s2.nano.glm}{78.6}
\setval{n10.s2.nano.len.maj.rcp-longer}{76.2}
\setval{n10.s2.nano.len.maj.rcp-shorter}{81.1}
\setval{n10.s2.nano.len.rcp-longer-share}{61.5}
\setval{n10.s2.nano.maj}{78.4}
\setval{n10.s2.nano.maj.hi}{82.3}
\setval{n10.s2.nano.maj.k}{514}
\setval{n10.s2.nano.maj.lo}{74.1}
\setval{n10.s2.nano.maj.n}{656}
\setval{n10.s2.nano.pos.glm.rcp-first}{79.0}
\setval{n10.s2.nano.pos.glm.rcp-second}{69.3}
\setval{n10.s2.nano.signflips}{800}
\setval{n10.s2.nano.unan}{82.5}
\setval{n10.s2.nano.unan.n}{458}
\setval{n11.stagea.rho.max}{0.971}
\setval{n11.stagea.rho.min}{0.968}
\setval{n14.rubric.docs}{20}
\setval{n14.rubric.docs.total}{300}
\setval{n14.rubric.docshare}{0.3}
\setval{n14.rubric.queries}{5}
\setval{n14.rubric.queries.total}{15}
\setval{n15.human.rcp-ideal-fallback}{27}
\setval{n15.human.target-contests}{100}
\setval{n15.nano.scidocs.emptypos}{7}
\setval{n17.ties.cala.dl19.decisive}{25}
\setval{n17.ties.cala.dl20.decisive}{41}
\setval{n18.ties.maxdelta}{0.4}
\setval{n2.doc.conc.nist_pmean}{0.762 [0.702, 0.809]}
\setval{n2.doc.conc.theta.nist}{0.817 [0.747, 0.860]}
\setval{n2.doc.conc.theta.pmean}{0.844 [0.819, 0.866]}
\setval{n2.sys.dec.pmean}{25}
\setval{n20.nano.arguana.selfdocs}{21}
\setval{n7.gap.dl19}{0.048}
\setval{n7.gap.dl19.hi}{0.102}
\setval{n7.gap.dl19.lo}{-0.006}
\setval{n7.gap.dl20}{0.030}
\setval{n7.gap.dl20.hi}{0.071}
\setval{n7.gap.dl20.lo}{-0.012}
\setval{n7.hs.zr.all}{69.3}
\setval{n7.hs.zr.all.hi}{76.6}
\setval{n7.hs.zr.all.lo}{61.9}
\setval{n7.hs.zr.all.n}{153}
\setval{n7.hs.zr.rcpfav}{78.8}
\setval{n7.hs.zr.rcpfav.hi}{88.8}
\setval{n7.hs.zr.rcpfav.lo}{68.8}
\setval{n7.hs.zr.rcpfav.n}{66}
\setval{n7.hs.zr.rcpfav02}{85.3}
\setval{n7.hs.zr.rcpfav02.hi}{92.2}
\setval{n7.hs.zr.rcpfav02.lo}{78.4}
\setval{n7.hs.zr.rcpfav02.n}{102}
\setval{n7.rank.cond.dl19}{9}
\setval{n7.rank.cond.dl20}{5}
\setval{n7.rank.raw.dl19}{14}
\setval{n7.rank.raw.dl20}{9}
\setval{n8.rank.ndcg.zerank1}{13}
\setval{n8.rank.ndcg.zerank2}{14}
\setval{n8.rank.rcp.ctxlrr6b}{6}
\setval{n8.rank.rcp.qwen3rr8b}{9}
\setval{n8.rank.rcp.zerank1}{2}
\setval{n8.rank.rcp.zerank2}{1}
\setval{n9.s1.bright12.maj}{87.8}
\setval{n9.s1.bright12.maj.hi}{89.9}
\setval{n9.s1.bright12.maj.lo}{85.5}
\setval{n9.s1.bright12.maj.n}{843}
\setval{n9.s1.bright12.unan}{89.6}
\setval{n9.s1.bright12.unan.n}{694}
\setval{n9.s1.brightfix.cases}{373}
\setval{n9.s1.brightfix.identical}{72}
\setval{n9.s1.brightfix.maj}{85.8}
\setval{n9.s1.brightfix.maj.n}{344}
\setval{n9.s1.brightfix.nonquery}{11}
\setval{n9.s1.brightfix.orig}{456}
\setval{n9.s1.brightoss.unaff.maj}{84.0}
\setval{n9.s1.brightoss.unaff.maj.n}{514}
\setval{n9.s1.human.unan}{89.6}
\setval{n9.s1.human.unan.hi}{93.7}
\setval{n9.s1.human.unan.k}{120}
\setval{n9.s1.human.unan.lo}{83.2}
\setval{n9.s1.human.unan.n}{134}
\setval{n9.s1.nano.cases}{264}
\setval{n9.s1.nano.ds.count}{13}
\setval{n9.s1.nano.ds.favor}{12}
\setval{n9.s1.nano.hotpot.maj.n}{8}
\setval{n9.s1.nano.maj}{77.9}
\setval{n9.s1.nano.maj.hi}{83.3}
\setval{n9.s1.nano.maj.k}{173}
\setval{n9.s1.nano.maj.lo}{72.1}
\setval{n9.s1.nano.maj.n}{222}
\setval{n9.s1.nano.unan}{82.6}
\setval{n9.s1.nano.unan.n}{167}
\setval{n9.s1.nanooss.maj}{72.2}
\setval{n9.s1.nanooss.maj.n}{216}
\setval{nr.a2.b21.main-wo-own}{72.0}
\setval{nr.a2.b21.main-wo-own.k}{131}
\setval{nr.a2.b21.main-wo-own.n}{182}
\setval{nr.a2.gainmap.binary-ge2.main}{\val {a1a.ctrl.gainmap.binary-ge2.d14}}
\setval{nr.a2.gainmap.binary-ge4.main}{\val {a1a.ctrl.gainmap.binary-ge4.d14}}
\setval{nr.a2.main.bright}{\val {a1a.d14.bright}}
\setval{nr.a2.main.bright.hi}{\val {a1a.d14.bright.hi}}
\setval{nr.a2.main.bright.k}{\val {a1a.d14.bright.k}}
\setval{nr.a2.main.bright.lo}{\val {a1a.d14.bright.lo}}
\setval{nr.a2.main.bright.n}{\val {a1a.d14.bright.n}}
\setval{nr.a2.main.loao.max}{\val {a1a.loao.d14.max}}
\setval{nr.a2.main.loao.min}{\val {a1a.loao.d14.min}}
\setval{nr.a2.main.loao.runs}{\val {a1a.loao.d14.min.n}}
\setval{nr.a2.main.nanomteb}{\val {a1a.d14.nanomteb}}
\setval{nr.a2.main.nanomteb.hi}{\val {a1a.d14.nanomteb.hi}}
\setval{nr.a2.main.nanomteb.k}{\val {a1a.d14.nanomteb.k}}
\setval{nr.a2.main.nanomteb.lo}{\val {a1a.d14.nanomteb.lo}}
\setval{nr.a2.main.nanomteb.n}{\val {a1a.d14.nanomteb.n}}
\setval{nr.a2.main.ndcg-seen}{\val {a1a.d14.ndcg-seen}}
\setval{nr.a2.main.ndcg-seen.hi}{\val {a1a.d14.ndcg-seen.hi}}
\setval{nr.a2.main.ndcg-seen.k}{\val {a1a.d14.ndcg-seen.k}}
\setval{nr.a2.main.ndcg-seen.lo}{\val {a1a.d14.ndcg-seen.lo}}
\setval{nr.a2.main.ndcg-seen.n}{\val {a1a.d14.ndcg-seen.n}}
\setval{nr.a2.main.nn}{\val {a1a.d14.nn}}
\setval{nr.a2.main.nn.hi}{\val {a1a.d14.nn.hi}}
\setval{nr.a2.main.nn.k}{\val {a1a.d14.nn.k}}
\setval{nr.a2.main.nn.lo}{\val {a1a.d14.nn.lo}}
\setval{nr.a2.main.nn.n}{\val {a1a.d14.nn.n}}
\setval{nr.a2.main.vidore}{\val {a1a.d14.vidore}}
\setval{nr.a2.main.vidore.hi}{\val {a1a.d14.vidore.hi}}
\setval{nr.a2.main.vidore.k}{\val {a1a.d14.vidore.k}}
\setval{nr.a2.main.vidore.lo}{\val {a1a.d14.vidore.lo}}
\setval{nr.a2.main.vidore.n}{\val {a1a.d14.vidore.n}}
\setval{nr.a2.neartie.other.k}{\val {a1a.d17.low.ndcg-disagrees.k}}
\setval{nr.a2.neartie.other.median-abs-dndcg}{\val {a1a.d17.marginal.median-abs-dndcg}}
\setval{nr.a2.neartie.other.n}{\val {a1a.d17.low.ndcg-disagrees.n}}
\setval{nr.a2.neartie.same.k}{\val {a1a.d17.low.ndcg-agrees.k}}
\setval{nr.a2.neartie.same.n}{\val {a1a.d17.low.ndcg-agrees.n}}
\setval{nr.a4.within.bright.rest-mean}{0.46}
\setval{nr.a4.within.bright.zer-mean}{0.63}
\setval{nr.a4.within.nano.rest-mean}{0.63}
\setval{nr.a4.within.nano.zer-mean}{0.79}
\setval{nr.a5.b19.cases}{1{,}147}
\setval{nr.a5.synth.ndcg.type1.max}{6.4}
\setval{nr.a5.synth.ndcg.type1.min}{3.4}
\setval{nr.a6.b19.textcap}{2{,}000}
\setval{nr.figB.cal.attr.bt}{33.2}
\setval{nr.figB.cal.attr.calibration}{40.6}
\setval{nr.figB.cal.attr.criterion}{7.6}
\setval{nr.figB.cal.bss.2pl}{0.814}
\setval{nr.figB.cal.ece.2pl}{0.017}
\setval{nr.figB.cal.ece.bt-only}{0.371}
\setval{nr.figB.cal.ece.logreg}{0.007}
\setval{nr.figB.cal.n-triples}{486{,}750}
\setval{nr.figB.cal.oss.bss.2pl}{0.789}
\setval{nr.figB.cal.pq.2pl-better}{97.8}
\setval{nr.integ.main.hotpotqa.k}{\val {a1a.d14.hotpotqa.k}}
\setval{nr.integ.main.hotpotqa.n}{\val {a1a.d14.hotpotqa.n}}
\setval{nr.m2.cost.pooldocs-perquery}{287}
\setval{nr.m3.judge.oss.active}{5.1}
\setval{nr.m3.judge.qwen.active}{17}
\setval{nr.m3.q3.rawr.nano.qwen}{0.60}
\setval{nr.m4.bright.sens.rcp-gt-ndcg.datasets}{12}
\setval{nr.m4.nano.sens.rcp-gt-ndcg.datasets}{11}

\title{Rubric-Calibrated Preferences:\texorpdfstring{\\}{ } Cross-Query Calibration of LLM Judgments\texorpdfstring{\\}{ } via Item Response Theory}

\iclrfinalcopy
\author{%
  Fabian David Schmidt$^{*}$ \quad Donato Crisostomi$^{*\dagger}$ \quad Carlos Lassance \quad Nils Reimers \\
    Cohere \qquad {\small $^{*}$Equal contribution. \quad $^{\dagger}$Work done during an internship at Cohere.}%
}

\begin{document}

\maketitle
\lhead{Preprint}

\begin{abstract}
  Rerankers decide which documents users and LLMs see, yet their standard metric, nDCG, relies on human relevance labels that are costly, sparse, noisy, and discretely graded. As rerankers approach each other in quality, nDCG on these labels therefore increasingly fails to separate them. LLM judges could supply dense labels. Relative judgments within one query tell even close candidates apart, yet their scores share no scale across queries. Absolute grades share one scale but are too coarse to distinguish documents of similar relevance. In this work, we propose Rubric-Calibrated Preferences (\rcp{}), which combine both kinds of judgment. A listwise Bradley--Terry tournament orders each query's documents, and a rubric of yes/no criteria of increasing stringency provides an absolute standard. Item Response Theory (IRT), which scores test-takers based on their answers to common questions, then uses the shared criteria to put all queries' tournament scores on one scale. \rcp{}'s retrieval metric, \rcpndcg{}, replaces nDCG's discrete labels with the resulting calibrated relevance probabilities. Against blind grades from \val{a1a.desc.annotators} external annotators, calibration raises the correlation between a query's mean score and its mean human grade from \val{a1b.cal.qlevel.all.raw} to \val{a1b.cal.qlevel.all.cal}. The probabilities rank a useful document above a non-useful one with probability \val{a1b.auc.ge2.all.gain} (AUC, chance 0.5), versus \val{a1b.auc.ge2.all.qrel} for the benchmark labels. When the annotators' grades prefer one of two rerankers and exactly one metric agrees, that metric is \rcpndcg{} in \val{a1a.primary}\% of \val{a1a.primary.n} comparisons (chance about \val{draft.ff.chance}\%). On TREC-DL, \rcpndcg{} sides with NIST assessors' grades on every reranker pair that these grades separate significantly. \rcpndcg{} also resolves many of nDCG's ties and separates \val{a4.nano.sens.ratio-rcp-ndcg} times as many reranker pairs on NanoBEIR. Rubric calibration thus turns relative LLM judgments into dense relevance labels that are comparable across queries and agree with human judgment.

\end{abstract}

\etocdepthtag.toc{main}
\section{Introduction}
\label{sec:intro}
\begin{wrapfigure}[16]{r}{0.5\textwidth}
  \vspace{-0.75cm}
  \centering
  \includegraphics[width=0.5\textwidth]{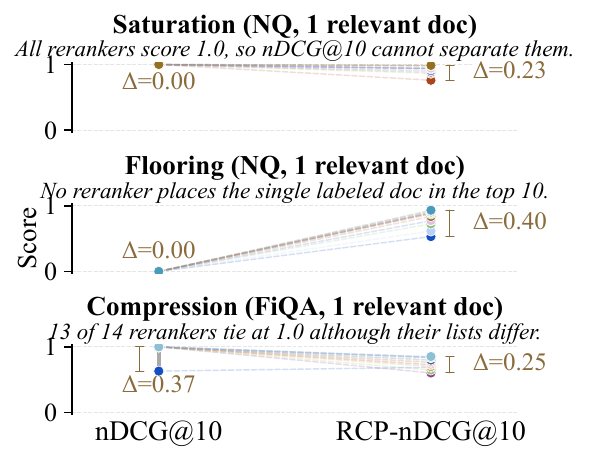}
  \vspace{-0.6cm}
  \caption{On three queries, nDCG@10 saturates, floors, or compresses 14 rerankers, and \rcpndcg{}@10 spreads them.}
  \label{fig:ndcg-failure-modes}
\end{wrapfigure}
Reranking is central to modern retrieval. First-stage retrievers maximize recall, but rerankers select the top documents \citep[\textit{inter alia}]{nogueira2019bert, khattab2020colbert}. Yet the standard reranking metric, nDCG@$k$, is increasingly inadequate. Per query, it saturates when all rerankers score 1, floors when all score 0, and compresses them when half or more of their pairs tie (\cref{fig:ndcg-failure-modes}). For the 14 rerankers we evaluate, one of these failure modes occurs on \val{a4.nano.fm14.any}\% of NanoBEIR and \val{b20.new.a4.bright.fm14.any}\% of BRIGHT queries. These failures stem from human relevance labels (qrels) that are costly, sparse, noisy, and discretely graded \citep{zobel1998reliable, buckley2004retrieval, ban2026completing}, and worsen as rerankers improve \citep{arabzadeh2022shallow}.

Large language models (LLMs) offer a promising alternative~\citep{faggioli2023perspectives}, generating relevance judgments at scale and at a quality competitive with human assessments~\citep{thomas2024searcher, rahmani2025judging}. However, relative judgments such as pair- or listwise comparisons~\citep{sun2023chatgptsearch, yoonacurank} produce fine-grained within-query rankings but no common scale across queries, which rules out graded labels, shared relevance thresholds, and cross-query statistics. Absolute judgments~\citep{zhuang2024beyond, long2025precise} enable cross-query comparison, but their coarse grades leave many documents of a query tied. What is missing is a method that combines the resolution of relative judgments with the cross-query comparability of absolute ones. Our research question is whether a fixed rubric of yes/no criteria can place LLM preferences made within one query on a scale shared across queries. We also test whether the resulting scores agree with human judgment. We study this question in reranking evaluation, where independent human references exist and such scores could replace sparse, discrete, and noisy qrels.

We propose Rubric-Calibrated Preferences (\rcp{}; \cref{fig:pipeline}), a calibration method grounded in Item Response Theory (IRT). IRT scores test-takers on one scale based on their answers to shared questions~\citep{lord1980applications}. In \rcp{}, documents take the test, and the criteria of a rubric are the questions. In Stage~A, a listwise Bradley--Terry tournament~\citep{bradley1952rank} over windows of documents produces per-document scores for a fine-grained within-query ordering. In Stage~B, each document is evaluated against five binary criteria of increasing stringency, yielding an absolute, interpretable relevance signal. An IRT model~\citep{birnbaum1968latent} then uses the shared criteria to place the tournament scores of all queries on one scale. Stage~A sets the order within a query; the rubric sets the scale across queries. In retrieval, \rcp{} labels a benchmark's candidate pool once, and \rcpndcg{} then scores any reranker with the resulting dense gains instead of discrete labels.

\begin{figure*}[t]
  \centering
  \includegraphics[width=\textwidth]{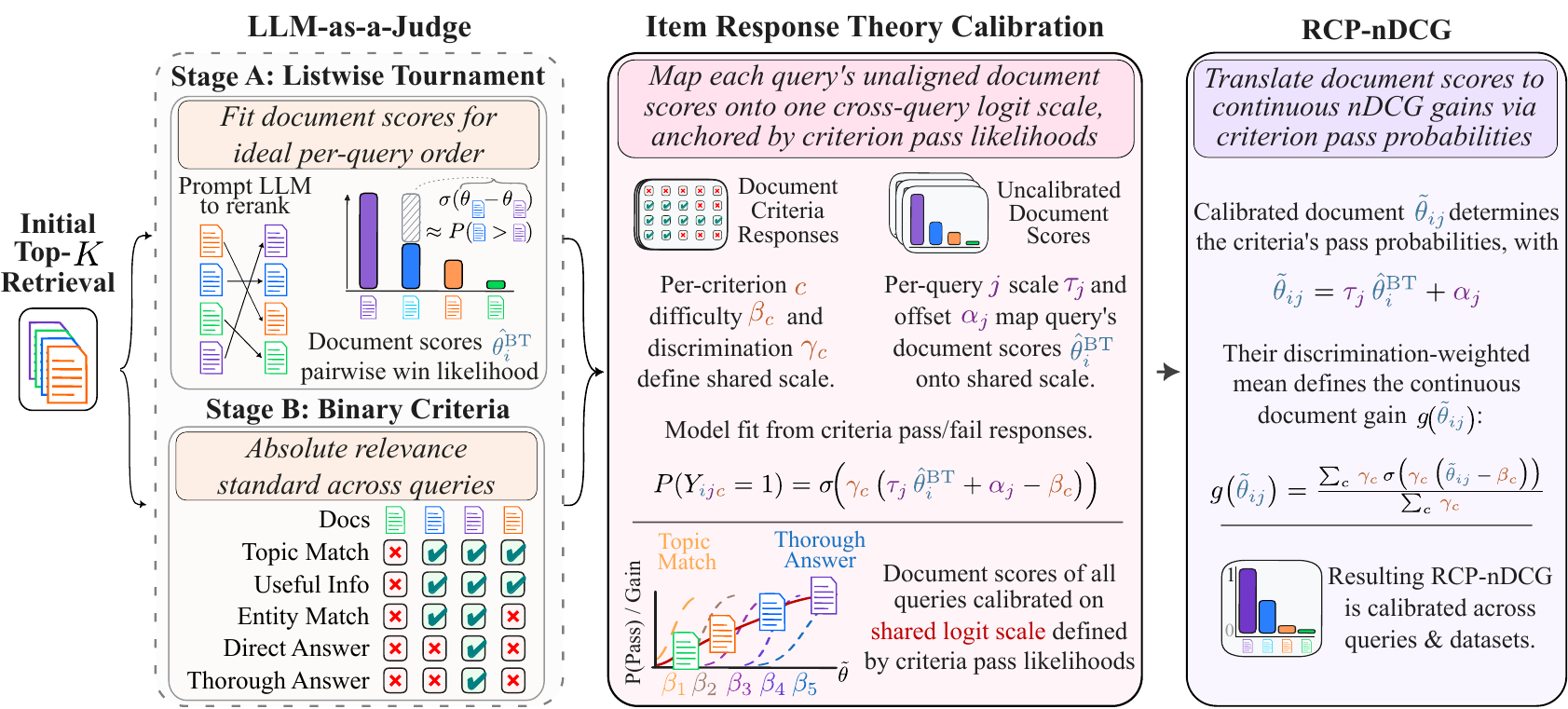}
  \caption{\textup{RCP(-nDCG)}. For each candidate document, \textbf{Stage~A} produces per-query Bradley--Terry scores, and \textbf{Stage~B} records criterion answers. An Item Response Theory \textbf{Calibration} model places the scores of all queries on one scale, which yields the continuous gains of \rcpndcg{} (\cref{sec:method}).}
  \label{fig:pipeline}
  \vspace{-0.6cm}
\end{figure*}

Our contributions are fourfold. \mbox{\textbf{(1)}}~We propose \rcp{}, a calibration method that uses IRT to place within-query LLM preferences on a scale shared across queries. \mbox{\textbf{(2)}}~We derive \rcpndcg{}, a continuous retrieval metric whose gains are calibrated relevance probabilities rather than discrete labels. We evaluate it on 13 NanoBEIR datasets~\citep{zetaalpha2024nanobeir,muennighoff2023mteb}, 12 reasoning-intensive BRIGHT tasks~\citep{su2024bright}, and the eight corpora of ViDoRe v3~\citep{vidore3}, with 14 rerankers from six model families. On NanoBEIR, \rcpndcg{} nearly doubles the share of reranker pairs that a paired $t$-test separates significantly, from \val{a4.nano.sens.ndcg}\% for nDCG to \val{a4.nano.sens.rcp-qwen}\% (\cref{sec:ranking-results}). \mbox{\textbf{(3)}}~We test whether these separations are correct against two human references. First, \val{a1a.desc.annotators} external annotators graded the top-five lists of reranker pairs without seeing labels, scores, or which list a document came from. Against their grades, the AUC\footnote{AUC: the probability that a label source ranks a random document that annotators rate useful above a random one they rate not useful (0.5 is chance, 1 is perfect).} of the \rcp{} gains is \val{a1b.auc.ge2.all.gain}, versus \val{a1b.auc.ge2.all.qrel} for the qrels (\cref{sec:blind-eval}). Where exactly one of nDCG and \rcpndcg{} favors the annotators' preferred reranker, it is \rcpndcg{} in \val{a1a.primary}\% of \val{a1a.primary.n} contests (chance about \val{draft.ff.chance}\%, as nDCG ties more often; \cref{sec:external}). Second, the official grades of the TREC Deep Learning track (\mbox{TREC-DL})~\citep{craswell2020overview, craswell2021overview} come from NIST assessors. With either of two LLM judges, \rcpndcg{} agrees with NIST on every reranker pair that these grades separate significantly. \mbox{\textbf{(4)}}~We release the code and data at \releaseurl{}. Without further judge calls, the calibrated labels score any retrieval model that reorders our candidate pools. The released LLM judgments and the \val{a1a.desc.grades} human grades can benchmark future LLM relevance judges.

\section{Background}
\label{sec:background}
\paragraph{Notation and nDCG.}
We index queries $q_{\qpar{j}}$ by $\qpar{j}$, a query's $K$ candidate documents $d_{\dpar{i}}$ by $\dpar{i}$, and rubric criteria by $\cpar{c} = 1, \dots, \cpar{C}$. For a ranking $\pi$ with $d_{\pi(r)}$ at rank $r$, the discounted cumulative gain (DCG) and its normalized form, nDCG~\citep{jarvelin2002cumulated}, at cutoff $k$ are
\begin{equation}
  \mathrm{DCG@}k(\pi) = \sum\nolimits_{r=1}^{k} G\bigl(d_{\pi(r)}\bigr) / \log_2(r+1),
  \quad
  \mathrm{nDCG@}k(\pi) = \mathrm{DCG@}k(\pi) / \mathrm{DCG@}k(\pi^*).
  \label{eq:ndcg}
\end{equation}
The ideal ranking $\pi^*$ sorts all of the query's documents by decreasing gain. Benchmarks use qrels as the gains $G(d)$, give unlabeled documents gain 0, and average nDCG@$k$ over queries.

\paragraph{Bradley--Terry model.}
The Bradley--Terry (BT) model~\citep{bradley1952rank} gives the probability that document $\dpar{i}$ is preferred to document $\dpar{i'}$ as $P(\dpar{i} \succ \dpar{i'}) = \sigma(\dpar{\theta_i} - \dpar{\theta_{i'}})$, where $\dpar{\theta_i}$ and $\dpar{\theta_{i'}}$ are latent scores and $\sigma(x) = 1/(1+e^{-x})$.
Within a comparison set $\mathcal{I}$, $\dpar{\theta_i}$ is identified only up to an additive constant. Scores from different sets $\mathcal{I} \neq \mathcal{I}'$, such as two queries, are therefore not comparable. Their spread can also differ, because a judge may be more decisive in one set than in another.

\paragraph{Item Response Theory.}
In the two-parameter logistic (2PL) model of IRT~\citep{birnbaum1968latent}, a respondent with ability $\dpar{\theta_i}$, here document $\dpar{i}$, answers an item, here criterion $\cpar{c}$, positively with probability $P(Y_{\dpar{i}\cpar{c}}=1) = \sigma\bigl(\cpar{\gamma_c}(\dpar{\theta_i}-\cpar{\beta_c})\bigr)$. The difficulty $\cpar{\beta_c}$ is the ability at which criterion $\cpar{c}$ is passed with probability one half, and the discrimination $\cpar{\gamma_c}>0$ sets how steeply this pass probability rises with ability. Because the documents of all queries share the criterion parameters, equal abilities imply equal pass probabilities. All abilities thus share one scale. With only $\cpar{C}$ yes/no criteria, many documents receive identical answers and ability estimates. IRT alone thus cannot order them finely.

\section{Methodology}
\label{sec:method}
\subsection{Stage A: Listwise Bradley--Terry Tournament}
\label{sec:stage-a}

Given a query $q_{\qpar{j}}$ and its candidate set $\mathcal{D}_{\qpar{j}} = \{d_1, \dots, d_K\}$, Stage~A repeats listwise LLM reranking~\citep{sun2023chatgptsearch} over windows of $w = 10$ documents. For each window, the LLM judge assigns every document a relevance score $s_{\dpar{i}} \in [-5,\,+5]$, which the prompt describes as a logit scale (\cref{app:prompts}). Only the $w(w-1)/2$ soft preferences $p_{\dpar{ii'}} = \sigma(s_{\dpar{i}} - s_{\dpar{i'}})$ are kept. BT scores $\dpar{\hat{\theta}_i^{\mathrm{BT}}}$ are fit to these preferences~\citep{pipitone2025zelo} by minimizing the pairwise cross-entropy between $p_{\dpar{ii'}}$ and the predicted preferences $\sigma(\dpar{\hat{\theta}_i^{\mathrm{BT}}} - \dpar{\hat{\theta}_{i'}^{\mathrm{BT}}})$. A three-phase window schedule focuses on uncertain ranks and keeps mid-ranked documents away from dominant ones, which would lower their scores (\cref{app:method-stagea}). Stage~A ranks the documents by $\dpar{\hat{\theta}_i^{\mathrm{BT}}}$.

\subsection{Stage B: Pointwise Criterion Evaluation}
\label{sec:stage-b}

The LLM judge reads $w$ documents per window but rates each document on its own against an absolute standard (\cref{app:prompts}). Windows are drawn first at random, then among documents of similar estimated quality. Each document appears in $n_{\dpar{i}\qpar{j}}$ Stage~B windows, and each appearance yields binary responses $Y_{\dpar{i}\qpar{j}\cpar{c}} \in \{0,1\}$ recording whether $d_{\dpar{i}}$ passes criterion $\cpar{c}$ of \cref{tab:criteria} for query $q_{\qpar{j}}$.

\begin{table}[t]
  \centering
  \caption{Stage~B criteria with discrimination $\cpar{\gamma_c}$ and difficulty $\cpar{\beta_c}$ on NanoBEIR (\model{Qwen3.5-397B}).}
  \label{tab:criteria}
  \footnotesize
  \setlength{\tabcolsep}{4pt}
  \begin{tabularx}{\linewidth}{@{}cl X rr@{}}
    \toprule
    \textbf{ID} & \textbf{Criterion} & \textbf{Diagnostic Question} & $\cpar{\gamma_c}$ & $\cpar{\beta_c}$ \\
    \midrule
    C1 & Topical Relevance    & \emph{Is its topic clearly related to the query's information need?} & $\val{a4.nano.qwen.gamma.c1}$ & $\val{a4.nano.qwen.beta.c1}$ \\
    C2 & Information Utility  & \emph{Does it contain specific, useful information to address the query?} & $\val{a4.nano.qwen.gamma.c2}$ & $\val{a4.nano.qwen.beta.c2}$ \\
    C3 & Entity/Detail Match  & \emph{Does it name and discuss the particular entity or detail in the query?} & $\val{a4.nano.qwen.gamma.c3}$ & $\val{a4.nano.qwen.beta.c3}$ \\
    C4 & Direct Answer        & \emph{Does it explicitly and directly answer the query's primary question?} & $\val{a4.nano.qwen.gamma.c4}$ & $\val{a4.nano.qwen.beta.c4}$ \\
    C5 & Thorough Treatment   & \emph{Does it treat the topic in depth rather than superficially?} & $\val{a4.nano.qwen.gamma.c5}$ & $\val{a4.nano.qwen.beta.c5}$ \\
    \bottomrule
  \end{tabularx}
\end{table}

\paragraph{Rubric design.}
The five criteria target general-purpose retrieval, and other applications may need other criteria. In an ablation on TREC-DL, we test the shipped rubric and fifteen variants (\cref{app:robustness-factorial,app:robustness-lattice}). No significantly separated reranker pair reverses unless the rubric ignores the query or puts all criteria at one end of the scale.

\subsection{Calibration: 2PL IRT Merge}
\label{sec:calibration-merge}

Since BT scores and 2PL abilities both lie on logit scales, we map BT scores affinely to 2PL abilities:
\begin{equation}
  P(Y_{\dpar{i}\qpar{j}\cpar{c}} = 1) = \sigma\!\Big(\cpar{\gamma_c} \,\bigl(\qpar{\tau_j} \,\dpar{\hat{\theta}_i^{\mathrm{BT}}} + \qpar{\alpha_j} - \cpar{\beta_c}\bigr)\Big).
  \label{eq:2pl}
\end{equation}
Each criterion's discrimination $\cpar{\gamma_c}$ and difficulty $\cpar{\beta_c}$ are shared across queries. The offset $\qpar{\alpha_j}$ places query $j$'s BT scores on the shared scale, fixing BT's free additive constant (\cref{sec:background}). The scale factor $\qpar{\tau_j} > 0$ sets their spread. The same BT gap can stand for a large relevance difference on one query and a small one on another. Scaling by $\qpar{\tau_j}$ gives a one-logit difference in ability the same relevance meaning on every query (\cref{sec:diagnostics}). The Stage~B answers of the query's documents estimate $\qpar{\tau_j}$ and $\qpar{\alpha_j}$. Shifting or stretching all abilities, with matching changes to $\cpar{\beta_c}$ and $\cpar{\gamma_c}$, leaves every probability in \cref{eq:2pl} unchanged. We therefore fix the unit with $\sum_{\cpar{c}} \cpar{\gamma_c} = \cpar{C}$ and the origin with $\operatorname{mean}_{\cpar{c}} \cpar{\beta_c} = 0$ (\cref{app:theory-identification}). With $\dpar{\hat{\theta}_i^{\mathrm{BT}}}$ held fixed, we fit all other parameters by maximizing their posterior under \cref{eq:2pl} with weak priors, once per suite and judge (\cref{app:method-fit}). The calibrated ability of document $d_{\dpar{i}}$ on query $q_{\qpar{j}}$ is
\begin{equation}
  \dpar{\tilde{\theta}_{ij}} = \qpar{\tau_j} \, \dpar{\hat{\theta}_i^{\mathrm{BT}}} + \qpar{\alpha_j}.
  \label{eq:calibrated-theta}
\end{equation}

\begin{proposition}[Order preservation]\label{prop:order}
  For any rubric and fit, sorting a query's documents by $\dpar{\tilde{\theta}_{ij}}$, or by any strictly increasing function of it, reproduces their Stage~A order (\cref{app:theory-orderscale}).
\end{proposition}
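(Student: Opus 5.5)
The plan is to show that, within a fixed query $q_{\qpar{j}}$, the map from Stage~A scores to calibrated abilities is strictly increasing, and then compose. By \cref{eq:calibrated-theta}, $\dpar{\tilde{\theta}_{ij}} = f_{\qpar{j}}(\dpar{\hat{\theta}_i^{\mathrm{BT}}})$ with $f_{\qpar{j}}(x) = \qpar{\tau_j} x + \qpar{\alpha_j}$. The parameters $\qpar{\tau_j}$ and $\qpar{\alpha_j}$ are shared by all of the query's documents, and $\qpar{\tau_j} > 0$ holds by construction of \cref{eq:2pl}. So $f_{\qpar{j}}$ is an affine map with positive slope, hence strictly increasing. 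This uses nothing about the rubric, the criterion parameters $\cpar{\gamma_c},\cpar{\beta_c}$, or the fitted values beyond positivity of $\qpar{\tau_j}$, which is why the claim holds ``for any rubric and fit.''

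Next, for any strictly increasing $g$, the composition $g \circ f_{\qpar{j}}$ is strictly increasing. Consequently, for two documents $\dpar{i},\dpar{i'}$ of the same query,
\[
  \dpar{\hat{\theta}_i^{\mathrm{BT}}} > \dpar{\hat{\theta}_{i'}^{\mathrm{BT}}} \iff g(\dpar{\tilde{\theta}_{ij}}) > g(\dpar{\tilde{\theta}_{i'j}}),
\]
and the same equivalence holds with $=$ in place of $>$. The pairwise relation therefore coincides with the Stage~A relation, so any sort by the new key yields the Stage~A order, and ties are preserved exactly. Typical instances are $g=\mathrm{id}$, $g=\sigma(\cpar{\gamma_c}(\cdot-\cpar{\beta_c}))$ with $\cpar{\gamma_c}>0$, and any positive combination of such pass probabilities, such as the gains used by \rcpndcg{}. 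Each of these is strictly increasing because $\sigma$ is.

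The only step needing care is where positivity of $\qpar{\tau_j}$ comes from in the fitted model, since a maximum a posteriori fit could in principle return $\qpar{\tau_j}=0$ or a negative value if it were unconstrained. I would handle this by pointing to the parametrization used in \cref{app:method-fit}: I expect $\qpar{\tau_j}=\exp(\cdot)$ or an equivalent constraint, which makes $\qpar{\tau_j}>0$ a structural property of every fit rather than an empirical one. With that established, the rest is the elementary monotonicity argument above.

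A final remark is that the statement concerns only within-query order, because $f_{\qpar{j}}$ differs across queries. No cross-query ordering claim is made or needed.
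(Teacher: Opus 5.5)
Your proof is correct and takes the same route as the paper: $\tau_j>0$ comes from the parametrization $\tau_j = \operatorname{softplus}(t_j)$ in the calibration fit (softplus rather than $\exp$, though both give $\tau_j>0$), so $x \mapsto f(\tau_j x + \alpha_j)$ is strictly increasing for every strictly increasing $f$, and the within-query order, ties included, matches Stage~A. Your first-paragraph claim that $\tau_j>0$ holds ``by construction of'' the 2PL equation is slightly off, since that equation imposes no sign constraint, but your third paragraph gives the right justification.
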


Calibration serves to make scores comparable across queries: a document with $\dpar{\tilde{\theta}_{ij}} = \cpar{\beta_c}$ passes criterion $\cpar{c}$ with probability one half on every query. Relevance labels, thresholds, and pooled statistics require this comparability (\cref{sec:calibration}).

\subsection{\texorpdfstring{\rcpndcg{}}{RCP-nDCG}: A Continuous Relevance Metric}
\label{sec:theta-ndcg}

\rcpndcg{} replaces the qrel gains of \cref{eq:ndcg} with a continuous gain:
\begin{equation}
  g(\dpar{\tilde{\theta}_{ij}}) = \sum\nolimits_{\cpar{c}} \cpar{\gamma_c}\,\sigma\bigl(\cpar{\gamma_c}\,(\dpar{\tilde{\theta}_{ij}} - \cpar{\beta_c})\bigr) \Big/ \sum\nolimits_{\cpar{c}} \cpar{\gamma_c}.
  \label{eq:theta-gain}
\end{equation}
Weighting each criterion's pass probability by its discrimination makes $g$ rise fastest where the criteria measure most precisely: its slope is proportional to the rubric's Fisher information. For any rubric, $g$ rises smoothly from 0 to 1 (\cref{lem:theory-gain}). \rcpndcg{}@$k$ is nDCG@$k$ with gain $G(d_{\dpar{i}}) = g(\dpar{\tilde{\theta}_{ij}})$ instead of qrels, and its ideal ranking is the Stage~A order (\cref{prop:order}). On NanoBEIR, $g$ is \val{a4.nano.qwen.g-beta1} at C1's difficulty and \val{a4.nano.qwen.g-beta4} at C4's. In \cref{fig:iccs}, \val{a4.nano.qwen.band.low}\% of pool documents lie below C1's, \val{a4.nano.qwen.band.mid}\% in between, and \val{a4.nano.qwen.band.high}\% above C4's. The gain thus varies most above C1's.

\paragraph{Count-nDCG.}
Let $d_{\dpar{i}}$ pass criterion $\cpar{c}$ in $S_{\dpar{i}\qpar{j}\cpar{c}}$ of its $n_{\dpar{i}\qpar{j}}$ Stage~B windows for query $q_{\qpar{j}}$. Our baseline Count-nDCG uses the pass share as gain, like rubric-based judges~\citep{farzi2025criteria},
\begin{equation}
  G_{\mathrm{Count}}(d_{\dpar{i}}) = \sum\nolimits_{\cpar{c}} S_{\dpar{i}\qpar{j}\cpar{c}} \big/ \bigl(\cpar{C}\, n_{\dpar{i}\qpar{j}}\bigr).
  \label{eq:count-gain}
\end{equation}
Documents with equal pass shares therefore tie. With equal discriminations and the ability implied by the Stage~B answers alone, \rcpndcg{} reduces to Count-nDCG (\cref{prop:theory-count-identity}).

\section{Experiments}
\label{sec:experiments}
Our experiments ask three questions. Is the 2PL well fitted, and are \rcp{}'s scores calibrated across queries (\cref{sec:diagnostics,sec:calibration})? Do they label documents better than the qrels (\cref{sec:blind-eval})? Does \rcpndcg{} compare rerankers better than nDCG on the benchmarks' qrels, which we call qrel-nDCG (\cref{sec:ranking-results,sec:external})?

\paragraph{Datasets, pool, and rerankers.}
We evaluate on 13 NanoBEIR datasets~\citep{zetaalpha2024nanobeir,muennighoff2023mteb} (\val{draft.4.nano.queries} queries) and 12 BRIGHT tasks~\citep{su2024bright} (\val{a4.bright.queries.evaluated} queries), two English-only suites with binary qrels. As in BRIGHT's official evaluation, we remove each query's excluded documents, and on NanoArguAna the query's own argument, from every scored ranking (\cref{app:nano-bright-scope}). We also use TREC Deep Learning 2019 and 2020 (TREC-DL; \val{a2.setup.dl19.queries} and \val{a2.setup.dl20.queries} queries), whose MS~MARCO passages NIST assessors graded from 0 to 3~\citep{nguyen2016msmarco,craswell2020overview,craswell2021overview}. The eight corpora of ViDoRe v3~\citep{vidore3} (\val{a3.n.questions} questions) hold pages of enterprise, governmental, and educational documents that human annotators graded on a three-point scale. The judge and the rerankers read each page as the benchmark's OCR text (\cref{app:vidore}).
We produce the top-\val{draft.2.poolsize} candidates per query ($K = \val{draft.2.poolsize}$) via Reciprocal Rank Fusion~\citep{cormack2009rrf} of three retrievers: \baseline{Cohere Embed~v4}, \baseline{Octen-Embedding-8B}, and \baseline{BM25}. Positive qrels that the retrievers miss replace the lowest-ranked candidates, to isolate reranking from first-stage recall. On TREC-DL, each pool joins these candidates with every passage NIST graded for the query (\val{nr.m2.cost.pooldocs-perquery} passages on average). The rerankers reorder the whole pool (\cref{app:trecdl-protocol}).
We evaluate 14 rerankers from six families (Cohere, Contextual~AI, Qwen3 Embedding, Voyage~AI, Jina~AI, ZeroEntropy; \cref{app:method-rerankers}).

\begin{wrapfigure}[22]{r}{0.49\textwidth}
  \vspace{-0.4cm}
  \centering
  \includegraphics{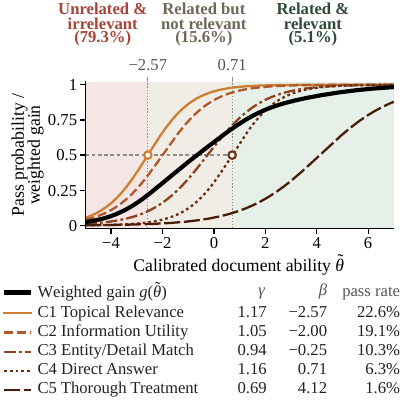}
  \vspace{-0.4cm}
  \caption{Criterion pass probabilities and gain $g$ (NanoBEIR, \model{Qwen3.5-397B}). The legend gives discrimination $\cpar{\gamma}$, difficulty $\cpar{\beta}$, and pass rate. Shading splits at C1's and C4's difficulties.}
  \label{fig:iccs}
\end{wrapfigure}
\paragraph{Judges.}
Both stages use open-weight LLMs, chosen for cost and reproducibility. The primary judge is \model{Qwen3.5-397B}-NVFP4 (\val{nr.m3.judge.qwen.active}B active)~\citep{nvidia2026qwen35nvfp4}. As a second judge, \model{gpt-oss-120b} (\val{nr.m3.judge.oss.active}B active)~\citep{openai2025gptoss} repeats both stages on NanoBEIR, BRIGHT, and TREC-DL (\cref{app:nano-bright-judge,app:trecdl-judge2}). On TREC-DL, the smaller \model{Qwen3.6-27B}~\citep{qwen2026qwen36} replaces the primary judge (\cref{app:trecdl-protocol}).

\paragraph{Human annotations.}
We collected blind relevance grades from \val{a1a.desc.annotators} paid external contractors, each with at least a BSc in the annotated subject (\cref{app:human}). They graded documents from \val{a1a.desc.datasets.nanomteb} NanoBEIR datasets, \val{a1a.desc.datasets.bright} BRIGHT tasks, and the five English corpora of ViDoRe v3. Each contest pairs a query with two rerankers' top-five lists. Annotators grade the union of both lists in random order, without seeing list membership, qrels, or scores. Per document, they answer four questions (Topical, Useful, Answers, Complete), and the most stringent yes sets the grade from 1 to 4, or 0 if none. Answers and Complete also allow Either, which gives a half grade. Most contests are disagreements, on which \rcpndcg{}@10 and qrel-nDCG@10 favor different rerankers over the full rankings, and the rest serve as concordant controls (\cref{app:human-design}). We analyze \val{a1a.desc.contests} contests on \val{a1a.desc.queries} distinct queries, each contest reviewed by three different annotators. Some queries enter two contests with different reranker pairs. A document's label is its mean grade from at least three annotators (\val{a1a.desc.grades} grades, \val{a1a.desc.distinct-docs} documents; \cref{app:human-reliability}). For each annotator, we score both rerankers' top-five lists by nDCG@5, with gains from that annotator's grades (\cref{app:human-instrument}). The reranker with the higher score is the annotator's preferred one. A score gap below \val{draft.app.d.tie-margin} is a tie, which the annotator's rankings of equally graded documents can break. The contest's verdict is the reranker that more of the three annotators prefer. In \val{a1a.desc.verdict-tie} contests, both are preferred equally often (one each and a tie, or three ties), leaving no verdict.

\subsection{2PL Diagnostics}
\label{sec:diagnostics}

We first test how well the 2PL fits the judge's answers to the criteria, since calibration places all queries on one scale only if this fit holds. On NanoBEIR with \model{Qwen3.5-397B}, difficulty increases from C1 to C5. C1 to C4 have near-uniform discriminations, separating documents about equally sharply (\cref{tab:criteria,fig:iccs}). The per-query scale $\qpar{\tau_j}$ varies about threefold, from \val{a4.nano.qwen.tau.p5} to \val{a4.nano.qwen.tau.p95} between its 5th and 95th percentiles, and similarly for \model{gpt-oss-120b} and on BRIGHT (\cref{app:nano-bright-params}). The 2PL's independence assumption is met: once the calibrated score is accounted for, two criteria's answers correlate only weakly (mean absolute correlation \val{a4.nano.qwen.q3.meanabs}, against \val{nr.m3.q3.rawr.nano.qwen} for raw answers; \cref{app:calibration-diagnostics}). Criterion parameters also transfer across datasets. When fitted on all other datasets, they cut the cross-entropy of a held-out dataset's answers by \val{draft.app.g.lodo.nano.qwen.red}\% against per-criterion pass rates (\cref{app:nano-bright-params}). The 2PL thus fits the judge's answers well enough to place all queries on one scale.

\subsection{Calibration Quality}
\label{sec:calibration}

\begin{figure}[t]
  \centering
  \includegraphics[width=\textwidth]{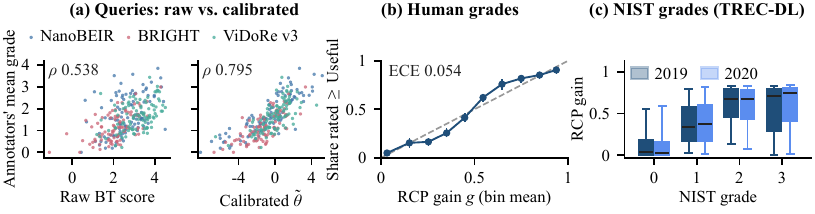}
  \caption{Calibration of \rcp{} against human and NIST grades. (a)~Per human-study query: annotators' mean grade against mean raw BT score (left) and mean calibrated score (right), with Spearman $\rho$. (b)~Documents binned by gain $g$: share of their individual grades at Useful or better against the bin's mean gain (diagonal: perfect calibration). (c)~\rcp{} gain of TREC-DL passages by NIST grade.}
  \label{fig:extcal}
  \vspace{-0.25cm}
\end{figure}

We then test the calibrated scores against three references: the judge's own answers, our human grades, and NIST's grades.

\paragraph{Judge's own passes.}
As \rcpndcg{}'s gains derive from predicted pass probabilities, we check them against the judge's \val{nr.figB.cal.n-triples} (query, document, criterion) answers on NanoBEIR (\cref{app:calibration-link}). The Expected Calibration Error (ECE) averages the gap between predicted probability and observed pass rate over weighted bins (0 is perfect). A sigmoid of the raw BT score, ignoring each query's scale, is miscalibrated (ECE\,=\,\val{nr.figB.cal.ece.bt-only}), while the 2PL is well calibrated (ECE\,=\,\val{nr.figB.cal.ece.2pl}).

\paragraph{Human grades.}
Calibration also aligns the gain with blind human grades (\cref{fig:extcal}b). If $g$ is the probability that one annotator rates a document Useful or better, about 30\% of the grades of documents with $g \approx 0.3$ should be Useful or better. \Cref{fig:extcal}b bins documents by $g$ and plots this share against each bin's mean gain. Against our \val{a1a.desc.grades} human grades, the ECE is \val{a1b.ece.ge2.all} (95\% CI \val{a1b.ece.ge2.all.ci}; \cref{app:calibration-human}). Calibration keeps each query's order and changes only how documents of different queries compare. Across queries, the Spearman correlation of a query's mean score with its annotators' mean grade rises from \val{a1b.cal.qlevel.all.raw} for raw BT to \val{a1b.cal.qlevel.all.cal} after calibration (\cref{fig:extcal}a).

\paragraph{NIST grades.}
On TREC-DL, NIST's graded judgments share no annotations with our study. Within a query, the share of differently graded passage pairs that \rcp{} orders as NIST does is \val{a2.conc.dl19.overall} (2019) and \val{a2.conc.dl20.overall} (2020) (0.5 is chance). This share reflects Stage~A alone (\cref{prop:order}). Over the graded passages of all queries, the Spearman correlation of the calibrated score $\dpar{\tilde{\theta}}$ with the NIST grade is \val{a2.rho.dl19.pooled} and \val{a2.rho.dl20.pooled}. With each query's scores in NIST's grade order, these correlations would be \val{a2.rho.dl19.ceiling} and \val{a2.rho.dl20.ceiling}. The observed values thus reach \val{a2.rho.dl19.share}\% and \val{a2.rho.dl20.share}\% of this ceiling. In 2019, the spread of gains at grade~3 (\cref{fig:extcal}c) comes partly from positives that re-assessors rate low (\cref{app:trecdl-audit}).

\paragraph{Calibration across queries.} In every suite, calibration raises the Spearman correlation between a document's score and its mean grade, pooled over all queries, from \val{a1b.cal.rho.mean.nano.raw}--\val{a1b.cal.rho.mean.vidore.raw} for raw BT to \val{a1b.cal.rho.mean.bright.cal}--\val{a1b.cal.rho.mean.nano.cal} (95\% CIs of the gains above zero; \cref{app:calibration-link}). The calibrated score tracks that grade as closely as a second, independent panel of annotators (Spearman \val{a1b.par.full.all.cal} against \val{a1b.par.balanced.all.sb}; \cref{app:human-labels}).

\subsection{Blind Evaluation of the Labels}
\label{sec:blind-eval}

A metric is only as good as its relevance labels. We therefore test \rcp{}'s labels against the qrels, first on whole sets of documents, then on single documents. On \val{n9.s1.nano.cases} NanoBEIR queries, the set of positive qrel documents differs from the equally large set with the highest \rcp{} gains. By majority, three blind LLM judges from other model families prefer the \rcp{} set on \val{n9.s1.nano.maj}\% of the \val{n9.s1.nano.maj.n} queries they decide, and on \val{n9.s1.bright12.maj}\% on BRIGHT (\cref{app:meta-study1}). Where two IR experts, judging blind, agree (\val{n9.s1.human.unan.n} queries), they prefer the \rcp{} set in \val{n9.s1.human.unan}\%.

The gains also label single documents better than the qrels. Our human study grades the \val{a1b.n.docs.all} documents that two rerankers place in their top five, blind to qrels and scores. We draw one of these documents that annotators rate useful (mean grade at least 2) and one they do not. The \rcp{} gain ranks the useful one higher with probability \val{a1b.auc.ge2.all.gain} (AUC, ties counting half, chance 0.5). The binary qrels do so with probability \val{a1b.auc.ge2.all.qrel}, because they give about half of such pairs the same label. Of these documents, annotators rate \val{a1b.mis.q0-ge2.all}\% of those without a positive qrel as useful and \val{a1b.mis.q1-lt2.all}\% of those with one as not useful (\cref{app:human-labels}). The qrels thus disagree with the annotators in both directions.

\subsection{\texorpdfstring{\rcpndcg{}}{RCP-nDCG} Ranking Results}
\label{sec:ranking-results}

\begin{wrapfigure}[20]{r}{0.52\textwidth}
  \vspace{-0.5cm}
  \centering
  \includegraphics{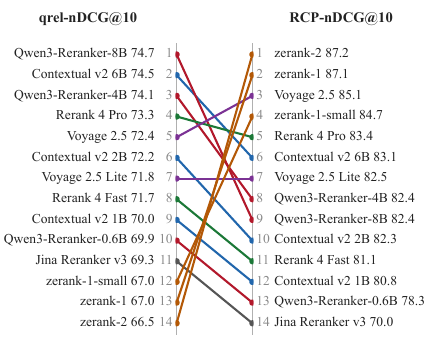}
  \vspace{-0.5cm}
  \caption{14 rerankers on NanoBEIR under qrel-nDCG@10 (left) and \rcpndcg{}@10 (right, \model{Qwen3.5-397B}), with means over datasets, colored by model family. Per-dataset values: \cref{tab:cal-ndcg-qwen-nanomteb}.}
  \label{fig:rankshift}
\end{wrapfigure}
\paragraph{Reranker comparison.}
The largest change from qrel-nDCG@10 to \rcpndcg{}@10 on NanoBEIR is the rise of the ZeroEntropy rerankers (\cref{fig:rankshift}). Under qrel-nDCG@10, \baseline{Qwen3-RR-8B} and \baseline{CTXL-RR-6B} lead with \val{draft.app.g.t.nano.ndcg.qwen3rr8b.mean} and \val{draft.app.g.t.nano.ndcg.ctxlrr6b.mean}, and the three ZeroEntropy models trail (\val{draft.app.g.t.nano.ndcg.zerank2.mean}--\val{draft.app.g.t.nano.ndcg.zerank1sm.mean}). Under \rcpndcg{}@10, \baseline{zerank-2} and \baseline{zerank-1} move from ranks \val{n8.rank.ndcg.zerank2} and \val{n8.rank.ndcg.zerank1} to ranks \val{n8.rank.rcp.zerank2} and \val{n8.rank.rcp.zerank1} (\val{a4.table2.final.zerank-2.mean}, \val{a4.table2.final.zerank-1.mean}), and the two former leaders fall to ranks \val{n8.rank.rcp.qwen3rr8b} and \val{n8.rank.rcp.ctxlrr6b}.

\paragraph{Does the judge favor LLM-trained rerankers?}
The zerank rerankers train on pairwise preferences of an LLM ensemble~\citep{pipitone2025zelo}. A metric built on an LLM judge could thus reward this training signal rather than quality. On TREC-DL, where NIST's grades test this concern, \rcpndcg{}@10 (judge \model{Qwen3.6-27B}) ranks \baseline{zerank-2} first of 18 systems (14 rerankers and four first-stage rankings) in both years. Under qrel-nDCG@10, which uses the NIST grades, \baseline{zerank-2} ranks \val{a2.zerank.pool.dl19.zerank2.rank18.nistlin}th in 2019 and \val{a2.zerank.pool.dl20.zerank2.rank18.nistlin}th in 2020, but its gap to each year's leader is not significant (the 95\% CI includes zero). Part of this rank deficit arises because NIST graded only passages retrieved by the systems submitted to the 2019 and 2020 tracks. Our pools add candidates from current retrievers, and nDCG counts every ungraded passage as irrelevant. Such passages fill \val{a2.unjudged.pool.zerank2}\% of \baseline{zerank-2}'s top-10 slots, against \val{a2.unjudged.pool.nonzeroabove.min}--\val{a2.unjudged.pool.nonzeroabove.max}\% for the \val{a2.unjudged.pool.nonzeroabove.n} systems from other families that NIST ranks above it with both years pooled. Our human study on NanoBEIR, BRIGHT, and ViDoRe v3 grades without any LLM, and annotators never see system names. In \val{n7.hs.zr.rcpfav.n} contests with a verdict, \rcpndcg{} favors the zerank reranker and qrel-nDCG the other. The annotators prefer zerank in \val{n7.hs.zr.rcpfav}\% of them (\cref{app:trecdl-zerank}). NIST's grades thus do not contradict zerank's rise, which our annotators support.

\paragraph{\texorpdfstring{\rcpndcg{}}{RCP-nDCG} is more discriminative.}
A metric separates two rerankers on a dataset if a paired $t$-test on their per-query scores rejects equal means at level 0.05. Its sensitivity~\citep{clarke2020compatibility} is the share of the \val{a56.robust.factorial.pairs} pairs of our 14 rerankers that it separates, averaged over a suite's datasets.\footnote{If no two rerankers differed, the test would still separate about \val{draft.2.siglevel}\% of pairs by chance.} On NanoBEIR, sensitivity rises from \val{a4.nano.sens.ndcg}\% under qrel-nDCG@10 to \val{a4.nano.sens.rcp-qwen}\% under \rcpndcg{}@10 (\cref{fig:ranking}a). \rcpndcg{}@10 also separates more pairs on BRIGHT, on ViDoRe v3, and in both TREC-DL years, where qrel-nDCG@10 uses the NIST grades. The increase holds on most NanoBEIR datasets, on every BRIGHT task, and with \model{gpt-oss-120b} as judge (\cref{app:nano-bright-sensitivity,app:nano-bright-judge}). Sensitivity measures separation, not correctness, which \cref{sec:external} tests.

\begin{figure}[t]
  \centering
  \includegraphics[width=\textwidth]{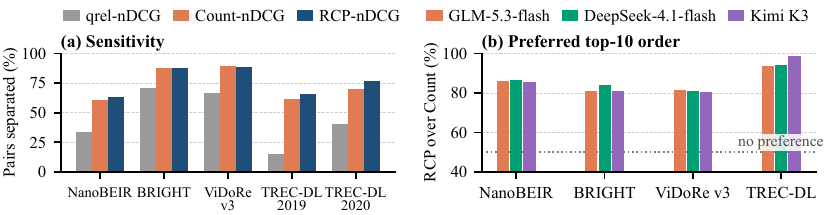}
  \caption{(a)~Sensitivity: share of the \val{a56.robust.factorial.pairs} reranker pairs that each metric separates (paired $t$-test), mean over datasets. (b)~Share of queries on which each judge prefers \rcpndcg{}'s ideal top-10 over the rubric-only Count-nDCG's, among queries where it prefers either list (50\%: no preference).}
  \label{fig:ranking}\label{tab:crosssuite}\label{fig:count-ceiling}
  \vspace{-0.25cm}
\end{figure}
\paragraph{\texorpdfstring{\rcpndcg{}}{RCP-nDCG} vs.\ Count-nDCG.}
Count-nDCG is the rubric-only variant: a document's gain is its share of passed criteria in Stage~B, without the tournament (\cref{eq:count-gain}). Count-nDCG separates about as many reranker pairs as \rcpndcg{} (\cref{fig:ranking}a). The rubric thus supplies most of the added separation (\cref{app:theory-resolution}). Within a query, however, the tournament orders documents that Count-nDCG ties. Each of three blind LLM judges from three model families prefers \rcpndcg{}'s ideal top-10 list over Count-nDCG's on every suite, in \val{b19.s2.kimi-k3.vidore.rcp-share}--\val{b19.s2.kimi-k3.trecdl.rcp-share}\% of queries with a preference (\cref{fig:ranking}b; \cref{sec:count}). Both components are thus needed. The criteria supply the graded signal that the qrels lack (\cref{prop:invisible-swaps}), and the tournament resolves finer differences (\cref{prop:criterion-classes}).

\subsection{Blind External Preference Validation}
\label{sec:human-pref-validation}\label{sec:external}

\begin{figure}[t]
  \centering
  \includegraphics[width=0.48\textwidth]{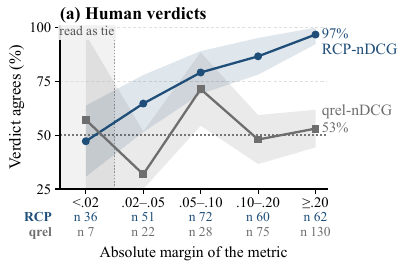}\hfill
  \includegraphics[width=0.48\textwidth]{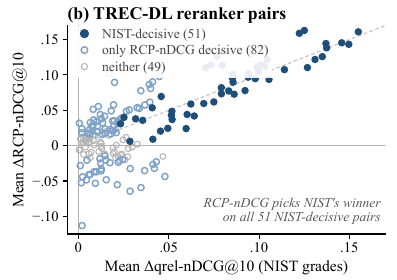}
  \caption{(a)~Share of contests (two rerankers' top-five lists) whose majority verdict sides with a metric's winner, by that metric's absolute margin between the lists (95\% CIs, $n$ per band, chance 50\%). Left of the dotted line: margins below \val{draft.5.margin.decisive}, read as ties. (b)~TREC-DL reranker pairs of both years (judge \model{Qwen3.6-27B}), oriented so that NIST favors the first.}
  \label{fig:dose}
  \vspace{-0.25cm}
\end{figure}
Finally, we test whether \rcpndcg{} prefers the same rerankers as human judges.

\paragraph{Human verdicts.} The annotators' verdicts side with \rcpndcg{}@5 more often than with qrel-nDCG@5. In \val{a1a.primary.n} contests, exactly one metric favors the verdict. That metric is \rcpndcg{} in \val{a1a.primary}\% of them (\val{a1a.primary.k} of \val{a1a.primary.n}; 95\% CI [\val{a1a.primary.lo}, \val{a1a.primary.hi}]\%). Chance is about \val{draft.ff.chance}\%, not 50\%, because qrel-nDCG ties more often and a contest in which one metric ties can count only for the other. The share stays above chance on each suite alone (\cref{app:human-verdicts}) and without the contests of the zerank rerankers, which train on LLM preferences (\val{a1a.d14.nn}\%, against \val{a1a.d14.zerank}\% on their contests).

\paragraph{Metric margins.} A metric's margin is its score difference between the two rerankers. In the human study, only \rcpndcg{}'s margins predict the annotators' verdicts (\cref{fig:dose}a). A metric that tracks the annotators should match the verdict more often as its margin grows. When a metric's margin is \val{draft.app.d.band.e4} or more, the verdict matches its winner in \val{a1a.band.rcp.5}\% of contests for \rcpndcg{} and in \val{a1a.band.ndcg.5}\% for qrel-nDCG. Below a margin of \val{draft.5.margin.decisive}, it matches \rcpndcg{}'s winner in only \val{a1a.sweep.p02.below}\%, near chance. We therefore read such margins as ties.

\paragraph{NIST grades.} \rcpndcg{}@10 agrees with NIST on every reranker pair that NIST's grades separate (\cref{fig:dose}b; \cref{app:trecdl-pairs}). We call a pair NIST-decisive when qrel-nDCG@10 with the NIST grades as linear gains separates it (paired $t$-test). Only \val{a2.dec.pool.dl19.nistlin} of the \val{draft.5.dl.pairs} pairs are NIST-decisive in 2019 and \val{a2.dec.pool.dl20.nistlin} in 2020. \rcpndcg{}@10 favors NIST's winner on each of them with either TREC-DL judge. \rcpndcg{}@10 also separates many more pairs, \val{a2.dec.pool.dl19.rcpa} in 2019 and \val{a2.dec.pool.dl20.rcpa} in 2020 with \model{Qwen3.6-27B}.

In sum, both human references favor the rerankers that \rcpndcg{} favors. When exactly one metric agrees with our annotators, it is \rcpndcg{} in \val{a1a.primary}\% of contests. Only its margins predict their verdicts. It agrees with NIST on every NIST-decisive pair while separating many more pairs. \rcpndcg{} thus compares rerankers better than qrel-nDCG.

\section{Related Work}
\label{sec:related}
\paragraph{Item Response Theory.}
IRT places respondents and test items on one latent scale \citep{rasch1980probabilistic,birnbaum1968latent,samejima1969graded,lord1980applications}. In NLP, it builds evaluation scales, compresses benchmark suites, and guides model search \citep{lalor2016irt,polotinybenchmarks,mencattini2025merge}. These works treat models as respondents and benchmark questions as items. \rcp{} instead treats documents as respondents and rubric criteria as items, which links the queries.

\paragraph{LLMs in retrieval.}
LLMs rerank through listwise and pairwise prompting \citep{sun2023chatgptsearch,qin2024prp}. As judges, they remain sensitive to bias and prompting \citep{liu2023geval,wang2024fair,rahmani2025judging}, and their use as gold labels remains debated \citep{thomas2024searcher,soboroff2025dontuse}. \citet{yan2024consolidating} adjust pointwise LLM labels to respect pairwise preferences. Concurrent work rescales each query's BT scores to a fixed range \citep{georgiev2026sabermath}. \rcp{} instead keeps the tournament order and sets each query's scale and offset from shared criteria (\cref{app:related}).

\section{Conclusion}
\label{sec:conclusion}
Sparse, discrete human labels make nDCG saturate or tie close rerankers. LLM judgments could replace these labels, but relative judgments share no scale across queries, and absolute ones are too coarse. We introduced \rcp{} and its retrieval metric, \rcpndcg{}. \rcp{} fuses relative and rubric-based LLM judgments through Item Response Theory. Against \val{a1a.desc.annotators} annotators' grades, \rcp{}'s gains reach an AUC of \val{a1b.auc.ge2.all.gain}, versus \val{a1b.auc.ge2.all.qrel} for the qrels. When exactly one of \rcpndcg{} and qrel-nDCG agrees with the annotators, it is \rcpndcg{} in \val{a1a.primary}\% of \val{a1a.primary.n} contests (chance about \val{draft.ff.chance}\%). With either TREC-DL judge, \rcpndcg{} favors NIST's winner on every NIST-decisive pair (\cref{sec:external}).

Each component of \rcp{} has a separate role. The criteria define an absolute, interpretable relevance scale and carry most of \rcpndcg{}'s added separation of reranker pairs over qrel-nDCG (\cref{sec:ranking-results}). The IRT merge places all queries on one scale and raises the Spearman correlation of a query's mean score with its annotators' mean grade from \val{a1b.cal.qlevel.all.raw} to \val{a1b.cal.qlevel.all.cal} (\cref{sec:calibration}). The tournament resolves differences too fine for the rubric alone. Three blind LLM judges prefer \rcpndcg{}'s ideal top-10 over the rubric-only Count-nDCG's on every suite, in \val{b19.s2.kimi-k3.vidore.rcp-share}--\val{b19.s2.kimi-k3.trecdl.rcp-share}\% of the queries with a preference (\cref{fig:ranking}b). We therefore suggest that retrieval benchmarks label their candidate pools once with \rcp{} and report \rcpndcg{} next to qrel-nDCG.

\paragraph{Limitations.}
\rcp{}'s gains and our annotators' grades value each document alone and ignore redundancy between documents, as on \dataset{HotpotQA}. Both measure perceived relevance, not factual correctness. The annotators' questions resemble the rubric by design. Relevance is hard to judge consistently, even for NIST assessors~\citep{voorhees2000variations}. Two annotators grade a document within one step of each other in \val{b2.grade.all.g5.agree-within-1}\% of cases. LLM judges may share the evaluated rerankers' preferences. Models fit on \rcp{} signal are best tested on other labels (\cref{app:limitations}).

\paragraph{Beyond retrieval.}
Arenas and LLM verifiers also pool judgments made one prompt at a time~\citep{zheng2023mtbenchjudge,chiang2024chatbot}, and a rubric shared across prompts could calibrate them in the same way. A multidimensional IRT model could split the calibrated score $\dpar{\tilde\theta}$ into dimensions such as correctness and safety, each with its own criteria, and compare candidates by Pareto dominance. As rubric-based RL becomes central to LLM post-training, the calibrated gain could also serve as a continuous reward that rises fastest where the rubric measures most precisely (\cref{lem:theory-gain}).

\section*{AI use statement}
The LLM judges are the object of this study. \model{Qwen3.5-397B}, \model{Qwen3.6-27B}, and \model{gpt-oss-120b} produce the relevance judgments that our method calibrates (\cref{app:method-judges}). Other LLMs, namely \model{GLM-5.3-flash}, \model{DeepSeek-4.1-flash}, \model{Kimi~K3}, and \model{GLM-5.2}, serve only as blind comparison judges and auxiliary checks and never produce a gain (\cref{app:method-judges}). In this work, we used generative AI tools to implement methods, to clean and reformat datasets, and to support data analysis. These tools also helped formulate the mathematical claims of \cref{app:theory} and write their proofs. We discussed our conceptual framework, the research methodology, and the experiments with these tools to improve them. A coding agent also edited the wording of the rubric criteria during rubric development (\cref{app:prompts}). We have not used generative AI tools to devise the conceptual framework, to generate synthetic data sets, to propose or refine hypotheses, or to interpret results. Translation is not applicable to this work. Additionally, we used generative AI tools to create or edit software code and figures, to draft and edit parts of the paper for readability, and to identify and validate references. No LLM annotation replaced the human annotation of the human study. We have reviewed all AI-assisted work and take responsibility for the final content of this work, including text, claims, and artifacts produced with the aid of generative AI.

\section*{Ethics statement}
The human study used \val{a1a.desc.annotators} paid external contractors, who received \$\val{draft.4.pay} per hour and gave informed consent. The study had no formal review by an institutional review board. In the released data, one-way hashes replace the annotators' identities, and free-text comments are omitted (\cref{app:assets}). All datasets are public and used under their stated licenses or terms of use (\cref{app:assets}). The LLM judges may share biases with other LLM-based evaluation systems (\cref{sec:limitations}).

\section*{Reproducibility statement}
The method, the calibration fit, the evaluation protocol, and the judge budgets are specified in \cref{app:method}. \Cref{app:prompts} prints the Stage~A and Stage~B prompts and summarizes the prompts of the comparison judges. The window schedules use a fixed seed (\cref{app:method}).  We release the code of both stages, the calibration, the metric, the prompts, and all judged scores at \releaseurl{}, as \cref{app:assets} details.

\newenvironment{ack}{\section*{Acknowledgments and Disclosure of Funding}}{\par}
\begin{ack}
We thank the annotators of our human study and the annotation team that handled their recruitment, instructions, and quality control. We thank Marius Mosbach for feedback on a draft of the manuscript. Cohere funded this work, including the annotation. All authors were affiliated with Cohere during this work. Cohere sells Rerank~4~Pro and Rerank~4~Fast, two of our 14 rerankers, and Embed~v4, one of our three first-stage retrievers.
\end{ack}

\bibliographystyle{iclr2027_conference}
\bibliography{references}

@article{nogueira2019bert,
  title   = {Passage Re-ranking with {BERT}},
  author  = {Nogueira, Rodrigo and Cho, Kyunghyun},
  journal = {arXiv preprint arXiv:1901.04085},
  year    = {2019},
  url     = {https://arxiv.org/abs/1901.04085}
}

@inproceedings{khattab2020colbert,
  title     = {{ColBERT}: Efficient and Effective Passage Search via Contextualized Late Interaction over {BERT}},
  author    = {Khattab, Omar and Zaharia, Matei},
  booktitle = {Proceedings of the 43rd International ACM SIGIR Conference on Research and Development in Information Retrieval},
  pages     = {39--48},
  year      = {2020},
  publisher = {ACM},
  doi       = {10.1145/3397271.3401075}
}

@inproceedings{sun2023chatgptsearch,
  title     = {Is {ChatGPT} Good at Search? {Investigating} Large Language Models as Re-Ranking Agents},
  author    = {Sun, Weiwei and Yan, Lingyong and Ma, Xinyu and Wang, Shuaiqiang and Ren, Pengjie and Chen, Zhumin and Yin, Dawei and Ren, Zhaochun},
  booktitle = {Proceedings of the 2023 Conference on Empirical Methods in Natural Language Processing},
  pages     = {14918--14937},
  year      = {2023},
  address   = {Singapore},
  publisher = {Association for Computational Linguistics},
  doi       = {10.18653/v1/2023.emnlp-main.923}
}

@inproceedings{qin2024prp,
  title     = {Large Language Models are Effective Text Rankers with Pairwise Ranking Prompting},
  author    = {Qin, Zhen and Jagerman, Rolf and Hui, Kai and Zhuang, Honglei and Wu, Junru and Yan, Le and Shen, Jiaming and Liu, Tianqi and Liu, Jialu and Metzler, Donald and Wang, Xuanhui and Bendersky, Michael},
  booktitle = {Findings of the Association for Computational Linguistics: NAACL 2024},
  pages     = {1504--1518},
  year      = {2024},
  address   = {Mexico City, Mexico},
  publisher = {Association for Computational Linguistics},
  doi       = {10.18653/v1/2024.findings-naacl.97}
}

@inproceedings{zhuang2024beyond,
  title     = {Beyond Yes and No: Improving Zero-Shot {LLM} Rankers via Scoring Fine-Grained Relevance Labels},
  author    = {Zhuang, Honglei and Qin, Zhen and Hui, Kai and Wu, Junru and Yan, Le and Wang, Xuanhui and Bendersky, Michael},
  booktitle = {Proceedings of the 2024 Conference of the North American Chapter of the Association for Computational Linguistics: Human Language Technologies (Volume 2: Short Papers)},
  pages     = {358--370},
  year      = {2024},
  address   = {Mexico City, Mexico},
  publisher = {Association for Computational Linguistics},
  doi       = {10.18653/v1/2024.naacl-short.31}
}

@inproceedings{long2025precise,
  title     = {Precise Zero-Shot Pointwise Ranking with {LLMs} through Post-Aggregated Global Context Information},
  author    = {Long, Kehan and Li, Shasha and Xu, Chen and Tang, Jintao and Wang, Ting},
  booktitle = {Proceedings of the 48th International ACM SIGIR Conference on Research and Development in Information Retrieval},
  pages     = {2384--2394},
  year      = {2025},
  publisher = {ACM},
  doi       = {10.1145/3726302.3730061}
}

@inproceedings{yoonacurank,
  title     = {{AcuRank}: Uncertainty-Aware Adaptive Computation for Listwise Reranking},
  author    = {Yoon, Soyoung and Kim, Gyuwan and Cho, Gyu-Hwung and Hwang, Seung-won},
  booktitle = {Advances in Neural Information Processing Systems},
  volume    = {38},
  year      = {2025},
  doi       = {10.52202/085713-0890},
  pages     = {29944--29973}
}

@article{zhang2025qwen3embedding,
  title   = {{Qwen3} Embedding: Advancing Text Embedding and Reranking Through Foundation Models},
  author  = {Zhang, Yanzhao and Li, Mingxin and Long, Dingkun and Zhang, Xin and Lin, Huan and Yang, Baosong and Xie, Pengjun and Yang, An and Liu, Dayiheng and Lin, Junyang and Huang, Fei and Zhou, Jingren},
  journal = {arXiv preprint arXiv:2506.05176},
  year    = {2025},
  url     = {https://arxiv.org/abs/2506.05176}
}

@article{wang2025jinarerankerv3,
  title   = {{jina-reranker-v3}: Last but Not Late Interaction for Listwise Document Reranking},
  author  = {Wang, Feng and Li, Yuqing and Xiao, Han},
  journal = {arXiv preprint arXiv:2509.25085},
  year    = {2025},
  url     = {https://arxiv.org/abs/2509.25085}
}

@article{pipitone2025zelo,
  title   = {{zELO}: {ELO}-inspired Training Method for Rerankers and Embedding Models},
  author  = {Pipitone, Nicholas and Houir Alami, Ghita and Avadhanam, Advaith and Kaminskyi, Anton and Khoo, Ashley},
  journal = {arXiv preprint arXiv:2509.12541},
  year    = {2025},
  url     = {https://arxiv.org/abs/2509.12541}
}

@misc{contextualai2025rerankv2,
  title        = {Open-Sourcing Reranker v2},
  author       = {{Contextual AI}},
  year         = {2025},
  month        = aug,
  howpublished = {Contextual AI blog},
  url          = {https://contextual.ai/blog/rerank-v2},
  note         = {Published August 27, 2025. Models: \texttt{ctxl-rerank-v2-instruct-multilingual-\{1b,2b,6b\}}}
}

@misc{voyageai2025rerank25,
  title        = {{rerank-2.5 and rerank-2.5-lite: instruction-following rerankers}},
  author       = {{Voyage AI}},
  year         = {2025},
  month        = aug,
  howpublished = {Voyage AI blog},
  url          = {https://blog.voyageai.com/2025/08/11/rerank-2-5/},
  note         = {Published August 11, 2025}
}

@misc{cohere2025rerank4,
  title        = {Introducing {Rerank 4}: {Cohere}'s most powerful reranker yet},
  author       = {{Cohere}},
  year         = {2025},
  month        = dec,
  howpublished = {Cohere blog},
  url          = {https://cohere.com/blog/rerank-4},
  note         = {Published December 11, 2025. Model identifiers \texttt{rerank-v4.0-pro} and \texttt{rerank-v4.0-fast} as listed in the Cohere release notes, \url{https://docs.cohere.com/changelog/rerank-v4.0}}
}

@misc{openai2025gptoss,
  title        = {{gpt-oss-120b \& gpt-oss-20b Model Card}},
  author       = {{OpenAI}},
  year         = {2025},
  howpublished = {arXiv preprint arXiv:2508.10925},
  url          = {https://arxiv.org/abs/2508.10925}
}

@misc{nvidia2026qwen35nvfp4,
  title        = {{Qwen3.5-397B-A17B-NVFP4}},
  author       = {{NVIDIA}},
  year         = {2026},
  month        = feb,
  howpublished = {Hugging Face model card},
  url          = {https://huggingface.co/nvidia/Qwen3.5-397B-A17B-NVFP4},
  note         = {Released February 17, 2026. NVFP4-quantized version of Qwen3.5-397B-A17B}
}

@misc{qwen2026qwen35,
  title        = {{Qwen3.5-397B-A17B}},
  author       = {{Qwen Team}},
  year         = {2026},
  month        = feb,
  howpublished = {Hugging Face model card},
  url          = {https://huggingface.co/Qwen/Qwen3.5-397B-A17B}
}

@misc{qwen2026qwen36,
  title        = {{Qwen3.6-27B}},
  author       = {{Qwen Team}},
  year         = {2026},
  month        = apr,
  howpublished = {Hugging Face model card},
  url          = {https://huggingface.co/Qwen/Qwen3.6-27B},
  note         = {FP8 release: \url{https://huggingface.co/Qwen/Qwen3.6-27B-FP8}}
}

@inproceedings{faggioli2023perspectives,
  title     = {Perspectives on Large Language Models for Relevance Judgment},
  author    = {Faggioli, Guglielmo and Dietz, Laura and Clarke, Charles L. A. and Demartini, Gianluca and Hagen, Matthias and Hauff, Claudia and Kando, Noriko and Kanoulas, Evangelos and Potthast, Martin and Stein, Benno and Wachsmuth, Henning},
  booktitle = {Proceedings of the 2023 ACM SIGIR International Conference on Theory of Information Retrieval},
  pages     = {39--50},
  year      = {2023},
  address   = {Taipei, Taiwan},
  publisher = {ACM},
  doi       = {10.1145/3578337.3605136}
}

@inproceedings{thomas2024searcher,
  title     = {Large Language Models can Accurately Predict Searcher Preferences},
  author    = {Thomas, Paul and Spielman, Seth and Craswell, Nick and Mitra, Bhaskar},
  booktitle = {Proceedings of the 47th International ACM SIGIR Conference on Research and Development in Information Retrieval},
  pages     = {1930--1940},
  year      = {2024},
  publisher = {ACM},
  doi       = {10.1145/3626772.3657707}
}

@article{rahmani2025judging,
  title   = {Judging the Judges: A Collection of {LLM}-Generated Relevance Judgements},
  author  = {Rahmani, Hossein A. and Siro, Clemencia and Aliannejadi, Mohammad and Craswell, Nick and Clarke, Charles L. A. and Faggioli, Guglielmo and Mitra, Bhaskar and Thomas, Paul and Yilmaz, Emine},
  journal = {arXiv preprint arXiv:2502.13908},
  year    = {2025},
  url     = {https://arxiv.org/abs/2502.13908}
}

@article{soboroff2025dontuse,
  title   = {Don't Use {LLMs} to Make Relevance Judgments},
  author  = {Soboroff, Ian},
  journal = {Information Retrieval Research},
  volume  = {1},
  number  = {1},
  pages   = {29--46},
  year    = {2025},
  doi     = {10.54195/irrj.19625}
}

@inproceedings{ban2026completing,
  title     = {Completing Missing Annotation: Multi-Agent Debate for Accurate and Scalable Relevant Assessment for {IR} Benchmarks},
  author    = {Ban, Minjeong and Choi, Jeonghwan and Min, Hyangsuk and Kim, Nicole Hee-Yeon and Kim, Minseok and Lee, Jae-Gil and Song, Hwanjun},
  booktitle = {The Fourteenth International Conference on Learning Representations},
  year      = {2026},
  url       = {https://proceedings.iclr.cc/paper_files/paper/2026/hash/31f527006510a66f6accf141483efa4c-Abstract-Conference.html}
}

@inproceedings{liu2023geval,
  title     = {{G-Eval}: {NLG} Evaluation using {GPT}-4 with Better Human Alignment},
  author    = {Liu, Yang and Iter, Dan and Xu, Yichong and Wang, Shuohang and Xu, Ruochen and Zhu, Chenguang},
  booktitle = {Proceedings of the 2023 Conference on Empirical Methods in Natural Language Processing},
  pages     = {2511--2522},
  year      = {2023},
  address   = {Singapore},
  publisher = {Association for Computational Linguistics},
  doi       = {10.18653/v1/2023.emnlp-main.153}
}

@inproceedings{zheng2023mtbenchjudge,
  title     = {Judging {LLM}-as-a-Judge with {MT-Bench} and {Chatbot Arena}},
  author    = {Zheng, Lianmin and Chiang, Wei-Lin and Sheng, Ying and Zhuang, Siyuan and Wu, Zhanghao and Zhuang, Yonghao and Lin, Zi and Li, Zhuohan and Li, Dacheng and Xing, Eric P. and Zhang, Hao and Gonzalez, Joseph E. and Stoica, Ion},
  booktitle = {Advances in Neural Information Processing Systems (Datasets and Benchmarks Track)},
  volume    = {36},
  pages     = {46595--46623},
  year      = {2023},
  doi       = {10.52202/075280-2020}
}

@inproceedings{wang2024fair,
  title     = {Large Language Models are not Fair Evaluators},
  author    = {Wang, Peiyi and Li, Lei and Chen, Liang and Cai, Zefan and Zhu, Dawei and Lin, Binghuai and Cao, Yunbo and Kong, Lingpeng and Liu, Qi and Liu, Tianyu and Sui, Zhifang},
  booktitle = {Proceedings of the 62nd Annual Meeting of the Association for Computational Linguistics (Volume 1: Long Papers)},
  pages     = {9440--9450},
  year      = {2024},
  address   = {Bangkok, Thailand},
  publisher = {Association for Computational Linguistics},
  doi       = {10.18653/v1/2024.acl-long.511}
}

@inproceedings{chiang2024chatbot,
  title     = {{Chatbot Arena}: An Open Platform for Evaluating {LLMs} by Human Preference},
  author    = {Chiang, Wei-Lin and Zheng, Lianmin and Sheng, Ying and Angelopoulos, Anastasios Nikolas and Li, Tianle and Li, Dacheng and Zhu, Banghua and Zhang, Hao and Jordan, Michael and Gonzalez, Joseph E. and Stoica, Ion},
  booktitle = {Proceedings of the 41st International Conference on Machine Learning},
  series    = {Proceedings of Machine Learning Research},
  volume    = {235},
  pages     = {8359--8388},
  year      = {2024},
  publisher = {PMLR},
  url       = {https://proceedings.mlr.press/v235/chiang24b.html}
}

@inproceedings{ye2024flask,
  title     = {{FLASK}: Fine-grained Language Model Evaluation based on Alignment Skill Sets},
  author    = {Ye, Seonghyeon and Kim, Doyoung and Kim, Sungdong and Hwang, Hyeonbin and Kim, Seungone and Jo, Yongrae and Thorne, James and Kim, Juho and Seo, Minjoon},
  booktitle = {The Twelfth International Conference on Learning Representations},
  year      = {2024},
  url       = {https://proceedings.iclr.cc/paper_files/paper/2024/hash/f41b4a6b202adcd8e150a9d4f124d8f6-Abstract-Conference.html}
}

@inproceedings{liu2024calibrating,
  title     = {Calibrating {LLM}-Based Evaluator},
  author    = {Liu, Yuxuan and Yang, Tianchi and Huang, Shaohan and Zhang, Zihan and Huang, Haizhen and Wei, Furu and Deng, Weiwei and Sun, Feng and Zhang, Qi},
  booktitle = {Proceedings of the 2024 Joint International Conference on Computational Linguistics, Language Resources and Evaluation (LREC-COLING 2024)},
  pages     = {2638--2656},
  year      = {2024},
  address   = {Torino, Italia},
  publisher = {ELRA and ICCL},
  url       = {https://aclanthology.org/2024.lrec-main.237/}
}

@inproceedings{farzi2024pencils,
  title     = {Pencils Down! {Automatic} Rubric-based Evaluation of Retrieve/Generate Systems},
  author    = {Farzi, Naghmeh and Dietz, Laura},
  booktitle = {Proceedings of the 2024 ACM SIGIR International Conference on Theory of Information Retrieval},
  pages     = {175--184},
  year      = {2024},
  publisher = {ACM},
  doi       = {10.1145/3664190.3672511}
}

@inproceedings{farzi2025criteria,
  title     = {Criteria-Based {LLM} Relevance Judgments},
  author    = {Farzi, Naghmeh and Dietz, Laura},
  booktitle = {Proceedings of the 2025 International ACM SIGIR Conference on Innovative Concepts and Theories in Information Retrieval (ICTIR)},
  pages     = {254--263},
  year      = {2025},
  publisher = {ACM},
  doi       = {10.1145/3731120.3744591}
}

@inproceedings{hashemi2024llmrubric,
  title     = {{LLM-Rubric}: A Multidimensional, Calibrated Approach to Automated Evaluation of Natural Language Texts},
  author    = {Hashemi, Helia and Eisner, Jason and Rosset, Corby and Van Durme, Benjamin and Kedzie, Chris},
  booktitle = {Proceedings of the 62nd Annual Meeting of the Association for Computational Linguistics (Volume 1: Long Papers)},
  pages     = {13806--13834},
  year      = {2024},
  address   = {Bangkok, Thailand},
  publisher = {Association for Computational Linguistics},
  doi       = {10.18653/v1/2024.acl-long.745}
}

@inproceedings{lee2025checkeval,
  title     = {{CheckEval}: A Reliable {LLM}-as-a-Judge Framework for Evaluating Text Generation Using Checklists},
  author    = {Lee, Yukyung and Kim, JoongHoon and Kim, Jaehee and Cho, Hyowon and Kang, Jaewook and Kang, Pilsung and Kim, Najoung},
  booktitle = {Proceedings of the 2025 Conference on Empirical Methods in Natural Language Processing},
  year      = {2025},
  address   = {Suzhou, China},
  publisher = {Association for Computational Linguistics},
  doi       = {10.18653/v1/2025.emnlp-main.796},
  pages     = {15771--15798}
}

@inproceedings{yan2024consolidating,
  title     = {Consolidating Ranking and Relevance Predictions of Large Language Models through Post-Processing},
  author    = {Yan, Le and Qin, Zhen and Zhuang, Honglei and Jagerman, Rolf and Wang, Xuanhui and Bendersky, Michael and Oosterhuis, Harrie},
  booktitle = {Proceedings of the 2024 Conference on Empirical Methods in Natural Language Processing},
  pages     = {410--423},
  year      = {2024},
  address   = {Miami, Florida, USA},
  publisher = {Association for Computational Linguistics},
  doi       = {10.18653/v1/2024.emnlp-main.25}
}

@article{georgiev2026sabermath,
  title   = {{SABER-Math}: Automated Benchmark for Information Retrieval Evaluation in Mathematics},
  author  = {Georgiev, Nikolay and Drencheva, Maria and Ibragimova, Kseniia and Petrov, Ivo and Dimitrov, Dimitar I. and Vechev, Martin},
  journal = {arXiv preprint arXiv:2606.29894},
  year    = {2026},
  note    = {Accepted at EMNLP 2026},
  url     = {https://arxiv.org/abs/2606.29894}
}

@inproceedings{takehi2025lara,
  title     = {{LLM}-Assisted Relevance Assessments: When Should We Ask {LLMs} for Help?},
  author    = {Takehi, Rikiya and Voorhees, Ellen M. and Sakai, Tetsuya and Soboroff, Ian},
  booktitle = {Proceedings of the 48th International ACM SIGIR Conference on Research and Development in Information Retrieval},
  pages     = {95--105},
  year      = {2025},
  publisher = {ACM},
  doi       = {10.1145/3726302.3729916}
}

@article{jarvelin2002cumulated,
  title   = {Cumulated Gain-Based Evaluation of {IR} Techniques},
  author  = {J{\"a}rvelin, Kalervo and Kek{\"a}l{\"a}inen, Jaana},
  journal = {ACM Transactions on Information Systems},
  volume  = {20},
  number  = {4},
  pages   = {422--446},
  year    = {2002},
  doi     = {10.1145/582415.582418}
}

@article{ndcg,
  title   = {Cumulated Gain-Based Evaluation of {IR} Techniques},
  author  = {J{\"a}rvelin, Kalervo and Kek{\"a}l{\"a}inen, Jaana},
  journal = {ACM Transactions on Information Systems},
  volume  = {20},
  number  = {4},
  pages   = {422--446},
  year    = {2002},
  doi     = {10.1145/582415.582418}
}

@inproceedings{zobel1998reliable,
  title     = {How Reliable Are the Results of Large-Scale Information Retrieval Experiments?},
  author    = {Zobel, Justin},
  booktitle = {Proceedings of the 21st Annual International ACM SIGIR Conference on Research and Development in Information Retrieval},
  pages     = {307--314},
  year      = {1998},
  publisher = {ACM},
  doi       = {10.1145/290941.291014}
}

@inproceedings{buckley2004retrieval,
  title     = {Retrieval Evaluation with Incomplete Information},
  author    = {Buckley, Chris and Voorhees, Ellen M.},
  booktitle = {Proceedings of the 27th Annual International ACM SIGIR Conference on Research and Development in Information Retrieval},
  pages     = {25--32},
  year      = {2004},
  publisher = {ACM},
  doi       = {10.1145/1008992.1009000}
}

@inproceedings{sakai2006bootstrap,
  title     = {Evaluating Evaluation Metrics Based on the Bootstrap},
  author    = {Sakai, Tetsuya},
  booktitle = {Proceedings of the 29th Annual International ACM SIGIR Conference on Research and Development in Information Retrieval},
  pages     = {525--532},
  year      = {2006},
  publisher = {ACM},
  doi       = {10.1145/1148170.1148261}
}

@inproceedings{sakai2007alternatives,
  title     = {Alternatives to {Bpref}},
  author    = {Sakai, Tetsuya},
  booktitle = {Proceedings of the 30th Annual International ACM SIGIR Conference on Research and Development in Information Retrieval},
  pages     = {71--78},
  year      = {2007},
  publisher = {ACM},
  doi       = {10.1145/1277741.1277756}
}

@article{clarke2020compatibility,
  title   = {Assessing Top-$k$ Preferences},
  author  = {Clarke, Charles L. A. and Vtyurina, Alexandra and Smucker, Mark D.},
  journal = {ACM Transactions on Information Systems},
  volume  = {39},
  number  = {3},
  year    = {2021},
  doi     = {10.1145/3451161},
  pages     = {1--21}
}

@article{arabzadeh2022shallow,
  title   = {Shallow Pooling for Sparse Labels},
  author  = {Arabzadeh, Negar and Vtyurina, Alexandra and Yan, Xinyi and Clarke, Charles L. A.},
  journal = {Information Retrieval Journal},
  volume  = {25},
  number  = {4},
  pages   = {365--385},
  year    = {2022},
  doi     = {10.1007/s10791-022-09411-0}
}

@inproceedings{nguyen2016msmarco,
  title     = {{MS MARCO}: A Human Generated {MAchine Reading COmprehension} Dataset},
  author    = {Nguyen, Tri and Rosenberg, Mir and Song, Xia and Gao, Jianfeng and Tiwary, Saurabh and Majumder, Rangan and Deng, Li},
  booktitle = {Proceedings of the Workshop on Cognitive Computation: Integrating Neural and Symbolic Approaches 2016, co-located with the 30th Annual Conference on Neural Information Processing Systems ({NIPS} 2016)},
  series    = {CEUR Workshop Proceedings},
  volume    = {1773},
  address   = {Barcelona, Spain},
  publisher = {CEUR-WS.org},
  year      = {2016},
  url       = {https://ceur-ws.org/Vol-1773/CoCoNIPS_2016_paper9.pdf}
}

@inproceedings{craswell2020overview,
  title     = {Overview of the {TREC} 2019 Deep Learning Track},
  author    = {Craswell, Nick and Mitra, Bhaskar and Yilmaz, Emine and Campos, Daniel and Voorhees, Ellen M.},
  booktitle = {Proceedings of the Twenty-Eighth Text REtrieval Conference ({TREC} 2019)},
  series    = {NIST Special Publication},
  volume    = {500-331},
  publisher = {National Institute of Standards and Technology},
  year      = {2020},
  url       = {https://trec.nist.gov/pubs/trec28/papers/OVERVIEW.DL.pdf},
  note      = {arXiv:2003.07820}
}

@inproceedings{craswell2021overview,
  title     = {Overview of the {TREC} 2020 Deep Learning Track},
  author    = {Craswell, Nick and Mitra, Bhaskar and Yilmaz, Emine and Campos, Daniel},
  booktitle = {Proceedings of the Twenty-Ninth Text REtrieval Conference ({TREC} 2020)},
  series    = {NIST Special Publication},
  volume    = {1266},
  publisher = {National Institute of Standards and Technology},
  year      = {2021},
  url       = {https://trec.nist.gov/pubs/trec29/papers/OVERVIEW.DL.pdf},
  note      = {arXiv:2102.07662}
}

@article{voorhees2000variations,
  title   = {Variations in Relevance Judgments and the Measurement of Retrieval Effectiveness},
  author  = {Voorhees, Ellen M.},
  journal = {Information Processing \& Management},
  volume  = {36},
  number  = {5},
  pages   = {697--716},
  year    = {2000},
  doi     = {10.1016/S0306-4573(00)00010-8}
}

@inproceedings{parry2025reannotation,
  title     = {Variations in Relevance Judgments and the Shelf Life of Test Collections},
  author    = {Parry, Andrew and Fr{\"o}be, Maik and Scells, Harrisen and Schlatt, Ferdinand and Faggioli, Guglielmo and Zerhoudi, Saber and MacAvaney, Sean and Yang, Eugene},
  booktitle = {Proceedings of the 48th International ACM SIGIR Conference on Research and Development in Information Retrieval},
  pages     = {3387--3397},
  year      = {2025},
  publisher = {ACM},
  doi       = {10.1145/3726302.3730308}
}

@inproceedings{whitehill2009whose,
  title     = {Whose Vote Should Count More: Optimal Integration of Labels from Labelers of Unknown Expertise},
  author    = {Whitehill, Jacob and Ruvolo, Paul and Wu, Tingfan and Bergsma, Jacob and Movellan, Javier},
  booktitle = {Advances in Neural Information Processing Systems},
  volume    = {22},
  pages     = {2035--2043},
  year      = {2009},
  url       = {https://proceedings.neurips.cc/paper_files/paper/2009/hash/f899139df5e1059396431415e770c6dd-Abstract.html}
}

@book{krippendorff2018content,
  title     = {Content Analysis: An Introduction to Its Methodology},
  author    = {Krippendorff, Klaus},
  edition   = {Fourth},
  year      = {2018},
  publisher = {SAGE Publications},
  address   = {Thousand Oaks, CA},
  isbn      = {978-1-5063-9566-1}
}

@article{koo2016guideline,
  title   = {A Guideline of Selecting and Reporting Intraclass Correlation Coefficients for Reliability Research},
  author  = {Koo, Terry K. and Li, Mae Y.},
  journal = {Journal of Chiropractic Medicine},
  volume  = {15},
  number  = {2},
  pages   = {155--163},
  year    = {2016},
  doi     = {10.1016/j.jcm.2016.02.012}
}

@book{rasch1980probabilistic,
  title     = {Probabilistic Models for Some Intelligence and Attainment Tests},
  author    = {Rasch, Georg},
  edition   = {Expanded},
  year      = {1980},
  publisher = {University of Chicago Press},
  address   = {Chicago}
}

@incollection{birnbaum1968latent,
  title     = {Some Latent Trait Models and Their Use in Inferring an Examinee's Ability},
  author    = {Birnbaum, Allan},
  booktitle = {Statistical Theories of Mental Test Scores},
  editor    = {Lord, Frederic M. and Novick, Melvin R.},
  pages     = {397--479},
  year      = {1968},
  publisher = {Addison-Wesley},
  address   = {Reading, MA}
}

@article{samejima1969graded,
  title   = {Estimation of Latent Ability Using a Response Pattern of Graded Scores},
  author  = {Samejima, Fumiko},
  journal = {Psychometrika},
  volume  = {34},
  number  = {S1},
  pages   = {1--97},
  year    = {1969},
  note    = {Psychometrika Monograph Supplement No.~17},
  doi     = {10.1007/BF03372160}
}

@book{lord1980applications,
  title     = {Applications of Item Response Theory to Practical Testing Problems},
  author    = {Lord, Frederic M.},
  year      = {1980},
  publisher = {Lawrence Erlbaum Associates},
  address   = {Hillsdale, NJ}
}

@inproceedings{lalor2016irt,
  title     = {Building an Evaluation Scale using Item Response Theory},
  author    = {Lalor, John P. and Wu, Hao and Yu, Hong},
  booktitle = {Proceedings of the 2016 Conference on Empirical Methods in Natural Language Processing},
  pages     = {648--657},
  year      = {2016},
  address   = {Austin, Texas},
  publisher = {Association for Computational Linguistics},
  doi       = {10.18653/v1/D16-1062}
}

@inproceedings{polotinybenchmarks,
  title     = {{tinyBenchmarks}: Evaluating {LLMs} with Fewer Examples},
  author    = {Maia Polo, Felipe and Weber, Lucas and Choshen, Leshem and Sun, Yuekai and Xu, Gongjun and Yurochkin, Mikhail},
  booktitle = {Proceedings of the 41st International Conference on Machine Learning},
  series    = {Proceedings of Machine Learning Research},
  volume    = {235},
  pages     = {34303--34326},
  year      = {2024},
  publisher = {PMLR},
  url       = {https://proceedings.mlr.press/v235/maia-polo24a.html}
}

@inproceedings{mencattini2025merge,
  title     = {{MERGE}$^3$: Efficient Evolutionary Merging on Consumer-grade {GPUs}},
  author    = {Mencattini, Tommaso and Minut, Robert Adrian and Crisostomi, Donato and Santilli, Andrea and Rodol{\`a}, Emanuele},
  booktitle = {Proceedings of the 42nd International Conference on Machine Learning},
  series    = {Proceedings of Machine Learning Research},
  volume    = {267},
  pages     = {43694--43715},
  year      = {2025},
  publisher = {PMLR},
  url       = {https://proceedings.mlr.press/v267/mencattini25a.html}
}

@article{bradley1952rank,
  title   = {Rank Analysis of Incomplete Block Designs: {I}. {The} Method of Paired Comparisons},
  author  = {Bradley, Ralph Allan and Terry, Milton E.},
  journal = {Biometrika},
  volume  = {39},
  number  = {3/4},
  pages   = {324--345},
  year    = {1952},
  doi     = {10.1093/biomet/39.3-4.324}
}

@article{yen1984effects,
  title   = {Effects of Local Item Dependence on the Fit and Equating Performance of the Three-Parameter Logistic Model},
  author  = {Yen, Wendy M.},
  journal = {Applied Psychological Measurement},
  volume  = {8},
  number  = {2},
  pages   = {125--145},
  year    = {1984},
  doi     = {10.1177/014662168400800201}
}

@article{land2026auditing,
  title   = {Auditing {LLM} Benchmarks with Item Response Theory},
  author  = {Land, Sander and Bikel, Daniel M.},
  journal = {arXiv preprint arXiv:2605.30504},
  year    = {2026},
  note    = {Accepted at EMNLP 2026},
  url     = {https://arxiv.org/abs/2605.30504}
}

@article{robertson2009prf,
  title   = {The Probabilistic Relevance Framework: {BM25} and Beyond},
  author  = {Robertson, Stephen and Zaragoza, Hugo},
  journal = {Foundations and Trends in Information Retrieval},
  volume  = {3},
  number  = {4},
  pages   = {333--389},
  year    = {2009},
  doi     = {10.1561/1500000019}
}

@inproceedings{cormack2009rrf,
  title     = {Reciprocal Rank Fusion Outperforms {Condorcet} and Individual Rank Learning Methods},
  author    = {Cormack, Gordon V. and Clarke, Charles L. A. and Buettcher, Stefan},
  booktitle = {Proceedings of the 32nd International ACM SIGIR Conference on Research and Development in Information Retrieval},
  pages     = {758--759},
  year      = {2009},
  publisher = {ACM},
  doi       = {10.1145/1571941.1572114}
}

@inproceedings{thakur2021beir,
  title     = {{BEIR}: A Heterogeneous Benchmark for Zero-shot Evaluation of Information Retrieval Models},
  author    = {Thakur, Nandan and Reimers, Nils and R{\"u}ckl{\'e}, Andreas and Srivastava, Abhishek and Gurevych, Iryna},
  booktitle = {Proceedings of the Neural Information Processing Systems Track on Datasets and Benchmarks},
  volume    = {1},
  year      = {2021},
  url       = {https://datasets-benchmarks-proceedings.neurips.cc/paper/2021/hash/65b9eea6e1cc6bb9f0cd2a47751a186f-Abstract-round2.html}
}

@inproceedings{muennighoff2023mteb,
  title     = {{MTEB}: Massive Text Embedding Benchmark},
  author    = {Muennighoff, Niklas and Tazi, Nouamane and Magne, Lo{\"\i}c and Reimers, Nils},
  booktitle = {Proceedings of the 17th Conference of the European Chapter of the Association for Computational Linguistics},
  pages     = {2014--2037},
  year      = {2023},
  address   = {Dubrovnik, Croatia},
  publisher = {Association for Computational Linguistics},
  doi       = {10.18653/v1/2023.eacl-main.148}
}

@misc{zetaalpha2024nanobeir,
  author       = {{Zeta Alpha}},
  title        = {{NanoBEIR}},
  year         = {2024},
  howpublished = {\url{https://huggingface.co/collections/zeta-alpha-ai/nanobeir-66e1a0af21dfd93e620cd9f6}}
}

@inproceedings{su2024bright,
  title     = {{BRIGHT}: A Realistic and Challenging Benchmark for Reasoning-Intensive Retrieval},
  author    = {Su, Hongjin and Yen, Howard and Xia, Mengzhou and Shi, Weijia and Muennighoff, Niklas and Wang, Han-yu and Liu, Haisu and Shi, Quan and Siegel, Zachary S. and Tang, Michael and Sun, Ruoxi and Yoon, Jinsung and Arik, Sercan O. and Chen, Danqi and Yu, Tao},
  booktitle = {The Thirteenth International Conference on Learning Representations},
  year      = {2025},
  url       = {https://proceedings.iclr.cc/paper_files/paper/2025/hash/7a0f8055c838df8e62329a76c7c6403d-Abstract-Conference.html}
}

@inproceedings{vidore3,
  title     = {{ViDoRe V3}: A Comprehensive Evaluation of Retrieval Augmented Generation in Complex Real-World Scenarios},
  author    = {Loison, Ant{\'o}nio and Mac{\'e}, Quentin and Edy, Antoine and Xing, Victor and Balough, Tom and Moreira, Gabriel de Souza P. and Liu, Bo and Faysse, Manuel and Hudelot, C{\'e}line and Viaud, Gautier},
  booktitle = {Proceedings of the 64th Annual Meeting of the Association for Computational Linguistics (Volume 1: Long Papers)},
  pages     = {16570--16600},
  year      = {2026},
  address   = {San Diego, California, United States},
  publisher = {Association for Computational Linguistics},
  doi       = {10.18653/v1/2026.acl-long.755}
}

@article{shrout1979intraclass,
  title   = {Intraclass Correlations: Uses in Assessing Rater Reliability},
  author  = {Shrout, Patrick E. and Fleiss, Joseph L.},
  journal = {Psychological Bulletin},
  volume  = {86},
  number  = {2},
  pages   = {420--428},
  year    = {1979},
  doi     = {10.1037/0033-2909.86.2.420}
}

\newpage
\begin{appendices}
  \crefalias{section}{appendix}\crefalias{subsection}{appendix}
  \etocdepthtag.toc{appendix}
  \raggedbottom
{%
  \etocsettagdepth{main}{none}%
  \etocsettagdepth{appendix}{section}%
  \etocsettocstyle{\vspace{\baselineskip}}{}%
  \tableofcontents
}
\paragraph{Reader's guide.}
The table maps each main-text section to the appendix sections that hold its details and robustness checks. After the limitations, definitions, and method details, the appendix follows the order of \cref{sec:experiments}.
\begin{center}
\small
\renewcommand{\arraystretch}{1.12}
\begin{tabular}{@{}p{0.12\textwidth}p{0.55\textwidth}p{0.27\textwidth}@{}}
\toprule
Main text & Claim or component & Appendix \\
\midrule
\cref{sec:method} & Protocols, calibration fit, judges, budgets, and cost; prompts; proofs & \cref{app:method}, \cref{app:prompts}, \cref{app:theory} \\
\cref{sec:experiments} & Human study: design, what annotators saw, annotators, reliability & \cref{app:human} \\
\cref{sec:diagnostics} & Criterion parameters, per-query $\qpar{\tau_j}$ and $\qpar{\alpha_j}$, transfer, label-free checks & \cref{app:nano-bright-params}, \cref{app:calibration-diagnostics} \\
\cref{sec:calibration} & Gain against human grades; calibrated scores against simpler maps; NIST placement; NIST positives that \rcp{} scores lowest & \cref{app:calibration-human}, \cref{app:trecdl-placement}, \cref{app:trecdl-audit} \\
\cref{sec:blind-eval} & Study 1 (qrel sets against \rcp{} sets); gain against human labels & \cref{app:meta-study1}, \cref{app:human-labels} \\
\cref{sec:ranking-results} & Failure modes, sensitivity and scores per dataset, second judges, zerank, robustness, Count-nDCG & \cref{app:nano-bright-failure}, \cref{app:nano-bright-sensitivity}, \cref{app:nano-bright-tables}, \cref{app:nano-bright-judge}, \cref{app:trecdl-zerank}, \cref{app:robustness-factorial}, \cref{sec:count} \\
\cref{sec:external} & Human verdicts and their controls, NIST-decisive pairs & \cref{app:human-verdicts}, \cref{app:trecdl-pairs} \\
\cref{sec:related} & Extended related work & \cref{app:related} \\
\cref{sec:conclusion} & Limitations; \dataset{HotpotQA} & \cref{app:limitations}, \cref{app:meta-hotpotqa} \\
Appendix only & Study 2 (reranker lists judged by LLMs) & \cref{app:meta-study2} \\
\bottomrule
\end{tabular}
\end{center}

\newpage

\section{Limitations}
\label{app:limitations}
\label{sec:limitations}

\paragraph{What the metric measures.}
Since \rcp{}'s gains and our annotators' grades value each document on its own, they do not penalize redundancy. On \dataset{HotpotQA}, whose questions need a primary and a secondary fact, a redundant document on the primary fact can outrank the document with the secondary one (\cref{app:meta-hotpotqa}). Pointwise gains also cannot credit a document that becomes relevant only once another document is known. The judge and the annotators both measure perceived relevance, not factual correctness (\cref{app:human-instrument}). When the \rcpndcg{}@5 scores of two lists differ by less than \val{draft.5.margin.decisive}, annotators side with \rcpndcg{} only at chance. We therefore read such differences as ties (\cref{app:human-verdicts}).

\paragraph{The evidence.}
Our document-level and contest-level human results share annotations. The NIST grades share none with either. Relevance is hard to judge consistently, even for NIST assessors~\citep{voorhees2000variations}. Two annotators of the same document grade it within one step of each other in \val{b2.grade.all.g5.agree-within-1}\% of cases. Gwet's AC2 with quadratic weights, which credits near misses on the ordinal scale, is \val{b2.grade.all.g5.ac2-quad}, and Krippendorff's $\alpha$ with interval distances is \val{b2.grade.all.alpha-interval}. Both are 0 at chance and 1 at perfect agreement. We therefore label documents by the mean of at least three grades. The annotators' questions resemble the rubric by design (\cref{app:human-instrument}). The external LLM judges of \cref{app:meta,sec:count} may share systematic preferences with our judge~\citep{soboroff2025dontuse}. The NIST check replaces our primary judge with the newer and smaller \model{Qwen3.6-27B} and adds \model{gpt-oss-120b}. With either judge, \rcpndcg{} favors NIST's winner on every NIST-decisive pair (\cref{app:trecdl-pairs}). On \reranker{zerank-2}, which learns from LLM preferences, unjudged passages explain only part of its lower rank under the NIST grades (\cref{sec:ranking-results}).

\paragraph{Scope and cost.}
We refined the criteria on \val{n14.rubric.queries.total} NanoBEIR queries with \val{n14.rubric.docs} documents each, about \val{n14.rubric.docshare}\% of its pooled documents (\cref{app:prompts}). All other data are held out. The criterion parameters also transfer to datasets left out of the fit (\cref{app:nano-bright-params}). Scores are comparable only within one calibration fit. The fit also ignores the uncertainty of the Stage~A scores. Documents outside the judged pool need new judgments and a refit (\cref{app:method-cost}). We validate \rcp{} as an evaluation method. Models trained on its feedback are best tested on other labels, since \rcpndcg{} would reward their own training signal. This concern may be smaller than it seems. Rerankers are far smaller than the LLM judge and spend far less computation per document. Such rerankers are thus unlikely to reach the judge's quality. Future work can test whether \rcpndcg{} remains informative for models trained on its feedback.

\FloatBarrier
\section{Definitions}\label{app:notation}

\subsection{Metrics and ties}\label{app:notation-metrics}\label{tab:notation-metrics}
This appendix defines the metrics, statistics, and intervals that the other appendices use. All four compared metrics are nDCG (\cref{eq:ndcg}) and differ only in the gain. nDCG on the qrels (qrel-nDCG) uses the qrel grades as gains.
Count-nDCG uses the pass share of \cref{eq:count-gain}. Raw-BT-nDCG applies the gain $g$ of \cref{eq:theta-gain} to the uncalibrated BT scores, centered at zero within each query. \rcpndcg{} uses the calibrated gain of \cref{eq:theta-gain}.
Reranker scores can tie. On ViDoRe v3, each document in a tied group receives the group's mean gain. On NanoBEIR and BRIGHT, we break tied scores by document identifier (the trec\_eval convention). Giving tied documents their group's mean gain changes the sensitivity of qrel-nDCG and \rcpndcg{} (\cref{app:notation-stats}) by at most \val{n18.ties.maxdelta} points. On TREC-DL, tied scores keep their order in the judge pool, which starts with the fused first-stage ranking. On NanoBEIR and BRIGHT, the failure-mode shares instead keep the first-stage order of tied documents (\cref{app:nano-bright-failure}).

\subsection{Statistics}\label{app:notation-stats}\label{sec:background-sensitivity}
A metric separates two rerankers on a dataset when a paired $t$-test on their per-query scores rejects equal means at the \val{draft.2.siglevel}\% level. Sensitivity is the share of the \val{draft.app.ab.pairs} reranker pairs of a dataset that a metric separates, averaged over the datasets of a suite. If no two rerankers differed, about \val{draft.2.siglevel}\% of pairs would separate by chance.
The AUC is the share of pooled document pairs, one rated Useful or better and one not, that a score orders correctly, with ties counting half (0.5 is chance, 1 is perfect). Within-query concordance is the share of same-query pairs with different grades that the score orders like the grades (0.5 is chance).

The expected calibration error (ECE) is the size-weighted mean gap between predicted probabilities and observed frequencies in bins of the prediction (0 is perfect). The Brier skill score is the share of a constant prediction's squared error that the pass probabilities remove (1 is perfect, 0 is no better than the constant).

ICC(1,1) is the expected correlation between two single grades of one document~\citep{shrout1979intraclass,koo2016guideline}. ICC(1,$k$) is the expected correlation between the mean grades of two independent panels of $k$ annotators. The rank-based split-half reliability correlates the mean grades of two random half-panels and steps the correlation up to the full panel with the Spearman--Brown formula. Krippendorff's $\alpha$ with the interval metric is one minus the ratio of observed to chance-expected squared grade differences (0 is chance, 1 is perfect agreement)~\citep{krippendorff2018content}. On our grades it is \val{b2.grade.all.alpha-interval}, nearly equal to ICC(1,1) at \val{b2.grade.all.icc1}. Fleiss' $\kappa$ compares the observed agreement of annotators with the agreement that their answer rates produce by chance (0 is chance, 1 is perfect agreement). Gwet's AC1 also corrects for chance but depends less on how common each answer is.

A contest's verdict is the list that more annotators pick. A contest is decided when its verdict names a list that exactly one of the two metrics also names. The main measure is the share of decided contests whose verdict sides with \rcpndcg{}. The chance level of the main measure is \val{draft.ff.chance.exact}\%, not one half, because qrel-nDCG ties more often (\cref{app:human-verdicts}). A contest in which one metric ties can count only for the other. A metric's margin is its score difference between list~A and list~B. A metric ties when the absolute value of its margin is below \val{draft.app.d.tie-margin}. The margin AUC of a metric is the probability that a random annotator pick of list~A has a larger A-minus-B margin than a random pick of list~B (0.5 means no information).

\subsection{Intervals}\label{app:notation-intervals}\label{tab:notation-intervals}
Unless the text or a caption says otherwise, intervals are 95\% query-clustered percentile bootstraps with \val{draft.app.ab.boot.b} resamples and a fixed seed, stratified by suite, corpus, or year. Rates over human contests, such as the main measure, instead use an analytic query-clustered interval. The margin AUC uses a delete-one-query jackknife. All methods treat the query as the unit, because documents and contests of one query share its pool.

\FloatBarrier
\section{Method Details}\label{app:method}\label{app:setup-details}

\subsection{Protocol, pools, and rerankers}\label{app:method-pools}\label{app:method-rerankers}
This appendix details the candidate pools, the judging schedules, the calibration fit, the judges and their cost, and the setup of each suite. \Cref{tab:app-hyper} collects the settings of the full protocol, the Stage~A and Stage~B schedules that NanoBEIR and BRIGHT use (\cref{app:method-stagea}).

\begin{table}[htb]
  \centering
  \caption{Settings of the full protocol, which NanoBEIR and BRIGHT use under both judges. \Cref{app:trecdl} describes the reduced protocol of TREC-DL.}
  \label{tab:app-hyper}
  \footnotesize
  \begin{tabularx}{\linewidth}{@{}Xl@{}}
    \toprule
    Setting & Value \\
    \midrule
    Pool size $K$ (top of the fused first-stage ranking) & \val{draft.2.poolsize} \\
    First-stage fusion & RRF over three retrievers, $\kappa = \val{draft.app.ckl.rrf-k}$ \\
    Stage~A: $w = 10$, score range $[-5,+5]$, coverage windows per query (random, then stratified, each reversed) & \val{draft.app.ckl.stagea.coverage-windows} \\
    Stage~A: adaptive calls per query; discount of covered boundaries after each pick & \val{draft.app.ckl.stagea.adaptive-calls} ($7$ batches of $8$ windows); \val{draft.app.ckl.stagea.overlap} \\
    Stage~A: BT fit & ridge $10^{-4}$, L-BFGS, at most \val{draft.app.ckl.bt.maxiter} iterations \\
    Stage~B: windows per query, each of $w=10$ documents on $C=5$ criteria, no reversed copies & \val{draft.app.ckl.stageb.windows} \\
    Calibration: hard constraints & $\sum_{\cpar{c}} \cpar{\gamma_c} = C$, $\operatorname{mean}_{\cpar{c}} \cpar{\beta_c} = 0$ \\
    Calibration: priors on $\qpar{\tau_j}$ and $\qpar{\alpha_j}$ & $\mathcal{N}(1, 1)$ and $\mathcal{N}(0, 2^2)$ \\
    Calibration: ridge on criterion parameters; fit & $10^{-4}$; L-BFGS, at most \val{draft.app.ckl.cal.maxiter} iterations \\
    Temperature, schedule seed, metric cutoff & \val{draft.app.ckl.temperature}, \val{draft.app.ckl.seed}, $10$ \\
    \bottomrule
  \end{tabularx}
\end{table}

Each query's candidate pool holds the top $K = \val{draft.2.poolsize}$ documents of a reciprocal rank fusion (RRF) of three first-stage retrievers. The three retrievers are the dense \baseline{Cohere Embed~v4}, \baseline{BM25}~\citep{robertson2009prf} as implemented in \texttt{bm25s}, and the dense \baseline{Octen-Embedding-8B}. Fusion scores each document $d$ by $\sum_{m} 1/\bigl(\kappa + \mathrm{rank}_m(d)\bigr)$ over the three rankings $m$~\citep{cormack2009rrf}. On NanoBEIR, BRIGHT, and ViDoRe v3, documents that the qrels mark relevant but the retrievers miss replace the lowest-ranked pool documents. The pool size $K$ thus stays fixed. On TREC-DL, each pool also joins every passage that NIST assessors judged for the query. These pools are therefore larger (\cref{app:trecdl}). The pools of a few NanoBEIR and BRIGHT queries differ from this construction (\cref{app:nano-bright}).

We evaluate the 14 rerankers of \cref{tab:app-rerankers}. The human study compares pairs drawn from \val{a1a.desc.rerankers} of them (\cref{app:human}). The three ZeroEntropy rerankers are trained on preferences from an ensemble of LLMs~\citep{pipitone2025zelo}. A metric built on an LLM judge could favor them. We therefore also report the human results without them.

\begin{table}[t]
  \centering
  \caption{The 14 rerankers. Families, identifier prefix, and license of the public weights: Contextual~AI~\citep{contextualai2025rerankv2} (\nolinkurl{ContextualAI/}, CC BY-NC-SA 4.0), Jina~AI~\citep{wang2025jinarerankerv3} (\nolinkurl{jinaai/}, CC BY-NC 4.0), Qwen~\citep{zhang2025qwen3embedding} (\nolinkurl{Qwen/}, Apache 2.0), Cohere~\citep{cohere2025rerank4} (API only), Voyage~AI~\citep{voyageai2025rerank25} (API only), ZeroEntropy~\citep{pipitone2025zelo} (\nolinkurl{zeroentropy/}, Apache 2.0).}
  \label{tab:app-rerankers}
  \footnotesize
  \begin{tabularx}{\linewidth}{@{}l>{\raggedright\arraybackslash}Xr@{}}
    \toprule
    Name & Public identifier & Params \\
    \midrule
    \baseline{CTXL-RR-1B} & \nolinkurl{ctxl-rerank-v2-instruct-multilingual-1b} & \val{draft.app.ckl.params.ctxl1} \\
    \baseline{CTXL-RR-2B} & \nolinkurl{ctxl-rerank-v2-instruct-multilingual-2b} & \val{draft.app.ckl.params.ctxl2} \\
    \baseline{CTXL-RR-6B} & \nolinkurl{ctxl-rerank-v2-instruct-multilingual-6b} & \val{draft.app.ckl.params.ctxl6} \\
    \baseline{Jina-RR-v3} & \nolinkurl{jina-reranker-v3} & \val{draft.app.ckl.params.jina} \\
    \baseline{Qwen3-RR-0.6B} & \nolinkurl{Qwen3-Reranker-0.6B} & \val{draft.app.ckl.params.qwen3rr06} \\
    \baseline{Qwen3-RR-4B} & \nolinkurl{Qwen3-Reranker-4B} & \val{draft.app.ckl.params.qwen3rr4} \\
    \baseline{Qwen3-RR-8B} & \nolinkurl{Qwen3-Reranker-8B} & \val{draft.app.ckl.params.qwen3rr8} \\
    \baseline{RR4-Fast} & \nolinkurl{rerank-v4.0-fast} & -- \\
    \baseline{RR4-Pro} & \nolinkurl{rerank-v4.0-pro} & -- \\
    \baseline{Voyage2.5-lt} & \nolinkurl{rerank-2.5-lite} & -- \\
    \baseline{Voyage2.5} & \nolinkurl{rerank-2.5} & -- \\
    \baseline{zerank-1-sm} & \nolinkurl{zerank-1-small} & \val{draft.app.ckl.params.zr1s} \\
    \baseline{zerank-1} & \nolinkurl{zerank-1} & \val{draft.app.ckl.params.zr1} \\
    \baseline{zerank-2} & \nolinkurl{zerank-2} & \val{draft.app.ckl.params.zr2} \\
    \bottomrule
  \end{tabularx}
\end{table}

\subsection{Stage A and Stage B schedules}\label{app:method-stagea}\label{app:tournament-phases}\label{app:method-stageb}\label{app:rubric-design}
Stage~A schedules its windows in three phases, first random, then stratified, then adaptive. Random windows give every document broad coverage. Stratified windows then group documents of similar estimated quality. Both phases show each window a second time in reverse order.

The adaptive phase adds \val{draft.app.ckl.stagea.adaptive-calls} calls per query in 7 batches of 8 windows and refits the BT model before each batch. Let $p_r = \sigma\bigl(\hat{\theta}_{(r)} - \hat{\theta}_{(r+1)}\bigr)$ be the predicted preference between the documents at ranks $r$ and $r+1$ of the current order. The phase values the boundary between them by
\begin{equation}
  v_r = p_r\,(1 - p_r) \cdot \frac{1}{\log_2(r+1)} \cdot \frac{1}{1 + n_r},
  \label{eq:app-adaptive}
\end{equation}
where $n_r$ counts the windows that have already held both documents. The first factor peaks when the order of the two documents is a coin flip. The second factor is the rank discount of nDCG. The third favors pairs that the judge has rarely compared. A batch greedily selects the 8 windows of $w$ consecutive documents with the largest sum of boundary values. After each pick, the values of the boundaries it covers shrink by the factor of \cref{tab:app-hyper}. Adaptive windows are shown once, in the current estimated order.

The $w$ scores of a window fix only $w-1$ independent differences. Each of its $\binom{w}{2}$ pairs therefore has weight $2/w$. Each BT fit minimizes the weighted mean cross-entropy between soft and predicted preferences plus the ridge penalty of \cref{tab:app-hyper}. If a response's scores cannot be parsed, its ranking enters the fit as hard preferences with weight 1.

Stage~B also shows $w$ documents per call and asks about each on its own (\cref{sec:stage-b}). Under the full protocol and on ViDoRe v3, each query receives \val{draft.app.ckl.stageb.windows} windows, whose first \val{draft.app.ckl.stageb.random-windows} are balanced random windows. Their sampling is inspired by balanced incomplete block designs. Each window takes the documents shown least often so far, with random tie-breaks. All documents thus appear about equally often and meet many different partners. A Rasch model~\citep{rasch1980probabilistic}, the one-parameter case of IRT, then gives each document a preliminary estimate from these responses. The remaining windows group documents with similar estimates. The calibration treats each of a document's responses as one observation (\cref{app:method-fit}). We call the three Stage~A phases together with this Stage~B schedule the full protocol, which NanoBEIR and BRIGHT use (\cref{tab:app-hyper}). ViDoRe v3 uses its Stage~B schedule (\cref{app:vidore-setup}).

\subsection{Calibration fit}\label{app:method-fit}
Within a query, the ability is a positive affine function of the BT score. This affine map keeps the query's order and the model linear on the logit scale (\cref{eq:2pl}). We set $\qpar{\tau_j} = \operatorname{softplus}(t_j)$ with an unconstrained $t_j$, so that no negative scale can reverse the tournament order. The discriminations are $\cpar{\gamma_c} = C\,e^{u_c} / \sum_{c'} e^{u_{c'}}$ and the difficulties are $\cpar{\beta_c} = v_c - \operatorname{mean}_{c'} v_{c'}$. Let $N$ be the number of Stage~B responses, each of which answers all $C$ criteria. The fit minimizes
\begin{equation}
  \frac{1}{NC} \biggl[ \sum_{n=1}^{N} \sum_{c=1}^{C} \ell_{nc} + \sum_{\qpar{j}} \Bigl( \frac{(\qpar{\tau_j} - 1)^2}{2\sigma_\tau^2} + \frac{\qpar{\alpha_j^2}}{2\sigma_\alpha^2} \Bigr) \biggr] + \frac{\lambda}{2} \sum_{c} \bigl( u_c^2 + v_c^2 \bigr),
  \label{eq:app-objective}
\end{equation}
where $\ell_{nc}$ is the binary cross-entropy of response $n$ on criterion $c$ under \cref{eq:2pl} and the hyperparameters take the values of \cref{tab:app-hyper}. Up to the factor $1/(NC)$, the objective is a negative log posterior under the priors of \cref{tab:app-hyper}, with the ridge acting as a Gaussian prior on $u_c$ and $v_c$. The fit is therefore the maximum a posteriori (MAP) estimate. The priors keep the estimates finite when all documents of a query fail every criterion or all pass every criterion.

We hold $\dpar{\hat{\theta}^{\mathrm{BT}}}$ fixed from Stage~A and do not propagate their uncertainty (\cref{sec:limitations}). The fit starts from $\qpar{\tau_j} = 1$, $\qpar{\alpha_j} = 0$, and equal criterion parameters. On ViDoRe v3, the fit covers all six language versions of every question. On TREC-DL, the main fit pools both years with further MS~MARCO queries without NIST grades (\cref{app:trecdl-protocol}).

\subsection{Judges, budgets, and cost}\label{app:method-judges}\label{app:method-budget}\label{app:method-runtime}\label{app:runtime-estimates}\label{app:method-cost}\label{app:trecdl-cost}\label{app:robustness-budget}

\paragraph{Judges and serving.}
We serve the open-weight judges with SGLang, each on one node of eight GPUs. The primary judge, \model{Qwen3.5-397B}~\citep{qwen2026qwen35}, judges NanoBEIR, BRIGHT, and ViDoRe v3 in its 4-bit NVFP4 release~\citep{nvidia2026qwen35nvfp4} at public revision \val{draft.app.ckl.nvfp4.rev} on eight B200 GPUs. The second judge, \model{gpt-oss-120b}~\citep{openai2025gptoss}, runs in its released precision (MXFP4 mixture-of-experts weights) on H100 GPUs and repeats both stages on NanoBEIR, BRIGHT, and TREC-DL. On TREC-DL, the smaller dense \model{Qwen3.6-27B}~\citep{qwen2026qwen36} judges in its 8-bit (FP8) release under the reduced protocol of \cref{app:trecdl-protocol}. No other LLM produces a gain. The other LLMs serve as blind comparison judges (\cref{app:meta,app:robustness-judges,app:prompts-meta}) and as a TREC-DL arbiter (\cref{app:trecdl-judge2}).

\paragraph{Budgets and cost.}
\Cref{tab:app-budget} lists how often the judge sees each document. A reversed copy counts as a separate call. A placement is one appearance of a document in a Stage~B window.

\begin{table}[t]
  \centering
  \caption{Judge budgets per suite. Calls and Stage~B windows are counted per query. Windows and placements per document are means over pool documents. \model{gpt-oss-120b} uses the settings of \model{Qwen3.5-397B} on NanoBEIR and BRIGHT.}
  \label{tab:app-budget}
  \footnotesize
  \begin{tabular}{@{}llrcrrr@{}}
    \toprule
    & & \multicolumn{3}{c}{Stage~A} & \multicolumn{2}{c}{Stage~B} \\
    \cmidrule(lr){3-5} \cmidrule(l){6-7}
    Suite & Judge & Calls & Reversed & Windows/doc & Windows & Placements/doc \\
    \midrule
    NanoBEIR & \model{Qwen3.5-397B} & \val{draft.app.ckl.stagea.calls} & yes & \val{draft.3.nano.stagea.windowsperdoc} & \val{draft.app.ckl.stageb.windows} & \val{draft.3.nano.stageb.placementsperdoc} \\
    BRIGHT & \model{Qwen3.5-397B} & \val{draft.app.ckl.stagea.calls} & yes & \val{draft.app.ckl.bright.stagea.windowsperdoc}$^{a}$ & \val{draft.app.ckl.stageb.windows} & \val{draft.app.ckl.bright.stageb.placementsperdoc}$^{a}$ \\
    ViDoRe v3 & \model{Qwen3.5-397B} & --$^{b}$ & --$^{b}$ & --$^{b}$ & \val{draft.app.ckl.stageb.windows} & \val{a3.n.stageb.placements.perdoc} \\
    TREC-DL 2019 & \model{Qwen3.6-27B} & \val{draft.4.dl.stagea-calls.dl19} & no & \val{a2.budget.qwendl19.windowsperdoc} & \val{b7.trec.count-kall.calls-total}$^{c}$ & \val{a2.budget.qwendl19.placementsperdoc} \\
    TREC-DL 2020 & \model{Qwen3.6-27B} & \val{draft.4.dl.stagea-calls.dl20} & no & \val{a2.budget.qwendl20.windowsperdoc} & \val{b7.trec.count-kall.calls-total}$^{c}$ & \val{a2.budget.qwendl20.placementsperdoc} \\
    \bottomrule
  \end{tabular}

  \smallskip
  \parbox{\linewidth}{\footnotesize $^{a}$Computed from the configuration. The released window logs give the same value. $^{b}$Not recorded in the stored run artifacts. $^{c}$Mean over the queries of both years, whose Stage~B windows are pooled.}
\end{table}

Each pool is judged once and then scores any reranking of it without further judge calls. A document outside the judged pool needs new windows and a refit of the 2PL model, which can shift earlier calibrated scores. The cost of \rcp{} is thus paid per pool, as for human qrels, not per reranker.

We price the logged tokens at public API rates. The TREC-DL run of \model{Qwen3.6-27B} costs \$\val{a2.cost.usd.2p00} in total and \$\val{a2.cost.usdperquery.2p00} per query. This total includes imputed reasoning tokens for the 2020 queries, which make up \val{a2.cost.imputedshareusd.2p00}\% of it. We also replay the logged judgments at two reduced budgets, which keep \val{a2.budget.stab.b05.docjudgmentshare}\% and \val{a2.budget.stab.b03.docjudgmentshare}\% of the document judgments. These replays cost \$\val{a2.cost.reduced.b05.2p00.usdperquery} and \$\val{a2.cost.reduced.b03.2p00.usdperquery} per query. Each replay draws five random subsamples and refits both stages on each. The replays and the full leaderboard they are compared with fit the calibration on the TREC-DL queries alone. The replays score the 18 TREC-DL systems, namely the 14 rerankers, the three retrievers, and their fusion. We report TREC-DL 2019, whose changes exceed those of 2020. With half of the judgments, the largest change in a system's mean \rcpndcg{}@10 is \val{a56.robust.budget.b0p5-k2.dl19.max-shift}, averaged over subsamples. The Kendall rank correlation between the reduced and the full leaderboard is \val{a56.robust.budget.b0p5-k2.dl19.kendall}, where 1 means the same order. With \val{a2.budget.stab.b03.docjudgmentshare}\% of the judgments, these values are \val{a56.robust.budget.b0p3-k2.dl19.max-shift} and \val{a56.robust.budget.b0p3-k2.dl19.kendall}. Neither budget reverses a system pair that the full budget separates. This check therefore compares \rcpndcg{} only with itself, not with NIST.

\subsection{NanoBEIR and BRIGHT}\label{app:nano-bright}\label{app:nano-bright-scope}
All \rcp{} scores on NanoBEIR and BRIGHT come from \model{Qwen3.5-397B}, except where we name \model{gpt-oss-120b}. \Cref{app:calibration-gain} compares gain functions on these suites. \Cref{app:notation-metrics} describes how we break tied reranker scores, including in the failure-mode shares (\cref{app:nano-bright-failure}).

\paragraph{NanoBEIR.} NanoBEIR~\citep{zetaalpha2024nanobeir,muennighoff2023mteb} holds \val{draft.4.nano.queries} queries in 13 datasets. The rerankers reorder only each query's judged pool of \val{draft.2.poolsize} documents. This also holds for the four NFCorpus queries with more positive qrels than $K$. On \val{n15.nano.scidocs.emptypos} NanoSCIDOCS queries, one positive document has empty text and is missing from the pool. qrel-nDCG still counts it in its ideal ranking. \rcpndcg{} gives gain zero to a document without a calibrated score, and its ideal ranking draws on the judged pool (\cref{eq:theta-gain}). On \val{n20.nano.arguana.selfdocs} of the \val{draft.app.g.s.nano.nq.arguana} NanoArguAna queries, the pool holds the query argument itself. In one of them, the argument is a copy filed under another identifier. BEIR's evaluator and MTEB's ArguAna task drop a retrieved document whose identifier equals the query's. We remove these documents, and the one copy, from every ranking and from the ideal rankings of \rcpndcg{} and Count-nDCG. The qrels never mark them relevant. The calibration keeps them (0.02\% of the Stage~B placements), and the human study and the LLM-judge studies use the lists as shown.

\paragraph{BRIGHT.} BRIGHT has \val{a4.bright.queries.rerankers} queries in our reranker runs and \val{a4.bright.queries.calibrated} in the calibration fit. The two query sets differ only in the two TheoremQA tasks. All metrics use the \val{a4.bright.queries.evaluated} queries in both sets, except qrel-nDCG's sensitivity, which uses all TheoremQA-Theorems queries as in \cref{tab:crosssuite}.

As in BRIGHT's official evaluation~\citep{su2024bright}, we remove each query's listed excluded documents from every ranking we score. The removal affects \val{b20.affected.queries} of the \val{a4.bright.queries.evaluated} evaluated queries. None of the removed documents is qrel-relevant. The judge's pools still contained the excluded documents, \val{b20.prev.qwen.stageb-share}\% of the Stage~B observations of each fit. We keep the fitted parameters and residual checks (\cref{tab:nb-params,tab:app-diagnostics}). The human study is unaffected (\cref{app:human-design}). A few follow-up analyses still use rankings that contain the excluded documents.

\subsection{ViDoRe v3}\label{app:vidore}\label{app:vidore-setup}
ViDoRe v3~\citep{vidore3} grades retrieval from pages of enterprise, governmental, and educational documents, which every system reads as the benchmark's OCR text, not as an image. The benchmark has eight corpora, five English and three French. Each question exists in six language versions with one shared set of relevance labels. We ran both stages of \rcp{} on all \val{a3.n.rows} question rows and fitted one calibration over them. The unit of analysis is the native question (\val{a3.n.questions} questions, \cref{app:vidore-xlang}). The five translated versions pose the same question in another language over the same pages. Retrieval with them is therefore cross-lingual. Since they also share the question's relevance labels, scoring them as separate queries would count each question six times in the paired tests. MTEB's ViDoRe v3 tasks instead average one score per language subset and thus mix monolingual and cross-lingual retrieval. Each question row has its own candidate pool of \val{draft.2.poolsize} pages, built as on NanoBEIR (\cref{app:method-pools}). After a pre-filter by a vision-language model, ViDoRe's human assessors graded the remaining pages on a three-point scale~\citep{vidore3}. Relevant pages thus have grade 1 or 2. On ViDoRe v3, qrel-nDCG uses these grades as linear gains. \model{Qwen3.5-397B} judges ViDoRe v3, with \val{a3.n.stageb.placements.perdoc} Stage~B placements per page on average (\cref{tab:app-budget}). Pages with tied reranker scores receive their group's mean gain. The human study drew its contests from the English corpora (\cref{app:human}).

Per corpus, qrel-nDCG separates \val{a3.sens.lin}\% of the \val{a56.robust.factorial.pairs} reranker pairs on average, and \rcpndcg{} separates \val{a3.sens.rcp}\%. The Kendall correlation between the two leaderboards, where 1 means identical orders, is \val{a3.tau.lin-rcp}. qrel-nDCG@10 saturates, floors, or compresses on \val{a3.fail.any}\% of questions (\cref{app:vidore-failure}).

\subsection{TREC-DL 2019 and 2020}\label{app:trecdl}\label{app:trecdl-protocol}
The TREC-DL tracks of 2019 and 2020 give dense graded human labels that share no annotations with our human study. NIST assessors graded MS~MARCO passages from 0 to 3 under their own guidelines, years before our method existed~\citep{nguyen2016msmarco,craswell2020overview,craswell2021overview}.

\paragraph{Queries and pools.}
The two tracks have \val{a2.setup.dl19.queries} and \val{a2.setup.dl20.queries} queries with NIST grades, over \val{a2.setup.dl19.judged} and \val{a2.setup.dl20.judged} graded passages. A passage outside this judgment pool is unjudged and counts as irrelevant. Each pool joins the fused top \val{draft.2.poolsize} passages (\cref{app:method-pools}) with every graded passage, so that every graded passage receives an \rcp{} gain. The mean pool holds \val{b8.nist.qwen.pooled.lev.budget.mean-pool-docs} passages per query.

\paragraph{Panels and metrics.}
The 14 rerankers rank each query's candidates in two panels. The pool panel reorders the full pool. The top-150 panel reorders only the \val{draft.2.poolsize} fused passages. Leaderboards of 18 systems add the three retrievers and their fusion. We write qrel-nDCG@10 for nDCG@10 on the NIST grades with linear gain and an ideal ranking over every graded passage, as in trec\_eval. The exponential gain $2^{\mathrm{grade}}-1$ is a reported variant. \rcpndcg{}@10 and Count-nDCG@10 draw their ideal rankings from the candidate pool. We test each year separately. For each of the \val{draft.5.dl.pairs} reranker pairs, a paired $t$-test compares per-query scores. A metric separates the pair when the two-sided test rejects equal means at the \val{draft.2.siglevel}\% level. A pair is NIST-decisive when qrel-nDCG@10 on the NIST grades separates it. A metric agrees with NIST on such a pair when its mean difference has the same sign as NIST's.

\paragraph{Judge and reduced protocol.}
The main TREC-DL scores of \rcp{} and Count-nDCG use the five criteria of \cref{tab:criteria}. The reduced protocol skips the reversed Stage~A copies of the full protocol and adds fewer adaptive windows per passage. A passage appears in \val{a2.budget.qwen.pooled.windowsperdoc} Stage~A windows on average, against \val{draft.3.nano.stagea.windowsperdoc} on NanoBEIR. Stage~B places each passage in \val{a2.budget.qwen.pooled.placementsperdoc} windows (\cref{tab:app-budget}).

\paragraph{Calibration fits.}
The main calibration CAL-A fits the 2PL model once over \val{draft.app.ef.dl.calfit.queries} queries. The fit covers the queries of both years and \val{draft.app.ef.dl.calfit.extra} MS~MARCO development queries without NIST grades. CAL-B fits the same Stage~A scores and a repeated Stage~B run on the TREC-DL queries alone. CAL-C fits the judgments of the second judge (\cref{app:trecdl-judge2}). In all three fits, C3 is easier to pass than C2. C5 is far more difficult than any other criterion.

\FloatBarrier
\section{Human Study Protocol}\label{app:human}

This appendix describes the design and sample of the human study, what annotators saw, who the annotators were, and how reliably they graded. The study has two layers. The label layer compares document grades with qrels and \rcp{} scores (\cref{app:human-labels}). The verdict layer asks which of two lists annotators prefer (\cref{app:human-verdicts}). Both layers rest on the same annotations and are therefore not independent evidence.

\subsection{Design and sample}\label{app:human-design}\label{app:human-collection}

The study asks whether annotators side with \rcpndcg{} more often than with qrel-nDCG where exactly one of the two metrics names the annotators' choice. A contest is one query with two rerankers. Contests of the same query are correlated. Every interval therefore treats a query as one cluster (\cref{app:notation-intervals}).

The study covers \val{a1a.desc.datasets.nanomteb} of the 13 NanoBEIR datasets, \val{a1a.desc.datasets.bright} of the 12 BRIGHT tasks, and \val{a1a.desc.datasets.vidore} English-language corpora of ViDoRe v3. We exclude NanoSCIDOCS, whose qrels mark the papers that the query paper cites. We also exclude BRIGHT's five coding and theorem tasks (LeetCode, Pony, AoPS, and both TheoremQA tasks), where relevance rests on a shared line of reasoning. The suites contribute \val{a1a.desc.queries.nanomteb}, \val{a1a.desc.queries.bright}, and \val{a1a.desc.queries.vidore} distinct queries. The contests draw on \val{a1a.desc.rerankers} rerankers in \val{a1a.desc.pairs} distinct pairs (\cref{tab:app-human-sample}).

We oversample disagreements, contests on which the two metrics name opposite winners. We split disagreements into decisive and marginal strata by the size of the two metrics' margins, their score differences between the two lists. The draw compares the metrics at depth 10, the usual evaluation depth, on the full rankings. Every verdict analysis compares them at depth 5 on the displayed documents. Annotators grade only each list's top five, since the union of two top-ten lists, up to 20 documents, is too long to grade with care. The sample also includes concordant contests, on which both metrics name the same winner. We targeted \val{n15.human.target-contests} contests per suite. The final sample holds \val{a1a.desc.contests} contests, \val{draft.app.d.strata.signflip} drawn as disagreements and \val{draft.app.d.strata.concordant} as concordant contests. These contests cover \val{a1a.desc.queries} distinct queries, because some queries enter more than one contest with different reranker pairs. Contests with a zerank reranker make up \val{a1a.desc.zerank-share}\% of the sample. We also report results on the \val{a1a.desc.nn-contests} contests without zerank. The first valid review was completed on \val{a1a.desc.collection-start-date} and the last on \val{a1a.desc.collection-end-date}.

\begin{table}[!htbp]
\centering
\footnotesize
\caption{Composition of the human study. A review is one annotator's grading of one contest. Since annotators worked across suites, their counts do not add up.}
\label{tab:app-human-sample}
\begin{tabular*}{\textwidth}{@{}l@{\extracolsep{\fill}}rrrr@{}}
\toprule
& NanoBEIR & BRIGHT & ViDoRe v3 & All \\
\midrule
Datasets or corpora & \val{a1a.desc.datasets.nanomteb} & \val{a1a.desc.datasets.bright} & \val{a1a.desc.datasets.vidore} & \val{a1a.desc.datasets} \\
Queries & \val{a1a.desc.queries.nanomteb} & \val{a1a.desc.queries.bright} & \val{a1a.desc.queries.vidore} & \val{a1a.desc.queries} \\
Contests & \val{a1a.desc.contests.nanomteb} & \val{a1a.desc.contests.bright} & \val{a1a.desc.contests.vidore} & \val{a1a.desc.contests} \\
Reviews & \val{a1a.desc.reviews.nanomteb} & \val{a1a.desc.reviews.bright} & \val{a1a.desc.reviews.vidore} & \val{a1a.desc.reviews} \\
Annotators & \val{a1a.desc.annotators.nanomteb} & \val{a1a.desc.annotators.bright} & \val{a1a.desc.annotators.vidore} & \val{a1a.desc.annotators} \\
Grades & \val{b2.desc.nano.grades} & \val{b2.desc.bright.grades} & \val{b2.desc.vidore.grades} & \val{a1a.desc.grades} \\
Distinct graded documents & \val{a1b.n.docs.nano} & \val{a1b.n.docs.bright} & \val{a1b.n.docs.vidore} & \val{a1b.n.docs.all} \\
\bottomrule
\end{tabular*}
\end{table}
\FloatBarrier

\subsection{What annotators saw}\label{app:human-instrument}

Annotators saw the query, a one-line task description of what counts as relevant, and the union of both rerankers' top five documents in random order. The annotators never saw the rerankers, the two lists, the qrels, or any metric score. For each document they answered the four questions of \cref{tab:app-human-instrument}. Written guidelines asked them to grade only the shown text at face value and to choose Either between adjacent grades where it was offered. The questions echo the wording of four Stage~B criteria (\cref{tab:criteria}), with Topical matching C1, Useful C2, Answers C4, and Complete C5. The match lies in the wording only. We did not measure whether the questions and the criteria are equivalent.

\begin{table}[!htbp]
\centering
\footnotesize
\caption{The annotators' four questions, in the order shown, with the grade that the most stringent yes gives and the gain of that grade in the verdicts. Answers and Complete also accept Either, which gives the half grade in parentheses and a linearly interpolated gain when it directly follows the last yes.}
\label{tab:app-human-instrument}
\begin{tabularx}{\textwidth}{@{}l X l c c@{}}
\toprule
Name & Question & Answers & Grade & Gain \\
\midrule
-- & (every answer is no) & -- & 0 & 0 \\
Topical & \emph{Is this document clearly about the query's specific topic or need, not just the same broad area?} & yes, no & 1 & \val{draft.app.d.gain.g1} \\
Useful & \emph{Can you extract at least one concrete fact, step, or example that specifically addresses this query?} & yes, no & 2 & \val{draft.app.d.gain.g2} \\
Answers & \emph{Does it directly answer the query's main question or need?} & yes, Either, no & 3 (2.5) & \val{draft.app.d.gain.g3} (\val{draft.app.d.gain.g2p5}) \\
Complete & \emph{Does it fully resolve the query's need on its own, with nothing important left to add?} & yes, Either, no & 4 (3.5) & \val{draft.app.d.gain.g4} (\val{draft.app.d.gain.g3p5}) \\
\bottomrule
\end{tabularx}
\end{table}

Annotators answer every question for every document, whatever their earlier answers. The most stringent yes therefore sets the grade. A fixed gain map (\cref{tab:app-human-instrument}) turns grades into gains. Each annotator's grades give both lists an nDCG@5 under one shared ideal ranking. An annotator's pick is the list with the higher nDCG@5. A gap below \val{draft.app.d.tie-margin} is a tie. A Copeland score over the annotator's rankings of equally graded documents can break such a tie. The score counts wins plus half the ties in these pairwise judgments, weighted by each document's effect on the score difference. The contest verdict counts picks of list~A against picks of list~B and ignores ties (\cref{app:notation-stats}). Of the \val{a1a.desc.contests} contests, \val{a1a.desc.verdict-a} have verdict~A, \val{a1a.desc.verdict-b} verdict~B, and \val{a1a.desc.verdict-tie} a tie. \Cref{app:human-controls} recomputes the verdicts under other gain maps.

A metric's margin is its nDCG@5 difference between the two displayed lists, with the ideal DCG@5 taken over the query's whole candidate pool. For \rcpndcg{}, the ideal falls back to the displayed documents in \val{n15.human.rcp-ideal-fallback} contests. Under the same rule, a metric names a winner only when its margin is not a tie. A contest is decided when it has a verdict and exactly one metric names the verdict's list. The main measure is the share of decided contests whose verdict sides with \rcpndcg{}. In \val{a1a.desc.ndcg-blind} contests, no document with a positive qrel appears in either top five. \Cref{tab:app-human-verdicts} also reports the main measure without these contests. On some contests one displayed document is an attention check, an off-topic document without a positive qrel and with a near-zero \rcp{} gain. A grade above 0 on it voids that annotator's review. Both metrics call it irrelevant. Voiding a review therefore cannot favor either metric.

\subsection{Annotators and terms}\label{app:human-annotators}

The annotators were \val{a1a.desc.annotators} external contractors, each with at least a BSc in the subject of the documents they graded. They were not employees of the authors' organization or professional relevance assessors. They were paid \$\val{draft.4.pay} per hour and gave informed consent. The study had no review by an institutional review board. A separate team ran recruitment, quality control, and all contact with the annotators behind an information barrier. The authors wrote the questions and the guidelines. Three annotators graded each contest, \val{a1a.desc.reviews} reviews in all. The separate team could exclude an annotator only on grounds that involve neither metric, such as failed attention checks or low agreement with other annotators. The team excluded one annotator, whose reviews we do not use and who is not among the \val{a1a.desc.annotators} counted above. Re-admitting this annotator's \val{b17.excl.reviews} reviews moves the main measure from \val{b17.without.primary}\% to \val{b17.with.primary}\%. The results on disagreements alone and on disagreements with both margins at least \val{draft.tab.verdicts.decisive-margin} are unchanged (\cref{app:human-verdicts}).
\FloatBarrier

\subsection{Reliability}\label{app:human-reliability}

We measure how consistently annotators grade, first for single grades and then for the panel means and verdicts that our analyses use. Krippendorff's $\alpha$ for one annotator's grade on the 0 to 4 scale, where 0 is chance and 1 is perfect agreement, is \val{b2.grade.all.alpha-interval} (95\% CI [\val{b2.grade.all.alpha-interval.lo}, \val{b2.grade.all.alpha-interval.hi}]). This value is below the level that \citet{krippendorff2018content} treats as reliable. The analyses instead rest on panel means and majority verdicts. For the panel means of $k$ annotators, the reliability ICC(1,$k$) (\cref{app:notation-stats}) is \val{b2.grade.all.rel-panel} (95\% CI [\val{b2.grade.all.rel-panel.lo}, \val{b2.grade.all.rel-panel.hi}]), within the band that \citet{koo2016guideline} call good. Pairs of annotators of one contest agree on a pick in \val{b2.verdict.all.all.ab.agree}\% of cases, where one half is chance. Fleiss' $\kappa$ corrects this share for chance agreement, with 0 at chance and 1 for perfect agreement. On the picks, $\kappa$ is \val{b2.verdict.all.all.ab.kappa} (95\% CI [\val{b2.verdict.all.all.ab.kappa.lo}, \val{b2.verdict.all.all.ab.kappa.hi}]). For a rare answer, $\kappa$ can be low even when raw agreement is high. Among yes or no answers, annotators say yes to Complete in only \val{b2.bin.all.r4.either-missing.prev}\% of cases. On these answers, $\kappa$ is \val{b2.bin.all.r4.either-missing.kappa}. Gwet's AC1, which depends less on prevalence, is \val{b2.bin.all.r4.either-missing.ac1} on the same answers. With any one annotator left out, $\alpha$ stays between \val{b2.loao.alpha-interval.min} and \val{b2.loao.alpha-interval.max} and verdict agreement between \val{b2.loao.verdict-ab-agree.min}\% and \val{b2.loao.verdict-ab-agree.max}\%.

\FloatBarrier
\providecommand{\efv}[1]{\ensuremath{\val{#1}}}%
\section{Diagnostics and Calibration}
\label{app:calibration}
This section details the 2PL fits behind \cref{sec:diagnostics,sec:calibration}: their parameters and transfer, label-free diagnostics, calibration against human and NIST grades, and the shape of the gain.

\subsection{Fitted parameters and transfer across datasets}\label{app:nano-bright-params}

This subsection supports the diagnostic claims of \cref{sec:diagnostics}.
\Cref{tab:nb-params} lists the criterion parameters $\cpar{\beta_c}$ and $\cpar{\gamma_c}$ of the four fits on NanoBEIR and BRIGHT, one per suite and judge, shared across all queries (\cref{sec:calibration-merge}).

\begin{table}[t]
  \centering
  \caption{Parameters of the four fits. (a)~Difficulty $\cpar{\beta_c}$ and discrimination $\cpar{\gamma_c}$ per criterion. (b)~The 5th and 95th percentiles of $\qpar{\tau_j}$, their ratio, and the mean offset $\qpar{\alpha_j}$.}
  \label{tab:nb-params}
  \scriptsize
  \setlength{\tabcolsep}{4.5pt}
  \begin{tabular}{@{}l*{2}{r}@{\hspace{9pt}}*{2}{r}@{\hspace{9pt}}*{2}{r}@{\hspace{9pt}}*{2}{r}@{}}
    \multicolumn{9}{@{}l}{(a) Criterion parameters} \\
    \toprule
    & \multicolumn{4}{c}{NanoBEIR} & \multicolumn{4}{c}{BRIGHT} \\
    \cmidrule(lr){2-5}\cmidrule(l){6-9}
    & \multicolumn{2}{c}{\model{Qwen3.5-397B}} & \multicolumn{2}{c}{\model{gpt-oss-120b}} & \multicolumn{2}{c}{\model{Qwen3.5-397B}} & \multicolumn{2}{c}{\model{gpt-oss-120b}} \\
    \cmidrule(lr){2-3}\cmidrule(lr){4-5}\cmidrule(lr){6-7}\cmidrule(l){8-9}
    & $\cpar{\beta_c}$ & $\cpar{\gamma_c}$ & $\cpar{\beta_c}$ & $\cpar{\gamma_c}$ & $\cpar{\beta_c}$ & $\cpar{\gamma_c}$ & $\cpar{\beta_c}$ & $\cpar{\gamma_c}$ \\
    \midrule
    C1 & $\val{a4.nano.qwen.beta.c1}$ & \val{a4.nano.qwen.gamma.c1} & $\val{a4.nano.oss.beta.c1}$ & \val{a4.nano.oss.gamma.c1} & $\val{a4.bright.qwen.beta.c1}$ & \val{a4.bright.qwen.gamma.c1} & $\val{a4.bright.oss.beta.c1}$ & \val{a4.bright.oss.gamma.c1} \\
    C2 & $\val{a4.nano.qwen.beta.c2}$ & \val{a4.nano.qwen.gamma.c2} & $\val{a4.nano.oss.beta.c2}$ & \val{a4.nano.oss.gamma.c2} & $\val{a4.bright.qwen.beta.c2}$ & \val{a4.bright.qwen.gamma.c2} & $\val{a4.bright.oss.beta.c2}$ & \val{a4.bright.oss.gamma.c2} \\
    C3 & $\val{a4.nano.qwen.beta.c3}$ & \val{a4.nano.qwen.gamma.c3} & $\val{a4.nano.oss.beta.c3}$ & \val{a4.nano.oss.gamma.c3} & $\val{a4.bright.qwen.beta.c3}$ & \val{a4.bright.qwen.gamma.c3} & $\val{a4.bright.oss.beta.c3}$ & \val{a4.bright.oss.gamma.c3} \\
    C4 & $\val{a4.nano.qwen.beta.c4}$ & \val{a4.nano.qwen.gamma.c4} & $\val{a4.nano.oss.beta.c4}$ & \val{a4.nano.oss.gamma.c4} & $\val{a4.bright.qwen.beta.c4}$ & \val{a4.bright.qwen.gamma.c4} & $\val{a4.bright.oss.beta.c4}$ & \val{a4.bright.oss.gamma.c4} \\
    C5 & $\val{a4.nano.qwen.beta.c5}$ & \val{a4.nano.qwen.gamma.c5} & $\val{a4.nano.oss.beta.c5}$ & \val{a4.nano.oss.gamma.c5} & $\val{a4.bright.qwen.beta.c5}$ & \val{a4.bright.qwen.gamma.c5} & $\val{a4.bright.oss.beta.c5}$ & \val{a4.bright.oss.gamma.c5} \\
    \bottomrule
  \end{tabular}\par
  \vspace{3pt}
  \setlength{\tabcolsep}{5pt}
  \begin{tabular}{@{}llrrrr@{}}
    \multicolumn{6}{@{}l}{(b) Per-query parameters} \\
    \toprule
    Suite & Judge & $\qpar{\tau_j}$ 5\% & $\qpar{\tau_j}$ 95\% & 95\%/5\% & mean $\qpar{\alpha_j}$ \\
    \midrule
    NanoBEIR & \model{Qwen3.5-397B} & \val{a4.nano.qwen.tau.p5} & \val{a4.nano.qwen.tau.p95} & \val{a4.nano.qwen.tau.ratio-p95-p5} & $\val{a4.nano.qwen.alpha.mean}$ \\
    NanoBEIR & \model{gpt-oss-120b} & \val{a4.nano.oss.tau.p5} & \val{a4.nano.oss.tau.p95} & \val{a4.nano.oss.tau.ratio-p95-p5} & $\val{a4.nano.oss.alpha.mean}$ \\
    BRIGHT & \model{Qwen3.5-397B} & \val{a4.bright.qwen.tau.p5} & \val{a4.bright.qwen.tau.p95} & \val{a4.bright.qwen.tau.ratio-p95-p5} & $\val{a4.bright.qwen.alpha.mean}$ \\
    BRIGHT & \model{gpt-oss-120b} & \val{a4.bright.oss.tau.p5} & \val{a4.bright.oss.tau.p95} & \val{a4.bright.oss.tau.ratio-p95-p5} & $\val{a4.bright.oss.alpha.mean}$ \\
    \bottomrule
  \end{tabular}
\end{table}

\begin{figure}[t]
  \centering
  \includegraphics[width=0.34\textwidth]{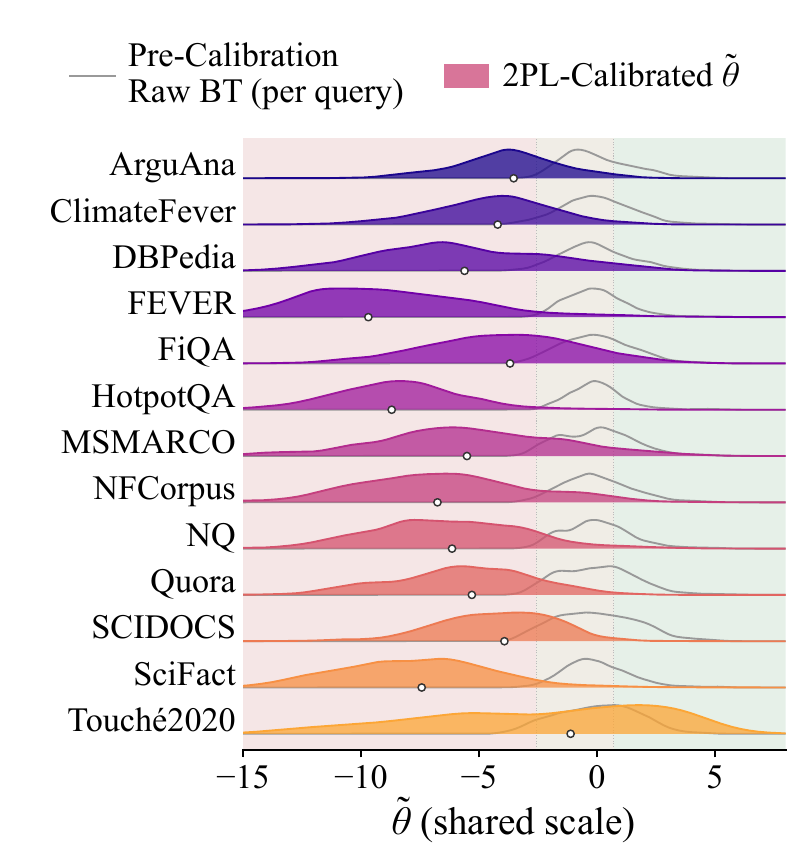}
  \caption{Per NanoBEIR dataset, pool documents' raw BT scores, centered per query (grey), and their calibrated scores (color; dot: median $\qpar{\alpha_j}$).}
  \label{fig:ridges}
\end{figure}

The per-query scale $\qpar{\tau_j}$ and offset $\qpar{\alpha_j}$ calibrate each query's BT scores (\cref{eq:calibrated-theta}). \Cref{fig:ridges} shows the raw and calibrated pool scores. As BT scores are centered within each query, the offset equals the query's mean calibrated pool score. In every fit, this mean lies well below C1's difficulty (\cref{tab:nb-params}).

To test whether the criterion parameters transfer across datasets, we fit them on all datasets but one, freeze them, and fit only $\qpar{\tau_j}$ and $\qpar{\alpha_j}$ on the held-out queries. We score the fit by the binary cross-entropy (BCE) of the held-out Stage~B responses, where lower is better. A baseline predicts each response by its criterion's pass rate on the training datasets. On NanoBEIR with \model{Qwen3.5-397B}, the mean held-out BCE drops by \val{draft.app.g.lodo.nano.qwen.red}\% against the baseline. With \model{gpt-oss-120b}, the reduction on NanoBEIR is \val{draft.app.g.lodo.nano.oss.red}\%. On BRIGHT, it is \val{draft.app.g.lodo.bright.qwen.red}\% with \model{Qwen3.5-397B} and \val{draft.app.g.lodo.bright.oss.red}\% with \model{gpt-oss-120b}. A variant with every discrimination set to 1 and the difficulties kept (a Rasch form) raises the mean held-out BCE relative to the 2PL. The increase is \val{draft.app.g.lodo.nano.qwen.onepl}--\val{draft.app.g.lodo.nano.oss.onepl}\% on NanoBEIR and \val{draft.app.g.lodo.bright.qwen.onepl}--\val{draft.app.g.lodo.bright.oss.onepl}\% on BRIGHT. Each range spans the two judges.

\subsection{Label-free diagnostics}\label{app:calibration-diagnostics}\label{app:vidore-diagnostics}

The shared criterion parameters let the fitted model check a rubric without relevance labels. \Cref{tab:app-diagnostics} reports six checks for each fit, except that the two TREC-DL fits lack $Q_3$. The first column states whether the difficulties rise from C1 to C5, as the rubric intends. The second names the criterion with the largest discrimination and its margin over the runner-up. The last column gives the shares of pool documents below the difficulty of C1 and at or above that of C4. Yen's $Q_3$~\citep{yen1984effects} tests local independence, the 2PL assumption that a document's answers to different criteria are independent given its ability. The statistic correlates the residuals of two criteria over documents, with $|Q_3| > 0.2$ as the warning sign. On NanoBEIR with \model{Qwen3.5-397B}, the raw correlations between criteria average \val{nr.m3.q3.rawr.nano.qwen} in absolute value, whereas the mean $|Q_3|$ is \val{a4.nano.qwen.q3.meanabs}.

\begin{table}[t]
  \centering
  \footnotesize
  \caption{Label-free diagnostics of each fit. Efficiency is the mean test information (\cref{eq:app-info}) at each query's ideal top 10 or at the rerankers' top-10 lists, divided by the peak information. The $|Q_3|$ column gives the mean over the ten criterion pairs and, in parentheses, the number of pairs with $|Q_3| > 0.2$. TREC-DL values cover the evaluated queries of both years, and BRIGHT pools still contained the documents that BRIGHT's official evaluation excludes (\cref{app:nano-bright}).}
  \label{tab:app-diagnostics}
  \scriptsize
  \setlength{\tabcolsep}{3pt}
  \begin{tabular*}{\textwidth}{@{}l@{\extracolsep{\fill}}cccccc@{}}
    \toprule
    & \begin{tabular}[b]{@{}c@{}}$\cpar{\beta_c}$\\rising\end{tabular} & \begin{tabular}[b]{@{}c@{}}Largest\\$\cpar{\gamma_c}$\end{tabular} & \begin{tabular}[b]{@{}c@{}}$\qpar{\tau_j}$ p5--p95\\(ratio)\end{tabular} & \begin{tabular}[b]{@{}c@{}}Efficiency\\ideal top 10 /\\reranker top 10\end{tabular} & \begin{tabular}[b]{@{}c@{}}$|Q_3|$\\(pairs)\end{tabular} & \begin{tabular}[b]{@{}c@{}}Pool (\%)\\$<\cpar{\beta_{\mathrm{C1}}}$ / $\geq\cpar{\beta_{\mathrm{C4}}}$\end{tabular} \\
    \midrule
    \multicolumn{7}{@{}l}{NanoBEIR} \\
    \quad \model{Qwen3.5-397B} & \val{a56.xs.nano397.beta-monotone} & \val{a56.xs.nano397.gamma-argmax} (+\val{a56.xs.nano397.gamma-argmax-margin}) & \val{a56.xs.nano397.tau.p5}--\val{a56.xs.nano397.tau.p95} (\val{a56.xs.nano397.tau.p95-over-p5}) & \val{a56.xs.nano397.eff.oracle} / \val{a56.xs.nano397.eff.system} & \val{a4.nano.qwen.q3.meanabs} (\val{a4.nano.qwen.q3.pairs-gt02}) & \val{a56.xs.nano397.gain.region.lt-b1} / \val{a56.xs.nano397.gain.region.ge-b4} \\
    \quad \model{gpt-oss-120b} & \val{a56.xs.nanooss.beta-monotone} & \val{a56.xs.nanooss.gamma-argmax} (+\val{a56.xs.nanooss.gamma-argmax-margin}) & \val{a56.xs.nanooss.tau.p5}--\val{a56.xs.nanooss.tau.p95} (\val{a56.xs.nanooss.tau.p95-over-p5}) & \val{a56.xs.nanooss.eff.oracle} / \val{a56.xs.nanooss.eff.system} & \val{a4.nano.oss.q3.meanabs} (\val{a4.nano.oss.q3.pairs-gt02}) & \val{a56.xs.nanooss.gain.region.lt-b1} / \val{a56.xs.nanooss.gain.region.ge-b4} \\
    \multicolumn{7}{@{}l}{BRIGHT} \\
    \quad \model{Qwen3.5-397B} & \val{a56.xs.bright397.beta-monotone} & \val{a56.xs.bright397.gamma-argmax} (+\val{a56.xs.bright397.gamma-argmax-margin}) & \val{a56.xs.bright397.tau.p5}--\val{a56.xs.bright397.tau.p95} (\val{a56.xs.bright397.tau.p95-over-p5}) & \val{a56.xs.bright397.eff.oracle} / \val{a56.xs.bright397.eff.system} & \val{a4.bright.qwen.q3.meanabs} (\val{a4.bright.qwen.q3.pairs-gt02}) & \val{a56.xs.bright397.gain.region.lt-b1} / \val{a56.xs.bright397.gain.region.ge-b4} \\
    \quad \model{gpt-oss-120b} & \val{a56.xs.brightoss.beta-monotone} & \val{a56.xs.brightoss.gamma-argmax} (+\val{a56.xs.brightoss.gamma-argmax-margin}) & \val{a56.xs.brightoss.tau.p5}--\val{a56.xs.brightoss.tau.p95} (\val{a56.xs.brightoss.tau.p95-over-p5}) & \val{a56.xs.brightoss.eff.oracle} / \val{a56.xs.brightoss.eff.system} & \val{a4.bright.oss.q3.meanabs} (\val{a4.bright.oss.q3.pairs-gt02}) & \val{a56.xs.brightoss.gain.region.lt-b1} / \val{a56.xs.brightoss.gain.region.ge-b4} \\
    \multicolumn{7}{@{}l}{ViDoRe v3} \\
    \quad \model{Qwen3.5-397B} & \val{a56.xs.vidore397.beta-monotone} & \val{a56.xs.vidore397.gamma-argmax} (+\val{a56.xs.vidore397.gamma-argmax-margin}) & \val{a56.xs.vidore397.native.tau.p5}--\val{a56.xs.vidore397.native.tau.p95} (\val{a56.xs.vidore397.native.tau.p95-over-p5}) & \val{a56.xs.vidore397.native.eff.oracle} / \val{a56.xs.vidore397.native.eff.system} & \val{a3.diag.q3.meanabs} (\val{a3.diag.q3.pairs-gt-02}) & \val{a56.xs.vidore397.native.gain.region.lt-b1} / \val{a56.xs.vidore397.native.gain.region.ge-b4} \\
    \multicolumn{7}{@{}l}{TREC-DL} \\
    \quad \model{Qwen3.6-27B} & \val{a56.xs.cala.beta-monotone} & \val{a56.xs.cala.gamma-argmax} (+\val{a56.xs.cala.gamma-argmax-margin}) & \val{a56.xs.cala.dl.tau.p5}--\val{a56.xs.cala.dl.tau.p95} (\val{a56.xs.cala.dl.tau.p95-over-p5}) & \val{a56.xs.cala.dl.eff.oracle} / \val{a56.xs.cala.dl.eff.system} & -- & \val{a56.xs.cala.dl.gain.region.lt-b1} / \val{a56.xs.cala.dl.gain.region.ge-b4} \\
    \quad \model{gpt-oss-120b} & \val{a56.xs.calc.beta-monotone} & \val{a56.xs.calc.gamma-argmax} (+\val{a56.xs.calc.gamma-argmax-margin}) & \val{a56.xs.calc.tau.p5}--\val{a56.xs.calc.tau.p95} (\val{a56.xs.calc.tau.p95-over-p5}) & \val{a56.xs.calc.eff.oracle} / \val{a56.xs.calc.eff.system} & -- & \val{a56.xs.calc.gain.region.lt-b1} / \val{a56.xs.calc.gain.region.ge-b4} \\
    \bottomrule
  \end{tabular*}
\end{table}

On TREC-DL, the difficulties do not rise in rubric order. C3 is easier to pass than C2, and C5 is far harder, with difficulty \val{a56.xs.cala.beta.c5} under \model{Qwen3.6-27B}. Across the fits, the ratio of the 95th to the 5th percentile of $\qpar{\tau_j}$ lies between \val{a56.xs.calc.tau.p95-over-p5} and \val{a56.xs.nanooss.tau.p95-over-p5}. Since this spread contains estimation noise, \cref{app:calibration-link} tests per-query scaling against human grades. Information efficiency divides the mean test information at a query's top 10 by the peak information of the rubric. A value of 1 means that the rubric is most precise at the top 10. At the ideal top 10, the efficiency is highest on ViDoRe v3 and BRIGHT and lowest on TREC-DL. On TREC-DL, the rubric is most informative below the documents that fill the top 10. On TREC-DL, a search over rubric subsets finds that dropping C2 raises the efficiency (\cref{app:robustness-lattice}). On ViDoRe v3, the flagged $Q_3$ pairs are C1 with C2 and C4 with C5. A rubric can also pass every check and still miss aspects of relevance, which the human and NIST tests of \cref{sec:external} cover.

\subsection{Calibration against human grades and simpler maps}\label{app:calibration-human}\label{app:calibration-link}
This subsection supports the calibration checks of \cref{sec:calibration}, using the human grades of \cref{app:human}. Because nDCG divides by each query's ideal DCG, calibration changes \rcpndcg{} only through the ratios between the nonlinear gains of a query's documents. In our human study, scoring a contest's two lists with this gain on the uncalibrated BT scores reverses the winner in \val{draft.3.whycal.opposite} of the \val{a1b.abl.union.all.n-decided} contests with a verdict.
\paragraph{The gain against single human grades.}
We ask whether the gain $g$ behaves like the probability that one annotator rates a document Useful or better. We measure this with the expected calibration error (ECE) of \cref{sec:calibration} over the study's \val{a1a.desc.grades} single grades. Over all three suites, the ECE is \val{a1b.ece.ge2.all} (95\% CI \val{a1b.ece.ge2.all.ci}), where 0 means perfect calibration. At other thresholds, the ECE is higher, \val{a1b.ece.ge1.all} against Topical or above and \val{a1b.ece.ge3.all} against Answers or above. Of the three thresholds, the gain thus reads best as the chance of a Useful or better rating. Since no human grade enters the fit, this correspondence is an empirical result.

\paragraph{The judge's own answers.}
We check the 2PL's pass probabilities against the judge's own Stage~B answers. As a simpler map, we fit a per-criterion logistic regression on the raw BT score. On NanoBEIR, the logistic regression has an ECE of \val{nr.figB.cal.ece.logreg}, against \val{nr.figB.cal.ece.2pl} for the 2PL. We reran this comparison on the \val{a3.n.questions} native-language questions of ViDoRe v3 with judge \model{Qwen3.5-397B}. On ViDoRe v3, a sigmoid of the raw BT score is miscalibrated, with an ECE of \val{a3.diag.ece.bt-only} over 20 bins. With all queries pooled, the logistic regression and the 2PL are both well calibrated, with ECEs of \val{a3.diag.ece.logreg} and \val{a3.diag.ece.2pl}. We also compare per-query Brier scores, the mean squared error of the predicted pass probabilities on a query's answers (lower is better). On ViDoRe v3, the Brier score is \val{a3.diag.brier.2pl} for the 2PL against \val{a3.diag.brier.logreg} for the logistic regression. The 2PL has the lower Brier score on \val{a3.diag.pqbrier.2pl-better}\% of queries (95\% CI [\val{a3.diag.pqbrier.2pl-better.lo}, \val{a3.diag.pqbrier.2pl-better.hi}]\%). On NanoBEIR, the Brier score is lower for the 2PL than for the logistic regression on \val{nr.figB.cal.pq.2pl-better}\% of queries. Against always predicting the global pass rate, the Brier skill score of the 2PL is \val{nr.figB.cal.bss.2pl}, where 0 means no improvement and 1 is perfect. Of the baseline Brier score, criterion base rates remove \val{nr.figB.cal.attr.criterion}\%, the BT signal \val{nr.figB.cal.attr.bt}\%, and the per-query $\qpar{\tau_j}$ and $\qpar{\alpha_j}$ of the 2PL \val{nr.figB.cal.attr.calibration}\%. With \model{gpt-oss-120b}, the Brier skill score is \val{nr.figB.cal.oss.bss.2pl}.

\paragraph{Human grades across queries.}
\Cref{tab:app-link} compares raw BT scores, per-query $z$-scores, and the calibrated $\dpar{\tilde{\theta}}$ of \rcp{}, pooled over the graded documents of each study suite. Per-query standardization lowers both the Spearman correlation and the AUC below raw BT in every suite. A rescaling that ignores the rubric thus does not supply a scale shared across queries. Calibration raises the Spearman correlation with the documents' mean grade over raw BT in every suite. The increases are \val{a1b.cal.rho.mean.nano.d-cal-raw} on NanoBEIR (95\% CI \val{a1b.cal.rho.mean.nano.d-cal-raw.ci}), \val{a1b.cal.rho.mean.bright.d-cal-raw} on BRIGHT (95\% CI \val{a1b.cal.rho.mean.bright.d-cal-raw.ci}), and \val{a1b.cal.rho.mean.vidore.d-cal-raw} on ViDoRe v3 (95\% CI \val{a1b.cal.rho.mean.vidore.d-cal-raw.ci}). Since annotators differ in leniency, we also compare same-suite pairs of documents from different queries that the same annotator graded differently. Calibration raises the share of these pairs that the score orders like the annotator's grades above that of raw BT. The increase is \val{draft.5.sameann.d-cal-raw} (95\% CI [\val{draft.5.sameann.d-cal-raw.lo}, \val{draft.5.sameann.d-cal-raw.hi}], clustered by annotator).

\begin{table}[t]
  \centering
  \footnotesize
  \caption{Three scores against blind human grades, pooled over each suite's graded documents. Spearman $\rho$ with the mean grade and AUC at Useful (0.5 is chance). $z$(BT) standardizes raw BT scores per query. Judge \model{Qwen3.5-397B}.}
  \label{tab:app-link}
  \scriptsize
  \setlength{\tabcolsep}{4.5pt}
  \begin{tabular}{@{}lcccccc@{}}
    \toprule
    & \multicolumn{3}{c}{Spearman $\rho$ with mean grade} & \multicolumn{3}{c}{AUC at Useful} \\
    \cmidrule(lr){2-4}\cmidrule(l){5-7}
    Score & NanoBEIR & BRIGHT & ViDoRe v3 & NanoBEIR & BRIGHT & ViDoRe v3 \\
    \midrule
    Raw BT $\dpar{\hat{\theta}^{\mathrm{BT}}}$ & \val{a1b.cal.rho.mean.nano.raw} & \val{a1b.cal.rho.mean.bright.raw} & \val{a1b.cal.rho.mean.vidore.raw} & \val{a1b.cal.auc.ge2.mean.nano.raw} & \val{a1b.cal.auc.ge2.mean.bright.raw} & \val{a1b.cal.auc.ge2.mean.vidore.raw} \\
    Per-query $z$(BT) & \val{a1b.cal.rho.mean.nano.z} & \val{a1b.cal.rho.mean.bright.z} & \val{a1b.cal.rho.mean.vidore.z} & \val{a1b.cal.auc.ge2.mean.nano.z} & \val{a1b.cal.auc.ge2.mean.bright.z} & \val{a1b.cal.auc.ge2.mean.vidore.z} \\
    Calibrated $\dpar{\tilde{\theta}}$ (\rcp{}) & \val{a1b.cal.rho.mean.nano.cal} & \val{a1b.cal.rho.mean.bright.cal} & \val{a1b.cal.rho.mean.vidore.cal} & \val{a1b.cal.auc.ge2.mean.nano.cal} & \val{a1b.cal.auc.ge2.mean.bright.cal} & \val{a1b.cal.auc.ge2.mean.vidore.cal} \\
    \bottomrule
  \end{tabular}
\end{table}

\subsection{Placement against NIST grades and ViDoRe qrels}\label{app:trecdl-placement}\label{app:vidore-order}\label{app:vidore-xlang}\label{tab:app-trecdl-placement}

This subsection supports the NIST checks of \cref{sec:calibration}, repeats them on the ViDoRe v3 qrels, and tests whether the tournament's order reproduces across translations.

\paragraph{NIST grades on TREC-DL.}
Within a query, the concordance of \rcp{} with the NIST grades is the share of differently graded passage pairs that \rcp{} orders as NIST does (0.5 is chance). This concordance reflects Stage~A alone (\cref{prop:order}) and is \val{a2.conc.dl19.overall} in 2019 and \val{a2.conc.dl20.overall} in 2020 (\cref{sec:calibration}).

The pooled Spearman correlation of $\dpar{\tilde{\theta}}$ with the NIST grade over all graded passages tests whether scores place passages of different queries on one scale. The ceiling reassigns each query's values of $\dpar{\tilde{\theta}}$ in the order of the NIST grades, and the floor shuffles them at random within each query. The observed correlation is \val{a2.rho.dl19.pooled} in 2019 and \val{a2.rho.dl20.pooled} in 2020, against ceilings of \val{a2.rho.dl19.ceiling} and \val{a2.rho.dl20.ceiling} and floors of \val{a2.rho.dl19.floor} and \val{a2.rho.dl20.floor}.
On graded passages that received at least one Stage~B placement, calibration raises this correlation over raw BT. The increase is $+$\val{a2.calraw.dl19.rho.diffrawa} in 2019, with an interval that includes zero, and $+$\val{a2.calraw.dl20.rho.diffrawa} in 2020 (95\% CI [\val{draft.app.ef.dl20.rho.diffrawa.lo}, \val{draft.app.ef.dl20.rho.diffrawa.hi}]).

\paragraph{ViDoRe qrels.}
The ViDoRe v3 qrels also test the within-question order from Stage~A alone. Of the same-question pairs of a grade-1 page and an unjudged pool page, \rcp{} places the grade-1 page higher in \val{a3.conc.g1g0} (0.5 is chance). For pairs of a grade-2 page and an unjudged page, the share is \val{a3.conc.g2g0}. Placement across questions is the part that calibration can change. With all native questions pooled, the AUC for a qrel grade of at least 1 is \val{a3.auc.qrel-ge1.rcp-gain} for the \rcp{} gain, against \val{a3.auc.qrel-ge1.raw-bt} for raw BT. The improvement over raw BT is $+$\val{a3.auc.qrel-ge1.diff.rcp-minus-raw-bt} (95\% CI [\val{a3.auc.qrel-ge1.diff.rcp-minus-raw-bt.lo}, \val{a3.auc.qrel-ge1.diff.rcp-minus-raw-bt.hi}]). \Cref{app:robustness-count} tests the tournament's order on the ViDoRe v3 pages that Count-nDCG ties.

\paragraph{Reproducibility across translations.}
We reran both stages on each of the six parallel language versions of a ViDoRe v3 question. The tournament's order reproduces across the 15 pairs of versions at a Spearman correlation of \val{a3.xlang.rho} (95\% CI [\val{a3.xlang.rho.lo}, \val{a3.xlang.rho.hi}]), against \val{draft.5.xlang.count} for Count. A pooled Count that uses the Stage~B placements of three language versions has about as many judgments per page as \rcp{}. This pooled Count still reproduces at only \val{draft.5.xlang.count3}, \val{draft.app.ef.vd.xlang.cnt3.diff} below \rcp{} (95\% CI [\val{draft.app.ef.vd.xlang.cnt3.diff.lo}, \val{draft.app.ef.vd.xlang.cnt3.diff.hi}]). The higher reproducibility of the tournament's order is thus not an artifact of the judging budget. A reproducible order can still be wrong, because a judge that misreads a question in every language would reproduce its error.

\subsection{Gain shape}\label{app:calibration-gain}\label{tab:app-gain}

Any increasing map from $\dpar{\tilde{\theta}}$ to $[0, 1]$ keeps each query's order (\cref{prop:order}). Such a map can therefore move a leaderboard only through the ratios between a query's gains. We compare the gain of \cref{eq:theta-gain} with five alternatives. Three replace the discrimination weights: the unweighted mean of the pass probabilities, the logistic $\sigma(\dpar{\tilde{\theta}})$ centered at zero difficulty, and the C4 pass probability. Two linear maps rescale $\dpar{\tilde{\theta}}$, one clipped to $[0, 1]$ and one per-query min-max. The min-max map ignores calibration and equals SABER-Math's per-query rescaling under a linear gain~\citep{georgiev2026sabermath}.

On NanoBEIR and BRIGHT, the 14 rerankers' leaderboards under any alternative gain correlate with the paper's at a Spearman $\rho$ of at least \val{a4.nano.gainfn.oss.rho-w-linminmax}. Leaderboards thus change little, but the share of separated pairs depends more. On NanoBEIR with \model{Qwen3.5-397B}, the per-query min-max gain separates \val{a4.nano.sens.gain-linminmax-qwen}\% of the pairs against \val{a4.nano.sens.rcp-qwen}\% for the paper's gain. The unweighted gain separates about as many pairs. The discrimination weights thus matter little for resolution. Since the min-max gain ignores calibration yet separates more pairs, we read the share of separated pairs as resolution, not validity. We keep the discrimination-weighted gain for its probability reading (\cref{app:calibration-human}) and its test-information slope (\cref{app:theory}).

\FloatBarrier

\FloatBarrier
\section{Agreement with Ground Truth and Human Labels}
\label{app:labels}
This section details the label comparisons of \cref{sec:blind-eval}: single documents against human grades, Study 1 on sets of documents, and the multi-hop queries of HotpotQA.

\subsection{The label layer}\label{app:human-labels}

This subsection supports the external-annotator results of \cref{sec:blind-eval}.
The label layer of the human study (\cref{app:human}) compares each document's mean grade with its qrel and its \rcp{} gain. The unit is a query-document pair. An annotator who graded a pair twice counts once, with the mean of the two grades. \Cref{fig:documents} shows the pooled results, and \cref{tab:app-human-labels} breaks them down by suite.

\begin{figure}[t]
  \centering
  \includegraphics[trim={0 0 274 19},clip,width=122bp]{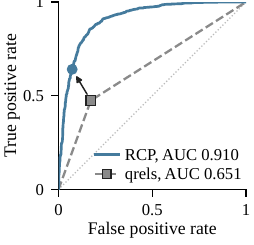}
  \caption{Documents against blind human grades (judge \model{Qwen3.5-397B}): ROC curves for a mean grade of Useful or better over \val{a1b.n.docs.all} documents. The binary qrels give one operating point (square), and the arrow leads to \rcp{} flagging the same share of documents (circle).}
  \label{fig:documents}
\end{figure}

With the three suites pooled, the AUC at Useful (0.5 is chance) is \val{a1b.auc.ge2.all.qrel} for the qrels, and \val{a1b.auc.ge2.all.gain} for the \rcp{} gain. The gain exceeds the qrels in every suite. Among documents with qrel 0 alone, the \rcp{} gain still separates useful from other documents with an AUC of \val{a1b.auc.within-q0.ge2.all.gain} (95\% CI \val{a1b.auc.within-q0.ge2.all.gain.ci}). For same-query pairs of a useful and a non-useful document where the qrels mark only the non-useful one positive, \rcp{} ranks the useful one higher with probability \val{a1b.pair.q0rel-over-q1irr.within.all}. The \rcp{} gain here is the one stored with the study pools. Within each suite, the calibrated score of the final fit reaches the same AUC. The comparison thus does not depend on which fit supplies the scores.

Of the documents with qrel 0, \val{a1b.mis.q0-ge2.all}\% have a mean grade of at least Useful. Of the documents with a positive qrel, \val{a1b.mis.q1-lt2.all}\% fall below Useful. On NanoBEIR and BRIGHT, qrel 0 mostly means unjudged. Since a document on the topic can still correctly carry qrel 0, these shares measure disagreement with the annotators rather than qrel errors. When \rcp{} flags the same share of documents as relevant (\val{a1b.qrel.rate.all}\%), it recovers \val{a1b.op.all.rcp.tpr}\% of the useful documents against \val{a1b.op.all.qrel.tpr}\% for the qrels. The false-positive rates are \val{a1b.op.all.rcp.fpr}\% for \rcp{} against \val{a1b.op.all.qrel.fpr}\% for the qrels.

With the three suites pooled, the Spearman correlation of $\dpar{\tilde{\theta}}$ with the panel's mean grade is \val{a1b.par.full.all.cal}. We compare it with the panel's reliability, the expected correlation between the mean grades of two independent panels of the same size. We estimate it from random half-panels and step the correlation up with the Spearman--Brown formula, which gives \val{a1b.par.balanced.all.sb}. The calibrated score thus tracks the panel's mean grade about as closely as a second panel would.

\label{app:human-crowd}
Inspired by the model-based label audit of \citet{land2026auditing}, we also let the ten study rerankers vote on each pooled document. The vote and the \rcp{} gain disagree on a document when exactly one of them reaches one half. On the \val{draft.app.jm.b3.disagree.n} such documents, annotators side with the gain in \val{draft.app.jm.b3.disagree.rcp}\% (95\% CI [\val{draft.app.jm.b3.disagree.rcp.lo}, \val{draft.app.jm.b3.disagree.rcp.hi}]\%). On documents whose qrel the annotators' grades contradict, the zerank rerankers agree with the qrel less often than the other rerankers (difference \val{draft.app.jm.b3.zerank.M}\,pp, 95\% CI [\val{draft.app.jm.b3.zerank.M.lo}, \val{draft.app.jm.b3.zerank.M.hi}]\,pp).

For each of the four annotation questions, we fit a logistic curve of the annotators' yes answers on the calibrated score $\dpar{\tilde{\theta}}$. The threshold where this curve crosses one half rises from Topical to Complete in every suite. The threshold of Useful lies between the difficulties $\cpar{\beta_c}$ of C2 and C3, that of Answers near C4 (on BRIGHT between C4 and C5), and that of Complete near C5. On BRIGHT, the threshold of Complete lies far above C5.

\begin{table}[!htbp]
\centering
\footnotesize
\caption{Label layer by suite. AUCs at Useful (a mean grade of at least 2; one half is chance). The \rcp{} gain is the one stored with the study pools. The row marked final fit comes from the final runs of \cref{app:nano-bright,app:vidore}. The matched operating point lets \rcp{} flag as many documents as the qrels do. TPR is the share of useful documents flagged, and FPR the share of other documents flagged. Agreement with one annotator uses a gain of one half and a grade of 2 as thresholds. Concordant and disagreement refer to the sampling stratum of a document's contest. ViDoRe v3 qrels count any positive grade as relevant.}
\label{tab:app-human-labels}
\begin{tabular*}{\textwidth}{@{}l@{\extracolsep{\fill}}cccc@{}}
\toprule
& NanoBEIR & BRIGHT & ViDoRe v3 & All \\
\midrule
Graded documents & \val{a1b.n.docs.nano} & \val{a1b.n.docs.bright} & \val{a1b.n.docs.vidore} & \val{a1b.n.docs.all} \\
Documents with a positive qrel (\%) & \val{a1b.qrel.rate.nano} & \val{a1b.qrel.rate.bright} & \val{a1b.qrel.rate.vidore} & \val{a1b.qrel.rate.all} \\
\midrule
AUC: qrels & \val{a1b.auc.ge2.nano.qrel} & \val{a1b.auc.ge2.bright.qrel} & \val{a1b.auc.ge2.vidore.qrel} & \val{a1b.auc.ge2.all.qrel} \\
\quad \rcp{} gain & \val{a1b.auc.ge2.nano.gain} & \val{a1b.auc.ge2.bright.gain} & \val{a1b.auc.ge2.vidore.gain} & \val{a1b.auc.ge2.all.gain} \\
\quad \rcp{} score $\dpar{\tilde{\theta}}$, final fit & \val{a1b.cal.auc.ge2.mean.nano.cal} & \val{a1b.cal.auc.ge2.mean.bright.cal} & \val{a1b.cal.auc.ge2.mean.vidore.cal} & \val{a1b.cal.auc.ge2.mean.all.cal} \\
\quad \rcp{} gain among qrel-0 documents & \val{a1b.auc.within-q0.ge2.nano.gain} & \val{a1b.auc.within-q0.ge2.bright.gain} & \val{a1b.auc.within-q0.ge2.vidore.gain} & \val{a1b.auc.within-q0.ge2.all.gain} \\
\quad qrels\,/\,\rcp{} gain, concordant contests & \val{a1b.samp.pool-concordant.ge2.nano.qrel}\,/\,\val{a1b.samp.pool-concordant.ge2.nano.gain} & \val{a1b.samp.pool-concordant.ge2.bright.qrel}\,/\,\val{a1b.samp.pool-concordant.ge2.bright.gain} & \val{a1b.samp.pool-concordant.ge2.vidore.qrel}\,/\,\val{a1b.samp.pool-concordant.ge2.vidore.gain} & \val{a1b.samp.pool-concordant.ge2.all.qrel}\,/\,\val{a1b.samp.pool-concordant.ge2.all.gain} \\
\quad qrels\,/\,\rcp{} gain, disagreement contests & \val{a1b.samp.pool-signflip.ge2.nano.qrel}\,/\,\val{a1b.samp.pool-signflip.ge2.nano.gain} & \val{a1b.samp.pool-signflip.ge2.bright.qrel}\,/\,\val{a1b.samp.pool-signflip.ge2.bright.gain} & \val{a1b.samp.pool-signflip.ge2.vidore.qrel}\,/\,\val{a1b.samp.pool-signflip.ge2.vidore.gain} & \val{a1b.samp.pool-signflip.ge2.all.qrel}\,/\,\val{a1b.samp.pool-signflip.ge2.all.gain} \\
\midrule
qrel 0, mean grade at least Useful (\%) & \val{a1b.mis.q0-ge2.nano} & \val{a1b.mis.q0-ge2.bright} & \val{a1b.mis.q0-ge2.vidore} & \val{a1b.mis.q0-ge2.all} \\
positive qrel, mean grade below Useful (\%) & \val{a1b.mis.q1-lt2.nano} & \val{a1b.mis.q1-lt2.bright} & \val{a1b.mis.q1-lt2.vidore} & \val{a1b.mis.q1-lt2.all} \\
Matched point, TPR\,/\,FPR (\%): qrels & \val{a1b.op.nano.qrel.tpr}\,/\,\val{a1b.op.nano.qrel.fpr} & \val{a1b.op.bright.qrel.tpr}\,/\,\val{a1b.op.bright.qrel.fpr} & \val{a1b.op.vidore.qrel.tpr}\,/\,\val{a1b.op.vidore.qrel.fpr} & \val{a1b.op.all.qrel.tpr}\,/\,\val{a1b.op.all.qrel.fpr} \\
\quad \rcp{} & \val{a1b.op.nano.rcp.tpr}\,/\,\val{a1b.op.nano.rcp.fpr} & \val{a1b.op.bright.rcp.tpr}\,/\,\val{a1b.op.bright.rcp.fpr} & \val{a1b.op.vidore.rcp.tpr}\,/\,\val{a1b.op.vidore.rcp.fpr} & \val{a1b.op.all.rcp.tpr}\,/\,\val{a1b.op.all.rcp.fpr} \\
\midrule
Agreement with one annotator (\%): judge & \val{a1b.agree.judge-human.nano} & \val{a1b.agree.judge-human.bright} & \val{a1b.agree.judge-human.vidore} & \val{a1b.agree.judge-human.all} \\
\quad qrels & \val{a1b.agree.qrel-human.nano} & \val{a1b.agree.qrel-human.bright} & \val{a1b.agree.qrel-human.vidore} & \val{a1b.agree.qrel-human.all} \\
\quad a second annotator & \val{a1b.agree.human-human.nano} & \val{a1b.agree.human-human.bright} & \val{a1b.agree.human-human.vidore} & \val{a1b.agree.human-human.all} \\
\bottomrule
\end{tabular*}
\end{table}

\subsection{Study 1: qrel sets against \texorpdfstring{\rcp{}}{RCP} sets}\label{app:meta-study1}

This subsection reports Study 1, a comparison of qrel sets with tournament-selected sets.
Study 1 asks whether the qrels or the LLM judge select the better relevant documents for a query. For a NanoBEIR query, the qrel set holds its positive documents. The \rcp{} set holds equally many documents with the highest \rcp{} gain, which by \cref{prop:order} are the tournament's top documents. The cases are the \val{a56.old.s1.elig.differ} NanoBEIR queries, of \val{a56.old.s1.elig.total} in total, that have 1 to 10 positive qrels and whose two sets differ.

Three LLM judges from other model families, \model{GLM-5.3-flash}, \model{DeepSeek-4.1-flash}, and \model{Kimi K3}, compare the two sets blind to their source (\cref{app:meta}). A case counts as decided when at least two judges pick the same set in both display orders. By majority, the judges prefer the \rcp{} set on \val{n9.s1.nano.maj}\% of the \val{n9.s1.nano.maj.n} decided cases (\val{n9.s1.nano.maj.k} of \val{n9.s1.nano.maj.n}; 95\% CI [\val{n9.s1.nano.maj.lo}, \val{n9.s1.nano.maj.hi}]\%). On the \val{n9.s1.nano.unan.n} cases that all three decide the same way, the share is \val{n9.s1.nano.unan}\%. The \rcp{} set wins most decided cases on \val{n9.s1.nano.ds.favor} of the \val{n9.s1.nano.ds.count} datasets. HotpotQA is the exception (\cref{app:meta-hotpotqa}). The result holds for sets selected with \model{gpt-oss-120b} as judge (\val{n9.s1.nanooss.maj}\% of \val{n9.s1.nanooss.maj.n} decided cases).

On BRIGHT, the judges prefer the \rcp{} set on \val{n9.s1.bright12.maj}\% of \val{n9.s1.bright12.maj.n} decided cases over all twelve tasks (95\% CI [\val{n9.s1.bright12.maj.lo}, \val{n9.s1.bright12.maj.hi}]\%). On the \val{n9.s1.bright12.unan.n} cases that all three decide the same way, the share is \val{n9.s1.bright12.unan}\%. On AoPS, LeetCode, and TheoremQA Questions, BRIGHT's evaluation removes some listed documents, such as the query's own problem, from the rankings (\cref{app:nano-bright-scope}). The original sets for these tasks kept them. We therefore rebuilt the \rcp{} sets with these documents removed before taking the top documents. Of the \val{n9.s1.brightfix.orig} original cases on these tasks, \val{n9.s1.brightfix.cases} remain. We dropped \val{n9.s1.brightfix.identical} cases whose two sets became identical. We also dropped \val{n9.s1.brightfix.nonquery} TheoremQA Questions cases whose query is an annotator's note instead of a problem. On the rebuilt cases alone, the share is \val{n9.s1.brightfix.maj}\% of \val{n9.s1.brightfix.maj.n}. For sets selected with \model{gpt-oss-120b}, we report only the nine tasks other than AoPS, LeetCode, and TheoremQA Questions, with \val{n9.s1.brightoss.unaff.maj}\% of \val{n9.s1.brightoss.unaff.maj.n} decided cases.

Two of the authors, both IR researchers, judged all \val{a56.meta.s1.nano.cases} cases blind to which set was which, with ties allowed. As authors, they are not independent of the method, but they never saw which set came from which source. Of the \val{draft.5.study1.human.n} non-tie judgments of these two authors, \val{draft.5.study1.human}\% prefer the \rcp{} set (95\% Wilson CI [\val{draft.app.jm.s1.human.lo}, \val{draft.app.jm.s1.human.hi}]\%). On the \val{n9.s1.human.unan.n} cases where both of them prefer the same set, the \rcp{} set wins \val{n9.s1.human.unan.k} (\val{n9.s1.human.unan}\%, 95\% CI [\val{n9.s1.human.unan.lo}, \val{n9.s1.human.unan.hi}]\%).

Study 1 conditions on queries whose two sets differ and does not compare the label sources on all documents. The external annotators' grades cover all \val{a1b.n.docs.all} documents of the human study, whether or not the qrels and \rcp{} agree on them (\cref{sec:blind-eval,app:human-labels}).

\subsection{Multi-hop queries: HotpotQA}\label{app:meta-hotpotqa}

NanoHotpotQA is the dataset on which Studies~1 and~2 (\cref{app:meta-study1,app:meta-study2}) went most strongly against \rcp{}. Its questions need a primary and a secondary fact, each from its own clearly relevant document, and its qrels mark exactly these two documents. \rcp{} and our annotators judge each document on its own and do not penalize redundancy. A second document on the primary fact can therefore outrank the document with the secondary fact, which the qrel set contains by construction. In Study 1, the judges preferred the qrel set in all \val{n9.s1.nano.hotpot.maj.n} decided HotpotQA cases. The authors who judged Study~1 gave the \rcp{} set \val{draft.app.jm.s1.hotpotqa.human.k} of their \val{draft.app.jm.s1.hotpotqa.human.n} non-tie judgments. In Study 2, the list of \rcpndcg{} won \val{n10.s2.nano.ds.hotpotqa.maj.k} of \val{n10.s2.nano.ds.hotpotqa.maj.n} decided HotpotQA sign-flips.

In the external human study, only \val{nr.integ.main.hotpotqa.n} HotpotQA contests enter the main measure, too few for a claim. In \val{nr.integ.main.hotpotqa.k} of them, the majority verdict sides with \rcpndcg{}. Relevance that arises only from combining documents would need criteria or comparisons that look at several documents at once. \rcp{} has neither, nor does it model diversity or the needs of individual users.

\FloatBarrier
\section{Ranking Results}
\label{app:ranking}
This section details the ranking results of \cref{sec:ranking-results}: failure modes of nDCG@10, sensitivity and scores per dataset, second judges, the zerank rerankers, robustness, and Count-nDCG.

\subsection{Failure modes of nDCG@10}\label{app:nano-bright-failure}\label{app:vidore-failure}
\begin{figure}[t]
  \centering
  \includegraphics[width=0.58\textwidth]{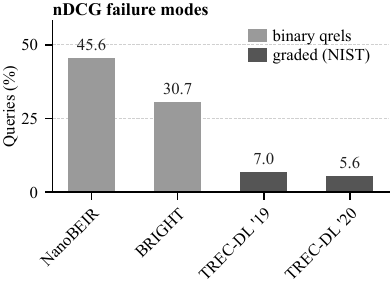}
  \caption{Share of queries on which nDCG@10 saturates, floors, or compresses the rerankers. With the graded NIST labels of TREC-DL, nDCG@10 rarely shows any of these modes.}
  \label{fig:failmodes}
\end{figure}
On a query, nDCG@10 \emph{saturates} when all 14 rerankers score 1 and \emph{floors} when all 14 score 0. The metric \emph{compresses} the rerankers when neither holds and at least half of the \val{a56.robust.factorial.pairs} reranker pairs tie, that is, differ by at most $10^{-6}$. \Cref{fig:failmodes} gives the share of queries with at least one mode on NanoBEIR, BRIGHT, and TREC-DL. At least one mode occurs on \val{a4.nano.fm14.any}\% of the NanoBEIR queries and on \val{b20.new.a4.bright.fm14.any}\% of the BRIGHT queries. On ViDoRe v3, the share is \val{a3.fail.any}\%. With the graded NIST labels of TREC-DL, it is \val{a2.fail.pool.dl19.nistlin.anypct}\% in 2019 and \val{a2.fail.pool.dl20.nistlin.anypct}\% in 2020.

\subsection{Sensitivity per dataset}\label{app:nano-bright-sensitivity}\label{app:vidore-sensitivity}
Per dataset, \rcpndcg{} separates more pairs than qrel-nDCG on \val{nr.m4.nano.sens.rcp-gt-ndcg.datasets} of the 13 NanoBEIR datasets and on all \val{nr.m4.bright.sens.rcp-gt-ndcg.datasets} BRIGHT tasks. The exceptions are ArguAna and HotpotQA, where \rcpndcg{} separates fewer pairs than qrel-nDCG. \Cref{app:meta-hotpotqa} discusses HotpotQA.

\subsection{Reranker scores per dataset}\label{app:nano-bright-tables}
\providecommand{\rcpndcg}{\textup{RCP-nDCG}}
\definecolor{famctxl}{HTML}{2166AC}
\definecolor{famqwen}{HTML}{B2182B}
\definecolor{famcohere}{HTML}{1B7837}
\definecolor{famvoyage}{HTML}{7B3294}
\definecolor{famze}{HTML}{B35806}
\definecolor{famjina}{HTML}{555555}
\providecommand{\crtup}[1]{\,\textcolor{green!50!black}{$\blacktriangle$#1}}
\providecommand{\crtdn}[1]{\,\textcolor{red!60!black}{$\blacktriangledown$#1}}
\providecommand{\crtrot}[1]{\makebox[0pt][l]{\rotatebox{40}{\textsf{#1}}}}
\begin{table}[t]
  \centering
  \caption{nDCG@10 with the benchmark's qrels and \rcpndcg{}@10 on NanoBEIR (judge \model{Qwen3.5-397B}), in \%. Each block is sorted by its mean over the 13 datasets. Arrows give each reranker's change in rank from nDCG@10 to \rcpndcg{}@10. Bold marks the best reranker per column. Colors mark the model family: \textcolor{famcohere}{Cohere}, \textcolor{famctxl}{Contextual~AI}, \textcolor{famqwen}{Qwen3}, \textcolor{famvoyage}{Voyage~AI}, \textcolor{famze}{ZeroEntropy}, and \textcolor{famjina}{Jina~AI}.}
  \label{tab:cal-ndcg-qwen-nanomteb}
  \footnotesize
  \setlength{\tabcolsep}{1.5pt}
  \begin{tabular}{@{}l *{13}{w{c}{2.0em}} w{c}{2.4em}@{\hspace{5pt}}}
    \toprule
     & \multicolumn{13}{c}{\textbf{Datasets}} & \textbf{Mean} \\
    \cmidrule(lr){2-14} \cmidrule(l){15-15}
    \textbf{Reranker} & \crtrot{ArguAna} & \crtrot{ClimateFever} & \crtrot{DBPedia} & \crtrot{FEVER} & \crtrot{FiQA} & \crtrot{HotpotQA} & \crtrot{MSMARCO} & \crtrot{NFCorpus} & \crtrot{NQ} & \crtrot{Quora} & \crtrot{SCIDOCS} & \crtrot{SciFact} & \crtrot{Touch\'e2020} & \\
    \hline
    \rowcolor{gray!15}
     & \multicolumn{14}{c}{\rule{0pt}{2.4ex}\textbf{nDCG@10}} \\
    \hline
    \textcolor{famqwen}{\rule{0pt}{2.3ex}\baseline{Qwen3-RR-8B}} & 86.0 & \textbf{54.1} & 75.0 & 96.2 & 69.0 & 94.3 & 72.3 & 46.7 & 81.6 & 97.1 & 48.9 & 83.6 & 65.9 & \textbf{74.7} \\
    \textcolor{famctxl}{\baseline{CTXL-RR-6B}} & 88.9 & 42.1 & 77.1 & \textbf{98.7} & 68.3 & \textbf{98.0} & 69.6 & 48.6 & \textbf{84.4} & 89.4 & 50.8 & \textbf{90.3} & 62.5 & 74.5 \\
    \textcolor{famqwen}{\baseline{Qwen3-RR-4B}} & 81.5 & 51.7 & 74.6 & 97.3 & 67.1 & 94.3 & 70.4 & 47.4 & 82.1 & 96.0 & 51.5 & 82.9 & \textbf{66.6} & 74.1 \\
    \textcolor{famcohere}{\baseline{RR4-Pro}} & 77.0 & 52.3 & 75.8 & 97.8 & \textbf{70.8} & 94.4 & 67.2 & 47.9 & 81.7 & 96.4 & 46.4 & 81.7 & 63.6 & 73.3 \\
    \textcolor{famvoyage}{\baseline{Voyage2.5}} & \textbf{94.0} & 37.6 & \textbf{77.6} & 96.8 & 67.6 & 95.0 & 70.2 & 46.6 & 80.9 & 93.3 & 44.2 & 82.3 & 55.6 & 72.4 \\
    \textcolor{famctxl}{\baseline{CTXL-RR-2B}} & 76.9 & 41.1 & 75.9 & 95.2 & 65.1 & 96.3 & 68.3 & 48.2 & 80.1 & 93.2 & \textbf{51.7} & 84.4 & 62.4 & 72.2 \\
    \textcolor{famvoyage}{\baseline{Voyage2.5-lt}} & 88.9 & 39.4 & 76.5 & 94.8 & 69.5 & 93.9 & 70.7 & 45.2 & 77.6 & 94.1 & 42.4 & 82.0 & 57.8 & 71.8 \\
    \textcolor{famcohere}{\baseline{RR4-Fast}} & 72.3 & 52.3 & 74.3 & 96.9 & 61.5 & 93.7 & \textbf{72.6} & 46.9 & 79.2 & 95.4 & 44.6 & 80.4 & 61.8 & 71.7 \\
    \textcolor{famctxl}{\baseline{CTXL-RR-1B}} & 68.4 & 40.0 & 73.7 & 96.0 & 62.4 & 95.2 & 68.2 & 46.9 & 73.4 & 93.7 & 47.1 & 83.9 & 61.1 & 70.0 \\
    \textcolor{famqwen}{\baseline{Qwen3-RR-0.6B}} & 75.8 & 47.8 & 70.6 & 96.5 & 58.6 & 92.0 & 67.9 & 46.1 & 75.3 & 93.6 & 44.9 & 80.4 & 58.7 & 69.9 \\
    \textcolor{famjina}{\baseline{Jina-RR-v3}} & 79.5 & 39.6 & 68.0 & 97.7 & 57.9 & 96.3 & 69.5 & \textbf{49.0} & 79.2 & \textbf{97.9} & 34.3 & 78.5 & 52.9 & 69.3 \\
    \textcolor{famze}{\baseline{zerank-1-sm}} & 62.5 & 33.5 & 73.8 & 91.7 & 62.2 & 89.4 & 62.8 & 46.4 & 74.8 & 82.8 & 44.1 & 82.2 & 65.3 & 67.0 \\
    \textcolor{famze}{\baseline{zerank-1}} & 61.0 & 33.7 & 73.6 & 92.3 & 63.6 & 90.2 & 62.0 & 48.1 & 77.9 & 78.6 & 42.9 & 82.9 & 64.3 & 67.0 \\
    \textcolor{famze}{\baseline{zerank-2}} & 56.2 & 36.5 & 74.0 & 92.1 & 62.9 & 90.1 & 59.2 & 47.1 & 77.0 & 76.8 & 45.3 & 83.2 & 64.0 & 66.5 \\
    \hline
    \rowcolor{gray!15}
     & \multicolumn{14}{c}{\rule{0pt}{2.4ex}\textbf{\rcpndcg{}@10}} \\
    \hline
    \textcolor{famze}{\rule{0pt}{2.3ex}\baseline{zerank-2}}\crtup{13} & 55.3 & \textbf{88.9} & \textbf{90.6} & \textbf{93.4} & 90.8 & 89.5 & \textbf{88.3} & \textbf{87.9} & 83.9 & 86.5 & \textbf{90.4} & \textbf{91.1} & \textbf{97.0} & \textbf{87.2} \\
    \textcolor{famze}{\baseline{zerank-1}}\crtup{11} & 56.3 & 87.3 & \textbf{90.6} & 93.0 & \textbf{91.3} & 88.9 & 88.0 & 87.5 & 84.2 & 88.1 & 89.8 & 90.3 & 96.8 & 87.1 \\
    \textcolor{famvoyage}{\baseline{Voyage2.5}}\crtup{2} & 69.3 & 85.4 & 89.9 & 92.8 & 83.1 & \textbf{92.0} & 85.8 & 85.0 & 81.8 & 87.4 & 79.6 & 87.0 & 87.1 & 85.1 \\
    \textcolor{famze}{\baseline{zerank-1-sm}}\crtup{8} & 54.9 & 83.7 & 87.1 & 91.3 & 90.0 & 88.0 & 85.8 & 81.9 & 79.3 & 83.8 & \textbf{90.4} & 88.5 & 96.1 & 84.7 \\
    \textcolor{famcohere}{\baseline{RR4-Pro}}\crtdn{1} & 62.8 & 71.1 & 86.9 & 85.7 & 86.1 & 90.3 & 85.9 & 85.4 & 83.0 & 88.1 & 80.4 & 88.0 & 90.2 & 83.4 \\
    \textcolor{famctxl}{\baseline{CTXL-RR-6B}}\crtdn{4} & \textbf{70.4} & 68.2 & 88.4 & 84.8 & 80.3 & 90.1 & 87.9 & 84.6 & \textbf{85.4} & 78.9 & 84.7 & 84.9 & 91.4 & 83.1 \\
    \textcolor{famvoyage}{\baseline{Voyage2.5-lt}} & 67.0 & 82.4 & 87.7 & 91.7 & 79.7 & 90.3 & 84.3 & 79.2 & 77.7 & 81.1 & 84.7 & 84.4 & 83.0 & 82.5 \\
    \textcolor{famqwen}{\baseline{Qwen3-RR-4B}}\crtdn{5} & 64.7 & 68.7 & 84.5 & 88.2 & 84.2 & 89.7 & 83.3 & 82.1 & 80.9 & 87.6 & 86.4 & 85.4 & 86.1 & 82.4 \\
    \textcolor{famqwen}{\baseline{Qwen3-RR-8B}}\crtdn{8} & 68.2 & 64.7 & 86.6 & 87.7 & 85.3 & 89.7 & 83.6 & 83.1 & 81.3 & \textbf{88.3} & 79.1 & 85.7 & 87.6 & 82.4 \\
    \textcolor{famctxl}{\baseline{CTXL-RR-2B}}\crtdn{4} & 57.3 & 70.7 & 87.5 & 85.4 & 81.1 & 89.2 & 87.2 & 82.1 & 83.7 & 81.2 & 88.8 & 85.2 & 90.6 & 82.3 \\
    \textcolor{famcohere}{\baseline{RR4-Fast}}\crtdn{3} & 61.9 & 64.7 & 83.7 & 85.4 & 83.0 & 89.1 & 83.1 & 79.5 & 77.8 & 85.9 & 84.6 & 86.1 & 89.9 & 81.1 \\
    \textcolor{famctxl}{\baseline{CTXL-RR-1B}}\crtdn{3} & 56.7 & 67.0 & 86.1 & 85.4 & 81.9 & 87.0 & 86.0 & 79.9 & 78.1 & 84.8 & 84.6 & 83.5 & 89.3 & 80.8 \\
    \textcolor{famqwen}{\baseline{Qwen3-RR-0.6B}}\crtdn{3} & 59.4 & 73.7 & 78.3 & 87.3 & 77.0 & 87.0 & 80.2 & 73.5 & 71.6 & 83.9 & 84.2 & 83.3 & 78.7 & 78.3 \\
    \textcolor{famjina}{\baseline{Jina-RR-v3}}\crtdn{3} & 54.4 & 64.9 & 73.3 & 79.4 & 70.0 & 84.4 & 81.6 & 66.0 & 69.1 & 75.1 & 44.2 & 69.8 & 77.4 & 70.0 \\
    \bottomrule
  \end{tabular}
\end{table}

\begin{table}[t]
  \centering
  \caption{nDCG@10 with the benchmark's qrels and \rcpndcg{}@10 on BRIGHT (judge \model{Qwen3.5-397B}), in \%, with the documents that BRIGHT excludes for a query removed. Each block is sorted by its mean over the 12 tasks. Arrows give each reranker's change in rank from nDCG@10 to \rcpndcg{}@10. Bold marks the best reranker per column. Colors mark the model family as in \cref{tab:cal-ndcg-qwen-nanomteb}.}
  \label{tab:cal-ndcg-qwen-bright}
  \footnotesize
  \setlength{\tabcolsep}{1.5pt}
  \begin{tabular}{@{}l *{12}{w{c}{2.0em}} w{c}{2.4em}@{\hspace{5pt}}}
    \toprule
     & \multicolumn{12}{c}{\textbf{Datasets}} & \textbf{Mean} \\
    \cmidrule(lr){2-13} \cmidrule(l){14-14}
    \textbf{Reranker} & \crtrot{AoPS} & \crtrot{Biology} & \crtrot{Earth Sci.} & \crtrot{Economics} & \crtrot{LeetCode} & \crtrot{Pony} & \crtrot{Psychology} & \crtrot{Robotics} & \crtrot{StackOvfl.} & \crtrot{Sust.\ Liv.} & \crtrot{ThQA-Q} & \crtrot{ThQA-T} & \\
    \hline
    \rowcolor{gray!15}
     & \multicolumn{13}{c}{\rule{0pt}{2.4ex}\textbf{nDCG@10}} \\
    \hline
    \textcolor{famze}{\rule{0pt}{2.3ex}\baseline{zerank-1}} & 14.4 & \textbf{54.6} & 51.7 & 33.6 & 21.0 & 38.7 & \textbf{47.6} & 33.2 & 27.6 & 44.0 & 46.2 & \textbf{46.9} & \textbf{38.3} \\
    \textcolor{famze}{\baseline{zerank-2}} & 13.5 & 52.9 & \textbf{52.0} & \textbf{37.4} & 22.1 & 25.2 & 46.5 & \textbf{37.7} & \textbf{31.0} & \textbf{44.9} & 46.3 & 43.6 & 37.8 \\
    \textcolor{famvoyage}{\baseline{Voyage2.5}} & 10.2 & 47.9 & 49.6 & 30.9 & 21.1 & \textbf{41.6} & 42.3 & 30.9 & 26.0 & 38.9 & 37.3 & 30.0 & 33.9 \\
    \textcolor{famcohere}{\baseline{RR4-Pro}} & 12.7 & 50.2 & 49.2 & 31.0 & 29.2 & 3.8 & 44.5 & 27.8 & 24.2 & 35.8 & \textbf{48.1} & 45.6 & 33.5 \\
    \textcolor{famze}{\baseline{zerank-1-sm}} & 10.4 & 46.5 & 45.6 & 29.1 & 19.2 & 24.4 & 41.0 & 25.5 & 26.6 & 40.1 & 45.7 & 40.2 & 32.9 \\
    \textcolor{famcohere}{\baseline{RR4-Fast}} & 10.9 & 44.6 & 45.5 & 30.8 & 20.4 & 3.9 & 38.2 & 23.6 & 23.7 & 34.2 & 46.7 & 38.8 & 30.1 \\
    \textcolor{famvoyage}{\baseline{Voyage2.5-lt}} & 11.1 & 38.4 & 45.3 & 26.6 & 24.9 & 26.2 & 38.0 & 23.5 & 22.2 & 30.8 & 33.3 & 21.9 & 28.5 \\
    \textcolor{famqwen}{\baseline{Qwen3-RR-4B}} & 12.2 & 30.7 & 36.6 & 21.7 & \textbf{40.7} & 18.7 & 27.9 & 23.1 & 26.2 & 23.9 & 42.8 & 36.3 & 28.4 \\
    \textcolor{famqwen}{\baseline{Qwen3-RR-8B}} & 9.4 & 32.7 & 34.4 & 25.6 & 29.4 & 21.2 & 30.7 & 27.4 & 24.4 & 25.4 & 39.7 & 34.1 & 27.9 \\
    \textcolor{famjina}{\baseline{Jina-RR-v3}} & 8.4 & 26.8 & 39.2 & 27.8 & 10.2 & 18.1 & 29.2 & 23.2 & 12.5 & 31.5 & 28.3 & 32.6 & 24.0 \\
    \textcolor{famqwen}{\baseline{Qwen3-RR-0.6B}} & 8.7 & 19.1 & 24.6 & 14.9 & 28.9 & 3.1 & 17.1 & 13.8 & 15.4 & 11.3 & 37.1 & 30.0 & 18.7 \\
    \textcolor{famctxl}{\baseline{CTXL-RR-6B}} & \textbf{17.4} & 10.0 & 16.7 & 18.0 & 15.0 & 29.2 & 20.7 & 12.3 & 5.7 & 11.5 & 38.2 & 22.8 & 18.1 \\
    \textcolor{famctxl}{\baseline{CTXL-RR-1B}} & 11.4 & 10.2 & 24.1 & 18.6 & 21.2 & 16.3 & 19.5 & 13.9 & 10.8 & 12.3 & 37.2 & 18.7 & 17.9 \\
    \textcolor{famctxl}{\baseline{CTXL-RR-2B}} & 14.5 & 11.6 & 23.7 & 16.0 & 14.7 & 12.5 & 23.3 & 9.5 & 6.5 & 12.4 & 34.0 & 28.7 & 17.3 \\
    \hline
    \rowcolor{gray!15}
     & \multicolumn{13}{c}{\rule{0pt}{2.4ex}\textbf{\rcpndcg{}@10}} \\
    \hline
    \textcolor{famze}{\rule{0pt}{2.3ex}\baseline{zerank-2}}\crtup{1} & \textbf{75.5} & \textbf{85.9} & \textbf{86.5} & \textbf{87.8} & 77.0 & \textbf{73.3} & \textbf{82.2} & \textbf{85.2} & \textbf{82.3} & \textbf{84.0} & \textbf{81.3} & 77.8 & \textbf{81.6} \\
    \textcolor{famze}{\baseline{zerank-1}}\crtdn{1} & 70.1 & 84.5 & 85.4 & 86.1 & 75.9 & 68.7 & 81.8 & 82.4 & 80.4 & 83.6 & 80.7 & 76.4 & 79.7 \\
    \textcolor{famze}{\baseline{zerank-1-sm}}\crtup{2} & 71.4 & 76.8 & 81.9 & 80.5 & 70.1 & 52.7 & 76.5 & 77.3 & 74.4 & 77.8 & 76.3 & 70.8 & 73.9 \\
    \textcolor{famvoyage}{\baseline{Voyage2.5}}\crtdn{1} & 54.4 & 78.7 & 83.2 & 78.8 & 69.1 & 63.4 & 79.1 & 78.7 & 72.2 & 79.1 & 69.6 & 59.4 & 72.1 \\
    \textcolor{famcohere}{\baseline{RR4-Pro}}\crtdn{1} & 73.1 & 76.9 & 81.6 & 76.8 & \textbf{78.7} & 15.5 & 76.6 & 77.5 & 67.7 & 72.7 & 75.2 & \textbf{79.1} & 71.0 \\
    \textcolor{famqwen}{\baseline{Qwen3-RR-4B}}\crtup{2} & 68.8 & 68.4 & 74.3 & 65.4 & 70.1 & 63.1 & 61.9 & 70.4 & 72.7 & 60.2 & 72.2 & 70.4 & 68.1 \\
    \textcolor{famqwen}{\baseline{Qwen3-RR-8B}}\crtup{2} & 64.7 & 69.7 & 73.3 & 70.3 & 69.4 & 33.8 & 65.8 & 74.4 & 71.7 & 65.8 & 70.8 & 68.6 & 66.5 \\
    \textcolor{famcohere}{\baseline{RR4-Fast}}\crtdn{2} & 66.8 & 73.9 & 78.5 & 74.3 & 67.2 & 19.3 & 71.3 & 71.7 & 65.7 & 71.9 & 70.1 & 67.2 & 66.5 \\
    \textcolor{famvoyage}{\baseline{Voyage2.5-lt}}\crtdn{2} & 45.3 & 71.4 & 78.5 & 73.3 & 62.4 & 47.1 & 70.8 & 72.5 & 64.3 & 71.4 & 63.0 & 49.1 & 64.1 \\
    \textcolor{famqwen}{\baseline{Qwen3-RR-0.6B}}\crtup{1} & 59.6 & 53.3 & 61.1 & 55.1 & 61.3 & 33.7 & 48.4 & 61.9 & 52.9 & 48.9 & 63.2 & 57.3 & 54.7 \\
    \textcolor{famjina}{\baseline{Jina-RR-v3}}\crtdn{1} & 41.1 & 59.1 & 65.1 & 62.4 & 50.9 & 40.5 & 55.4 & 68.1 & 46.9 & 62.2 & 46.4 & 54.3 & 54.4 \\
    \textcolor{famctxl}{\baseline{CTXL-RR-1B}}\crtup{1} & 47.5 & 41.7 & 48.9 & 48.4 & 32.0 & 30.0 & 50.0 & 49.4 & 31.1 & 38.6 & 51.7 & 44.7 & 42.8 \\
    \textcolor{famctxl}{\baseline{CTXL-RR-2B}}\crtup{1} & 36.5 & 38.2 & 46.5 & 45.6 & 42.3 & 21.1 & 50.0 & 49.0 & 31.8 & 34.9 & 53.8 & 51.1 & 41.7 \\
    \textcolor{famctxl}{\baseline{CTXL-RR-6B}}\crtdn{2} & 38.5 & 40.1 & 45.4 & 45.5 & 36.9 & 30.5 & 50.5 & 47.1 & 27.9 & 32.5 & 52.7 & 50.5 & 41.5 \\
    \bottomrule
  \end{tabular}
\end{table}

\begin{table}[t]
  \centering
  \caption{nDCG@10 with the benchmark's graded qrels as linear gains and \rcpndcg{}@10 on ViDoRe v3 (judge \model{Qwen3.5-397B}), in \%, on the native-language questions. Tied reranker scores receive their group's mean gain. Each block is sorted by its mean over the 8 corpora. Arrows give each reranker's change in rank from nDCG@10 to \rcpndcg{}@10. Bold marks the best reranker per column. Colors mark the model family as in \cref{tab:cal-ndcg-qwen-nanomteb}.}
  \label{tab:cal-ndcg-qwen-vidore}
  \footnotesize
  \setlength{\tabcolsep}{1.5pt}
  \begin{tabular}{@{}l *{8}{w{c}{2.0em}} w{c}{2.4em}@{\hspace{5pt}}}
    \toprule
     & \multicolumn{8}{c}{\textbf{Datasets}} & \textbf{Mean} \\
    \cmidrule(lr){2-9} \cmidrule(l){10-10}
    \textbf{Reranker} & \crtrot{Comp.\ Sci.} & \crtrot{Energy} & \crtrot{Finance (en)} & \crtrot{Finance (fr)} & \crtrot{HR} & \crtrot{Industrial} & \crtrot{Pharma.} & \crtrot{Physics} & \\
    \hline
    \rowcolor{gray!15}
     & \multicolumn{9}{c}{\rule{0pt}{2.4ex}\textbf{nDCG@10}} \\
    \hline
    \textcolor{famvoyage}{\rule{0pt}{2.3ex}\baseline{Voyage2.5}} & \textbf{85.1} & \textbf{74.0} & 75.1 & \textbf{58.5} & \textbf{71.9} & \textbf{62.1} & \textbf{71.6} & 50.3 & \textbf{68.6} \\
    \textcolor{famcohere}{\baseline{RR4-Pro}} & 83.2 & 73.7 & \textbf{75.5} & 57.9 & 70.8 & 61.3 & 71.4 & 51.5 & 68.2 \\
    \textcolor{famvoyage}{\baseline{Voyage2.5-lt}} & 83.8 & 73.2 & 74.6 & 57.3 & 69.6 & \textbf{62.1} & 70.2 & 49.0 & 67.5 \\
    \textcolor{famze}{\baseline{zerank-2}} & 84.2 & 72.4 & 75.1 & 55.8 & 68.9 & 60.7 & 69.8 & 49.0 & 67.0 \\
    \textcolor{famze}{\baseline{zerank-1}} & 84.0 & 71.7 & 75.1 & 55.4 & 68.7 & 60.4 & 69.7 & 48.9 & 66.7 \\
    \textcolor{famcohere}{\baseline{RR4-Fast}} & 81.9 & 70.7 & 74.5 & 53.6 & 68.5 & 59.7 & 69.5 & 48.8 & 65.9 \\
    \textcolor{famctxl}{\baseline{CTXL-RR-6B}} & 74.4 & 71.8 & 73.8 & 55.2 & 69.1 & 60.6 & 68.2 & 51.2 & 65.5 \\
    \textcolor{famqwen}{\baseline{Qwen3-RR-8B}} & 80.8 & 70.6 & 70.3 & 55.5 & 65.5 & 60.0 & 68.6 & \textbf{52.3} & 65.4 \\
    \textcolor{famze}{\baseline{zerank-1-sm}} & 84.3 & 69.5 & 72.8 & 52.6 & 66.8 & 58.7 & 67.5 & 48.5 & 65.1 \\
    \textcolor{famctxl}{\baseline{CTXL-RR-2B}} & 76.7 & 71.6 & 71.7 & 53.2 & 68.1 & 61.1 & 67.5 & 49.0 & 64.9 \\
    \textcolor{famctxl}{\baseline{CTXL-RR-1B}} & 77.7 & 68.9 & 69.1 & 49.7 & 65.4 & 57.4 & 66.4 & 49.2 & 63.0 \\
    \textcolor{famqwen}{\baseline{Qwen3-RR-4B}} & 80.4 & 67.9 & 67.4 & 51.0 & 61.7 & 57.8 & 66.4 & 51.0 & 62.9 \\
    \textcolor{famjina}{\baseline{Jina-RR-v3}} & 74.1 & 65.8 & 63.6 & 41.5 & 57.2 & 57.3 & 67.9 & 45.3 & 59.1 \\
    \textcolor{famqwen}{\baseline{Qwen3-RR-0.6B}} & 74.4 & 53.3 & 57.3 & 34.2 & 45.2 & 49.7 & 60.0 & 47.9 & 52.8 \\
    \hline
    \rowcolor{gray!15}
     & \multicolumn{9}{c}{\rule{0pt}{2.4ex}\textbf{\rcpndcg{}@10}} \\
    \hline
    \textcolor{famze}{\rule{0pt}{2.3ex}\baseline{zerank-2}}\crtup{3} & \textbf{94.2} & \textbf{93.6} & 87.6 & 89.6 & \textbf{92.0} & \textbf{89.4} & \textbf{91.2} & \textbf{93.7} & \textbf{91.4} \\
    \textcolor{famze}{\baseline{zerank-1}}\crtup{3} & 94.1 & 93.3 & \textbf{88.9} & 89.6 & 91.6 & 89.1 & 90.8 & 93.3 & 91.3 \\
    \textcolor{famvoyage}{\baseline{Voyage2.5}}\crtdn{2} & 93.3 & 93.3 & 86.5 & 87.5 & 91.8 & 89.1 & 90.3 & 91.0 & 90.4 \\
    \textcolor{famcohere}{\baseline{RR4-Pro}}\crtdn{2} & 90.8 & 93.0 & 88.6 & \textbf{90.2} & 90.4 & 87.9 & 89.0 & 91.6 & 90.2 \\
    \textcolor{famze}{\baseline{zerank-1-sm}}\crtup{4} & 93.2 & 90.9 & 88.2 & 86.7 & 90.1 & 87.8 & 88.1 & 90.8 & 89.5 \\
    \textcolor{famvoyage}{\baseline{Voyage2.5-lt}}\crtdn{3} & 91.5 & 90.7 & 86.9 & 85.2 & 89.8 & 87.7 & 88.6 & 86.9 & 88.4 \\
    \textcolor{famcohere}{\baseline{RR4-Fast}}\crtdn{1} & 89.8 & 90.2 & 87.2 & 84.7 & 88.4 & 85.1 & 86.7 & 88.2 & 87.5 \\
    \textcolor{famqwen}{\baseline{Qwen3-RR-8B}} & 88.3 & 89.2 & 82.6 & 85.1 & 85.3 & 84.1 & 85.9 & 89.2 & 86.2 \\
    \textcolor{famctxl}{\baseline{CTXL-RR-2B}}\crtup{1} & 82.8 & 89.9 & 85.7 & 83.7 & 87.9 & 85.0 & 85.6 & 88.0 & 86.1 \\
    \textcolor{famctxl}{\baseline{CTXL-RR-6B}}\crtdn{3} & 79.0 & 89.0 & 86.6 & 85.5 & 88.0 & 85.1 & 84.6 & 88.0 & 85.7 \\
    \textcolor{famctxl}{\baseline{CTXL-RR-1B}} & 85.1 & 88.1 & 82.5 & 82.2 & 85.3 & 81.5 & 83.7 & 86.7 & 84.4 \\
    \textcolor{famqwen}{\baseline{Qwen3-RR-4B}} & 87.7 & 86.5 & 79.5 & 80.4 & 82.7 & 81.0 & 84.1 & 88.1 & 83.7 \\
    \textcolor{famqwen}{\baseline{Qwen3-RR-0.6B}}\crtup{1} & 82.5 & 79.8 & 75.1 & 67.2 & 72.5 & 76.4 & 78.3 & 83.2 & 76.9 \\
    \textcolor{famjina}{\baseline{Jina-RR-v3}}\crtdn{1} & 73.2 & 70.5 & 67.9 & 64.3 & 69.7 & 74.3 & 75.9 & 76.9 & 71.6 \\
    \bottomrule
  \end{tabular}
\end{table}

\Cref{tab:cal-ndcg-qwen-nanomteb} lists nDCG@10 on the qrels and \rcpndcg{}@10 for every reranker and NanoBEIR dataset. \Cref{tab:cal-ndcg-qwen-bright,tab:cal-ndcg-qwen-vidore} do the same for the BRIGHT tasks and the ViDoRe v3 corpora. \Cref{tab:nb-leaderboard} adds the suite means on NanoBEIR and BRIGHT under both judges. Since the two metrics differ in level, only their orders of the rerankers are comparable.

\providecommand{\efv}[1]{\ensuremath{\val{#1}}}
\providecommand{\appgname}[2]{\textcolor{#1}{\baseline{#2}}}
\providecommand{\appgfirst}{\rule{0pt}{2.3ex}}
\definecolor{appgCohere}{HTML}{1B7837}
\definecolor{appgCtxl}{HTML}{2166AC}
\definecolor{appgJina}{HTML}{555555}
\definecolor{appgQwen}{HTML}{B2182B}
\definecolor{appgVoyage}{HTML}{7B3294}
\definecolor{appgZE}{HTML}{B35806}

\subsection{Second judges}
\label{app:nano-bright-judge}
\label{app:trecdl-judge2}
\label{app:robustness-consensus}

A metric built on one LLM judge should change little with another. We therefore ran both stages with \model{gpt-oss-120b}~\citep{openai2025gptoss} on NanoBEIR and BRIGHT, and on the TREC-DL pools, where the calibration CAL-C fits its judgments (\cref{app:trecdl-protocol}). \Cref{tab:nb-leaderboard} gives the suite means of qrel-nDCG@10 and Count-nDCG@10, and of \rcpndcg{}@10 under both judges. Within each dataset, we correlate the two judges' \rcpndcg{}@10 over the 14 rerankers, where 1 means identical leaderboards. The mean Spearman correlation over datasets is \val{a4.nano.xjudge.mean-rho} on NanoBEIR and \val{b20.new.a4.bright.xjudge.mean-rho} on BRIGHT. No dataset falls below \val{a4.nano.xjudge.min-rho}. On TREC-DL, the two judges' leaderboards of all 18 systems correlate at Kendall \val{a2.secondjudge.pool.kendall18avsc} with both years pooled. The two judges' calibrated scores over the \val{a2.secondjudge.perdoc.ndocs} pool passages correlate at Spearman \val{a2.secondjudge.perdoc.rho.ac}. Under the second judge, \rcpndcg{}@10 separates \val{a4.nano.sens.rcp-oss}\% of the reranker pairs on NanoBEIR, against \val{a4.nano.sens.rcp-qwen}\% under the primary judge (\cref{app:nano-bright-sensitivity}).

On TREC-DL, the within-query concordance is the share of differently graded passage pairs that \rcp{} orders as NIST does, with 0.5 at chance. The concordance is \val{a2.conc.dl19.oss} (2019) and \val{a2.conc.dl20.oss} (2020) with \model{gpt-oss-120b}, against \val{a2.conc.dl19.overall} and \val{a2.conc.dl20.overall} with \model{Qwen3.6-27B}. A third model, \model{GLM-5.2}, arbitrated the \val{a56.robust.glm.contested} TREC-DL passage pairs on which the two judges disagree most. On the \val{a56.robust.glm.adjudicable} pairs that NIST grades differently, \model{GLM-5.2} orders \val{a56.robust.glm.accuracy}\% as NIST does, against \val{a56.robust.glm.qwen-accuracy}\% for \model{Qwen3.6-27B}. Pooling the two judges' Stage~B answers in one fit does not improve on the better single judge (\cref{app:trecdl-placement}).

\begin{table}[t]
  \centering
  \caption{Cross-suite summary: suite means over datasets (\%), with the rank among the 14 rerankers in parentheses. Columns give nDCG@10 on the qrels (qrel), Count-nDCG@10 (Count) and \rcpndcg{}@10 (\rcp{}) with judge \model{Qwen3.5-397B}, and \rcpndcg{}@10 with the second judge \model{gpt-oss-120b} (\rcp{}$_2$). BRIGHT removes each query's excluded documents, and NanoArguAna the query's own argument (\cref{app:nano-bright,app:nano-bright-scope}). Rows follow \rcpndcg{}@10 on NanoBEIR. Bold marks the best reranker per column.}
  \label{tab:nb-leaderboard}
  \footnotesize
  \setlength{\tabcolsep}{3pt}
  \begin{tabular}{@{}lcccc@{\hspace{8pt}}c@{\hspace{8pt}}cccc@{}}
    \toprule
    & \multicolumn{4}{c}{NanoBEIR} && \multicolumn{4}{c}{BRIGHT} \\
    \cmidrule(lr){2-5}\cmidrule(l){7-10}
    Reranker & qrel & Count & \rcp{} & \rcp{}$_2$ && qrel & Count & \rcp{} & \rcp{}$_2$ \\
    \midrule
\appgname{appgZE}{zerank-2} & \val{draft.app.g.lb.nano.ndcg.zerank2}\,(\val{draft.app.g.lb.nano.ndcg.zerank2.rank}) & \textbf{\val{draft.app.g.lb.nano.count.zerank2}}\,(\val{draft.app.g.lb.nano.count.zerank2.rank}) & \textbf{\val{draft.app.g.lb.nano.rcp.zerank2}}\,(\val{draft.app.g.lb.nano.rcp.zerank2.rank}) & \val{draft.app.g.lb.nano.oss.zerank2}\,(\val{draft.app.g.lb.nano.oss.zerank2.rank}) && \val{b20.new.draft.app.g.lb.bright.ndcg.zerank2}\,(\val{b20.new.draft.app.g.lb.bright.ndcg.zerank2.rank}) & \textbf{\val{b20.new.draft.app.g.lb.bright.count.zerank2}}\,(\val{b20.new.draft.app.g.lb.bright.count.zerank2.rank}) & \textbf{\val{b20.new.draft.app.g.lb.bright.rcp.zerank2}}\,(\val{b20.new.draft.app.g.lb.bright.rcp.zerank2.rank}) & \textbf{\val{b20.new.draft.app.g.lb.bright.oss.zerank2}}\,(\val{b20.new.draft.app.g.lb.bright.oss.zerank2.rank}) \\
\appgname{appgZE}{zerank-1} & \val{draft.app.g.lb.nano.ndcg.zerank1}\,(\val{draft.app.g.lb.nano.ndcg.zerank1.rank}) & \val{draft.app.g.lb.nano.count.zerank1}\,(\val{draft.app.g.lb.nano.count.zerank1.rank}) & \val{draft.app.g.lb.nano.rcp.zerank1}\,(\val{draft.app.g.lb.nano.rcp.zerank1.rank}) & \textbf{\val{draft.app.g.lb.nano.oss.zerank1}}\,(\val{draft.app.g.lb.nano.oss.zerank1.rank}) && \textbf{\val{b20.new.draft.app.g.lb.bright.ndcg.zerank1}}\,(\val{b20.new.draft.app.g.lb.bright.ndcg.zerank1.rank}) & \val{b20.new.draft.app.g.lb.bright.count.zerank1}\,(\val{b20.new.draft.app.g.lb.bright.count.zerank1.rank}) & \val{b20.new.draft.app.g.lb.bright.rcp.zerank1}\,(\val{b20.new.draft.app.g.lb.bright.rcp.zerank1.rank}) & \val{b20.new.draft.app.g.lb.bright.oss.zerank1}\,(\val{b20.new.draft.app.g.lb.bright.oss.zerank1.rank}) \\
\appgname{appgVoyage}{Voyage2.5} & \val{draft.app.g.lb.nano.ndcg.voyage25}\,(\val{draft.app.g.lb.nano.ndcg.voyage25.rank}) & \val{draft.app.g.lb.nano.count.voyage25}\,(\val{draft.app.g.lb.nano.count.voyage25.rank}) & \val{draft.app.g.lb.nano.rcp.voyage25}\,(\val{draft.app.g.lb.nano.rcp.voyage25.rank}) & \val{draft.app.g.lb.nano.oss.voyage25}\,(\val{draft.app.g.lb.nano.oss.voyage25.rank}) && \val{b20.new.draft.app.g.lb.bright.ndcg.voyage25}\,(\val{b20.new.draft.app.g.lb.bright.ndcg.voyage25.rank}) & \val{b20.new.draft.app.g.lb.bright.count.voyage25}\,(\val{b20.new.draft.app.g.lb.bright.count.voyage25.rank}) & \val{b20.new.draft.app.g.lb.bright.rcp.voyage25}\,(\val{b20.new.draft.app.g.lb.bright.rcp.voyage25.rank}) & \val{b20.new.draft.app.g.lb.bright.oss.voyage25}\,(\val{b20.new.draft.app.g.lb.bright.oss.voyage25.rank}) \\
\appgname{appgZE}{zerank-1-sm} & \val{draft.app.g.lb.nano.ndcg.zerank1sm}\,(\val{draft.app.g.lb.nano.ndcg.zerank1sm.rank}) & \val{draft.app.g.lb.nano.count.zerank1sm}\,(\val{draft.app.g.lb.nano.count.zerank1sm.rank}) & \val{draft.app.g.lb.nano.rcp.zerank1sm}\,(\val{draft.app.g.lb.nano.rcp.zerank1sm.rank}) & \val{draft.app.g.lb.nano.oss.zerank1sm}\,(\val{draft.app.g.lb.nano.oss.zerank1sm.rank}) && \val{b20.new.draft.app.g.lb.bright.ndcg.zerank1sm}\,(\val{b20.new.draft.app.g.lb.bright.ndcg.zerank1sm.rank}) & \val{b20.new.draft.app.g.lb.bright.count.zerank1sm}\,(\val{b20.new.draft.app.g.lb.bright.count.zerank1sm.rank}) & \val{b20.new.draft.app.g.lb.bright.rcp.zerank1sm}\,(\val{b20.new.draft.app.g.lb.bright.rcp.zerank1sm.rank}) & \val{b20.new.draft.app.g.lb.bright.oss.zerank1sm}\,(\val{b20.new.draft.app.g.lb.bright.oss.zerank1sm.rank}) \\
\appgname{appgCohere}{RR4-Pro} & \val{draft.app.g.lb.nano.ndcg.rr4pro}\,(\val{draft.app.g.lb.nano.ndcg.rr4pro.rank}) & \val{draft.app.g.lb.nano.count.rr4pro}\,(\val{draft.app.g.lb.nano.count.rr4pro.rank}) & \val{draft.app.g.lb.nano.rcp.rr4pro}\,(\val{draft.app.g.lb.nano.rcp.rr4pro.rank}) & \val{draft.app.g.lb.nano.oss.rr4pro}\,(\val{draft.app.g.lb.nano.oss.rr4pro.rank}) && \val{b20.new.draft.app.g.lb.bright.ndcg.rr4pro}\,(\val{b20.new.draft.app.g.lb.bright.ndcg.rr4pro.rank}) & \val{b20.new.draft.app.g.lb.bright.count.rr4pro}\,(\val{b20.new.draft.app.g.lb.bright.count.rr4pro.rank}) & \val{b20.new.draft.app.g.lb.bright.rcp.rr4pro}\,(\val{b20.new.draft.app.g.lb.bright.rcp.rr4pro.rank}) & \val{b20.new.draft.app.g.lb.bright.oss.rr4pro}\,(\val{b20.new.draft.app.g.lb.bright.oss.rr4pro.rank}) \\
\appgname{appgCtxl}{CTXL-RR-6B} & \val{draft.app.g.lb.nano.ndcg.ctxlrr6b}\,(\val{draft.app.g.lb.nano.ndcg.ctxlrr6b.rank}) & \val{draft.app.g.lb.nano.count.ctxlrr6b}\,(\val{draft.app.g.lb.nano.count.ctxlrr6b.rank}) & \val{draft.app.g.lb.nano.rcp.ctxlrr6b}\,(\val{draft.app.g.lb.nano.rcp.ctxlrr6b.rank}) & \val{draft.app.g.lb.nano.oss.ctxlrr6b}\,(\val{draft.app.g.lb.nano.oss.ctxlrr6b.rank}) && \val{b20.new.draft.app.g.lb.bright.ndcg.ctxlrr6b}\,(\val{b20.new.draft.app.g.lb.bright.ndcg.ctxlrr6b.rank}) & \val{b20.new.draft.app.g.lb.bright.count.ctxlrr6b}\,(\val{b20.new.draft.app.g.lb.bright.count.ctxlrr6b.rank}) & \val{b20.new.draft.app.g.lb.bright.rcp.ctxlrr6b}\,(\val{b20.new.draft.app.g.lb.bright.rcp.ctxlrr6b.rank}) & \val{b20.new.draft.app.g.lb.bright.oss.ctxlrr6b}\,(\val{b20.new.draft.app.g.lb.bright.oss.ctxlrr6b.rank}) \\
\appgname{appgVoyage}{Voyage2.5-lt} & \val{draft.app.g.lb.nano.ndcg.voyage25lt}\,(\val{draft.app.g.lb.nano.ndcg.voyage25lt.rank}) & \val{draft.app.g.lb.nano.count.voyage25lt}\,(\val{draft.app.g.lb.nano.count.voyage25lt.rank}) & \val{draft.app.g.lb.nano.rcp.voyage25lt}\,(\val{draft.app.g.lb.nano.rcp.voyage25lt.rank}) & \val{draft.app.g.lb.nano.oss.voyage25lt}\,(\val{draft.app.g.lb.nano.oss.voyage25lt.rank}) && \val{b20.new.draft.app.g.lb.bright.ndcg.voyage25lt}\,(\val{b20.new.draft.app.g.lb.bright.ndcg.voyage25lt.rank}) & \val{b20.new.draft.app.g.lb.bright.count.voyage25lt}\,(\val{b20.new.draft.app.g.lb.bright.count.voyage25lt.rank}) & \val{b20.new.draft.app.g.lb.bright.rcp.voyage25lt}\,(\val{b20.new.draft.app.g.lb.bright.rcp.voyage25lt.rank}) & \val{b20.new.draft.app.g.lb.bright.oss.voyage25lt}\,(\val{b20.new.draft.app.g.lb.bright.oss.voyage25lt.rank}) \\
\appgname{appgQwen}{Qwen3-RR-4B} & \val{draft.app.g.lb.nano.ndcg.qwen3rr4b}\,(\val{draft.app.g.lb.nano.ndcg.qwen3rr4b.rank}) & \val{draft.app.g.lb.nano.count.qwen3rr4b}\,(\val{draft.app.g.lb.nano.count.qwen3rr4b.rank}) & \val{draft.app.g.lb.nano.rcp.qwen3rr4b}\,(\val{draft.app.g.lb.nano.rcp.qwen3rr4b.rank}) & \val{draft.app.g.lb.nano.oss.qwen3rr4b}\,(\val{draft.app.g.lb.nano.oss.qwen3rr4b.rank}) && \val{b20.new.draft.app.g.lb.bright.ndcg.qwen3rr4b}\,(\val{b20.new.draft.app.g.lb.bright.ndcg.qwen3rr4b.rank}) & \val{b20.new.draft.app.g.lb.bright.count.qwen3rr4b}\,(\val{b20.new.draft.app.g.lb.bright.count.qwen3rr4b.rank}) & \val{b20.new.draft.app.g.lb.bright.rcp.qwen3rr4b}\,(\val{b20.new.draft.app.g.lb.bright.rcp.qwen3rr4b.rank}) & \val{b20.new.draft.app.g.lb.bright.oss.qwen3rr4b}\,(\val{b20.new.draft.app.g.lb.bright.oss.qwen3rr4b.rank}) \\
\appgname{appgQwen}{Qwen3-RR-8B} & \textbf{\val{draft.app.g.lb.nano.ndcg.qwen3rr8b}}\,(\val{draft.app.g.lb.nano.ndcg.qwen3rr8b.rank}) & \val{draft.app.g.lb.nano.count.qwen3rr8b}\,(\val{draft.app.g.lb.nano.count.qwen3rr8b.rank}) & \val{draft.app.g.lb.nano.rcp.qwen3rr8b}\,(\val{draft.app.g.lb.nano.rcp.qwen3rr8b.rank}) & \val{draft.app.g.lb.nano.oss.qwen3rr8b}\,(\val{draft.app.g.lb.nano.oss.qwen3rr8b.rank}) && \val{b20.new.draft.app.g.lb.bright.ndcg.qwen3rr8b}\,(\val{b20.new.draft.app.g.lb.bright.ndcg.qwen3rr8b.rank}) & \val{b20.new.draft.app.g.lb.bright.count.qwen3rr8b}\,(\val{b20.new.draft.app.g.lb.bright.count.qwen3rr8b.rank}) & \val{b20.new.draft.app.g.lb.bright.rcp.qwen3rr8b}\,(\val{b20.new.draft.app.g.lb.bright.rcp.qwen3rr8b.rank}) & \val{b20.new.draft.app.g.lb.bright.oss.qwen3rr8b}\,(\val{b20.new.draft.app.g.lb.bright.oss.qwen3rr8b.rank}) \\
\appgname{appgCtxl}{CTXL-RR-2B} & \val{draft.app.g.lb.nano.ndcg.ctxlrr2b}\,(\val{draft.app.g.lb.nano.ndcg.ctxlrr2b.rank}) & \val{draft.app.g.lb.nano.count.ctxlrr2b}\,(\val{draft.app.g.lb.nano.count.ctxlrr2b.rank}) & \val{draft.app.g.lb.nano.rcp.ctxlrr2b}\,(\val{draft.app.g.lb.nano.rcp.ctxlrr2b.rank}) & \val{draft.app.g.lb.nano.oss.ctxlrr2b}\,(\val{draft.app.g.lb.nano.oss.ctxlrr2b.rank}) && \val{b20.new.draft.app.g.lb.bright.ndcg.ctxlrr2b}\,(\val{b20.new.draft.app.g.lb.bright.ndcg.ctxlrr2b.rank}) & \val{b20.new.draft.app.g.lb.bright.count.ctxlrr2b}\,(\val{b20.new.draft.app.g.lb.bright.count.ctxlrr2b.rank}) & \val{b20.new.draft.app.g.lb.bright.rcp.ctxlrr2b}\,(\val{b20.new.draft.app.g.lb.bright.rcp.ctxlrr2b.rank}) & \val{b20.new.draft.app.g.lb.bright.oss.ctxlrr2b}\,(\val{b20.new.draft.app.g.lb.bright.oss.ctxlrr2b.rank}) \\
\appgname{appgCohere}{RR4-Fast} & \val{draft.app.g.lb.nano.ndcg.rr4fast}\,(\val{draft.app.g.lb.nano.ndcg.rr4fast.rank}) & \val{draft.app.g.lb.nano.count.rr4fast}\,(\val{draft.app.g.lb.nano.count.rr4fast.rank}) & \val{draft.app.g.lb.nano.rcp.rr4fast}\,(\val{draft.app.g.lb.nano.rcp.rr4fast.rank}) & \val{draft.app.g.lb.nano.oss.rr4fast}\,(\val{draft.app.g.lb.nano.oss.rr4fast.rank}) && \val{b20.new.draft.app.g.lb.bright.ndcg.rr4fast}\,(\val{b20.new.draft.app.g.lb.bright.ndcg.rr4fast.rank}) & \val{b20.new.draft.app.g.lb.bright.count.rr4fast}\,(\val{b20.new.draft.app.g.lb.bright.count.rr4fast.rank}) & \val{b20.new.draft.app.g.lb.bright.rcp.rr4fast}\,(\val{b20.new.draft.app.g.lb.bright.rcp.rr4fast.rank}) & \val{b20.new.draft.app.g.lb.bright.oss.rr4fast}\,(\val{b20.new.draft.app.g.lb.bright.oss.rr4fast.rank}) \\
\appgname{appgCtxl}{CTXL-RR-1B} & \val{draft.app.g.lb.nano.ndcg.ctxlrr1b}\,(\val{draft.app.g.lb.nano.ndcg.ctxlrr1b.rank}) & \val{draft.app.g.lb.nano.count.ctxlrr1b}\,(\val{draft.app.g.lb.nano.count.ctxlrr1b.rank}) & \val{draft.app.g.lb.nano.rcp.ctxlrr1b}\,(\val{draft.app.g.lb.nano.rcp.ctxlrr1b.rank}) & \val{draft.app.g.lb.nano.oss.ctxlrr1b}\,(\val{draft.app.g.lb.nano.oss.ctxlrr1b.rank}) && \val{b20.new.draft.app.g.lb.bright.ndcg.ctxlrr1b}\,(\val{b20.new.draft.app.g.lb.bright.ndcg.ctxlrr1b.rank}) & \val{b20.new.draft.app.g.lb.bright.count.ctxlrr1b}\,(\val{b20.new.draft.app.g.lb.bright.count.ctxlrr1b.rank}) & \val{b20.new.draft.app.g.lb.bright.rcp.ctxlrr1b}\,(\val{b20.new.draft.app.g.lb.bright.rcp.ctxlrr1b.rank}) & \val{b20.new.draft.app.g.lb.bright.oss.ctxlrr1b}\,(\val{b20.new.draft.app.g.lb.bright.oss.ctxlrr1b.rank}) \\
\appgname{appgQwen}{Qwen3-RR-0.6B} & \val{draft.app.g.lb.nano.ndcg.qwen3rr06b}\,(\val{draft.app.g.lb.nano.ndcg.qwen3rr06b.rank}) & \val{draft.app.g.lb.nano.count.qwen3rr06b}\,(\val{draft.app.g.lb.nano.count.qwen3rr06b.rank}) & \val{draft.app.g.lb.nano.rcp.qwen3rr06b}\,(\val{draft.app.g.lb.nano.rcp.qwen3rr06b.rank}) & \val{draft.app.g.lb.nano.oss.qwen3rr06b}\,(\val{draft.app.g.lb.nano.oss.qwen3rr06b.rank}) && \val{b20.new.draft.app.g.lb.bright.ndcg.qwen3rr06b}\,(\val{b20.new.draft.app.g.lb.bright.ndcg.qwen3rr06b.rank}) & \val{b20.new.draft.app.g.lb.bright.count.qwen3rr06b}\,(\val{b20.new.draft.app.g.lb.bright.count.qwen3rr06b.rank}) & \val{b20.new.draft.app.g.lb.bright.rcp.qwen3rr06b}\,(\val{b20.new.draft.app.g.lb.bright.rcp.qwen3rr06b.rank}) & \val{b20.new.draft.app.g.lb.bright.oss.qwen3rr06b}\,(\val{b20.new.draft.app.g.lb.bright.oss.qwen3rr06b.rank}) \\
\appgname{appgJina}{Jina-RR-v3} & \val{draft.app.g.lb.nano.ndcg.jinarrv3}\,(\val{draft.app.g.lb.nano.ndcg.jinarrv3.rank}) & \val{draft.app.g.lb.nano.count.jinarrv3}\,(\val{draft.app.g.lb.nano.count.jinarrv3.rank}) & \val{draft.app.g.lb.nano.rcp.jinarrv3}\,(\val{draft.app.g.lb.nano.rcp.jinarrv3.rank}) & \val{draft.app.g.lb.nano.oss.jinarrv3}\,(\val{draft.app.g.lb.nano.oss.jinarrv3.rank}) && \val{b20.new.draft.app.g.lb.bright.ndcg.jinarrv3}\,(\val{b20.new.draft.app.g.lb.bright.ndcg.jinarrv3.rank}) & \val{b20.new.draft.app.g.lb.bright.count.jinarrv3}\,(\val{b20.new.draft.app.g.lb.bright.count.jinarrv3.rank}) & \val{b20.new.draft.app.g.lb.bright.rcp.jinarrv3}\,(\val{b20.new.draft.app.g.lb.bright.rcp.jinarrv3.rank}) & \val{b20.new.draft.app.g.lb.bright.oss.jinarrv3}\,(\val{b20.new.draft.app.g.lb.bright.oss.jinarrv3.rank}) \\
    \bottomrule
  \end{tabular}
\end{table}

\subsection{The zerank rerankers}
\label{app:trecdl-zerank}
\label{app:nano-bright-zerank}
\label{app:vidore-leaderboards}

The three \reranker{zerank} rerankers learn from pairwise preferences of an ensemble of LLMs~\citep{pipitone2025zelo}. A metric built on an LLM judge could therefore favor them (\cref{sec:ranking-results}). On NanoBEIR, qrel-nDCG@10 ranks them last, while \rcpndcg{}@10 ranks \reranker{zerank-2} and \reranker{zerank-1} first and second. The mean \rcpndcg{}@10 of the three zerank rerankers exceeds that of the other 11 rerankers by \val{a4.table2.final.zerank-gap}\,pp over all datasets and on \val{a4.table2.final.zerank-ahead} of the 13 datasets. On BRIGHT, qrel-nDCG@10 already ranks the two first and second (\cref{tab:nb-leaderboard}).

On queries with at least five qrel-positive documents, we correlate each reranker's scores of these documents with their calibrated scores (Spearman, averaged over queries). The three zerank rerankers reach \val{nr.a4.within.nano.zer-mean} on NanoBEIR and \val{nr.a4.within.bright.zer-mean} on BRIGHT, against \val{nr.a4.within.nano.rest-mean} and \val{nr.a4.within.bright.rest-mean} for the other 11 rerankers. A preference shared with the judge would also produce this lead. This result therefore locates the lead but cannot tell whether it reflects better rankings.

On TREC-DL, \rcpndcg{} ranks \reranker{zerank-2} first among the 18 systems in both years, whereas NIST ranks it \val{a2.zerank.pool.dl19.zerank2.rank18.nistlin}th and \val{a2.zerank.pool.dl20.zerank2.rank18.nistlin}th. NIST counts an unjudged passage as irrelevant. Unjudged passages fill \val{a2.unjudged.pool.zerank2}\% of \reranker{zerank-2}'s top-10 slots, against \val{a2.unjudged.pool.nonzeroabove.min}--\val{a2.unjudged.pool.nonzeroabove.max}\% for the \val{a2.unjudged.pool.nonzeroabove.n} systems from other families that NIST ranks above it with both years pooled. Condensed lists~\citep{sakai2007alternatives} drop every unjudged passage before scoring. On condensed lists, the Kendall correlation between the leaderboards of \rcpndcg{} and NIST closes \val{a2.condensed.pool.pooled.rcpa.lin.rerankers.shareunjudged} of its gap to 1 under the linear gain (CAL-A) and \val{a2.condensed.pool.pooled.rcpb.exp.rerankers.shareunjudged} under exponential gain (CAL-B). On condensed lists, \reranker{zerank-2} ranks \val{n7.rank.cond.dl19}th in 2019 and \val{n7.rank.cond.dl20}th in 2020, against \val{n7.rank.raw.dl19}th and \val{n7.rank.raw.dl20}th on the full lists. On the full lists, it trails the NIST leader by \val{n7.gap.dl19} in 2019 (95\% CI [\val{n7.gap.dl19.lo}, \val{n7.gap.dl19.hi}]) and \val{n7.gap.dl20} in 2020 ([\val{n7.gap.dl20.lo}, \val{n7.gap.dl20.hi}]).

In the human study on NanoBEIR, BRIGHT, and ViDoRe v3, annotators never saw a system name. In the \val{n7.hs.zr.all.n} contests that pit a zerank list against another reranker's list and reach a verdict, the verdict picks the zerank list in \val{n7.hs.zr.all}\% (95\% CI [\val{n7.hs.zr.all.lo}, \val{n7.hs.zr.all.hi}]\%). Where \rcpndcg{} picks the zerank list and qrel-nDCG the other list, the verdict picks the zerank list in \val{n7.hs.zr.rcpfav}\% of \val{n7.hs.zr.rcpfav.n} contests ([\val{n7.hs.zr.rcpfav.lo}, \val{n7.hs.zr.rcpfav.hi}]\%). Where \rcpndcg{} picks the zerank list with a margin of at least \val{draft.5.margin.decisive}, the verdict picks it in \val{n7.hs.zr.rcpfav02}\% of \val{n7.hs.zr.rcpfav02.n} contests ([\val{n7.hs.zr.rcpfav02.lo}, \val{n7.hs.zr.rcpfav02.hi}]\%).

On ViDoRe v3, \reranker{zerank-2} and \reranker{zerank-1} rise from qrel ranks \val{a3.rank.lin.zerank2} and \val{a3.rank.lin.zerank1} to \val{a3.rank.rcp.zerank2} and \val{a3.rank.rcp.zerank1}. The rise already appears under Count-nDCG, which scores documents by the rubric alone.

\subsection{Rubric, prompt, and simulation robustness}
\label{app:robustness}
\label{app:robustness-factorial}
\label{app:robustness-lattice}
\label{app:robustness-synthetic}

This subsection tests how the leaderboard of \rcpndcg{} depends on the rubric, the Stage~A prompt, and the criteria, and checks the gain in a simulation.

\paragraph{Rubric and prompt factorial.}
On TREC-DL with \model{Qwen3.6-27B}, we cross five Stage~A prompts with sixteen rubrics and refit the 2PL in each of the \val{a56.robust.factorial.cells} cells (\cref{tab:app-factorial}). The four other Stage~A prompts keep the relevance framing and score anchors of the paper's prompt and replace only its list of scoring criteria. Two paraphrase this list, and two swap in answer-quality or document-quality criteria. Their tournament orders stay close to the shipped one, with mean per-query Spearman correlations of \val{n11.stagea.rho.min} to \val{n11.stagea.rho.max}. The four prompts therefore test the wording of the criteria list rather than a different construct. A pair counts as separated when a paired percentile bootstrap's 95\% interval excludes zero. The reference cell, with the shipped prompt and rubric, separates \val{a56.robust.factorial.decisive-pairs} of the \val{a56.robust.factorial.pairs} pairs. The \val{a56.robust.factorial.relevance.arms} cells whose rubric still measures relevance reverse none of these pairs, and their leaderboards correlate with the reference at Kendall \val{a56.robust.factorial.relevance.kendall-min} or higher. All \val{a56.robust.factorial.reversals} reversals among the \val{a56.robust.factorial.opportunities} opportunities come from rubrics built to fail. Of these, \val{a56.robust.factorial.reversals.r01} come from the all-easy rubric, \val{a56.robust.factorial.reversals.r02} from the all-hard rubric, and \val{a56.robust.factorial.reversals.r11} from a document-quality rubric that ignores the query. With the shipped rubric, the four other prompts also reverse no separated pair.

\begin{table}[t]
  \centering
  \footnotesize
  \caption{Rubric and prompt factorial on TREC-DL (\model{Qwen3.6-27B}). Each family crosses its rubrics with the five Stage~A prompts, except the reference family, which pairs the shipped rubric with the four other prompts. Reversals count the pairs that the reference cell separates and whose sign flips, summed over the family's cells. Kendall is the rank correlation of each cell's 14-reranker leaderboard with the reference.}
  \label{tab:app-factorial}
  \setlength{\tabcolsep}{5pt}
  \begin{tabular}{@{}llccc@{}}
    \toprule
    Family & Rubrics & Cells & Reversals & Kendall range \\
    \midrule
    Criterion wording & three paraphrases of the rubric & 15 & \val{a56.robust.factorial.family.wording.reversals} & \val{a56.robust.factorial.family.wording.kendall-min}--\val{a56.robust.factorial.family.wording.kendall-max} \\
    Criterion count & one, three, or seven criteria & 15 & \val{a56.robust.factorial.family.count.reversals} & \val{a56.robust.factorial.family.count.kendall-min}--\val{a56.robust.factorial.family.count.kendall-max} \\
    Criterion ordering & scrambled order & 5 & \val{a56.robust.factorial.family.ordering.reversals} & \val{a56.robust.factorial.family.ordering.kendall-min}--\val{a56.robust.factorial.family.ordering.kendall-max} \\
    Criterion redundancy & near-duplicate or diverse facets & 10 & \val{a56.robust.factorial.family.redundancy.reversals} & \val{a56.robust.factorial.family.redundancy.kendall-min}--\val{a56.robust.factorial.family.redundancy.kendall-max} \\
    Stage~A criteria wording & shipped rubric, four other prompts & 4 & \val{a56.robust.factorial.family.reference.reversals} & \val{a56.robust.factorial.family.reference.kendall-min}--\val{a56.robust.factorial.family.reference.kendall-max} \\
    Difficulty placement & all easy, all hard, tiled, top-heavy & 20 & \val{a56.robust.factorial.family.difficulty-placement.reversals} & \val{a56.robust.factorial.family.difficulty-placement.kendall-min}--\val{a56.robust.factorial.family.difficulty-placement.kendall-max} \\
    Construct validity & document quality, trivial items & 10 & \val{a56.robust.factorial.family.construct-validity.reversals} & \val{a56.robust.factorial.family.construct-validity.kendall-min}--\val{a56.robust.factorial.family.construct-validity.kendall-max} \\
    \bottomrule
  \end{tabular}
\end{table}

\paragraph{Criterion subsets.}
We refit the TREC-DL model with each non-empty subset of the five criteria. By information efficiency at the ideal top 10 (\cref{app:calibration-diagnostics}), the full rubric ranks third. Its information efficiency is \val{a56.robust.lattice.k12345.eff-oracle}. The best subset drops C2 and reaches \val{a56.robust.lattice.k1345.eff-oracle}, and single criteria reach between \val{a56.robust.lattice.k5.eff-oracle} (C5 alone) and \val{a56.robust.lattice.k4.eff-oracle} (C4 alone). The subset search uses no relevance labels.

\paragraph{Simulation.}
\label{tab:app-synthetic}
A simulation with known truth, \val{a56.robust.synth.replications} replications of \val{a56.robust.synth.queries} queries, tests the gain and the paired $t$-test, not the tournament or the 2PL fit. Over the 16 combinations of pool regime and judge noise, \rcpndcg{} separates \val{draft.5.synth.rcp.type1.min}--\val{draft.5.synth.rcp.type1.max}\% of the truly equal pairs and nDCG \val{nr.a5.synth.ndcg.type1.min}--\val{nr.a5.synth.ndcg.type1.max}\%. Both rates sit near the nominal \val{draft.2.siglevel}\% level of the test. With an accurate judge, \rcpndcg{} detects more true differences than nDCG in every pool regime. Under graded labels, \rcpndcg{} detects fewer true differences than nDCG from a judge noise of \val{a56.robust.synth.oracle-graded.boundary-noise}, although the judge then still ranks documents more accurately than every simulated reranker. Under binary labels, this happens only from a judge noise of \val{a56.robust.synth.beir-binary.boundary-noise}, where the judge ranks documents less accurately than every simulated reranker.

\subsection{\texorpdfstring{\rcpndcg{}}{RCP-nDCG} and Count-nDCG}
\label{sec:count}
\label{app:robustness-count}
\label{app:meta-ideal}
\label{app:trecdl-parry}

Count-nDCG scores documents by the rubric alone (\cref{eq:count-gain}). Count-nDCG separates about as many reranker pairs as \rcpndcg{} (\cref{sec:ranking-results}). Within a query, however, the tournament orders documents that Count-nDCG ties. The tests below concern this order. Except for the per-dataset leaderboards, the values of this subsection come from reranker lists that still contain NanoArguAna's self-documents and BRIGHT's excluded documents (\cref{app:nano-bright,app:nano-bright-scope}).

\paragraph{External judges.}
Three LLM judges from other model families compare two top-10 lists per case (\cref{app:robustness-judges}). The judges are blind to the metrics and see each case twice with swapped labels (\cref{tab:count}). A case counts when at least two of the three external judges name the same list in both label orders. An ideal list holds a query's ten best pool documents under each gain. A minimal pair sets a reranker's top 10 against the same list with its Count-tied documents in the tournament's order. \Cref{tab:app-judge-suite} reports the per-judge, per-suite ideal-list shares, which range from \val{b19.s2.kimi-k3.vidore.rcp-share}\% to \val{b19.s2.kimi-k3.trecdl.rcp-share}\%. The judges favor the tournament's order in \val{b19.s2.consensus.pooled.rcp-share}\% of decided ideal lists and \val{b19.s3.consensus.pooled.rcp-share}\% of minimal pairs. The judges side with the winner of \rcpndcg{} in \val{b19.s1.consensus.pooled.rcp-share}\% of the per-query cases where the two metrics name opposite winners. On controls, where both metrics name the same winner, the judges pick that winner in \val{b19.s1ctl.consensus.pooled.rcp-share}\%. The judges can thus see differences of this size. Each metric has a tie zone, a margin up to which we read its difference as a tie (\cref{app:robustness-judges-margins}). Where both metrics' differences exceed their tie zones, the judges side with \rcpndcg{} in \val{b19.mtie.consensus.pooled.both-decisive.rcp-share}\%.

\begin{table}[t]
  \centering
  \caption{Share (\%) of decided cases favoring the order of \rcpndcg{} over Count-nDCG [95\% CI] ($n$). 50 means no preference. For controls, the share counts the common winner. Pooled over NanoBEIR, BRIGHT, TREC-DL, and ViDoRe v3.}
  \label{tab:count}
  \footnotesize
  \setlength{\tabcolsep}{4pt}
  \begin{tabular}{@{}ll@{}}
    \toprule
    Test (unit) & LLM judges \\
    \midrule
    Ideal lists (queries) & \val{b19.s2.consensus.pooled.rcp-share} \val{b19.s2.consensus.pooled.rcp-share.ci} (\val{b19.s2.consensus.pooled.n-decided}) \\
    Minimal pairs (reranker lists) & \val{b19.s3.consensus.pooled.rcp-share} \val{b19.s3.consensus.pooled.rcp-share.ci} (\val{b19.s3.consensus.pooled.n-decided}) \\
    Per-query disagreements (reranker pairs) & \val{b19.s1.consensus.pooled.rcp-share} \val{b19.s1.consensus.pooled.rcp-share.ci} (\val{b19.s1.consensus.pooled.n-decided}) \\
    \quad both differences beyond their tie zones & \val{b19.mtie.consensus.pooled.both-decisive.rcp-share} \val{b19.mtie.consensus.pooled.both-decisive.rcp-share.ci} (\val{b19.mtie.consensus.pooled.both-decisive.n-decided}) \\
    \quad only the difference in \rcpndcg{} beyond & \val{b19.mtie.consensus.pooled.rcp-only.rcp-share} \val{b19.mtie.consensus.pooled.rcp-only.rcp-share.ci} (\val{b19.mtie.consensus.pooled.rcp-only.n-decided}) \\
    \quad only the difference in Count-nDCG beyond & \val{b19.mtie.consensus.pooled.count-only.rcp-share} \val{b19.mtie.consensus.pooled.count-only.rcp-share.ci} (\val{b19.mtie.consensus.pooled.count-only.n-decided}) \\
    Controls: common winner (reranker pairs) & \val{b19.s1ctl.consensus.pooled.rcp-share} \val{b19.s1ctl.consensus.pooled.rcp-share.ci} (\val{b19.s1ctl.consensus.pooled.n-decided}) \\
    \bottomrule
  \end{tabular}
\end{table}

\begin{table}[t]
\centering
\small
\caption{Share (\%) of decided queries on which each LLM judge prefers the ideal top-10 list of \rcpndcg{} over that of Count-nDCG, with 95\% CIs and the number of decided queries ($n$); 50\% means no preference. These are the values of \cref{fig:ranking}b. On BRIGHT, the lists can contain documents that BRIGHT's evaluation excludes (\cref{app:nano-bright-scope}).}
\label{tab:app-judge-suite}
\begin{tabular}{@{}lcccc@{}}
\toprule
Judge & NanoBEIR & BRIGHT & ViDoRe v3 & TREC-DL \\
\midrule
\model{GLM-5.3-flash} & \val{b19.s2.glm-5-3-flash.nanomteb.rcp-share} \val{b19.s2.glm-5-3-flash.nanomteb.rcp-share.ci} & \val{b19.s2.glm-5-3-flash.bright.rcp-share} \val{b19.s2.glm-5-3-flash.bright.rcp-share.ci} & \val{b19.s2.glm-5-3-flash.vidore.rcp-share} \val{b19.s2.glm-5-3-flash.vidore.rcp-share.ci} & \val{b19.s2.glm-5-3-flash.trecdl.rcp-share} \val{b19.s2.glm-5-3-flash.trecdl.rcp-share.ci} \\
 & $n$\,=\,\val{b19.s2.glm-5-3-flash.nanomteb.n-decided} & $n$\,=\,\val{b19.s2.glm-5-3-flash.bright.n-decided} & $n$\,=\,\val{b19.s2.glm-5-3-flash.vidore.n-decided} & $n$\,=\,\val{b19.s2.glm-5-3-flash.trecdl.n-decided} \\
\model{DeepSeek-4.1-flash} & \val{b19.s2.deepseek-4-1-flash.nanomteb.rcp-share} \val{b19.s2.deepseek-4-1-flash.nanomteb.rcp-share.ci} & \val{b19.s2.deepseek-4-1-flash.bright.rcp-share} \val{b19.s2.deepseek-4-1-flash.bright.rcp-share.ci} & \val{b19.s2.deepseek-4-1-flash.vidore.rcp-share} \val{b19.s2.deepseek-4-1-flash.vidore.rcp-share.ci} & \val{b19.s2.deepseek-4-1-flash.trecdl.rcp-share} \val{b19.s2.deepseek-4-1-flash.trecdl.rcp-share.ci} \\
 & $n$\,=\,\val{b19.s2.deepseek-4-1-flash.nanomteb.n-decided} & $n$\,=\,\val{b19.s2.deepseek-4-1-flash.bright.n-decided} & $n$\,=\,\val{b19.s2.deepseek-4-1-flash.vidore.n-decided} & $n$\,=\,\val{b19.s2.deepseek-4-1-flash.trecdl.n-decided} \\
\model{Kimi K3} & \val{b19.s2.kimi-k3.nanomteb.rcp-share} \val{b19.s2.kimi-k3.nanomteb.rcp-share.ci} & \val{b19.s2.kimi-k3.bright.rcp-share} \val{b19.s2.kimi-k3.bright.rcp-share.ci} & \val{b19.s2.kimi-k3.vidore.rcp-share} \val{b19.s2.kimi-k3.vidore.rcp-share.ci} & \val{b19.s2.kimi-k3.trecdl.rcp-share} \val{b19.s2.kimi-k3.trecdl.rcp-share.ci} \\
 & $n$\,=\,\val{b19.s2.kimi-k3.nanomteb.n-decided} & $n$\,=\,\val{b19.s2.kimi-k3.bright.n-decided} & $n$\,=\,\val{b19.s2.kimi-k3.vidore.n-decided} & $n$\,=\,\val{b19.s2.kimi-k3.trecdl.n-decided} \\
\bottomrule
\end{tabular}
\end{table}

\paragraph{Tie resolution.}
We take same-query document pairs that Count-nDCG ties exactly and that a reference orders, keeping pairs that can change a top-10 list. The tournament orders \val{b8.nist.qwen.pooled.tielev.acc.tournament} of these pairs like NIST, \val{b8.human.docs.all.tielev.acc.tournament} like the human panel, and \val{b8.vidore.tielev.acc.tournament} like the ViDoRe v3 qrels. All three accuracies are well above one half (\cref{tab:app-ties}).

\begin{table}[t]
  \centering
  \footnotesize
  \caption{Tie resolution: accuracy on same-query document pairs that Count ties exactly and a reference orders, restricted to pairs that can change a top-10 list. Ties count one half, chance 0.5. The NIST row uses the TREC-DL run of the named judge, the other rows \model{Qwen3.5-397B}. Brackets are 95\% query-cluster bootstrap intervals.}
  \label{tab:app-ties}
  \begin{tabular}{@{}lrc@{}}
    \toprule
    Reference & Pairs & Tournament \\
    \midrule
    NIST (\model{Qwen3.6-27B}) & \val{b8.nist.qwen.pooled.tielev.n} & \val{b8.nist.qwen.pooled.tielev.acc.tournament} [\val{b8.nist.qwen.pooled.tielev.acc.tournament.lo}, \val{b8.nist.qwen.pooled.tielev.acc.tournament.hi}] \\
    Human panel & \val{b8.human.docs.all.tielev.n} & \val{b8.human.docs.all.tielev.acc.tournament} [\val{b8.human.docs.all.tielev.acc.tournament.lo}, \val{b8.human.docs.all.tielev.acc.tournament.hi}] \\
    ViDoRe v3 qrels & \val{b8.vidore.tielev.n} & \val{b8.vidore.tielev.acc.tournament} [\val{b8.vidore.tielev.acc.tournament.lo}, \val{b8.vidore.tielev.acc.tournament.hi}] \\
    \bottomrule
  \end{tabular}
\end{table}

\paragraph{DL19 references.}
\citet{parry2025reannotation} re-assessed the \val{a2.setup.dl19.queries} queries of DL19, and we use their grades for analysis only (\cref{app:assets}). The two re-assessors graded \val{b16.nl.1a.cover.rcp}\% of the slots of \rcp{}'s ideal top 10 and \val{b16.nl.1a.cover.count-id}\% of those of Count's. An ungraded passage counts as 0. Under nDCG@10 on its own grades, NIST alone prefers neither ideal list. The consensus, the mean grade of NIST and both re-assessors, prefers the tournament's list on \val{b16.nl.1a.cons3.k10.count-id.winrate}\% of the queries where the two lists differ (95\% CI \val{b16.nl.1a.cons3.k10.count-id.winrate.ci}\%, \cref{tab:app-trecdl-parry}). On the \val{b9.nist.dl19.tie.all.n-pairs} DL19 pairs that NIST separates, Count ties, and both re-assessors graded, the tournament orders \val{b9.nist.dl19.tie.all.tournament.vs-nist} like NIST. On the pairs that a re-assessor separates, it agrees with that re-assessor on \val{b9.nist.dl19.tie.all.tournament.vs-parry1}, whereas NIST agrees with the re-assessors on \val{b9.nist.dl19.tie.all.nist-vs-parry-both}. Single NIST grades thus appear noisy for such fine distinctions.

\begin{table}[t]
  \centering
  \footnotesize
  \caption{DL19 references built from NIST and the re-assessments of \citet{parry2025reannotation} (\val{a2.setup.dl19.queries} queries). Ideal lists: share of queries, among those where the reference scores the two differently, where it scores the top 10 of \rcp{} above that of Count. Ties break by passage identifier. $\Delta$nDCG@10 is the mean difference in nDCG@10. Minimal pairs: share of reranker top-10 lists where the reference prefers the tournament's order of Count-tied passages. Rerankers gives the number (of 14) whose paired $t$-test over queries favors that order ($p<.05$). Brackets are 95\% CIs from a query bootstrap.}
  \label{tab:app-trecdl-parry}
  \setlength{\tabcolsep}{4pt}
  \begin{tabular}{@{}l ll cc@{}}
    \toprule
    & \multicolumn{2}{c}{Ideal lists} & \multicolumn{2}{c}{Minimal pairs} \\
    \cmidrule(lr){2-3}\cmidrule(lr){4-5}
    Reference & \rcp{} wins (\%) & $\Delta$nDCG@10 & tournament (\%) & rerankers \\
    \midrule
    NIST                       & \val{b16.nl.1a.nist.k10.count-id.winrate} \val{b16.nl.1a.nist.k10.count-id.winrate.ci} & \efv{b16.nl.1a.nist.k10.count-id.delta} & \val{b16.nl.mp.dl19.tie.nist.winrate-R} & \val{b16.nl.mp.dl19.tie.nist.sys-R} \\
    Re-assessor 1              & \val{b16.nl.1a.p1.k10.count-id.winrate} \val{b16.nl.1a.p1.k10.count-id.winrate.ci} & \efv{b16.nl.1a.p1.k10.count-id.delta} & \val{b16.nl.mp.dl19.tie.p1.winrate-R} & \val{b16.nl.mp.dl19.tie.p1.sys-R} \\
    Re-assessor 2              & \val{b16.nl.1a.p2.k10.count-id.winrate} \val{b16.nl.1a.p2.k10.count-id.winrate.ci} & \efv{b16.nl.1a.p2.k10.count-id.delta} & \val{b16.nl.mp.dl19.tie.p2.winrate-R} & \val{b16.nl.mp.dl19.tie.p2.sys-R} \\
    Consensus (mean of all three) & \val{b16.nl.1a.cons3.k10.count-id.winrate} \val{b16.nl.1a.cons3.k10.count-id.winrate.ci} & \efv{b16.nl.1a.cons3.k10.count-id.delta} & \val{b16.nl.mp.dl19.tie.cons3.winrate-R} & \val{b16.nl.mp.dl19.tie.cons3.sys-R} \\
    \bottomrule
  \end{tabular}
\end{table}

\paragraph{Minimal pairs.}
\label{tab:app-minimal}
For each reranker and query, we build two versions of its top 10 that differ only in the order of exactly Count-tied documents. One version follows the tournament's order and one reverses it. Unlike the minimal pairs of the LLM judges (\cref{app:robustness-judges}), these pairs use the reversed order, not the reranker's own order, as the alternative. Count-nDCG scores both alike. The qrels of each suite, NIST, and both re-assessors all prefer the tournament's version. Each reference's paired test over queries favors that order for all 14 rerankers of each suite, except on DL19 under NIST alone (\val{b16.nl.mp.dl19.tie.nist.sys-R} of 14) and under one re-assessor (\val{b16.nl.mp.dl19.tie.p2.sys-R} of 14).

\paragraph{Saturation and per-dataset leaderboards.}
Moving each reranker's top 10 toward Count's ideal list by greedy swaps raises Count's share of exact ties among the five best rerankers. On TREC-DL it rises from \val{b16.nl.sat.dl.pathC.s0.tie5C}\% to \val{b16.nl.sat.dl.pathC.s5.tie5C}\%. Elsewhere it reaches \val{b16.nl.sat.vidore.pathC.s5.tie5C}--\val{b16.nl.sat.nanomteb.pathC.s5.tie5C}\%, whereas \rcpndcg{} ties at most \val{b16.nl.sat.vidore.pathC.s5.tie5R}\%. Per dataset, the top-ranked reranker differs between the metrics in \val{b20.new.draft.app.hi.b14.all.top1diff} of the \val{draft.app.hi.b14.all.ndatasets} datasets.

\FloatBarrier
\section{External Preference Validation}
\label{app:external}

\subsection{NIST-decisive reranker pairs}
\label{app:trecdl-pairs}
\label{app:trecdl-leaderboards}

We compare each metric with the NIST grades of TREC-DL in the pool and top-150 panels of \cref{app:trecdl-protocol}. A pair is NIST-decisive when qrel-nDCG@10 with the NIST grades as linear gains separates it (\cref{sec:external}). \Cref{tab:app-trecdl-pairs} counts, for each year and panel, the reranker pairs that each metric separates and the NIST-decisive pairs on which it agrees with NIST. The share of separated pairs is a metric's sensitivity~\citep{sakai2006bootstrap,clarke2020compatibility} (\cref{app:notation-stats}). A metric without signal would separate about \val{draft.2.siglevel}\% of pairs. In the pool panel, qrel-nDCG@10 separates \val{a2.dec.pool.dl19.nistlin} of the \val{draft.5.dl.pairs} pairs in 2019 and \val{a2.dec.pool.dl20.nistlin} in 2020. Count-nDCG and \rcpndcg{} separate many more pairs in both years. Since Count-nDCG scores documents by the rubric alone, the rubric supplies most of the added resolution.

In the pool panel with linear gain, \rcpndcg{} agrees with NIST on every NIST-decisive pair in both years. The agreement holds under each of its three calibration fits, CAL-A, CAL-B, and CAL-C (\cref{app:trecdl-protocol}), of which \cref{tab:app-trecdl-pairs} shows the first two. Under exponential gain, in the top-150 panel, and in every other setting we test, the few NIST-decisive pairs that go the other way all involve \reranker{zerank-2} or \reranker{zerank-1} (\cref{app:trecdl-zerank}). When we test each pair once over the queries of both years, in the pool panel with linear gain, \val{a2.dec.pool.pooled.nistlin} pairs are NIST-decisive. \rcpndcg{} agrees with NIST on \val{draft.app.ef.dl.pooled.rcpa.agree.k} of them. \rcpndcg{} separates only \val{draft.app.ef.dl.pooled.rcpa.contra} pair in the direction opposite to NIST, \reranker{CTXL-RR-2B} against \reranker{zerank-2}. We change one setting at a time, from linear to exponential NIST gain or from CAL-A to CAL-B. Each change alone leaves the agreement of \rcpndcg{} at worst one pair short in 2019 and complete in 2020. When tied reranker scores share their mean gain, as on ViDoRe v3, \val{n17.ties.cala.dl19.decisive} pairs are NIST-decisive in 2019 and \val{n17.ties.cala.dl20.decisive} in 2020. With either TREC-DL judge, \rcpndcg{} favors NIST's winner on all of them.

Over the 14 rerankers of the pool panel, the leaderboards of \rcpndcg{} and NIST agree weakly in 2019 and moderately in 2020. The Kendall correlations are \val{a2.kendall.pool.dl19.rcpa.nistlin.n14} and \val{a2.kendall.pool.dl20.rcpa.nistlin.n14}, where 1 means identical orders and 0 means unrelated orders. With the four first-stage runs added, the correlations of \rcpndcg{} with NIST are \val{a2.kendall.pool.dl19.rcpa.nistlin.n18} and \val{draft.app.ef.dl20.kendall18}. On DL19, NIST separates none of the ten pairs among the five rerankers that \rcpndcg{} ranks highest. The re-assessment grades of \citet{parry2025reannotation} (\cref{app:trecdl-audit}) separate none either. No human reference that we use thus orders the five leading rerankers on DL19.

\begin{table}[t]
  \centering
  \footnotesize
  \caption{Reranker pairs on TREC-DL (14 rerankers, \val{draft.5.dl.pairs} pairs per year). Upper block: pairs that each metric separates by a paired $t$-test over the year's queries ($p<.05$). Lower block: NIST-decisive pairs on which a metric's mean difference has the sign of NIST's under linear gain. \rcpndcg{} uses \model{Qwen3.6-27B} and the calibration named in each row (\cref{app:trecdl-protocol}). Under exponential gain and in the top-150 panel, every disagreement with NIST involves \reranker{zerank-2}.}
  \label{tab:app-trecdl-pairs}
  \setlength{\tabcolsep}{6pt}
  \begin{tabular}{@{}l cc cc@{}}
    \toprule
    & \multicolumn{2}{c}{DL19} & \multicolumn{2}{c}{DL20} \\
    \cmidrule(lr){2-3}\cmidrule(l){4-5}
    & pool & top-150 & pool & top-150 \\
    \midrule
    \multicolumn{5}{@{}l}{\emph{Pairs separated}} \\
    qrel-nDCG (NIST), linear gain      & \val{a2.dec.pool.dl19.nistlin} & \val{a2.dec.top150.dl19.nistlin} & \val{a2.dec.pool.dl20.nistlin} & \val{a2.dec.top150.dl20.nistlin} \\
    qrel-nDCG (NIST), exponential gain & \val{a2.dec.pool.dl19.nistexp} & \val{a2.dec.top150.dl19.nistexp} & \val{a2.dec.pool.dl20.nistexp} & \val{a2.dec.top150.dl20.nistexp} \\
    Count-nDCG                         & \val{a2.dec.pool.dl19.count}   & \val{a2.dec.top150.dl19.count}   & \val{a2.dec.pool.dl20.count}   & \val{a2.dec.top150.dl20.count} \\
    \rcpndcg{}, CAL-A                  & \val{a2.dec.pool.dl19.rcpa}    & \val{a2.dec.top150.dl19.rcpa}    & \val{a2.dec.pool.dl20.rcpa}    & \val{a2.dec.top150.dl20.rcpa} \\
    \rcpndcg{}, CAL-B                  & \val{a2.dec.pool.dl19.rcpb}    & \val{a2.dec.top150.dl19.rcpb}    & \val{a2.dec.pool.dl20.rcpb}    & \val{a2.dec.top150.dl20.rcpb} \\
    \midrule
    \multicolumn{5}{@{}l}{\emph{Same winner as NIST, linear gain}} \\
    \rcpndcg{}, CAL-A & \val{a2.agree.pool.dl19.rcpa.atnistlin}  & \val{a2.agree.top150.dl19.rcpa.atnistlin}  & \val{a2.agree.pool.dl20.rcpa.atnistlin}  & \val{a2.agree.top150.dl20.rcpa.atnistlin} \\
    \rcpndcg{}, CAL-B & \val{a2.agree.pool.dl19.rcpb.atnistlin}  & \val{a2.agree.top150.dl19.rcpb.atnistlin}  & \val{a2.agree.pool.dl20.rcpb.atnistlin}  & \val{a2.agree.top150.dl20.rcpb.atnistlin} \\
    \bottomrule
  \end{tabular}
\end{table}

\subsection{NIST positives that \texorpdfstring{\rcp{}}{RCP} scores lowest}\label{app:trecdl-audit}
We ask whether the NIST positives that \rcp{} scores lowest point to errors of \rcp{} or of the NIST grades. In the spirit of the label audit of \citet{land2026auditing}, we examine the NIST positives (grade 2 or 3) whose calibrated score lies in the bottom quartile of their year's positives. These passages are \val{n1.counts.pooled.suspects} of the \val{n1.counts.pooled.g2} positives of both years. We compare each with the other positives of its query.

All 14 rerankers rank these passages below the other positives of their query. The within-query AUC is the probability that a reranker ranks another positive of the query above such a passage, where 0.5 means chance. Brackets give 95\% CIs from a bootstrap over queries. The mean AUC over the 14 rerankers is \val{n1.rr.auc.pooled}, and every reranker's interval lies above 0.5. Rerankers still rank them above most passages that NIST grades 0 or 1. Against those passages, the same AUC is \val{n1.rr.auc01.pooled}, below the chance level of 0.5. Stronger rerankers, measured by nDCG@10 on the NIST grades, push the passages further down. Over the 14 rerankers, the Spearman correlation between this nDCG@10 and a reranker's AUC is \val{n1.rr.rho.pooled}. Without the zerank models, which learn from LLM preferences, it is \val{n1.nz.rho.pooled}. The first-stage retrievers also rank these passages below the other positives, but less consistently (AUC \val{n1.fs.auc.pooled}).

On DL19, the re-assessment of \citet{parry2025reannotation} grades all of these passages, blind to the NIST grades. Its re-assessors rate \val{n1.parry.ge2_pmean.sus}\% of them highly relevant or better, against \val{n1.parry.ge2_pmean.ref}\% of the other positives. The mean re-assessment grades, on the scale from 0 to 3, are \val{n1.parry.mean.pmean.sus} for these passages and \val{n1.parry.mean.pmean.ref} for the other positives. Among NIST grade-3 passages alone, re-assessors rate \val{n1.parry.g3.ge2_pmean.sus}\% of the \val{n1.parry.g3.ge2_pmean.sus.n} low-scored ones highly relevant or better, against \val{n1.parry.g3.ge2_pmean.ref}\% of the other \val{n1.parry.g3.ge2_pmean.ref.n}. The 2019 low-gain tail at grade 3 in \cref{fig:extcal}c thus consists largely of passages that the re-assessors also rate low.

The re-assessment also tests the calibrated scores beyond the low-scored positives. Within-query concordance is the share of same-query passage pairs with different grades that a score orders like the grades, where one half is chance. On all passages that both references grade, the concordance of $\dpar{\tilde{\theta}}$ is \val{n2.doc.conc.theta.nist} with NIST and \val{n2.doc.conc.theta.pmean} with the re-assessors. The re-assessors' own concordance with NIST is \val{n2.doc.conc.nist_pmean}. On DL19, nDCG@10 on the re-assessment grades separates \val{n2.sys.dec.pmean} of the \val{draft.5.dl.pairs} reranker pairs. \rcpndcg{}@10 picks the winner of each. The rerankers may share preferences with our judge. The quartile cut also selects candidate label errors rather than estimating an error rate.

\subsection{Human verdicts and controls}
\label{app:human-verdicts}
\label{app:human-controls}

This subsection gives the full results and controls behind the human verdicts of \cref{sec:external}. \Cref{tab:app-human-verdicts} collects the results of the verdict layer (\cref{app:human}). The main measure and its slices count decided contests (\cref{app:human-instrument}). Because a metric tie counts only for the other metric, chance for the main measure is not one half. Suppose each verdict were a coin flip. Decided disagreements then favor either metric with probability one half. A contest in which only one metric names a winner counts, for that metric, only when the verdict names that winner, which happens with probability one half. The study has \val{a1a.desc.decided-flips} disagreements with a verdict, \val{a1a.desc.decided-ndcg-tie} contests with a verdict in which only \rcpndcg{} names a winner, and \val{a1a.desc.decided-theta-tie} in which only qrel-nDCG does. The expected share is then $(\val{a1a.desc.decided-flips} + \val{a1a.desc.decided-ndcg-tie}) / (2 \cdot \val{a1a.desc.decided-flips} + \val{a1a.desc.decided-ndcg-tie} + \val{a1a.desc.decided-theta-tie}) = \val{draft.ff.chance.exact}\%$. On disagreements alone, chance is one half.

\begin{table}[!htbp]
\footnotesize
\caption{Verdict layer. Rates are percentages of decided contests won by \rcpndcg{}. Disagreements with a verdict are always decided. Chance is about \val{draft.ff.chance}\% for the main measure and one half on disagreements. CIs are clustered by query for rates and come from a delete-one-query jackknife for margin AUCs.}
\label{tab:app-human-verdicts}
\setlength{\tabcolsep}{3pt}
\begin{tabular*}{\textwidth}{@{}l@{\extracolsep{\fill}}rcl@{}}
\toprule
Statistic & Est. & 95\% CI & $k/n$ \\
\midrule
Main measure & \val{a1a.primary} & [\val{a1a.primary.lo}, \val{a1a.primary.hi}] & \val{a1a.primary.k}/\val{a1a.primary.n} \\
\quad NanoBEIR & \val{nr.a2.main.nanomteb} & [\val{nr.a2.main.nanomteb.lo}, \val{nr.a2.main.nanomteb.hi}] & \val{nr.a2.main.nanomteb.k}/\val{nr.a2.main.nanomteb.n} \\
\quad BRIGHT & \val{nr.a2.main.bright} & [\val{nr.a2.main.bright.lo}, \val{nr.a2.main.bright.hi}] & \val{nr.a2.main.bright.k}/\val{nr.a2.main.bright.n} \\
\quad ViDoRe v3 & \val{nr.a2.main.vidore} & [\val{nr.a2.main.vidore.lo}, \val{nr.a2.main.vidore.hi}] & \val{nr.a2.main.vidore.k}/\val{nr.a2.main.vidore.n} \\
\quad without zerank & \val{nr.a2.main.nn} & [\val{nr.a2.main.nn.lo}, \val{nr.a2.main.nn.hi}] & \val{nr.a2.main.nn.k}/\val{nr.a2.main.nn.n} \\
\quad one annotator left out (range) & -- & \val{nr.a2.main.loao.min}--\val{nr.a2.main.loao.max} & \val{nr.a2.main.loao.runs} runs \\
\quad without contests that qrel-nDCG cannot score & \val{nr.a2.main.ndcg-seen} & [\val{nr.a2.main.ndcg-seen.lo}, \val{nr.a2.main.ndcg-seen.hi}] & \val{nr.a2.main.ndcg-seen.k}/\val{nr.a2.main.ndcg-seen.n} \\
\midrule
Disagreements only & \val{a1a.h2h} & [\val{a1a.h2h.lo}, \val{a1a.h2h.hi}] & \val{a1a.h2h.k}/\val{a1a.h2h.n} \\
\quad without zerank & \val{a1a.h2h.nn} & [\val{a1a.h2h.nn.lo}, \val{a1a.h2h.nn.hi}] & \val{a1a.h2h.nn.k}/\val{a1a.h2h.nn.n} \\
\quad one annotator left out (range) & -- & \val{a1a.loao.h2h.min}--\val{a1a.loao.h2h.max} & \val{a1a.loao.h2h.min.n} runs \\
Disagreements, both margins $\geq$ \val{draft.tab.verdicts.decisive-margin} & \val{a1a.decisive} & [\val{a1a.decisive.lo}, \val{a1a.decisive.hi}] & \val{a1a.decisive.k}/\val{a1a.decisive.n} \\
Disagreements, a margin $<$ \val{draft.tab.verdicts.decisive-margin} & \val{a1a.marginal} & [\val{a1a.marginal.lo}, \val{a1a.marginal.hi}] & \val{a1a.marginal.k}/\val{a1a.marginal.n} \\
\midrule
Margin AUC: \rcpndcg{} & \val{a1a.auc.theta} & [\val{a1a.auc.theta.lo}, \val{a1a.auc.theta.hi}] & \val{a1a.auc.theta.n} picks \\
\quad qrel-nDCG & \val{a1a.auc.ndcg} & [\val{a1a.auc.ndcg.lo}, \val{a1a.auc.ndcg.hi}] & \val{a1a.auc.ndcg.n} picks \\
\quad \rcpndcg{} vs.\ qrel-nDCG, NanoBEIR & \multicolumn{2}{c}{\val{a1a.auc.theta.nanomteb} vs.\ \val{a1a.auc.ndcg.nanomteb}} & \val{a1a.auc.theta.nanomteb.n} picks \\
\quad \rcpndcg{} vs.\ qrel-nDCG, BRIGHT & \multicolumn{2}{c}{\val{a1a.auc.theta.bright} vs.\ \val{a1a.auc.ndcg.bright}} & \val{a1a.auc.theta.bright.n} picks \\
\quad \rcpndcg{} vs.\ qrel-nDCG, ViDoRe v3 & \multicolumn{2}{c}{\val{a1a.auc.theta.vidore} vs.\ \val{a1a.auc.ndcg.vidore}} & \val{a1a.auc.theta.vidore.n} picks \\
\quad \rcpndcg{} vs.\ qrel-nDCG, without zerank & \multicolumn{2}{c}{\val{a1a.auc.theta.nn} vs.\ \val{a1a.auc.ndcg.nn}} & \val{a1a.auc.theta.nn.n} picks \\
\bottomrule
\end{tabular*}
\end{table}

The main measure's 95\% CI lies above chance in each suite and without zerank. A metric built on an LLM judge could favor the three zerank rerankers, which learn from LLM preferences~\citep{pipitone2025zelo}. Without them, the main measure is \val{nr.a2.main.nn}\% (\val{nr.a2.main.nn.k} of \val{nr.a2.main.nn.n}). When we leave out each annotator in turn and recompute the main measure, it stays between \val{nr.a2.main.loao.min}\% and \val{nr.a2.main.loao.max}\% over \val{nr.a2.main.loao.runs} runs, inside its CI. In \val{b21.human.arguana.contests-own} NanoArguAna contests, the displayed documents include the query's own argument, which the NanoBEIR leaderboards leave out (\cref{app:nano-bright}). The contests and their metric margins keep it as displayed. Annotators graded it \val{b21.human.arguana.own-reviews} times with a mean grade of \val{b21.human.arguana.own-grade-mean} and never Useful or better. This low grade matches the low \rcp{} gain of the argument better than the higher credit that Count-nDCG gives it. Without these contests, the main measure is \val{nr.a2.b21.main-wo-own}\% (\val{nr.a2.b21.main-wo-own.k} of \val{nr.a2.b21.main-wo-own.n}).

Two secondary measures also favor \rcpndcg{}. Where a metric names a winner, the verdict agrees with \rcpndcg{} in \val{a1a.hit-theta}\% (\val{a1a.hit-theta.k} of \val{a1a.hit-theta.n}) of contests and with qrel-nDCG in \val{a1a.hit-ndcg}\%. Chance is one half. Where both margins of a disagreement are also at least \val{draft.tab.verdicts.decisive-margin}, verdicts side with \rcpndcg{} in \val{a1a.decisive}\% of contests.

The margin AUC asks whether a metric's margin, its score difference between the two lists, predicts each annotator's pick. One half is chance, and 1 is perfect. The margin AUC never classifies a winner. The tie asymmetry of the main measure therefore does not enter. \rcpndcg{} reaches \val{a1a.auc.theta}, against \val{a1a.auc.ndcg} for qrel-nDCG, and leads in every suite and without zerank. On disagreements alone, where the two margins point in opposite directions by construction, \rcpndcg{} reaches \val{a1a.auc.theta.flips} and qrel-nDCG \val{a1a.auc.ndcg.flips}.

If a metric measures what annotators see, its winner should match the verdict more often when its margin is large. \Cref{fig:dose}a bins each metric's absolute margins into five bands. Since bands can differ in how many disagreements they hold, we also bin the margins into fifths over the \val{a1a.qstd.n} contests in which both metrics name a winner. Each fifth is reweighted to the study's share of disagreements among all contests with a verdict (\val{a1a.qstd.share-disagree}\%). From the lowest to the highest fifth, the reweighted rate of \rcpndcg{} rises by \val{a1a.qstd.rcp.d51}\,pp (95\% CI [\val{a1a.qstd.rcp.d51.lo}, \val{a1a.qstd.rcp.d51.hi}]\,pp). The reweighted rate of qrel-nDCG rises by \val{a1a.qstd.ndcg.d51}\,pp (95\% CI [\val{a1a.qstd.ndcg.d51.lo}, \val{a1a.qstd.ndcg.d51.hi}]\,pp). Both intervals come from a bootstrap over queries.

In the lowest band, below \val{draft.5.margin.decisive}, the verdict matches the winner of \rcpndcg{} in \val{a1a.band.rcp.1}\% of \val{a1a.band.rcp.1.n} contests (95\% CI [\val{a1a.band.rcp.1.lo}, \val{a1a.band.rcp.1.hi}]\%). This share is at chance level. The marginal row of \cref{tab:app-human-verdicts} instead counts only the decided disagreements in which both metrics name a winner. Where qrel-nDCG names the same list in this band, \rcpndcg{} matches \val{nr.a2.neartie.same.k} of \val{nr.a2.neartie.same.n} verdicts. Where qrel-nDCG names the other list, \rcpndcg{} matches only \val{nr.a2.neartie.other.k} of \val{nr.a2.neartie.other.n} verdicts. The margin of qrel-nDCG in these contests is large, with a median of \val{nr.a2.neartie.other.median-abs-dndcg}. Above \val{draft.5.margin.decisive}, \rcpndcg{} matches \val{a1a.sweep.p02.above}\% of verdicts. We therefore read differences of \rcpndcg{} below \val{draft.5.margin.decisive} between two lists of one query as ties.

The controls test for artifacts of the instrument and of the annotators. We replay every verdict with a metric's own gains in place of the annotators' grades. The replayed verdicts match that metric's winner in \val{a1a.ctrl.ceiling.qrel-perfect}\% of contests for qrel-nDCG and \val{a1a.ctrl.ceiling.rcp-perfect}\% for \rcpndcg{}. The instrument thus handicaps \rcpndcg{}. A query-blind baseline tests whether verdicts follow overall reranker strength alone. The baseline fits a Bradley--Terry model of reranker strength to the verdicts of all other queries. This model predicts the held-out query's verdicts and matches \val{a1a.ctrl.query-blind-loqo}\% of them. Split pairs meet in several contests and win on some queries and lose on others. Always picking the pair's more frequent winner matches \val{a1a.ctrl.withinpair.split.constant}\% of verdicts, whereas \rcpndcg{} matches \val{a1a.ctrl.withinpair.split.rcp}\% and qrel-nDCG \val{a1a.ctrl.withinpair.split.ndcg}\%. Annotators prefer the longer displayed list in \val{a1a.len.human-prefers-longer}\% of contests. We therefore fit a logistic regression that predicts each annotator's pick from both standardized margins and the log ratio of the two lists' lengths. The length weight is $\val{a1a.len.logit.b-loglen}$ (95\% CI [$\val{a1a.len.logit.b-loglen.lo}$, $\val{a1a.len.logit.b-loglen.hi}$]), and the two margin weights stay at \val{a1a.len.logit.b-theta} for \rcpndcg{} and \val{a1a.len.logit.b-ndcg} for qrel-nDCG. List position cannot drive a verdict, because annotators saw the documents in random order and never saw the lists or their labels. We recompute the verdicts under eight other gain maps. \rcpndcg{} leads qrel-nDCG under every map, and the main measure under the study's map lies in the middle of their range. Under the four binary maps, which count relevance from one question upward, the main measure ranges from \val{nr.a2.gainmap.binary-ge4.main}\% (from Complete) to \val{nr.a2.gainmap.binary-ge2.main}\% (from Useful).

\FloatBarrier

\subsection{LLM judge panel and Study 2}\label{app:meta}

This subsection describes the three LLM judges of Study~1 (\cref{app:meta-study1}) and Study~2 and reports Study~2, which compares the reranker lists that qrel-nDCG and \rcpndcg{} prefer.

\paragraph{Judges and verdicts.}
Studies 1 and 2 use the three judges of \cref{app:robustness-judges}, \model{GLM-5.3-flash}, \model{DeepSeek-4.1-flash}, and \model{Kimi K3}. The three come from model families other than those of our judges. Each judge compares two candidates without knowing their source and never produces a relevance gain. Each case shows a query and two candidates. Except in the controls, one is favored by \rcp{} and the other by the qrels. Every case appears twice, with the labels and the display order swapped. A judge decides a case when both presentations name the same candidate. Under the majority rule, a case counts for a candidate when at least two of the three judges decide it for that candidate. We also report the share among cases that all three judges decide for the same candidate. Under either rule, a share of 50\% means no preference. \Cref{app:prompts-meta} prints the prompts.

\paragraph{Study 2.}\label{app:meta-study2}
Study 2 asks which of two rerankers' top-10 lists a judge prefers when qrel-nDCG@10 and \rcpndcg{}@10 name different winners. On each NanoBEIR dataset, we pair the four best rerankers under qrel-nDCG with the four best under \rcpndcg{}. A sign-flip is a query and one of these reranker pairs for which the two metrics name different winners, each by a margin above \val{draft.app.d.tie-margin}. We sampled \val{n10.s2.nano.signflips} sign-flips and \val{n10.s2.nano.controls} controls, on which both metrics name the same winner.

By majority, the judges prefer the list of \rcpndcg{} in \val{n10.s2.nano.maj}\% of decided sign-flips (\val{n10.s2.nano.maj.k} of \val{n10.s2.nano.maj.n}), with a 95\% CI of [\val{n10.s2.nano.maj.lo}, \val{n10.s2.nano.maj.hi}]\% by a query-cluster bootstrap. On the \val{n10.s2.nano.unan.n} sign-flips that all three judges decide the same way, the share is \val{n10.s2.nano.unan}\%. Each judge alone also prefers this list, with shares from \val{n10.s2.nano.deepseek}\% to \val{n10.s2.nano.glm}\%. The majority favors it on \val{n10.s2.nano.ds.maj-favors-rcp} of the \val{n10.s2.nano.ds.count} datasets, and HotpotQA is the exception (\cref{app:meta-hotpotqa}). On the controls, the judges pick the list that both metrics prefer in \val{n10.s2.nano.ctl.maj}\% of decided cases (\val{n10.s2.nano.ctl.maj.k} of \val{n10.s2.nano.ctl.maj.n}).

Neither display order nor list length explains this preference. A decided case needs the same verdict in both orders. In single presentations, each judge sides with \rcpndcg{} in either display order. \model{GLM-5.3-flash}, for example, sides with \rcpndcg{} in \val{n10.s2.nano.pos.glm.rcp-first}\% of its non-tie verdicts when that list comes first and in \val{n10.s2.nano.pos.glm.rcp-second}\% when it comes second. The list of \rcpndcg{} has more characters in \val{n10.s2.nano.len.rcp-longer-share}\% of sign-flips. The judges pick it in \val{n10.s2.nano.len.maj.rcp-shorter}\% of decided cases where it is shorter and in \val{n10.s2.nano.len.maj.rcp-longer}\% where longer.

On BRIGHT, the share for the list of \rcpndcg{} is \val{n10.s2.bright.all.maj}\% over all twelve tasks (\val{n10.s2.bright.all.maj.k} of \val{n10.s2.bright.all.maj.n}, 95\% CI [\val{n10.s2.bright.all.maj.lo}, \val{n10.s2.bright.all.maj.hi}]\%), and \val{n10.s2.bright.all.unan}\% when all three judges must agree. Unlike Study~1, Study~2 keeps its original lists. On NanoArguAna, both lists of 17 of the 800 sign-flips show the query's own argument (\cref{app:nano-bright}). On AoPS, LeetCode, and TheoremQA Questions, these lists still contain BRIGHT's excluded documents (\cref{app:nano-bright-scope}). Both lists show the query's own problem in \val{n10.s2.bright.aff.self-both} of the \val{n10.s2.bright.aff.signflips} sign-flips on these tasks, and only the list of \rcpndcg{} shows it in \val{n10.s2.bright.aff.self-only-rcp}. On these three tasks, the share for the list of \rcpndcg{} is \val{n10.s2.bright.aff.maj}\%, against \val{n10.s2.bright.unaff.maj}\% on the other nine. On the BRIGHT controls, the judges pick the list that both metrics prefer in \val{n10.s2.bright.all.ctl.maj}\% of decided cases.

\paragraph{Judgments by two of the authors.}
The authors who judged Study~1 (\cref{app:meta-study1}) also judged a subset of \val{draft.app.jm.s2.human.signflips} sign-flips and \val{draft.app.jm.s2.human.controls} controls as in Study~1. With their judgments pooled, \val{a56.old.s2.human.rate}\% of \val{a56.old.s2.human.rate.n} non-tie judgments on sign-flips prefer the list of \rcpndcg{} (95\% Wilson CI [\val{a56.old.s2.human.rate.lo}, \val{a56.old.s2.human.rate.hi}]\%). List length confounds this result. Their judgments favor the list of \rcpndcg{} in \val{a56.old.s2.len.human.when-theta-longer}\% of the non-tie cases where its list is the longer one. Where it is the shorter one, they favor it in \val{a56.old.s2.len.human.when-theta-shorter}\% (95\% CI [\val{a56.old.s2.len.human.when-theta-shorter.lo}, \val{a56.old.s2.len.human.when-theta-shorter.hi}]\%), an interval that includes one half. Unlike these judgments by two of the authors, the external human study of \cref{sec:external,app:human} uses paid annotators who are not among the authors.

\subsection{External LLM judges}\label{app:robustness-judges}\label{app:robustness-judges-results}\label{app:robustness-judges-margins}

\paragraph{Setup.}
To test the tournament's order of Count-tied documents (\cref{app:robustness-count}) directly, we asked LLM judges from model families other than those of our judges to compare two ranked lists. Each case contrasts the \rcp{} side, the list that the tournament's order favors, with the Count side, the list that Count's order favors. The comparison concerns the order within a query and tests neither the calibration across queries nor the leaderboards.

We built four sets of cases, namely per-query disagreements, controls, ideal lists, and minimal pairs, on NanoBEIR, BRIGHT, ViDoRe v3, and TREC-DL, with the paper's 14 rerankers. A per-query disagreement is one query and two rerankers for which Count-nDCG@10 and \rcpndcg{}@10 name opposite winners. We sample these cases in equal thirds by the difference in \rcpndcg{}@10. Controls are cases in which both metrics name the same winner, ten per third and suite. An ideal-list case sets \rcp{}'s ideal top 10 of a query against Count's ideal top 10, whose ties we break at random. A minimal pair sets a reranker's top 10 against the same list in which each group of exactly Count-tied documents takes the tournament's order on its own positions. Count-nDCG@10 scores the two lists alike. Unlike the minimal pairs of \cref{tab:app-minimal}, which reverse the tournament's order, this contrast keeps the reranker's own order as the alternative. The four sets hold \val{nr.a5.b19.cases} cases.

The judges are \model{GLM-5.3-flash}, \model{DeepSeek-4.1-flash}, and \model{Kimi K3}, and \cref{app:prompts-external} gives the prompt. The label swap, the decision rule, and the majority rule are those of \cref{app:meta}. Our measure is the share of these decided cases that favor the \rcp{} side, pooled over suites, with 95\% bootstrap intervals over queries per suite. For controls, the share counts the list that both metrics prefer. A share of 50\% means no preference.

\paragraph{Results.}
\Cref{tab:app-b19-judges} gives the shares pooled over suites, and \cref{tab:app-judge-suite} the per-suite ideal-list shares. The majority favors the \rcp{} side in all three sets other than the controls. The share is \val{b19.s1.consensus.pooled.rcp-share}\% on per-query disagreements (95\% CI \val{b19.s1.consensus.pooled.rcp-share.ci}\%), \val{b19.s2.consensus.pooled.rcp-share}\% on ideal lists, and \val{b19.s3.consensus.pooled.rcp-share}\% on minimal pairs. Each judge alone also favors the \rcp{} side in every set. Each judge's pooled 95\% interval lies above 50\% in every set. No suite favors the Count side in any set.

\paragraph{Margin split and restricted sets.}
Opposite winners can also come from a near tie. We therefore split the per-query disagreements by tie zones. Each metric receives a tie zone, a margin up to which we read a difference between the two lists as a tie. For \rcpndcg{}, the zone is \val{b19.mtie.tiezone.rcp}, the margin of \rcpndcg{}@5 below which the human verdicts are near chance (\cref{sec:external}). For Count-nDCG, we choose the zone that holds the same share of all per-query reranker pairs as the zone of \rcpndcg{}. This share is \val{b19.mtie.tiezone.nanomteb.percentile}\% on NanoBEIR, \val{b19.mtie.tiezone.bright.percentile}\% on BRIGHT, \val{b19.mtie.tiezone.vidore.percentile}\% on ViDoRe v3, and \val{b19.mtie.tiezone.trecdl.percentile}\% on TREC-DL. Where both metrics are beyond their zones, the judges side with the winner of \rcpndcg{} in \val{b19.mtie.consensus.pooled.both-decisive.rcp-share}\% of decided cases (95\% CI \val{b19.mtie.consensus.pooled.both-decisive.rcp-share.ci}\%). Where only Count-nDCG is beyond its zone, they favor neither list, with \val{b19.mtie.consensus.pooled.count-only.rcp-share}\% (95\% CI \val{b19.mtie.consensus.pooled.count-only.rcp-share.ci}\%), as \rcpndcg{} itself is inside its tie zone there. The cells of this split are tied to the thirds of the sampling. The split is therefore not independent evidence. Dropping BRIGHT cases that show an excluded document and NanoArguAna cases that show the query's own argument changes the shares little. The majority's shares become \val{b19.restr.excl-arguana.s1.consensus.pooled.rcp-share}\% on per-query disagreements, \val{b19.restr.excl-arguana.s2.consensus.pooled.rcp-share}\% on ideal lists, and \val{b19.restr.excl-arguana.s3.consensus.pooled.rcp-share}\% on minimal pairs, each with a 95\% interval above 50\%.

\begin{table}[t]
  \centering
  \footnotesize
  \caption{External LLM judges, pooled over the four suites: share of decided cases that favor the \rcp{} side (\%), with the decided cases in parentheses. The majority column adds 95\% query-bootstrap intervals and the number of all cases. For controls, the share counts the list that both metrics prefer. 50\% means no preference.}
  \label{tab:app-b19-judges}
  \setlength{\tabcolsep}{2pt}
  \begin{tabular*}{\textwidth}{@{}l@{\extracolsep{\fill}}cccc@{}}
    \toprule
    Set & Majority & DeepSeek & GLM & Kimi \\
    \midrule
    Disagreements & \val{b19.s1.consensus.pooled.rcp-share} \val{b19.s1.consensus.pooled.rcp-share.ci} (\val{b19.s1.consensus.pooled.n-decided}/\val{b19.s1.consensus.pooled.n-cases}) & \val{b19.s1.deepseek-4-1-flash.pooled.rcp-share} (\val{b19.s1.deepseek-4-1-flash.pooled.n-decided}) & \val{b19.s1.glm-5-3-flash.pooled.rcp-share} (\val{b19.s1.glm-5-3-flash.pooled.n-decided}) & \val{b19.s1.kimi-k3.pooled.rcp-share} (\val{b19.s1.kimi-k3.pooled.n-decided}) \\
    \addlinespace
    Controls & \val{b19.s1ctl.consensus.pooled.rcp-share} \val{b19.s1ctl.consensus.pooled.rcp-share.ci} (\val{b19.s1ctl.consensus.pooled.n-decided}/\val{b19.s1ctl.consensus.pooled.n-cases}) & \val{b19.s1ctl.deepseek-4-1-flash.pooled.rcp-share} (\val{b19.s1ctl.deepseek-4-1-flash.pooled.n-decided}) & \val{b19.s1ctl.glm-5-3-flash.pooled.rcp-share} (\val{b19.s1ctl.glm-5-3-flash.pooled.n-decided}) & \val{b19.s1ctl.kimi-k3.pooled.rcp-share} (\val{b19.s1ctl.kimi-k3.pooled.n-decided}) \\
    \addlinespace
    Ideal lists & \val{b19.s2.consensus.pooled.rcp-share} \val{b19.s2.consensus.pooled.rcp-share.ci} (\val{b19.s2.consensus.pooled.n-decided}/\val{b19.s2.consensus.pooled.n-cases}) & \val{b19.s2.deepseek-4-1-flash.pooled.rcp-share} (\val{b19.s2.deepseek-4-1-flash.pooled.n-decided}) & \val{b19.s2.glm-5-3-flash.pooled.rcp-share} (\val{b19.s2.glm-5-3-flash.pooled.n-decided}) & \val{b19.s2.kimi-k3.pooled.rcp-share} (\val{b19.s2.kimi-k3.pooled.n-decided}) \\
    \addlinespace
    Minimal pairs & \val{b19.s3.consensus.pooled.rcp-share} \val{b19.s3.consensus.pooled.rcp-share.ci} (\val{b19.s3.consensus.pooled.n-decided}/\val{b19.s3.consensus.pooled.n-cases}) & \val{b19.s3.deepseek-4-1-flash.pooled.rcp-share} (\val{b19.s3.deepseek-4-1-flash.pooled.n-decided}) & \val{b19.s3.glm-5-3-flash.pooled.rcp-share} (\val{b19.s3.glm-5-3-flash.pooled.n-decided}) & \val{b19.s3.kimi-k3.pooled.rcp-share} (\val{b19.s3.kimi-k3.pooled.n-decided}) \\
    \bottomrule
  \end{tabular*}
\end{table}

\FloatBarrier

\FloatBarrier
\section{Theory}\label{app:theory}
This section proves the properties of the gain and of the calibration that \cref{sec:method} states. The last subsection relates \rcpndcg{} to its rubric-only special case, Count-nDCG.
Throughout, all discriminations are positive, $\cpar{\gamma_c} > 0$, and $P_{\cpar{c}}(\theta) = \sigma\bigl(\cpar{\gamma_c}(\theta - \cpar{\beta_c})\bigr)$.

\subsection{The gain, its slope, and order preservation}\label{app:theory-gain}\label{app:theory-orderscale}
IRT measures the precision of the criteria with the test information
\begin{equation}
  I(\theta) = \sum\nolimits_{\cpar{c}} \cpar{\gamma_c^2}\, P_{\cpar{c}}(\theta)\bigl(1 - P_{\cpar{c}}(\theta)\bigr),
  \label{eq:app-info}
\end{equation}
the Fisher information that one yes/no answer to every criterion carries about the ability~\citep{lord1980applications}. Higher test information means that the answers locate the ability more precisely.

\begin{lemma}[Properties of the gain]\label{lem:theory-gain}\label{prop:monotone-gain}\label{prop:gain-fisher}
  For any number of criteria $C \ge 1$, discriminations $\cpar{\gamma_c} > 0$, and difficulties $\cpar{\beta_c}$, the gain $g$ of \cref{eq:theta-gain} has the following properties.
  \begin{enumerate}[nosep,label=(\roman*)]
    \item $g$ is smooth, $0 < g(\theta) < 1$ for every $\theta$, $g(\theta) \to 0$ as $\theta \to -\infty$, and $g(\theta) \to 1$ as $\theta \to \infty$.
    \item Its slope is $g'(\theta) = I(\theta) / \sum_{\cpar{c}} \cpar{\gamma_c} > 0$. The gain $g$ is therefore strictly increasing. Under the constraint $\sum_{\cpar{c}} \cpar{\gamma_c} = C$, the slope is $I(\theta)/C$.
  \end{enumerate}
\end{lemma}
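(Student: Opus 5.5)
The plan is to work directly from the definition $g(\theta) = \sum_{\cpar{c}} \cpar{\gamma_c} P_{\cpar{c}}(\theta) / \Gamma$ with $\Gamma = \sum_{\cpar{c}} \cpar{\gamma_c} > 0$. This writes $g$ as a convex combination of the item characteristic curves $P_{\cpar{c}}$, with strictly positive weights $w_{\cpar{c}} = \cpar{\gamma_c}/\Gamma$ that sum to one. Both parts then follow from properties of the logistic function, transported through this finite convex combination. No earlier result of the paper is needed beyond the definitions in \cref{eq:theta-gain} and \cref{eq:app-info}.

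For (i), smoothness is immediate: each $P_{\cpar{c}}$ is the composition of the smooth $\sigma$ with an affine map, and $g$ is a finite linear combination of these. The strict bounds use $0 < \sigma(x) < 1$ for every real $x$. A convex combination of numbers in $(0,1)$ with positive weights lies in $(0,1)$. For the limits, $\cpar{\gamma_c} > 0$ gives $\cpar{\gamma_c}(\theta - \cpar{\beta_c}) \to \pm\infty$ as $\theta \to \pm\infty$, so each $P_{\cpar{c}} \to 0$ or $1$. Since the sum is finite, the limits pass through it and give $0$ and $\sum_{\cpar{c}} w_{\cpar{c}} = 1$.

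For (ii), I would use $\sigma' = \sigma(1-\sigma)$ and the chain rule to get $P_{\cpar{c}}'(\theta) = \cpar{\gamma_c} P_{\cpar{c}}(\theta)\bigl(1 - P_{\cpar{c}}(\theta)\bigr)$. Multiplying by the weight $\cpar{\gamma_c}/\Gamma$ and summing produces exactly $\sum_{\cpar{c}} \cpar{\gamma_c^2} P_{\cpar{c}}(1-P_{\cpar{c}})/\Gamma = I(\theta)/\Gamma$. This is the point where the discrimination weighting in \cref{eq:theta-gain} is essential, because it supplies the second factor of $\cpar{\gamma_c}$ needed to match the Fisher information. Positivity follows because each summand of $I$ is a product of positive terms when $\cpar{\gamma_c} > 0$ and $P_{\cpar{c}} \in (0,1)$. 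A positive derivative everywhere gives strict monotonicity, by the mean value theorem. Substituting $\Gamma = C$ under the identification constraint gives the slope $I(\theta)/C$.

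Nothing here is a real obstacle. The only step needing care is keeping the two powers of $\cpar{\gamma_c}$ straight when matching $g'$ to \cref{eq:app-info}: one power comes from the weight and one from the chain rule. A minor point is to make explicit that the limits interchange with the sum because $C$ is finite.
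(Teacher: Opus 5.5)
Your proposal is correct and follows the paper's own argument. You write $g$ as a positively weighted convex combination of the $P_{\cpar{c}}$ for (i), and you use $P_{\cpar{c}}' = \cpar{\gamma_c} P_{\cpar{c}}(1-P_{\cpar{c}})$ to get $g' = I/\sum_{\cpar{c}}\cpar{\gamma_c} > 0$ for (ii); your remarks on the finite sum and the mean value theorem only spell out steps the paper's sketch leaves implicit.
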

\begin{proof}[Proof sketch]
  Each $P_{\cpar{c}}$ is smooth, lies in $(0, 1)$, and tends to 0 and 1. The gain is a convex combination of the $P_{\cpar{c}}$ with positive weights, which gives (i). Since $\sigma' = \sigma(1 - \sigma)$, we have $P_{\cpar{c}}' = \cpar{\gamma_c} P_{\cpar{c}}(1 - P_{\cpar{c}})$. Hence $g' = I / \sum_{\cpar{c}} \cpar{\gamma_c}$ with every term positive, which gives (ii).
\end{proof}

\Cref{prop:order} states that sorting a query's documents by $\dpar{\tilde{\theta}}$, or by any strictly increasing function of it, reproduces their Stage~A order.
\begin{proof}[Proof sketch]
  The fit sets $\qpar{\tau_j}$ to the softplus of an unconstrained parameter. Hence $\qpar{\tau_j} > 0$ (\cref{app:method-fit}). Then $x \mapsto f(\qpar{\tau_j} x + \qpar{\alpha_j})$ is strictly increasing for every strictly increasing $f$, such as $g$ (\cref{lem:theory-gain}(ii)). Hence $\dpar{\hat{\theta}_i^{\mathrm{BT}}} > \dpar{\hat{\theta}_{i'}^{\mathrm{BT}}}$ holds if and only if $\dpar{\tilde{\theta}_i} > \dpar{\tilde{\theta}_{i'}}$ and if and only if $f(\dpar{\tilde{\theta}_i}) > f(\dpar{\tilde{\theta}_{i'}})$. The ideal ranking sorts the pool by decreasing gain and follows the Stage~A order.
\end{proof}
Raw-BT-nDCG applies the gain $g$ to the BT scores centered within each query, without calibration (\cref{app:notation-metrics}). This baseline shares the Stage~A order and the ideal ranking of \rcpndcg{} and differs only in the gain values (\cref{app:theory-rubriconly}).

\subsection{Identification of the calibration}\label{app:theory-identification}\label{app:bt-2pl-identifiability}
The BT model depends only on score differences, and the judge compares documents only within a query. Each query's pool is thus its own comparison set. The 2PL likelihood of \cref{eq:2pl} stays the same if we stretch the axis and shrink every discrimination by the same factor. The likelihood also stays the same if we shift all offsets and difficulties together. \Cref{prop:theory-identification} shows that these are the only symmetries and that the hard constraints of \cref{sec:calibration-merge} remove them.

\begin{proposition}[Identification of the link]\label{prop:theory-identification}\label{prop:app-identification}
  Let every query have two documents with distinct BT scores. Consider two parameter settings $(\qpar{\tau}, \qpar{\alpha}, \cpar{\gamma}, \cpar{\beta})$ and $(\qpar{\tau}', \qpar{\alpha}', \cpar{\gamma}', \cpar{\beta}')$ with all scales and discriminations positive.
  \begin{enumerate}[nosep,label=(\roman*)]
    \item The two settings give every document the same pass probability on every criterion if and only if, for some $a > 0$ and $b \in \mathbb{R}$, $\qpar{\tau_j'} = a\qpar{\tau_j}$ and $\qpar{\alpha_j'} = a\qpar{\alpha_j} + b$ for every query, and $\cpar{\gamma_c'} = \cpar{\gamma_c}/a$ and $\cpar{\beta_c'} = a\cpar{\beta_c} + b$ for every criterion.
    \item Two settings related as in (i) give every document the same gain $g(\dpar{\tilde{\theta}_i})$.
    \item Among the settings related as in (i), exactly one satisfies $\sum_{\cpar{c}} \cpar{\gamma_c} = C$ and $\operatorname{mean}_{\cpar{c}} \cpar{\beta_c} = 0$.
    \item With the criterion parameters fixed, two settings of $(\qpar{\tau_j}, \qpar{\alpha_j})$ give the documents of query $\qpar{j}$ the same pass probabilities only if they are equal.
  \end{enumerate}
\end{proposition}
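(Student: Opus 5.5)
The plan is to reduce everything to the logit of the pass probability, which is affine in the BT score. Since $\sigma$ is a bijection onto $(0,1)$, equal pass probabilities are equivalent to equal logits. For document $\dpar{i}$ of query $\qpar{j}$ and criterion $\cpar{c}$ the logit is
\[
  \cpar{\gamma_c}\qpar{\tau_j}\,\dpar{\hat{\theta}_i^{\mathrm{BT}}} + \cpar{\gamma_c}\bigl(\qpar{\alpha_j} - \cpar{\beta_c}\bigr).
\]
Because every query has two documents with distinct BT scores, this affine function of $\dpar{\hat{\theta}_i^{\mathrm{BT}}}$ is determined by its values at two points. Two settings therefore agree on all of query $\qpar{j}$'s documents if and only if slopes and intercepts agree:
\[
  \cpar{\gamma_c}\qpar{\tau_j} = \cpar{\gamma_c'}\qpar{\tau_j'}, \qquad \cpar{\gamma_c}(\qpar{\alpha_j} - \cpar{\beta_c}) = \cpar{\gamma_c'}(\qpar{\alpha_j'} - \cpar{\beta_c'})
\]
for all $\qpar{j}, \cpar{c}$. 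This is the key reduction, and all four parts follow from these two identities.

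For (i), the ``if'' direction is direct substitution. For ``only if'', the slope identity gives $\qpar{\tau_j'}/\qpar{\tau_j} = \cpar{\gamma_c}/\cpar{\gamma_c'}$. The left side is independent of $\cpar{c}$ and the right side independent of $\qpar{j}$, so both equal a common constant $a>0$. Hence $\qpar{\tau_j'} = a\qpar{\tau_j}$ and $\cpar{\gamma_c'} = \cpar{\gamma_c}/a$. Dividing the intercept identity by $\cpar{\gamma_c}$ gives $a(\qpar{\alpha_j} - \cpar{\beta_c}) = \qpar{\alpha_j'} - \cpar{\beta_c'}$, that is, $\qpar{\alpha_j'} - a\qpar{\alpha_j} = \cpar{\beta_c'} - a\cpar{\beta_c}$. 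Again one side depends only on $\qpar{j}$ and the other only on $\cpar{c}$, so both equal a constant $b$. The one point needing care is the separation-of-variables argument when there is a single query or a single criterion. It still works there, because the constant is then simply defined from the one available index.

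For (ii), under the transformation $\dpar{\tilde{\theta}}' = a\dpar{\tilde{\theta}} + b$, each term satisfies $\cpar{\gamma_c'}(\dpar{\tilde{\theta}}' - \cpar{\beta_c'}) = \cpar{\gamma_c}(\dpar{\tilde{\theta}} - \cpar{\beta_c})$, so every $P_{\cpar{c}}$ is unchanged. The weights $\cpar{\gamma_c'}/\sum_{\cpar{c}'}\cpar{\gamma_{c'}'}$ in \cref{eq:theta-gain} are scale-invariant, since the common factor $1/a$ cancels. Hence $g$ is unchanged.

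For (iii), the orbit is parametrized by $(a,b)$. The constraint $\sum_{\cpar{c}}\cpar{\gamma_c}/a = C$ forces the unique value $a = \sum_{\cpar{c}}\cpar{\gamma_c}/C > 0$. Then $\operatorname{mean}_{\cpar{c}}(a\cpar{\beta_c} + b) = 0$ forces the unique $b = -a\operatorname{mean}_{\cpar{c}}\cpar{\beta_c}$.

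For (iv), fix $\cpar{\gamma}, \cpar{\beta}$ and take any single criterion $\cpar{c}$. The slope identity with $\cpar{\gamma_c'} = \cpar{\gamma_c} > 0$ gives $\qpar{\tau_j'} = \qpar{\tau_j}$. The intercept identity with $\cpar{\beta_c'} = \cpar{\beta_c}$ gives $\qpar{\alpha_j'} = \qpar{\alpha_j}$.

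I expect no real obstacle. The only substantive step is the initial reduction from equal probabilities to equal slopes and intercepts, which is exactly where the two-distinct-scores hypothesis is used.
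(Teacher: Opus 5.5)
Your proposal is correct and follows the paper's own argument. Both reduce equal pass probabilities to equal logits via injectivity of $\sigma$, use the two distinct BT scores per query to match slopes and intercepts, and read off (i)--(iv) from those two identities. Your explicit separation-of-variables step for $a$ and $b$, and your check that the weights $\gamma_c/\sum_{c'}\gamma_{c'}$ in the gain are invariant under $\gamma_c \mapsto \gamma_c/a$, fill in details that the paper's sketch leaves implicit.
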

\begin{proof}[Proof sketch]
  The ``if'' direction of (i) follows by substituting the transformed parameters into \cref{eq:2pl}. For the ``only if'' direction, equal pass probabilities mean equal logits because the logistic function is injective. Two affine functions of the BT score that agree at two distinct values have equal slopes and intercepts. Part (ii) uses that the gain is a convex combination of the pass probabilities. Part (iii) fixes $a$ by the constraint on $\sum_{\cpar{c}} \cpar{\gamma_c}$ and $b$ by the centered difficulties. Part (iv) follows from equal logits at two distinct BT scores.
\end{proof}
Under the likelihood alone, other constraints would give the same gains (part (ii)). The hard constraints identify the parameters (part (iii)), and the priors play no part in identification. The constraints only decide where the weak priors on $\qpar{\tau_j}$ and $\qpar{\alpha_j}$ act (\cref{app:method-fit}).

\subsection{The rubric-only special case and the resolution of discrete gains}\label{app:theory-rubriconly}\label{app:theory-resolution}\label{app:sensitivity-loss}\label{app:tournament-value}\label{app:calibration-zeropass}
The Count-nDCG baseline uses Stage~B alone. Its gain is a document's pass share (\cref{tab:notation-metrics}). Let document $\dpar{i}$ appear in $n_{\dpar{i}}$ Stage~B windows and pass criterion $\cpar{c}$ in $S_{\dpar{i}\cpar{c}}$ of them, and let $T_{\dpar{i}} = \sum_{\cpar{c}} \cpar{\gamma_c} S_{\dpar{i}\cpar{c}}$. With the criterion parameters fixed, the rubric-only likelihood of an ability $\theta$ is
\begin{equation*}
  L_{\dpar{i}}(\theta) = \prod\nolimits_{\cpar{c}} P_{\cpar{c}}(\theta)^{S_{\dpar{i}\cpar{c}}} \bigl(1 - P_{\cpar{c}}(\theta)\bigr)^{n_{\dpar{i}} - S_{\dpar{i}\cpar{c}}}.
\end{equation*}

\begin{proposition}[Counting is rubric-only IRT scoring]\label{prop:theory-count-identity}\label{prop:app-count-identity}
  Fix the criterion parameters. Let $\dpar{\hat{\theta}_i}$ maximize $L_{\dpar{i}}$ over the extended real line $[-\infty, \infty]$. Then $g(\dpar{\hat{\theta}_i}) = \mathrm{Count}^{\gamma}_{\dpar{i}} = T_{\dpar{i}} / (n_{\dpar{i}} \sum_{\cpar{c}} \cpar{\gamma_c})$, the discrimination-weighted pass share. With equal discriminations (a Rasch model), the right-hand side is $\mathrm{Count}_{\dpar{i}}$, the gain of Count-nDCG.
\end{proposition}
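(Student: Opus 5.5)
The plan is to write down the score equation of the rubric-only log-likelihood and read it as a moment condition for $g$. Taking logs,
\begin{equation*}
  \log L_{\dpar{i}}(\theta) = \sum\nolimits_{\cpar{c}} \Bigl[ S_{\dpar{i}\cpar{c}} \log P_{\cpar{c}}(\theta) + \bigl(n_{\dpar{i}} - S_{\dpar{i}\cpar{c}}\bigr) \log\bigl(1 - P_{\cpar{c}}(\theta)\bigr) \Bigr].
\end{equation*}
Using $P_{\cpar{c}}' = \cpar{\gamma_c} P_{\cpar{c}}(1-P_{\cpar{c}})$, as in the proof of \cref{lem:theory-gain}, the derivative is
\begin{equation*}
  \frac{d}{d\theta}\log L_{\dpar{i}}(\theta) = \sum\nolimits_{\cpar{c}} \cpar{\gamma_c}\bigl(S_{\dpar{i}\cpar{c}} - n_{\dpar{i}} P_{\cpar{c}}(\theta)\bigr) = T_{\dpar{i}} - n_{\dpar{i}} \Bigl(\sum\nolimits_{\cpar{c}} \cpar{\gamma_c}\Bigr)\, g(\theta).
\end{equation*}
The second equality is the definition of $g$ in \cref{eq:theta-gain}. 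The score is therefore $n_{\dpar{i}}\sum_{\cpar{c}}\cpar{\gamma_c}\,\bigl(\mathrm{Count}^{\gamma}_{\dpar{i}} - g(\theta)\bigr)$, and it is strictly decreasing in $\theta$ by \cref{lem:theory-gain}(ii). So $\log L_{\dpar{i}}$ is strictly concave, and any maximizer is characterized by the sign of $\mathrm{Count}^{\gamma}_{\dpar{i}} - g(\theta)$.

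Next, I would split into cases by the value of $\mathrm{Count}^{\gamma}_{\dpar{i}} \in [0,1]$.

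\textbf{Interior case, $0 < \mathrm{Count}^{\gamma}_{\dpar{i}} < 1$.} By \cref{lem:theory-gain}(i), $g$ is a continuous strictly increasing bijection from $\mathbb{R}$ onto $(0,1)$. Hence there is a unique $\theta^\ast$ with $g(\theta^\ast) = \mathrm{Count}^{\gamma}_{\dpar{i}}$. The score is positive to its left and negative to its right, so $\theta^\ast$ is the unique finite maximizer. At the infinite endpoints, $L_{\dpar{i}}$ tends to $0$: both a pass and a fail are observed for some criterion, so a factor vanishes at each end. The maximizer over $[-\infty,\infty]$ is therefore $\theta^\ast$.

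\textbf{Boundary case, $\mathrm{Count}^{\gamma}_{\dpar{i}} = 0$.} Every $S_{\dpar{i}\cpar{c}} = 0$, and $L_{\dpar{i}}$ is strictly increasing as $\theta$ decreases, tending to $1$. The supremum is attained only at $\hat\theta = -\infty$, with the convention $g(-\infty) = 0$ taken from the limits in \cref{lem:theory-gain}(i).

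\textbf{Boundary case, $\mathrm{Count}^{\gamma}_{\dpar{i}} = 1$.} Symmetrically, every $S_{\dpar{i}\cpar{c}} = n_{\dpar{i}}$, the maximizer is $\hat\theta = +\infty$, and $g(+\infty) = 1$.

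Finally, with all $\cpar{\gamma_c}$ equal, the factor $\cpar{\gamma}$ cancels from $T_{\dpar{i}}/(n_{\dpar{i}}\sum_{\cpar{c}}\cpar{\gamma_c})$. What remains is $\sum_{\cpar{c}} S_{\dpar{i}\cpar{c}}/(C n_{\dpar{i}})$, which is \cref{eq:count-gain}.

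The main obstacle is the boundary cases. There one must make precise that "maximizing over the extended real line" means extending $L_{\dpar{i}}$ and $g$ by their limits, and check that no finite point competes with the endpoint. I would handle this via strict monotonicity of $L_{\dpar{i}}$, which follows from the sign of the score.
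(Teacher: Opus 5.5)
Your proposal is correct and follows the same route as the paper's proof. Both derive the score identity $(\log L_{\dpar{i}})'(\theta) = T_{\dpar{i}} - n_{\dpar{i}}\sum_{\cpar{c}}\cpar{\gamma_c}\, g(\theta)$, use strict concavity, find a finite root exactly when $0 < T_{\dpar{i}} < n_{\dpar{i}}\sum_{\cpar{c}}\cpar{\gamma_c}$, and otherwise put the maximizer at $\pm\infty$; your endpoint comparison supplies detail the paper's sketch omits. One wording fix: in the interior case, the pass and the fail that make $L_{\dpar{i}}$ vanish at $-\infty$ and $+\infty$ may come from different criteria, and that is all your argument needs.
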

\begin{proof}[Proof sketch]
  Since $P_{\cpar{c}}' = \cpar{\gamma_c} P_{\cpar{c}}(1 - P_{\cpar{c}})$, the log-likelihood derivative is $(\log L_{\dpar{i}})'(\theta) = T_{\dpar{i}} - n_{\dpar{i}} \sum_{\cpar{c}} \cpar{\gamma_c}\, g(\theta)$. The strictly concave $\log L_{\dpar{i}}$ has a finite maximizer exactly when $0 < T_{\dpar{i}} < n_{\dpar{i}} \sum_{\cpar{c}} \cpar{\gamma_c}$, where $(\log L_{\dpar{i}})' = 0$ gives the claim. Otherwise the maximum lies at $\pm\infty$, where $g$ equals the pass share. Equal discriminations turn the weighted share into the unweighted one.
\end{proof}
Count-nDCG with weighted shares, raw-BT-nDCG, and \rcpndcg{} evaluate the same gain curve $g$ (\cref{prop:theory-count-identity}). The three metrics differ only in the input to $g$: the rubric's maximum-likelihood ability, the raw tournament score, and the calibrated score $\dpar{\tilde{\theta}}$. Differences between \rcpndcg{} and weighted Count-nDCG therefore come from the tournament and the calibration.

\begin{proposition}[Invisible swaps]\label{prop:theory-invisible-swaps}\label{prop:invisible-swaps}
  Let a query's pool hold $K$ documents whose gains take $L$ distinct values, with $n_{\ell}$ documents at value $\ell$.
  \begin{enumerate}[nosep,label=(\roman*)]
    \item An exchange of two documents with equal gains leaves DCG@$k$ of every ranking unchanged. An exchange of two documents with different gains changes DCG@$k$ whenever at least one of them is in the top $k$.
    \item The share of document pairs with equal gains is $s = \sum_{\ell} n_{\ell}(n_{\ell} - 1) / \bigl(K(K - 1)\bigr) \ge (K - L) / \bigl(L(K - 1)\bigr)$.
    \item With binary gains, $s \ge (K - 2) / \bigl(2(K - 1)\bigr)$, which approaches one half as $K$ grows.
  \end{enumerate}
\end{proposition}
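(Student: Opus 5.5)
The plan is to handle the three parts in order. Part (i) is a direct computation on the DCG sum of \cref{eq:ndcg}; parts (ii) and (iii) are a convexity bound on the sizes of the gain classes.

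For (i), let $\pi'$ be obtained from $\pi$ by exchanging documents $d$ and $d'$ at ranks $r < r'$. Only the terms at ranks $r$ and $r'$ change, and only those with rank at most $k$ enter. If $G(d) = G(d')$, the multiset of (rank, gain) pairs is unchanged, so DCG@$k$ is unchanged for every $k$. The ideal DCG is a function of the gain multiset alone, so nDCG@$k$ is unchanged as well. If $G(d) \neq G(d')$ and $r' \le k$, the difference is
\[
\bigl(G(d) - G(d')\bigr)\bigl(1/\log_2(r+1) - 1/\log_2(r'+1)\bigr),
\]
which is nonzero because the discount is strictly decreasing. If $r \le k < r'$, the difference is $\bigl(G(d') - G(d)\bigr)/\log_2(r+1) \neq 0$. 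These two cases cover the condition that at least one document is in the top $k$.

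For (ii), ordered pairs of distinct documents with equal gains number $\sum_\ell n_\ell(n_\ell - 1)$ out of $K(K-1)$, which gives the formula for $s$. For the bound, I would use Cauchy--Schwarz (or convexity of $x^2$) with $\sum_\ell n_\ell = K$ over $L$ classes: $\sum_\ell n_\ell^2 \ge K^2/L$. Then
\[
\sum_\ell n_\ell(n_\ell - 1) \ge K^2/L - K = K(K - L)/L,
\]
and dividing by $K(K-1)$ gives $(K-L)/\bigl(L(K-1)\bigr)$.

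Part (iii) is the case $L \le 2$. With $L = 2$ this is exactly the stated bound. With $L = 1$ we have $s = 1$, which exceeds the bound. I would also check that the bound $(K-L)/\bigl(L(K-1)\bigr)$ is decreasing in $L$, since the numerator falls and the denominator rises, so using the two-value bound for at most two values is valid. The limit one half then follows from $(K-2)/(2(K-1)) \to 1/2$. The only step that needs care is the edge case $K = 1$, where $K - 1 = 0$; I would assume $K \ge 2$ throughout. Otherwise no obstacle is expected, with the case split in (i) being the main bookkeeping.
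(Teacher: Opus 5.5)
Your proposal is correct and follows the paper's own route: part (i) rests on the strict decrease of the discount $1/\log_2(r+1)$, part (ii) on Cauchy--Schwarz, $\sum_\ell n_\ell^2 \ge K^2/L$, and part (iii) sets $L = 2$ in (ii). Your explicit case split in (i), your treatment of $L = 1$ in (iii), and your assumption $K \ge 2$ fill in details that the paper's sketch leaves implicit.
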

\begin{proof}[Proof sketch]
  Part (i) follows because the discount $1/\log_2(r + 1)$ strictly decreases in the rank, (ii) follows by Cauchy--Schwarz, and (iii) sets $L = 2$ in (ii).
\end{proof}
Within a query, \rcp{} gains are a strictly increasing function of the BT score (\cref{prop:order}). Two documents of a query therefore tie in gain only if they tie in BT score.

\begin{proposition}[Resolution of rubric-only scores]\label{prop:theory-criterion-classes}\label{prop:criterion-classes}
  Let the judge answer the $C$ criteria for every document of a pool of $K$ documents.
  \begin{enumerate}[nosep,label=(\roman*)]
    \item If the judge answers each criterion once per document, at most $2^C$ response patterns occur. If, in addition, every document that passes a criterion also passes all easier criteria, at most $C + 1$ patterns occur.
    \item Any gain that depends only on the response pattern gives equal gains to documents with equal patterns. Under the ordered ladder of (i), at least $\lceil K/(C + 1) \rceil$ documents of a pool of $K$ share one pattern.
  \end{enumerate}
\end{proposition}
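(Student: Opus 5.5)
The plan is to treat the two parts separately: part (i) by a counting argument on response patterns, part (ii) by pigeonhole.

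For part (i), a document's answers to the $C$ criteria, each given once, form a vector in $\{0,1\}^{C}$. There are only $2^{C}$ such vectors, so at most $2^{C}$ patterns occur. For the ordered case, we index the criteria by increasing difficulty, $\cpar{\beta_1} \le \dots \le \cpar{\beta_C}$; any fixed easiest-to-hardest order suffices. The nesting condition says that passing criterion $\cpar{c}$ implies passing every $\cpar{c'} < \cpar{c}$. Hence the set of passed criteria is a down-set of a chain, i.e.\ an initial segment $\{1,\dots,m\}$ with $0 \le m \le C$. Such a pattern is determined by $m$ alone, which leaves $C+1$ patterns (the Guttman scale). The only care needed is to state that ``easier'' refers to this fixed order.

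For part (ii), the first claim is immediate. A gain that is a function of the pattern alone assigns one value per pattern, so equal patterns give equal gains. For the bound, the $K$ documents fall into at most $C+1$ pattern classes. By the pigeonhole principle, some class holds at least $\lceil K/(C+1) \rceil$ documents. If it did not, every class would hold at most $\lceil K/(C+1)\rceil - 1 < K/(C+1)$ documents, and the total would be below $K$.

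Finally, I would connect the result to \cref{prop:theory-invisible-swaps}. Documents within one class tie in gain, so by part (i) of that proposition, swaps among them leave DCG@$k$ unchanged. Setting $L = C+1$ in its part (ii) then bounds the share of tied pairs from below by $(K-C-1)/\bigl((C+1)(K-1)\bigr)$.

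No step is a genuine obstacle; the one point that needs checking is the scope of the statement. With repeated Stage~B windows, pass counts $S_{\dpar{i}\cpar{c}}$ over $n_{\dpar{i}}$ windows give more than $C+1$ values. The proposition assumes a single answer per criterion, and I would note that aggregated counts fall outside it. Even so, they take at most $(n_{\dpar{i}}+1)^{C}$ values, so ties still persist.
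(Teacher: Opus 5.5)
Your proof is correct and follows the paper's. Under the ladder condition the passed set is an initial segment $\{1,\dots,m\}$ of the difficulty order, which gives $C+1$ patterns, and the bound in (ii) is the pigeonhole principle. One caution concerns your closing aside, which lies outside the statement. A bound of $(n_i+1)^{C}$ possible count vectors does not by itself force ties, because at the paper's budgets it far exceeds $K = 150$, so the claim that ties persist there would need a different argument.
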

\begin{proof}[Proof sketch]
  Under the ladder of (i), the passed criteria are the $m$ easiest ones for some $m \in \{0, \dots, C\}$, and (ii) follows by the pigeonhole principle.
\end{proof}
If the judge answered each criterion once per document, then with $K = \val{draft.2.poolsize}$ and $C = 5$ the ladder bound would put at least \val{draft.app.ab.pigeonhole} documents on one level. Even the unrestricted $2^5 = 32$ patterns would fall below $K$. On NanoBEIR, Count-nDCG separates \val{a4.nano.sens.count-qwen}\% of the reranker pairs, close to the \val{a4.nano.sens.rcp-qwen}\% of \rcpndcg{} (\cref{sec:ranking-results}). Count-nDCG cannot see how a reranker orders documents with equal pass shares, whereas \rcp{} gains inherit the order of Stage~A. \Cref{sec:count} tests whether the tournament does so correctly.

A document that fails every criterion in every window has no finite rubric-only ability. Count gives all such documents gain 0, whereas \rcp{} orders them by the tournament. Such documents make up \val{b12.zp.trecdl.allfail.share}\% of TREC-DL pool documents. In the three suites of the human study, such documents have a mean \rcp{} gain of \val{b12.zp.human.all.p0.g-rcp}\%. Of the annotators' grades for these documents, \val{b12.zp.human.all.p0.ge2}\% are Useful or better. Relative to this rate, the gain differs by \val{b12.zp.human.all.p0.gap-rcp}\,pp for \rcp{} and by \val{b12.zp.human.all.p0.gap-count}\,pp for Count, whose gain is 0. Only the interval of the \rcp{} difference, \val{b12.zp.human.all.p0.gap-rcp.ci}\,pp, includes 0 (Count: \val{b12.zp.human.all.p0.gap-count.ci}\,pp).

Setting the \rcp{} gain of these documents to 0 changes the paired $t$-test's decision in \val{b12.tt.trecdl.rcp-h0.dchg} of the \val{b10.t1.dl.cells-n} (dataset, reranker pair) cells on TREC-DL. The change affects \val{b12.tt.nanomteb.rcp-h0.dchg} of \val{b10.t1.nano.cells-n} cells on NanoBEIR. On BRIGHT, whose lists here still contain each query's excluded documents, it affects \val{b12.tt.bright.rcp-h0.dchg} of \val{b10.t1.bright.cells-n} (\cref{app:nano-bright-scope}). Setting the gain to 0 reverses no separated pair.

\FloatBarrier
\section{Prompts}\label{app:prompts}
\begingroup
\frenchspacing
\makeatletter\newcolumntype{P}{>{\@minipagetrue}p{\linewidth}}\makeatother
\makeatletter\newcolumntype{J}{>{\@minipagetrue\raggedright\arraybackslash}p{\linewidth}}\makeatother
\tcbset{coltitle=black, colbacktitle=black!10}

This appendix prints the full Stage~A and Stage~B prompts and summarizes the prompts of the LLM judge studies. At run time, the query replaces \texttt{\{query\_placeholder\}}, and the documents of the current window replace \texttt{\{passages\_placeholder\}}. We wrote and refined the criteria on \val{n14.rubric.queries} queries from each of three NanoBEIR datasets (MS~MARCO, FiQA, HotpotQA), reading \val{n14.rubric.docs} documents per query. A coding agent edited the wording of the criteria to raise the agreement between the answers of \model{Qwen3.5-397B} and the judgments of two of the authors on these documents. These \val{n14.rubric.docs.total} documents make up about \val{n14.rubric.docshare}\% of the NanoBEIR pools. No grade of our external annotators or of NIST entered this process. BRIGHT, ViDoRe v3, and TREC-DL are fully held out.

\subsection{Stage A prompt}\label{app:prompts-stagea}
The TREC-DL runs of \model{Qwen3.6-27B} used exactly this prompt. Together with the shipped rubric, this prompt forms the reference cell of the factorial in \cref{app:robustness-factorial}. The stored runs of NanoBEIR, BRIGHT, and ViDoRe v3 do not record their Stage~A prompt file. We therefore do not claim that this text produced their scores unchanged. The prompt describes the score range as a logit scale, which the soft preferences of \cref{sec:stage-a} read as confidence gaps.

\begin{promptboxclean}[Stage A prompt (tournament)]
  \fontsize{8pt}{9.6pt}\selectfont

  \noindent\begin{tabular}{@{}P@{}}
    You are an \textbf{expert relevance ranker} for an information-retrieval system. Your job is to reorder a given list of documents so that the most useful ones for answering the user's search query appear first, and to assign a calibrated relevance score to each document.
  \end{tabular}
  \noindent\begin{tabular}{@{}P@{}}
    \textbf{Input} \\
    \begin{itemize}[leftmargin=1em, itemsep=0pt, topsep=0pt, parsep=0pt]
      \item \textbf{Query}: a natural-language question or search statement.
      \item \textbf{Documents}: entries labelled \texttt{doc\_1}, \texttt{doc\_2}, \ldots\ \texttt{doc\_\{num\_documents\_placeholder\}}, presented in arbitrary order.
    \end{itemize}
  \end{tabular}
  \noindent\begin{tabular}{@{}P@{}}
    \textbf{How to decide relevance} \\
    \begin{enumerate}[leftmargin=1em, itemsep=0pt, topsep=0pt, parsep=0pt]
      \item \textbf{Understand the query}
        \begin{itemize}[leftmargin=1em, itemsep=0pt, topsep=0pt, parsep=0pt]
          \item Identify the main topic, entities, intent, and any required answer type (factual, navigational, how-to, comparison, causal, opinion, transactional, list/aggregation, etc.).
          \item If the query is \textbf{genuinely ambiguous} (e.g., ``apple'' could mean the fruit or the company), commit to the \textbf{single most probable interpretation} given the documents and note it in your reasoning. Score every document against that one interpretation -- do not mix interpretations across documents.
        \end{itemize}
    \end{enumerate}
  \end{tabular}

  \noindent\begin{tabular}{@{}P@{}}
    \begin{itemize}[leftmargin=2em, itemsep=0pt, topsep=0pt, parsep=0pt]
      \item Surface \textbf{explicit constraints} (e.g., ``free'', ``Python'', ``2024'', ``without X'') and \textbf{implicit constraints} (e.g., language, recency, platform) that a relevant document must satisfy. Documents that violate a hard constraint should be scored substantially lower than those that satisfy it, even when the violating document has strong topical overlap.
      \item Expand the query with obvious synonyms or related terms (e.g., ``clear'' $\rightarrow$ ``clearing services'', ``condensate pump'' $\rightarrow$ ``dehumidifier pump'').
    \end{itemize}
  \end{tabular}

  \noindent\begin{tabular}{@{}P@{}}
    \begin{enumerate}[leftmargin=1em, itemsep=0pt, topsep=1pt, parsep=0pt, start=2]
      \item \textbf{Score each document} (criteria listed in order of importance)
        \begin{itemize}[leftmargin=1em, itemsep=0pt, topsep=0pt, parsep=0pt]
          \item \textbf{Answer directness} -- sentences that directly answer the query are the strongest signal.
          \item \textbf{Specificity} -- mentioning the exact entity or detail asked for (a product name, a service, a concrete fact) outranks vague or generic mentions.
          \item \textbf{Semantic relevance} -- discussing the same concept even without exact keyword overlap deserves a high score.
          \item \textbf{Comprehensiveness} -- addressing multiple facets of a multi-part query is better than addressing only one.
        \end{itemize}
    \end{enumerate}
  \end{tabular}

  \noindent\begin{tabular}{@{}P@{}}
    \begin{itemize}[leftmargin=2em, itemsep=0pt, topsep=0pt, parsep=0pt]
      \item \textbf{Constraint satisfaction} -- documents that meet both explicit and implicit query constraints rank higher; those that violate a constraint are penalised proportionally to its importance.
      \item \textbf{Keyword presence} -- exact query terms (or synonyms) in context are a supporting signal, not a primary one.
      \item \textbf{Accuracy \& credibility} -- correct information from a trustworthy source.
      \item \textbf{Penalize irrelevance} -- unrelated topics, wrong entities, or generic filler should receive negative scores. Truncated snippets: judge only visible content. Near-duplicates: rank the more complete version higher. Boilerplate or login walls: score $\leq -4$.
    \end{itemize}
  \end{tabular}
  \noindent\begin{tabular}{@{}P@{}}
    \textbf{Bias checks} (actively counteract these): \\
    \begin{itemize}[leftmargin=1em, itemsep=0pt, topsep=0pt, parsep=0pt]
      \item The document order is arbitrary -- your ranking must not be influenced by it.
      \item Longer documents are not inherently more relevant -- judge by information density.
    \end{itemize}
  \end{tabular}

  \noindent\begin{tabular}{@{}P@{}}
    \begin{itemize}[leftmargin=1em, itemsep=0pt, topsep=0pt, parsep=0pt]
      \item \textbf{Context independence} -- score each document against the query on the absolute scale above, regardless of what other documents appear in this batch. Anchor by imagining a hypothetical document that perfectly answers the query ($+5$) and one that is completely off-topic ($-5$); place each real document on that fixed ruler. A document's score must not change because a stronger or weaker document is present.
    \end{itemize}
  \end{tabular}
  \noindent\begin{tabular}{@{}P@{}}
    \begin{enumerate}[leftmargin=1em, itemsep=0pt, topsep=0pt, parsep=0pt, start=3]
      \item \textbf{Relative ordering}
        \begin{itemize}[leftmargin=1em, itemsep=0pt, topsep=0pt, parsep=0pt]
          \item All high-relevance documents must appear before any low-relevance ones.
        \end{itemize}
    \end{enumerate}
  \end{tabular}
  \noindent\begin{tabular}{@{}P@{}}
    \textbf{Output format} \\
    Return \textbf{only} a JSON object with three fields -- no other text. \\[2pt]
    \texttt{"reasoning"}: a brief string summarising how you interpreted the query and the key factors that determined your top-ranked and bottom-ranked documents. This is your self-check -- use it to verify that your ranking and scores are internally consistent before finalising.
  \end{tabular}
  \noindent\begin{tabular}{@{}P@{}}
    \texttt{"ranking"}: document identifiers (\texttt{doc\_1}, \texttt{doc\_2}, \ldots) from most to least relevant. The array must be sorted by strictly descending score.
  \end{tabular}
  \noindent\begin{tabular}{@{}P@{}}
    \texttt{"scores"}: for \textbf{every} document, a calibrated relevance score on a logit scale from $-5.0$ (completely irrelevant) to $+5.0$ (perfect answer). Differences on this scale should reflect your confidence:
    \begin{itemize}[leftmargin=1em, itemsep=0pt, topsep=0pt, parsep=0pt]
      \item A $1$-point gap $\approx 73\,\%$ sure the higher doc is better (moderate)
      \item A $2$-point gap $\approx 88\,\%$ sure (strong)
      \item A $3$-point gap $\approx 95\,\%$ sure (overwhelming)
      \item A $0$-point gap $\approx 50\,\%$ -- a coin flip
    \end{itemize}
  \end{tabular}

  \noindent\begin{tabular}{@{}P@{}}
    Score anchors:
    \begin{itemize}[leftmargin=1em, itemsep=0pt, topsep=0pt, parsep=0pt]
      \item[] $+4$ to $+5$: directly and completely answers the query; could be the sole result
      \item[] $+2$ to $+4$: highly relevant, key facts present, may be incomplete
      \item[] $\phantom{+}0$ to $+2$: partially relevant, right topic, doesn\textquotesingle{}t clearly answer
      \item[] $-2$ to $\phantom{+}0$: tangentially related, shares keyword/domain, mostly off-topic
      \item[] $-5$ to $-2$: irrelevant, wrong topic/entity, or noise
    \end{itemize}
  \end{tabular}

  \noindent\begin{tabular}{@{}P@{}}
    Use tenths freely (e.g., $2.3$, $-0.7$). \textbf{Always assign distinct scores.} The \textbf{size of each gap} must match how confident you are that the higher-scored document is truly better (see the gap-confidence table above). When documents are nearly indistinguishable, keep gaps small ($0.1$--$0.2 \approx$ a coin flip); use gaps of $\geq 1.0$ only when you are genuinely confident. Never give every document the same score.
  \end{tabular}
  \noindent\begin{tabular}{@{}J@{}}
    Example for 4 documents: \\
    \texttt{\{} \\
    \hangindent=4em\hangafter=1\texttt{\ \ "reasoning": "The query asks for X. doc\_3 directly answers with a specific fact; doc\_2 is about an unrelated topic.",} \\
    \hangindent=4em\hangafter=1\texttt{\ \ "ranking": ["doc\_3", "doc\_1", "doc\_4", "doc\_2"],} \\
    \hangindent=4em\hangafter=1\texttt{\ \ "scores": \{"doc\_3": 3.5, "doc\_1": 1.5, "doc\_4": 0.5, "doc\_2": -2.5\}} \\
    \texttt{\}}
  \end{tabular}
  \noindent\begin{tabular}{@{}P@{}}
    \textbf{Recap} \\
    \begin{enumerate}[leftmargin=1em, itemsep=0pt, topsep=0pt, parsep=0pt]
      \item Read the query -- extract core concepts, intent type, and both explicit and implicit constraints.
      \item For each document, judge how well it addresses those concepts and satisfies those constraints.
      \item Assign a calibrated logit-scale score to every document.
      \item Sort document identifiers from highest to lowest score.
      \item Write a brief reasoning string to verify your ranking is consistent.
      \item Return the JSON object and nothing else.
    \end{enumerate}
  \end{tabular}
  \noindent\begin{tabular}{@{}P@{}}
    \textbf{Query} \\
    \texttt{\{query\_placeholder\}} \\[2pt]
    \textbf{Documents} \\
    \texttt{\{passages\_placeholder\}}
  \end{tabular}
\end{promptboxclean}

\subsection{Stage B prompt}\label{app:prompts-stageb}
The Stage~B prompt asks five yes/no questions about each document of a window. The judge received exactly this prompt on ViDoRe v3 and TREC-DL. The NanoBEIR and BRIGHT runs used an earlier version of the prompt, with the same wording of the criteria C1 to C5. We therefore do not claim that the printed text produced their scores unchanged.

\begin{promptboxclean}[Stage B prompt (rubric)]
  \fontsize{8pt}{9.6pt}\selectfont

  \noindent\begin{tabular}{@{}P@{}}
    \textbf{Role} \\
    You are an \textbf{expert relevance assessor} for an information-retrieval system. For each document, answer five binary diagnostic questions that measure how well the document serves the user's query.
  \end{tabular}
  \noindent\begin{tabular}{@{}P@{}}
    \textbf{Instructions} \\
    \begin{enumerate}[leftmargin=1em, itemsep=0pt, topsep=0pt, parsep=0pt]
      \item \textbf{Understand the query}
        \begin{itemize}[leftmargin=1em, itemsep=0pt, topsep=0pt, parsep=0pt]
          \item Identify the core information need, intent type, and any required answer type (factual, navigational, how-to, comparison, causal, opinion, transactional, list/aggregation, etc.).
          \item If the query is \textbf{genuinely ambiguous}, commit to the \textbf{single most probable interpretation} and evaluate every document against it consistently.
        \end{itemize}
    \end{enumerate}
  \end{tabular}

  \noindent\begin{tabular}{@{}P@{}}
    \begin{itemize}[leftmargin=2em, itemsep=0pt, topsep=0pt, parsep=0pt]
      \item Surface \textbf{explicit constraints} (e.g., ``free'', ``Python'', ``2024'', ``without X'') and \textbf{implicit constraints} (e.g., language, recency, platform).
      \item For \textbf{argumentative or stance queries} (e.g., ``Should X be allowed?'', claims to be verified), a document is relevant (C1) if it discusses the same debate topic -- even from the opposite side. Information utility (C2) includes evidence or reasoning that bears on the claim, whether supporting or opposing.
    \end{itemize}
  \end{tabular}

  \noindent\begin{tabular}{@{}P@{}}
    \begin{enumerate}[leftmargin=1em, itemsep=1pt, topsep=1pt, parsep=0pt, start=2]
      \item \textbf{Evaluate each document independently} against the five criteria below. Each criterion is a standalone yes/no question -- answer 1 (yes) or 0 (no). Criteria are \textbf{not hierarchically dependent}: a document may pass any criterion regardless of whether it passes the others.
      \item \textbf{Coherence review} -- after scoring all criteria, briefly verify that your ratings are internally consistent. If you notice any apparent contradiction, re-read the relevant criterion definition and the document before finalizing.
    \end{enumerate}
  \end{tabular}
  \noindent\begin{tabular}{@{}P@{}}
    \textbf{Bias checks} -- actively counteract these:
    \begin{itemize}[leftmargin=1em, itemsep=0pt, topsep=0pt, parsep=0pt]
      \item Document order is arbitrary -- do not let it influence your judgments.
      \item Longer documents are not inherently more relevant -- judge by substance, not length.
      \item Evaluate each document against the query on an absolute standard, regardless of what other documents appear in this batch.
      \item C5 is a \textbf{high bar} -- most documents should fail it. Do not let a document's length, apparent effort, or broad coverage bias you toward awarding it.
    \end{itemize}
  \end{tabular}
  \noindent\begin{tabular}{@{}P@{}}
    \textbf{Criteria} \\[2pt]
    \textbf{C1 -- Topical Relevance} \\
    ``Does this document address a topic that is clearly related to the query's information need -- not just loosely related or in the same broad domain?'' \\
    Award 1 if a two-sentence summary of the document would mention the same topic as the query -- i.e., the connection is immediate and obvious. Award 0 only if the connection requires reasoning through shared methods, broader categories, or domain overlap.
  \end{tabular}
  \noindent\begin{tabular}{@{}P@{}}
    \textbf{C2 -- Information Utility} \\
    ``Does this document contain specific, useful information -- facts, data, procedures, or explanations -- that would help address the query?'' \\
    Award 1 if a user could extract at least one concrete fact, data point, procedure step, or example that directly bears on the query. Award 0 if the document only provides general background, definitions, or context without specific applicable information.
  \end{tabular}
  \noindent\begin{tabular}{@{}P@{}}
    \textbf{C3 -- Entity/Detail Match} \\
    ``Does this document specifically name and discuss the particular entity, product, concept, or detail mentioned in the query -- not merely a related alternative or a broader category that includes it?'' \\
    Award 1 if the exact entity or detail from the query appears by name and the document provides at least some specific information about it -- not just a mention in a list, heading, or aside. Award 0 if the document covers only a broader category, a competing or related entity, or never names the specific entity.
  \end{tabular}
  \noindent\begin{tabular}{@{}P@{}}
    \textbf{C4 -- Direct Answer} \\
    ``Does this document explicitly and directly answer or resolve the query's primary question or information need?'' \\
    Award 1 if the document contains a clear, direct response to the main question -- even if it doesn\textquotesingle{}t cover every secondary aspect. Award 0 if the user would still need to search elsewhere for the core answer.
  \end{tabular}
  \noindent\begin{tabular}{@{}P@{}}
    \textbf{C5 -- Thorough Treatment} \\
    ``Does this document treat the query's topic in depth -- providing detailed explanations, thorough coverage, or comprehensive detail rather than a superficial or cursory mention?'' \\
    Award 1 only if the document provides genuinely deep, substantive coverage that goes well beyond a basic answer -- e.g., explaining mechanisms, providing supporting evidence, addressing edge cases, or offering practical detail that a knowledgeable reader would expect. Award 0 if the treatment is surface-level, merely lists aspects without elaboration, restates the same point in different ways, or is long without adding real depth. A lengthy or broad document is not automatically thorough.
  \end{tabular}
  \noindent\begin{tabular}{@{}P@{}}
    \textbf{Query} \\
    \texttt{\{query\_placeholder\}} \\[2pt]
    \textbf{Documents} \\
    \texttt{\{passages\_placeholder\}}
  \end{tabular}
  \noindent\begin{tabular}{@{}P@{}}
    \textbf{Output Format} \\
    Return a JSON object. For each document, provide:
    \begin{itemize}[leftmargin=1em, itemsep=0pt, topsep=0pt, parsep=0pt]
      \item \texttt{"reasoning"}: a brief one-line justification
      \item \texttt{"criteria"}: an object mapping C1--C5 to 1 or 0
    \end{itemize}
  \end{tabular}

  \noindent\begin{tabular}{@{}J@{}}
    Example: \\
    \texttt{\{} \\
    \texttt{\ \ "doc\_1": \{} \\
    \hangindent=6em\hangafter=1\texttt{\ \ \ \ "reasoning": "Official Python 3.11 docs on sorted(); exact match, comprehensive, authoritative.",} \\
    \texttt{\ \ \ \ "criteria": \{"C1":1,"C2":1,"C3":1,"C4":1,"C5":1\}} \\
    \texttt{\ \ \},} \\
    \texttt{\ \ "doc\_2": \{} \\
    \hangindent=6em\hangafter=1\texttt{\ \ \ \ "reasoning": "Python sorting tutorial for 3.9. Answers the question but explanations stay surface-level and omit edge cases.",} \\
    \texttt{\ \ \ \ "criteria": \{"C1":1,"C2":1,"C3":1,"C4":1,"C5":0\}} \\
    \texttt{\ \ \},}
  \end{tabular}

  \noindent\begin{tabular}{@{}J@{}}
    \texttt{\ \ "doc\_3": \{} \\
    \hangindent=6em\hangafter=1\texttt{\ \ \ \ "reasoning": "Sorting algorithms theory. Relevant topic but no Python code or practical answer.",} \\
    \texttt{\ \ \ \ "criteria": \{"C1":1,"C2":0,"C3":0,"C4":0,"C5":0\}} \\
    \texttt{\ \ \},} \\
    \texttt{\ \ "doc\_4": \{} \\
    \hangindent=6em\hangafter=1\texttt{\ \ \ \ "reasoning": "Java sorting tutorial. Sorting concepts transfer, but does not cover Python's sorted().",} \\
    \texttt{\ \ \ \ "criteria": \{"C1":1,"C2":1,"C3":0,"C4":0,"C5":0\}} \\
    \texttt{\ \ \},} \\
    \texttt{\ \ "doc\_5": \{} \\
    \hangindent=6em\hangafter=1\texttt{\ \ \ \ "reasoning": "Unrelated article about wildlife conservation.",} \\
    \texttt{\ \ \ \ "criteria": \{"C1":0,"C2":0,"C3":0,"C4":0,"C5":0\}} \\
    \texttt{\ \ \}} \\
    \texttt{\}}
  \end{tabular}
\end{promptboxclean}

\subsection{Meta-judge and external-judge prompts}\label{app:prompts-meta}\label{app:prompts-external}
Studies~1 and~2 send their prompts to the three judges of \cref{app:meta}, each at its maximum reasoning effort, with no output cap and the provider's default temperature. The two system prompts are those of the original study scripts. The judges answer in the shared response format printed below. Each prompt also shows the dataset's MTEB retrieval-task instruction. Study~1 asks which of two candidate sets of relevant documents better serves as the query's annotation, weighing relevance, precision, and coverage. Study~2 compares two rerankers' top-10 lists, adds ranking quality to these criteria, and shortens each document to \val{draft.app.ckl.metajudge.doccap} characters. Two of the authors judged part of Study~2 and saw the same shortened lists (\cref{app:meta-study2}).

\begin{promptboxclean}[Shared response format]
  \fontsize{8pt}{9.6pt}\selectfont
  \noindent\begin{tabular}{@{}P@{}}
    Respond with exactly the following three fields, in this order: \\[2pt]
    JUDGMENT: A concise explanation (1--3 sentences) naming a concrete, content-level reason for your choice. \\
    CONFIDENCE: Exactly one of \texttt{"high"}, \texttt{"medium"}, or \texttt{"low"}. \\
    VERDICT: Exactly one of \texttt{"A"}, \texttt{"B"}, or \texttt{"Tie"}.
  \end{tabular}
\end{promptboxclean}

In the comparison of \cref{app:robustness-judges}, each case also shows a task description of what counts as relevant in its dataset. The NanoBEIR descriptions ask, for example, for documents that refute a claim or for duplicate questions. BRIGHT asks for passages that answer a question in the task's discipline. TREC-DL cases use the MSMARCO web-search description, and ViDoRe v3 cases ask for the pages with the needed information.

\Cref{sec:count} and \cref{app:robustness-judges} use the same three judges, \model{GLM-5.3-flash}, \model{DeepSeek-4.1-flash}, and \model{Kimi~K3}, with a prompt of their own. This prompt asks which of two ranked lists better serves the query and weighs top ranks more heavily than lower ones. The prompt tells the judges to decide by content match, to treat near-duplicates as equivalent, and to answer Tie only when neither list is defensibly better. Each response is a JSON object with the fields \texttt{choice} (A, B, or Tie), \texttt{confidence}, and a one-sentence \texttt{reason}. The user prompt shows the task description, the query, and the two lists without document identifiers or metric hints. A document longer than \val{nr.a6.b19.textcap} characters is cut at its last sentence boundary before this limit, and the marker \texttt{[...]} shows the cut. Each case is sent twice, with the labels and the display order of the two lists swapped. The full text of every prompt in this appendix is part of the release (\cref{app:assets}).

\endgroup

\section{Assets and Release}\label{app:assets}
\Cref{tab:app-assets} lists the datasets with their sources and licenses.

\begin{table}[!ht]
  \centering
  \caption{Datasets and third-party relevance data, with each provider's stated license or terms of use.}
  \label{tab:app-assets}
  \fontsize{8pt}{9.6pt}\selectfont
  \begin{tabularx}{\linewidth}{@{}l>{\raggedright\arraybackslash}X>{\raggedright\arraybackslash}p{0.28\linewidth}@{}}
    \toprule
    Asset & Source and use & License or terms \\
    \midrule
    NanoBEIR & Nano versions of 13 BEIR datasets in MTEB~\citep{thakur2021beir,muennighoff2023mteb,zetaalpha2024nanobeir}, \texttt{mteb/Nano*Retrieval}; \val{draft.4.nano.queries} queries & CC BY 4.0 on the dataset cards; the BEIR source datasets state their own terms \\
    BRIGHT & 12 tasks~\citep{su2024bright}, \nolinkurl{mteb/BrightRetrieval}; \val{a4.bright.queries.evaluated} evaluated queries & CC BY 4.0 \\
    ViDoRe v3 & 8 corpora~\citep{vidore3}, \texttt{vidore/vidore\_v3\_*}; \val{a3.n.questions} questions in their original language, judged as OCR text & CC BY 4.0 \\
    TREC-DL & MS~MARCO passages~\citep{nguyen2016msmarco} with the NIST grades of the 2019 and 2020 tracks~\citep{craswell2020overview,craswell2021overview}; \val{a2.setup.dl19.queries} and \val{a2.setup.dl20.queries} queries & MS~MARCO terms, non-commercial research only; NIST grades public from TREC \\
    DL19 re-assessments & \citet{parry2025reannotation}, public repository at commit \val{draft.app.ckl.parry.commit}; analysis only, not redistributed & Paper CC BY 4.0; no license file in the repository \\
    \bottomrule
  \end{tabularx}
\end{table}

\paragraph{Models and software.}
Our three judges are open-weight models under Apache 2.0: the NVFP4 release of \model{Qwen3.5-397B}, \model{Qwen3.6-27B}, and \model{gpt-oss-120b}. We serve them with SGLang. Of the three first-stage retrievers, \baseline{Octen-Embedding-8B} is also under Apache 2.0, and BM25 runs in the MIT-licensed \texttt{bm25s}. We used the three panel judges (\cref{app:meta}), \model{GLM-5.2} (\cref{app:trecdl-judge2}), \baseline{Cohere Embed~v4}, and Voyage~AI's rerankers under their providers' terms. \Cref{tab:app-rerankers} gives the licenses of the rerankers.
\paragraph{Human annotations.}
The human grades of \cref{sec:calibration,sec:blind-eval,sec:external} come from an annotation platform (\cref{app:human-annotators}). The platform stored the annotators' names and email addresses. In the released data, one-way hashes replace them, and no table maps the hashes back. We have not screened the annotators' free-text comments for personal information beyond replacing email addresses. We therefore do not release these comments.

\paragraph{Release.}
 We release the code of both stages, the calibration, the metric, and the full text of every judge prompt of \cref{app:prompts} at \releaseurl{}. The release also holds the calibrated scores of every judged query-document pair, the parameters that turn them into gains, and the \val{a1a.desc.grades} human grades on \val{a1a.desc.distinct-docs} documents. The BRIGHT and NanoArguAna labels also cover the documents that our rankings remove for some queries (\cref{app:nano-bright-scope}). Our rankings and tables remove these documents, except in the few follow-up analyses that say so.
\section{Extended Related Work}\label{app:related}
We compare \rcp{} in more detail with three lines of work that the main text covers only briefly.

\paragraph{Relative and absolute judgments together.}
\citet{yan2024consolidating} observe that an LLM ranks well from pairwise prompts, whereas its pointwise relevance labels rank documents less well and its pairwise scores carry no absolute meaning. They also find that pairwise scores are not calibrated across queries. They perturb each document's pointwise label as little as possible until the labels respect the pairwise order. \rcp{} instead keeps the tournament order fixed and fits only a scale $\qpar{\tau_j}$ and an offset $\qpar{\alpha_j}$ per query. Binary criteria shared by all queries set these two parameters, as shared anchor items place two test forms on one scale in IRT linking~\citep{lord1980applications}. Similarly, label models infer true labels from noisy annotators by estimating each annotator's expertise and each item's difficulty~\citep{whitehill2009whose}.

\paragraph{Tournament gains and per-query scores.}
Concurrently, SABER-Math builds labels for mathematical retrieval from per-query BT tournaments and rescales each query's scores to a fixed range before computing nDCG~\citep{georgiev2026sabermath}. The zELO method turns LLM pairwise preferences into per-query Elo scores and centers each query's scores at zero to train the zerank rerankers~\citep{pipitone2025zelo}. This centering fixes each query's zero point, but it makes neither the zero points nor the units comparable across queries. In \rcp{}, the rubric shared by all queries supplies both.

\paragraph{Rubrics and calibrated LLM judges.}
Rubric and checklist judges ask an LLM about several criteria and aggregate the answers, for example by counting, summing, or taking the best grade~\citep{farzi2024pencils,farzi2025criteria,lee2025checkeval,ye2024flask}. Up to the criterion weights, Count-nDCG (\cref{eq:count-gain}) is the retrieval analogue of these methods and the rubric-only special case of \rcp{} (\cref{prop:theory-count-identity}). Other work calibrates LLM judges against human labels~\citep{liu2024calibrating,hashemi2024llmrubric,takehi2025lara}. \rcp{} uses no human labels for calibration. Its rubric only sets the scale and offset of each query's tournament scores.

\FloatBarrier
\section{Author Contributions}\label{app:contributions}
Fabian David Schmidt conceived the research contributions, ran most of the analyses, coordinated the human annotation study, ran the TREC-DL experiments, and wrote the final version of the paper. Donato Crisostomi ran the initial experiments and wrote a first draft of an earlier version of the manuscript. Carlos Lassance took part in the regular research meetings and gave feedback throughout the project. Nils Reimers approved the internship and the project, and provided institutional support.

\end{appendices}

\end{document}